\theoremstyle{plain}
\newtheorem{theorem}{Theorem}
\newtheorem{lemma}[theorem]{Lemma}
\newtheorem{corollary}[theorem]{Corollary}
\newtheorem{proposition}[theorem]{Proposition}
\theoremstyle{definition}
\newtheorem{definition}[theorem]{Definition}
\newtheorem{example}[theorem]{Example}
\theoremstyle{remark}
\newtheorem{remark}[theorem]{Remark}
\title{\bf A new approach to the $2$-regularity of the $\ell$-abelian complexity of $2$-automatic sequences}
\author{Aline Parreau\thanks{This work has been done when this author was an FNRS post-doctoral fellow at the University of Liege.}\\
\small LIRIS\\[-0.8ex]
\small University of Lyon, CNRS\\[-0.8ex] 
\small Lyon, France\\
\small\tt aline.parreau@univ-lyon1.fr\\
\and  
 Michel Rigo \\
\small Department of Mathematics\\[-0.8ex]
\small University of Liege\\[-0.8ex] 
\small Liege, Belgium\\
\small\tt M.Rigo@ulg.ac.be\\ 
\and
Eric Rowland\thanks{BeIPD-COFUND post-doctoral fellow at the University of Liege.}\\
 \small Department of Mathematics\\[-0.8ex]
\small University of Liege\\[-0.8ex] 
\small Liege, Belgium\\
\small\tt erowland@ulg.ac.be\\ 
\and
\'Elise Vandomme\\
\begin{tabular}{cc}
\small Department of Mathematics & \small Institut Fourier \\[-0.8ex]
\small University of Liege & \small University of Grenoble \\[-0.8ex] 
\small Liege, Belgium & \small Grenoble, France \\
\end{tabular} \\
\small\tt E.Vandomme@ulg.ac.be
}
\date{January 16, 2015}
\newcommand{\Z}{\mathbb{Z}}
\newcommand{\blo}{\textnormal{block}}
\newcommand{\fac}{\textnormal{Fac}}
\newcommand{\seq}[1]{\cite[\href{http://oeis.org/#1}{#1}]{OEIS}}
\DeclareMathOperator{\M}{max}	
\DeclareMathOperator{\m}{min}	
\DeclareMathOperator{\JM}{MJ}	
\DeclareMathOperator{\jm}{mj}	
\begin{document}

\maketitle

\begin{abstract}
We prove that a sequence satisfying a certain symmetry property is $2$-regular in the sense of Allouche and Shallit, i.e.,
    the $\mathbb{Z}$-module generated by its $2$-kernel is finitely
    generated.
We apply this theorem to develop a general approach for studying the $\ell$-abelian complexity of $2$-automatic sequences.
In particular, we prove that the period-doubling word and the Thue--Morse word have $2$-abelian complexity sequences that are $2$-regular.
Along the way, we also prove that the $2$-block codings of these two words have $1$-abelian complexity sequences that are $2$-regular.
\end{abstract}

\section{Introduction}

This paper is about some structural properties of integer sequences that occur naturally in combinatorics on words. Since the fundamental work of Cobham~\cite{Cobham 1972}, the so-called automatic sequences have been extensively studied. We refer the reader to \cite{Allouche--Shallit 2003b} for basic definitions and properties. These infinite words over a finite alphabet can be obtained by iterating a prolongable morphism of constant length to get an infinite word (and then, an extra letter-to-letter morphism, also called coding, may be applied). As a fundamental example, the {\em Thue--Morse word} $\mathbf{t}=\sigma^\omega(0)=0110100110010110\cdots$ is a fixed point of the morphism $\sigma$ over the free monoid $\{0,1\}^*$ defined by $\sigma(0)=01$, $\sigma(1)=10$. Similarly, the {\em period-doubling word} $\mathbf{p}=\psi^\omega(0)=01000101010001000100\cdots$ is a fixed point of the morphism $\psi$ over $\{0,1\}^*$ defined by $\psi(0)=01$, $\psi(1)=00$. We will discuss again these two examples of $2$-automatic sequences. 

Since an infinite word is just a sequence over $\mathbb{N}$ taking values in a finite alphabet, we use the terms `infinite word' and `sequence' interchangeably.

Let $k\ge 2$ be an integer. One characterization of $k$-automatic sequences is that their $k$-kernels are finite; see \cite{Eilenberg 1974} or \cite[Section~6.6]{Allouche--Shallit 2003b}. 
\begin{definition}
The {\em $k$-kernel} of a sequence $\mathbf{s}=s(n)_{n\ge 0}$ is the set $$\mathcal{K}_k(\mathbf{s})=\{s(k^in+j)_{n\ge 0} : i\ge 0 \text{ and } 0\le j<k^i\}.$$
\end{definition}
For instance, the $2$-kernel $\mathcal{K}_2(\mathbf{t})$ of the Thue--Morse word contains exactly two elements, namely $\mathbf{t}$ and $\sigma^\omega(1)$.

A natural generalization of automatic sequences to sequences on an infinite alphabet is given by the notion of $k$-regular sequences. We will restrict ourselves to sequences taking integer values only. 
\begin{definition}\label{def:regular}
Let $k\ge 2$ be an integer. A sequence $\mathbf{s}=s(n)_{n\ge 0}\in\mathbb{Z}^\mathbb{N}$ is {\em $k$-regular} if $\langle \mathcal{K}_k(\mathbf{s}) \rangle$ is a finitely-generated $\mathbb{Z}$-module, i.e., there exist a finite number of sequences $t_1(n)_{n\ge 0}, \ldots, t_\ell(n)_{n\ge 0}$ such that  
every sequence in the $k$-kernel $\mathcal{K}_k(\mathbf{s})$ is a $\mathbb{Z}$-linear combination of the $t_r$'s. Otherwise stated, for all $i\ge 0$  and for all $j\in\{0,\ldots,k^i-1\}$, there exist integers $c_1,\ldots,c_\ell$ such that
$$\forall n\ge 0,\quad s(k^in+j)=\sum_{r=1}^\ell c_r\, t_r(n).$$
\end{definition}
There are many natural examples of $k$-regular sequences~\cite{Allouche--Shallit 1992,Allouche--Shallit 2003}. There is a convenient matrix representation for $k$-regular sequences which leads to an efficient algorithm for computing the values of such a sequence (and many related quantities). See also \cite[Chapter~5]{Berstel--Reutenauer 2011} for connections with rational series. In particular, a sequence taking finitely many values is $k$-regular if and only if it is $k$-automatic. The $k$-regularity of a sequence provides us with structural information about how the different terms are related to each other.

A classical measure of complexity of an infinite word $\mathbf{x}$ is its {\em factor complexity} $\mathcal{P}^{(\infty)}_\mathbf{x}:\mathbb{N}\to\mathbb{N}$ which maps $n$ to the number of distinct factors of length $n$ occurring in $\mathbf{x}$. It is well known that a $k$-automatic sequence $\mathbf{x}$ has a $k$-regular factor complexity function and the sequence $(\mathcal{P}^{(\infty)}_\mathbf{x}(n+1)-\mathcal{P}^{(\infty)}_\mathbf{x}(n))_{n\ge 0}$ is $k$-automatic. See \cite{Carpi--DAlonzo 2010, Charlier--Rampersad--Shallit 2012} for a proof and relevant extensions. As an example, again for the Thue--Morse word, we have $$\mathcal{P}^{(\infty)}_\mathbf{t}(2n+1)=2\mathcal{P}^{(\infty)}_\mathbf{t}(n+1)\text{ and }\mathcal{P}^{(\infty)}_\mathbf{t}(2n)=\mathcal{P}^{(\infty)}_\mathbf{t}(n+1)+\mathcal{P}^{(\infty)}_\mathbf{t}(n)$$ for all $n\ge 2$. See also \cite{Frid} where a formula was obtained for the factor complexity of fixed points of some uniform morphisms.

Recently there has been a renewal of interest in abelian notions arising in combinatorics on words (e.g., avoiding abelian or $\ell$-abelian patterns, abelian bordered words, etc.). For instance, two finite words $u$ and $v$ are {\em abelian equivalent} if one is obtained by permuting the letters of the other one, i.e., the two words share the same Parikh vector, $\Psi(u)=\Psi(v)$. Since the Thue--Morse word is an infinite concatenation of factors $01$ and $10$, this word is {\em abelian periodic} of period $2$. The {\em abelian complexity}  of an infinite word $\mathbf{x}$ is a function $\mathcal{P}^{(1)}_\mathbf{x}:\mathbb{N}\to\mathbb{N}$ which maps $n$ to the number of distinct factors of length $n$ occurring in $\mathbf{x}$, counted up to abelian equivalence.  Madill and Rampersad~\cite{Madill--Rampersad} provided the first example of regularity in this setting: the abelian complexity of the paper-folding word (which is another typical example of an automatic sequence) is unbounded and $2$-regular.

Let $\ell\ge 1$ be an integer. Based on \cite{Karhumaki 1980} the notions of abelian equivalence and thus abelian complexity were recently extended to $\ell$-abelian equivalence and $\ell$-abelian complexity \cite{Karhumaki--Saarela--Zamboni}. 
\begin{definition}
Let $u,v$ be two finite words. We let $|u|_v$ denote the number of occurrences of the factor $v$ in $u$. Two finite words $x$ and $y$ are {\em $\ell$-abelian equivalent} if $|x|_v=|y|_v$ for all words $v$ of length $|v| \leq \ell$. 
\end{definition}
As an example, the words $011010011$ and  $001101101$ are $2$-abelian equivalent but not $3$-abelian equivalent (the factor $010$ occurs in the first word but not in the second one). 
Hence one can define the function $\mathcal{P}^{(\ell)}_\mathbf{x}:\mathbb{N}\to\mathbb{N}$ which maps $n$ to the number of distinct factors of length $n$ occurring in the infinite word $\mathbf{x}$, counted up to $\ell$-abelian equivalence.  That is, we count $\ell$-abelian equivalence classes partitioning the set of factors $\fac_{\mathbf{x}}(n)$ of length $n$ occurring in $\mathbf{x}$. In particular, for any infinite word $\mathbf{x}$, we have for all $n\ge 0$ $$\mathcal{P}^{(1)}_\mathbf{x}(n)\le \cdots \le \mathcal{P}^{(\ell)}_\mathbf{x}(n)\le \mathcal{P}^{(\ell+1)}_\mathbf{x} (n)\le \cdots\le \mathcal{P}^{(\infty)}_\mathbf{x}(n).$$

In this paper, we show that both the period-doubling word and the Thue--Morse word have $2$-abelian complexity sequences which are $2$-regular.
The computations and arguments leading to these results permit us to exhibit some similarities between the two cases and a quite general scheme that we hope can be used again to prove additional regularity results. Indeed, one conjectures that {\em any $k$-automatic sequence has an $\ell$-abelian complexity function that is $k$-regular}.

We mention some other papers containing related work.
In \cite{Karhumaki--Saarela--Zamboni arxiv}, the authors studied the asymptotic behavior of $\mathcal{P}^{(\ell)}_\mathbf{t}(n)$ and also derived some recurrence relations\footnote{It seems that there is some subtle error in the relation for $\mathcal{P}^{(1)}_\mathbf{p}(4n+2)$ proposed in \cite[Lemma~6]{Karhumaki--Saarela--Zamboni arxiv}. Correct relations are given by \cite[Proposition~2]{Blanchet} and could also be obtained by Theorem~\ref{thm:reflection_recurrence} and Proposition~\ref{prop:recdelta_TM}.} showing that the abelian complexity $\mathcal{P}^{(1)}_\mathbf{p}(n)_{n\ge 0}$ of the period-doubling word $\mathbf{p}$ is $2$-regular.
In \cite{Blanchet}, the abelian complexity of the fixed point $\mathbf{v}$ of the non-uniform morphism $0\mapsto 012, 1\mapsto 02, 2\mapsto 1$ is studied and the authors obtain results similar to those discussed in this paper. Even though the authors of \cite{Blanchet} are not directly interested in the $k$-regularity of $\mathcal{P}^{(1)}_\mathbf{v}(n)_{n\ge 0}$, they derive recurrence relations. From these relations, following the approach described in this paper, one can possibly prove some regularity result. In particular, the result of replacing  in $\mathbf{v}$ all $2$'s by $0$'s leads back to the period-doubling word. Hence, Blanchet-Sadri et~al.\ also proved some other relations about the abelian complexity of $\mathbf{p}$.

Given the first few terms of a sequence, one can easily conjecture the potential $k$-regularity of this sequence by exhibiting relations that should be satisfied; see \cite[Section~6]{Allouche--Shallit 2003} for such a ``predictive'' algorithm that recognizes regularity. Of course, in such an algorithm, a finite examination does not lead to a proof of the $k$-regularity of a sequence. The first few terms of the $2$-abelian complexity $\mathcal{P}^{(2)}_\mathbf{t}(n)_{n\ge 0}$ of the Thue--Morse word are
$$1, 2, 4, 6, 8, 6, 8, 10, 8, 6, 8, 8, 10, 10, 10, 8, 8, 6, 8, 10, 10, 8, 10, 12, 12, 10, 12, 12, \ldots.$$
The second and last authors of this paper conjectured the $2$-regularity of the sequence $\mathcal{P}^{(2)}_\mathbf{t}(n)_{n\ge 0}$ (and proved some recurrence relations for this sequence)~\cite{Rigo Vandomme}. Recently, after hearing a talk given by the last author during the \textit{Representing Streams~II} meeting in January 2014, Greinecker proved the recurrence relations needed to prove the $2$-regularity of this sequence~\cite{Greinecker}. Hopefully, the two approaches are complementary: in this paper, we prove $2$-regularity without exhibiting the explicit recurrence relations.
\smallskip

Let us now describe the content and organization of this paper.
\smallskip

In Section~\ref{sec:2} we prove Theorem~\ref{thm:reflection_recurrence}, which establishes the $2$-regularity of a large family of sequences satisfying a recurrence relation with a parameter $c$ and $2^{\ell_0}$ initial conditions. The form of the recurrence implies that sequences in this family exhibit a reflection symmetry in the values taken over each interval $[2^\ell,2^{\ell+1})$ for $\ell \geq \ell_0$. For the special case of the Thue--Morse word, a similar property is shown in \cite{Greinecker}. Computer experiments suggest that many $2$-abelian complexity functions satisfy such a reflection property.
\begin{theorem}\label{thm:reflection_recurrence}
Let $\ell_0 \geq 0$ and $c \in \Z$.
Suppose $s(n)_{n \geq 0}$ is a sequence such that, for all $\ell \geq \ell_0$ and for all $r$ such that $0 \leq r \leq 2^\ell - 1$, we have
\begin{equation}\label{eq:eric}
	s(2^\ell + r) =
	\begin{cases}
		s(r) + c	& \text{if $r \leq 2^{\ell-1}$} \\
		s(2^{\ell+1}-r)	& \text{if $r > 2^{\ell-1}$}.
	\end{cases}
\end{equation}
Then $s(n)_{n \geq 0}$ is $2$-regular.
\end{theorem}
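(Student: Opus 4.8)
The plan is to exhibit a finitely generated $\Z$-module containing the $2$-kernel $\mathcal{K}_2(\mathbf{s})$, and then invoke Definition~\ref{def:regular}. First I would analyze the solution space of \eqref{eq:eric}. For a fixed value of $c$, a sequence satisfying \eqref{eq:eric} is completely determined by its $2^{\ell_0}$ initial values $s(0),\ldots,s(2^{\ell_0}-1)$, since every later value is expressed in terms of earlier ones (in the second case the value only refers to the already-determined first half of the same block). Hence the solutions for a given $c$ form an affine space over the free $\Z$-module $\mathcal{S}_0$ of rank $2^{\ell_0}$ consisting of the solutions with $c=0$. A preliminary observation is that \eqref{eq:eric} forces the reflection symmetry $s(2^\ell+r)=s(2^{\ell+1}-r)$ on the whole block $[2^\ell,2^{\ell+1})$, and that this reflection preserves the parity of the index for $\ell\ge 1$.

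Next I would compute the two decimations $E_0\mathbf{s}=s(2n)_{n\ge 0}$ and $E_1\mathbf{s}=s(2n+1)_{n\ge 0}$ blockwise. A direct case analysis, splitting on whether the halved index lands in the lower or upper half of its block, shows that $E_0\mathbf{s}$ again satisfies \eqref{eq:eric} with the same $c$, so $E_0$ maps the affine family $\mathcal{S}_c$ into itself. The odd decimation $E_1\mathbf{s}$, however, satisfies a variant of \eqref{eq:eric} in which the reflection axis is shifted by a half-integer: the second half of each block is now the reversal $t(2^{\ell+1}-1-r)$ rather than $t(2^{\ell+1}-r)$, and this reversal \emph{swaps} the parity of the index. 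This half-unit shift is the main obstacle, as it prevents $E_1$ from mapping any family of single sequences into itself: a further decimation of such a parity-swapping sequence reflects even positions onto odd positions, so $E_0$ and $E_1$ of it become intertwined.

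To absorb this, I would introduce a companion object, namely the module $\mathcal{Q}_0$ of \emph{pairs} $(p,q)$ of sequences that reflect into each other, i.e.\ on each block the second half of $p$ is the reversal of the first half of $q$ and vice versa (with $c=0$; the affine version $\mathcal{Q}_c$ adds $c$ on each first half). Such a pair is determined by the $2^{\ell_0+1}$ initial values of $p$ and $q$, so $\mathcal{Q}_0$ is free of rank $2^{\ell_0+1}$. The key closure computation, carried out by the same parity case analysis, is that componentwise decimation sends a pair $(p,q)\in\mathcal{Q}_c$ to the two pairs $(E_0 p,E_1 q)$ and $(E_1 p,E_0 q)$, each of which again lies in $\mathcal{Q}_c$; moreover $E_1\mathbf{s}$ lands on the diagonal $\{(t,t)\}\subseteq\mathcal{Q}_c$. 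Working with pairs rather than with individual sequences is exactly what avoids the sign flips and the divisions by $2$ that a naive symmetric/antisymmetric splitting would introduce, keeping everything over $\Z$.

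Finally I would assemble these facts. Every element of $\mathcal{K}_2(\mathbf{s})$ is obtained from $\mathbf{s}$ by iterating $E_0$ and $E_1$; by the above, each such sequence either lies in $\mathcal{S}_c$ (if only $E_0$'s were applied) or is a component of a pair in $\mathcal{Q}_c$ (as soon as one $E_1$ is applied). Thus $\mathcal{K}_2(\mathbf{s})$ is contained in the union of the affine family $\mathcal{S}_c$ and the two coordinate projections of $\mathcal{Q}_c$, i.e.\ in finitely many cosets of the finitely generated modules $\mathcal{S}_0$ and of the projections of $\mathcal{Q}_0$. Since a finite union of cosets of finitely generated $\Z$-modules generates a finitely generated $\Z$-module, $\langle\mathcal{K}_2(\mathbf{s})\rangle$ is finitely generated and $\mathbf{s}$ is $2$-regular. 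The steps I expect to require the most care are the parity case analyses establishing closure of $\mathcal{S}_c$ under $E_0$ and of $\mathcal{Q}_c$ under componentwise decimation, together with the small-index boundary cases (small $\ell$), which are harmless because they only affect finitely many initial conditions.
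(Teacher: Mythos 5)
Your proposal is correct, and it takes a genuinely different route from the paper's proof. The paper argues by explicit formulas: it first proves $2$-regularity of the model solution $A(n)_{n\ge 0}$ (the case $\ell_0=0$, $c=1$, OEIS A007302) via the recurrences of Proposition~\ref{A007302 is 2-regular}, and then establishes, by induction on $n$, an eight-case identity expressing every subsequence $s(2^{\ell_0+2}n+i)_{n\ge 0}$ as $c$ times one of $A(n)$, $A(4n+1)$, $A(2n+1)$, $A(4n+3)$, plus a shift of $s$, plus a constant, thereby exhibiting concrete generators of $\langle\mathcal{K}_2(\mathbf{s})\rangle$. You never introduce a model sequence or a closed form; instead you prove closure properties under decimation --- $E_0$ preserves the affine solution family $\mathcal{S}_c$, while $E_1$ creates the half-shifted, parity-swapping reflection that is absorbed by the pair module $\mathcal{Q}_c$, closed under $(p,q)\mapsto(E_0p,E_1q)$ and $(p,q)\mapsto(E_1p,E_0q)$ --- and conclude by a coset argument, which indeed suffices for Definition~\ref{def:regular} since that definition only requires $\mathcal{K}_2(\mathbf{s})$ to lie in some finitely generated $\Z$-module. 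I checked your decimation computations, including the midpoint and parity boundary cases, and they are correct, as is the claim that $E_1\mathbf{s}$ lies on the diagonal of $\mathcal{Q}_c$ and the induction showing every kernel element is in $\mathcal{S}_c$ or a component of a pair in $\mathcal{Q}_c$. One cosmetic inaccuracy: for $\ell_0=0$ the pair module is not free of rank $2^{\ell_0+1}=2$, because the degenerate block $\ell=0$ forces $p(1)=q(1)$ without determining that common value; taking the threshold of $\mathcal{Q}$ to be $\max(\ell_0,1)$ fixes this, and in any case only finite rank is needed. As for the trade-off: the paper's method is constructive, yielding the explicit relations that are reused downstream (Example~\ref{exa:l0=2} feeds into Corollary~\ref{cor:2regDm_PD}, and the same explicit technique is reused for the recurrences of $\mathcal{P}^{(1)}_{\mathbf{x}}$), whereas your argument is shorter and softer --- no induction on $n$, no need to guess the eight-case formula --- at the cost that extracting concrete generators or recurrences for a given sequence requires unwinding the cosets of $\mathcal{S}_0$ and $\pi_1(\mathcal{Q}_0)$ into an explicit basis.
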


The recurrence satisfied by $s(n)$ in Theorem~\ref{thm:reflection_recurrence} reads words from left to right, i.e., starting with the most significant digit.
Our proof of this theorem will express sequences in the $2$-kernel of $s(n)_{n \geq 0}$ as in Definition~\ref{def:regular}, starting with the least significant digit.

From Equation~\eqref{eq:eric} one can get some information about the asymptotic behavior of the sequence $s(n)_{n \geq 0}$.
We have $s(n) = O(\log n)$, and moreover
\[
	s\left(\tfrac{4^{\ell + 1} - 1}{3}\right) = s(4^\ell + \cdots + 4^1 + 4^0) = \left(\ell - \left\lfloor\tfrac{\ell_0 - 1}{2}\right\rfloor\right) c + s\left(\tfrac{4^{\lfloor(\ell_0 + 1)/2\rfloor} - 1}{3}\right)
\]
for $\ell \geq \lfloor\frac{\ell_0 - 1}{2}\rfloor$.
At the same time, there are many subsequences of $s(n)_{n \geq 0}$ which are constant; for example, $s(2^\ell) = c$ for $\ell \geq \ell_0$.

\begin{example}
    As an illustration of the reflection property described in Theorem~\ref{thm:reflection_recurrence}, we consider the abelian complexity of the $2$-block coding of the period-doubling word $\mathbf{p}$. (The recurrence satisfied by this sequence is given in Theorem~\ref{thm:recab_PD}.) Some values of this sequence are depicted in Figures~\ref{fig:reflect1} and \ref{fig:reflect2}.
    \begin{figure}[h!tbp]
        \centering
        \scalebox{.6}{\includegraphics{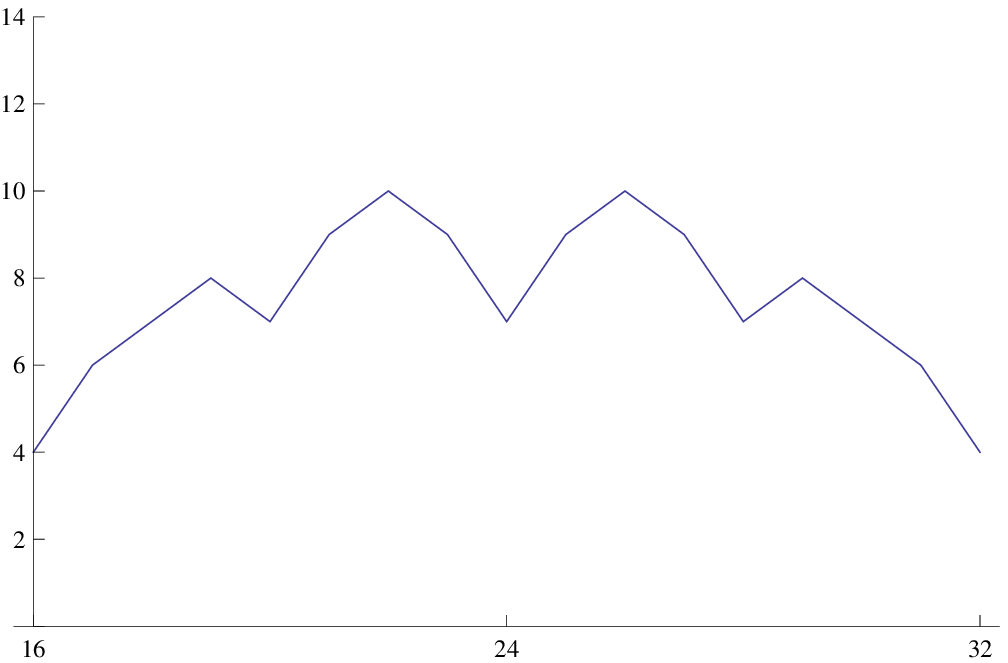}\quad \includegraphics{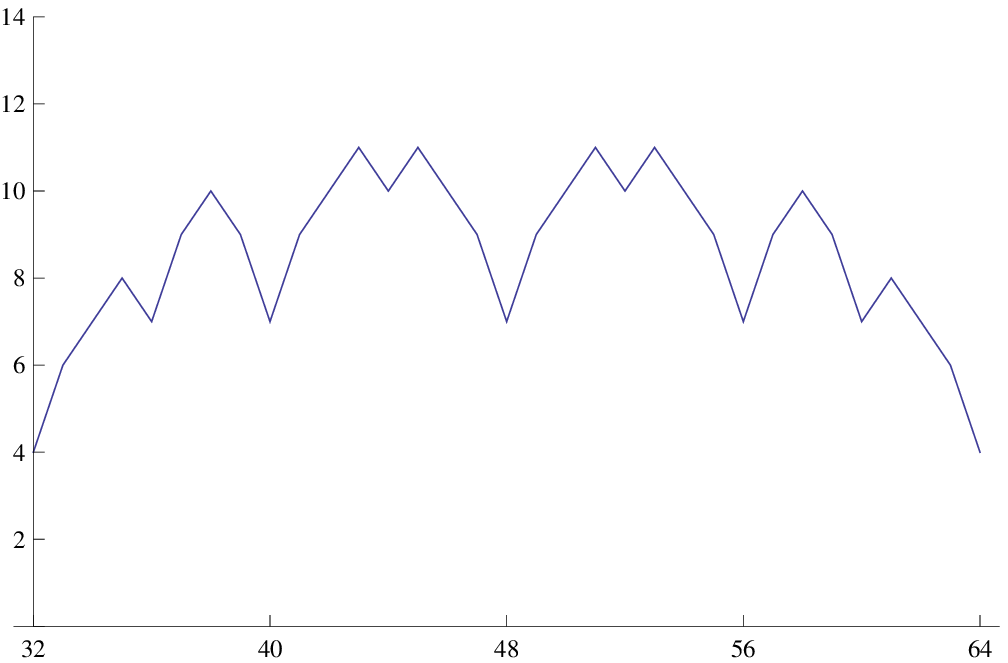}}
        \caption{The abelian complexity of $\blo(\mathbf{p},2)$ on the intervals $[16,32]$ and $[32,64]$.}
        \label{fig:reflect1}
    \end{figure} 
\begin{figure}[h!tbp]
        \centering
        \scalebox{.8}{\includegraphics{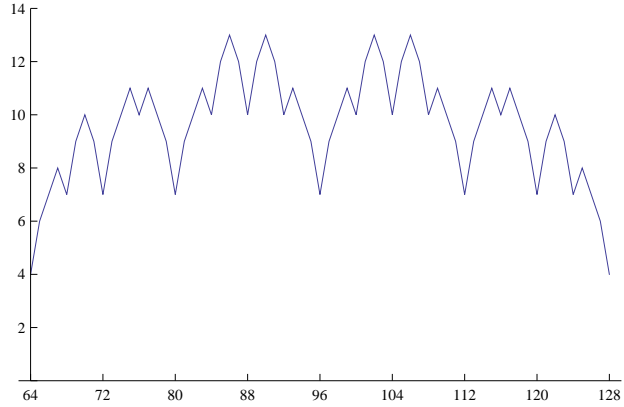}}
        \caption{The abelian complexity of $\blo(\mathbf{p},2)$ on the interval $[64,128]$.}
        \label{fig:reflect2}
    \end{figure}
\end{example}

In Section~\ref{sec:3}, we collect some general results and definitions about words and $k$-regular sequences (in particular stability properties of the set of $k$-regular sequences under sum and product) that are needed in the other parts of this paper. 

In Section~\ref{sec:block_PD}, we study the abelian complexity of the $2$-block coding $\mathbf{x}=\blo(\mathbf{p},2)$ of the period-doubling word $\mathbf{p}$. In particular, we consider the difference $\Delta_{0}(n)$ between the maximal and minimal numbers of $0$'s occurring in factors of length $n$ in $\blo(\mathbf{p},2)$. We prove that the sequences $\Delta_{0}(n)_{n\ge 0}$ and $\mathcal{P}^{(1)}_\mathbf{x}(n)_{n\ge 0}$ are $2$-regular.
In Section~\ref{sec:2ab_comp_PD}, we study the $2$-abelian complexity of $\mathbf{p}$. We show that the $2$-regularity of $\Delta_{0}(n)_{n\ge 0}$ and $\mathcal{P}^{(1)}_\mathbf{x}(n)_{n\ge 0}$ implies the $2$-regularity of $\mathcal{P}^{(2)}_\mathbf{p}(n)$.

Sections~\ref{sec:block_TM} and \ref{sec:2ab_comp_TM} share some similarities with Sections~\ref{sec:block_PD} and \ref{sec:2ab_comp_PD}. The reader will see that the strategy used to prove the  $2$-regularity of $\mathcal{P}^{(2)}_\mathbf{p}(n)$ can also be applied to the Thue--Morse word. Nevertheless, some differences do not permit us to treat the two cases within a completely unified framework.

In Section~\ref{sec:block_TM}, we study the abelian complexity of the $2$-block coding $\mathbf{y}=\blo(\mathbf{t},2)$ of the Thue--Morse word $\mathbf{t}$. We define $\Delta_{12}(n)$ to be the difference between the maximal total and minimal total numbers of $1$'s and $2$'s occurring in factors of length $n$ in $\blo(\mathbf{t},2)$. It turns out that $\Delta_{12}(n) + 1 = \mathcal{P}^{(1)}_\mathbf{p}(n)$ and our results can thus be related to \cite{Blanchet} and \cite{Karhumaki--Saarela--Zamboni arxiv}. We prove that $\Delta_{12}(n)_{n\ge 0}$ and $\mathcal{P}^{(1)}_\mathbf{y}(n)_{n\ge 0}$ are $2$-regular.
In Section~\ref{sec:2ab_comp_TM}, we show that the $2$-regularity of $\mathcal{P}^{(2)}_\mathbf{t}(n)$ follows from the $2$-regularity of $\Delta_{12}(n)_{n\ge 0}$ and $\mathcal{P}^{(1)}_\mathbf{y}(n)_{n\ge 0}$.

Finally, in Section~\ref{Conclusions} we suggest a direction for future work.

\tableofcontents


\section{Sequences satisfying a reflection symmetry}\label{sec:2}
The aim of this section is to prove Theorem~\ref{thm:reflection_recurrence} stated in the introduction. 
Before proving it in generality, we first examine the sequence satisfying the recurrence for $\ell_0 = 0$ and $c = 1$.
It will turn out that the general solution can be expressed naturally in terms of this sequence.

Let $A(0) = 0$.
For each $\ell \geq 0$ and $0 \leq r \leq 2^\ell - 1$, let
\begin{equation}\label{recurrence for A}
	A(2^\ell + r) =
	\begin{cases}
		A(r) + 1		& \text{if $r \leq 2^{\ell-1}$} \\
		A(2^{\ell+1} - r)	& \text{if $r > 2^{\ell-1}$}.
	\end{cases}
\end{equation}
The sequence $A(n)_{n \geq 0}$ is
\[
	0, 1, 1, 2, 1, 2, 2, 2, 1, 2, 2, 3, 2, 3, 2, 2, \dots
\]
and appears as \seq{A007302}.
Allouche and Shallit~\cite[Example~12]{Allouche--Shallit 2003} identified this sequence as an example of a regular sequence.
We include a proof here.

\begin{proposition}\label{A007302 is 2-regular}
For all $n \geq 0$ we have
\begin{align*}
	A(2 n) &= A(n) \\
	A(8 n + 1) &= A(4 n + 1) \\
	A(8 n + 3) &= A(2 n + 1) + 1 \\
	A(8 n + 5) &= A(2 n + 1) + 1 \\
	A(8 n + 7) &= A(4 n + 3).
\end{align*}
In particular, $A(n)_{n \geq 0}$ is $2$-regular.
\end{proposition}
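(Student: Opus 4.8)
The plan is to first derive a cleaner, equivalent form of the recurrence~\eqref{recurrence for A} that is easier to iterate, then prove the five stated identities, and finally assemble them into a closed family of kernel sequences.

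First I would rewrite~\eqref{recurrence for A} so that \emph{both} branches increment the value by $1$. In the reflecting branch ($r>2^{\ell-1}$) the argument $2^{\ell+1}-r$ lands in the lower half $(2^\ell,3\cdot2^{\ell-1})$ of $[2^\ell,2^{\ell+1})$, where the first branch applies; unwinding that single step gives, for every $m\ge 1$ with $\ell=\lfloor\log_2 m\rfloor$,
\[
	A(m)=\begin{cases} A(m-2^\ell)+1 & \text{if } 2^\ell\le m\le 2^\ell+2^{\ell-1},\\ A(2^{\ell+1}-m)+1 & \text{if } 2^\ell+2^{\ell-1}<m<2^{\ell+1}. \end{cases}
\]
This reformulation peels off the leading binary digit and records a single reflection, which is exactly what is needed to analyze identities about trailing digits.

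Next I would prove $A(2n)=A(n)$ by strong induction on $n$: placing $n$ in its dyadic interval and splitting on the two branches above, the doubled argument $2n$ stays in the corresponding branch one level up, and the induction hypothesis applied to $n-2^\ell$ (branch~1) or to $2^{\ell+1}-n$ (branch~2) closes the case. The four odd identities I would then prove \emph{together}, by a single mutual strong induction on $n$, peeling the leading digit of $n$ via the reformulation. Writing $n\in[2^k,2^{k+1})$, the first branch (top two bits $10$) reduces $A(8n+j)$ to the same identity at the smaller index $n-2^k$, while the second branch (top two bits $11$) reflects $8n+j$ to $8n'+(8-j)$ with $n'=2^{k+1}-n-1<n$, invoking the partner identity. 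Thus $8n+1\leftrightarrow 8n+7$ and $8n+3\leftrightarrow 8n+5$ pair up, and one checks the right-hand sides transform compatibly ($A(4n+1)\leftrightarrow A(4n+3)$, while $A(2n+1)+1$ is self-paired). A handful of small indices serve as base cases.

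The main obstacle will be the bookkeeping in this case analysis: one must pin down the branch thresholds exactly (the comparison $8n+j\lessgtr 3\cdot2^{k+2}$ must be shown to agree, after a careful off-by-one check, with $4n+j'\lessgtr 3\cdot2^{k+1}$) and verify that the reflected index $n'$ is genuinely smaller, so that the mutual induction is well founded. Once the five identities hold, the ``in particular'' is immediate: the operators $T_0\colon s(n)\mapsto s(2n)$ and $T_1\colon s(n)\mapsto s(2n+1)$ send each of $A(n),A(2n+1),A(4n+1),A(4n+3)$ to a $\Z$-combination of these four sequences and the constant sequence $\mathbf 1$, and fix $\mathbf 1$. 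Hence the $\Z$-module $M$ generated by these five sequences is closed under $T_0,T_1$, and since every kernel element is obtained from $A(n)$ by composing $T_0,T_1$ according to the binary digits of $j$, we get $\mathcal K_2(A)\subseteq M$. As $\Z$ is Noetherian, $\langle\mathcal K_2(A)\rangle\subseteq M$ is finitely generated, so $A(n)_{n\ge0}$ is $2$-regular.
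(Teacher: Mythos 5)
Your proposal is correct and follows essentially the same route as the paper's proof: a simultaneous strong induction on $n$ that peels the most significant binary digit, where the lower-half case reduces each identity to itself at a smaller index and the reflecting case invokes the partner identity ($8n+1 \leftrightarrow 8n+7$, $8n+3 \leftrightarrow 8n+5$) — exactly the mechanism in the paper, whose treatment of $A(8n+1)$ in the reflecting case likewise appeals to the induction hypothesis for $A(8m+7)=A(4m+3)$. Your preliminary normalization of the recurrence (so both branches carry a ``$+1$'') and the explicit $T_0,T_1$-closure plus Noetherian argument for $2$-regularity are tidy refinements of steps the paper performs with the original recurrence or leaves implicit, but they do not change the underlying argument.
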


\begin{proof}
This proof is typical of many of the proofs throughout the paper.
We work by induction on $n$.
The case $n = 0$ can be checked easily using the first few values of the sequence $A(n)_{n \geq 0}$.
Therefore, let $n \geq 1$ and assume that the recurrence holds for all values less than $n$.
Write $n = 2^\ell + r$ with $\ell \geq 0$ and $0 \leq r \leq 2^\ell - 1$.

First let us address the equation $A(2 n) = A(n)$.
If $0 \leq r \leq 2^{\ell - 1}$, then
\begin{align*}
	A(2 n)
	&= A(2^{\ell + 1} + 2 r) \\
	&= A(2 r) + 1			& \text{(by Equation~\eqref{recurrence for A})} \\
	&= A(r) + 1			& \text{(by induction hypothesis)} \\
	&= A(2^\ell + r)		& \text{(by Equation~\eqref{recurrence for A})} \\
	&= A(n).
\end{align*}
On the other hand, if $2^{\ell - 1} < r < 2^\ell$, then
\begin{align*}
	A(2 n)
	&= A(2^{\ell + 1} + 2 r) \\
	&= A(2^{\ell + 2} - 2 r)	& \text{(by Equation~\eqref{recurrence for A})} \\
	&= A(2^{\ell + 1} - r)		& \text{(by induction hypothesis)} \\
	&= A(2^\ell + r)		& \text{(by Equation~\eqref{recurrence for A})} \\
	&= A(n).
\end{align*}

Next we consider $A(8 n + 1) = A(4 n + 1)$.
If $0 \leq r \leq 2^{\ell - 1} - 1$, then
\begin{align*}
	A(8 n + 1)
	&= A(2^{\ell + 3} + 8 r + 1) \\
	&= A(8 r + 1) + 1	& \text{(by Equation~\eqref{recurrence for A})} \\
	&= A(4 r + 1) + 1		& \text{(by induction hypothesis)} \\
	&= A(2^{\ell + 2} + 4 r + 1)	& \text{(by Equation~\eqref{recurrence for A})} \\
	&= A(4 n + 1).
\end{align*}
If $2^{\ell - 1} \leq r < 2^\ell$, then
\begin{align*}
	A(8 n + 1)
	&= A(2^{\ell + 3} + 8 r + 1) \\
	&= A(2^{\ell + 4} - 8 r - 1)		& \text{(by Equation~\eqref{recurrence for A})} \\
	&= A(2^{\ell + 4} - 8 r - 8 + 7) \\
	&= A(2^{\ell + 3} - 4 r - 4 + 3)	& \text{(by induction hypothesis)} \\
	&= A(2^{\ell + 3} - (4 r + 1)) \\
	&= A(2^{\ell + 2} + 4 r + 1)		& \text{(by Equation~\eqref{recurrence for A})} \\
	&= A(4 n + 1).
\end{align*}
The equations for $A(8 n + 3)$, $A(8 n + 5)$ and $A(8 n + 7)$ are handled similarly.
\end{proof}

Now we prove Theorem~\ref{thm:reflection_recurrence}.
We show that for general $\ell_0 \geq 0$, a sequence $s(n)_{n \geq 0}$ satisfying the recurrence can be written in terms of $A(n)_{n \geq 0}$.

\begin{proof}[Proof of Theorem~\ref{thm:reflection_recurrence}]
There are $2^{\ell_0}$ initial conditions for the recurrence, namely $s(0)$, \ldots, $s(2^{\ell_0} - 1)$.
We claim that most of the $2^{\ell_0 + 2}$ subsequences of the form $s(2^{\ell_0 + 2} n + i)_{n \geq 0}$ depend on only one of the initial conditions $s(j)$; each of these subsequences is essentially $A(n)_{n \geq 0}$, $A(4 n + 1)_{n \geq 0}$, $A(2 n + 1)_{n \geq 0}$, or $A(4 n + 3)_{n \geq 0}$.
Furthermore, each of the remaining subsequences is equal to $s(2^{\ell_0} n + j) + c$ for some $j$.
More precisely, for $0 \leq i \leq 2^{\ell_0 + 2} - 1$ and $n \geq 0$ we have the identity
\begin{align*}
	s(&2^{\ell_0 + 2}n + i)= \\
	&\begin{cases}
		c \, A(n) + s(0)				& \text{if $i = 0$} \\
		c \, A(4 n + 1) - c + s(i)			& \text{if $1 \leq i \leq 2^{\ell_0} - 1$} \\
		c \, A(4 n + 1) + s(0)				& \text{if $i = 2^{\ell_0}$} \\
		s(2^{\ell_0} n + i - 2^{\ell_0}) + c		& \text{if $2^{\ell_0} + 1 \leq i \leq 2^{\ell_0} + 2^{\ell_0 - 1} - 1$} \\
		c \, A(2 n + 1) + s(|i - 2^{\ell_0 + 1}|)	& \text{if $2^{\ell_0} + 2^{\ell_0 - 1} \leq i \leq 2^{\ell_0 + 1} + 2^{\ell_0 - 1}$} \\
		s(2^{\ell_0} n + i - 2^{\ell_0 + 1}) + c	& \text{if $2^{\ell_0 + 1} + 2^{\ell_0 - 1} + 1 \leq i \leq 2^{\ell_0 + 1} + 2^{\ell_0} - 1$} \\
		c \, A(4 n + 3) + s(0)				& \text{if $i = 2^{\ell_0 + 1} + 2^{\ell_0}$} \\
		c \, A(4 n + 3) - c + s(2^{\ell_0 + 2} - i)	& \text{if $2^{\ell_0 + 1} + 2^{\ell_0} + 1 \leq i \leq 2^{\ell_0 + 2} - 1$}.
	\end{cases}
\end{align*}
(Note the symmetry among the eight cases, which reflects the symmetry $s(2^\ell + r) = s(2^{\ell+1}-r)$ of the recurrence for $r > 2^{\ell - 1}$.)
It will follow from this identity that the $\Z$-module generated by the $2$-kernel of $s(n)_{n \geq 0}$ is generated by the sequences $s(2^\ell n + j)_{n \geq 0}$ for $0 \leq \ell \leq \ell_0 + 1$ and $0 \leq j \leq 2^\ell - 1$, $A(n)_{n \geq 0}$, $A(4 n + 1)_{n \geq 0}$, $A(2 n + 1)_{n \geq 0}$, $A(4 n + 3)_{n \geq 0}$, and the constant $1$ sequence.
In particular, this module is finitely generated.

We prove the identity by induction on $n$.
Recall that for all $\ell \geq \ell_0$ and for all $r$ such that $0 \leq r \leq 2^\ell - 1$, we have Equation~\eqref{eq:eric}, i.e., 
\[
	s(2^\ell + r) =
	\begin{cases}
		s(r) + c	& \text{if $r \leq 2^{\ell-1}$} \\
		s(2^{\ell+1}-r)	& \text{if $r > 2^{\ell-1}$}.
	\end{cases}
\]
For $n = 0$, one uses $A(1) = 1$ and $A(3) = 2$ to verify that all eight cases of the identity hold.
Inductively, let $n \geq 1$, and assume the identity is true for all $n' < n$.
Write $n = 2^\ell + r$ with $\ell \geq 0$ and $0 \leq r \leq 2^\ell - 1$.

First we consider the case $0 \leq r \leq 2^{\ell - 1} - 1$.
For all $i\in\{0,\ldots 2^{\ell_0 + 2} - 1\}$, we have $2^{\ell_0 + 2} r + i \leq 2^{(\ell_0 + 2 + \ell) - 1} - 1$, so
\begin{align*}
	s(2^{\ell_0 + 2} n + i)
	&= s(2^{\ell_0 + 2 + \ell} + (2^{\ell_0 + 2} r + i)) \\
	&= s(2^{\ell_0 + 2} r + i) + c		& \text{(by Equation~\eqref{eq:eric})}.
\end{align*}
If $1 \leq i \leq 2^{\ell_0} - 1$, then the induction hypothesis now gives
\begin{align*}
	s(2^{\ell_0 + 2} n + i)
	&= s(2^{\ell_0 + 2} r + i) + c \\
	&= c \, A(4 r + 1) + s(i) \\
	&= c \left(A(2^{\ell + 2} + 4 r + 1) - 1\right) + s(i) \\
	&= c \, A(4 n + 1) - c + s(i),
\end{align*}
where we have used $A(2^{\ell + 2} + 4 r + 1) = A(4 r + 1) + 1$ from the recurrence for $A(n)$, since $4 r + 1 \leq 2^{(\ell + 2) - 1}$.
The other seven intervals for $i$ are verified similarly; in each case one applies the induction hypothesis to $s(2^{\ell_0 + 2} r + i) + c$ and then uses the recurrence for either $A(n)$ or $s(n)$ to raise an argument in $r$ to an argument in $n$.

It remains to consider $2^{\ell - 1} \leq r \leq 2^\ell - 1$.
First we address the case $i = 0$.
If $r = 2^{\ell - 1}$ then
\begin{align*}
	s(2^{\ell_0 + 2} n + i)
	&= s(2^{\ell_0 + 2 + \ell} + 2^{\ell_0 + 2 + \ell - 1}) \\
	&= s(2^{\ell_0 + 2 + \ell - 1}) + c				& \text{(by Equation~\eqref{eq:eric})} \\
	&= c \, A(2^{\ell - 1}) + s(0) + c				& \text{(by inductive hypothesis)} \\
	&= c \left(A(2^\ell + 2^{\ell - 1}) - 1\right) + s(0) + c	& \text{(by Equation~\eqref{recurrence for A})} \\
	&= c \, A(n) + s(0)
\end{align*}
as desired.
Alternatively, if $2^{\ell - 1} < r \leq 2^\ell - 1$ then $2^{\ell_0 + 2} r > 2^{(\ell_0 + 2 + \ell) - 1}$, so
\begin{align*}
	s(2^{\ell_0 + 2} n + i)
	&= s(2^{\ell_0 + 2 + \ell} + 2^{\ell_0 + 2} r) \\
	&= s(2^{\ell_0 + 2 + \ell + 1} - 2^{\ell_0 + 2} r)	& \text{(by Equation~\eqref{eq:eric})} \\
	&= s(2^{\ell_0 + 2} (2^{\ell + 1} - r) + 0) \\
	&= c \, A(2^{\ell + 1} - r) + s(0)			& \text{(by inductive hypothesis)} \\
	&= c \, A(2^\ell + r) + s(0)				& \text{(by Equation~\eqref{recurrence for A})} \\
	&= c \, A(n) + s(0).
\end{align*}

Therefore it remains to consider $2^{\ell - 1} \leq r \leq 2^\ell - 1$ for $1 \leq i \leq 2^{\ell_0 + 2} - 1$.
In this range we have $2^{\ell_0 + 2} r + i > 2^{(\ell_0 + 2 + \ell) - 1}$, so
\begin{align*}
	s(2^{\ell_0 + 2} n + i)
	&= s(2^{\ell_0 + 2 + \ell} + (2^{\ell_0 + 2} r + i)) \\
	&= s(2^{\ell_0 + 2 + \ell + 1} - 2^{\ell_0 + 2} r - i)	& \text{(by Equation~\eqref{eq:eric})} \\
	&= s(2^{\ell_0 + 2} n' + i'),
\end{align*}
where $n' = 2^{\ell + 1} - r - 1$ and $i' = 2^{\ell_0 + 2} - i$.
We prove the identity for the seven intervals for $i$ using the same steps we have already used several times; we have just applied the recurrence for $s(n)$, so next we use the induction hypothesis, followed by the recurrence for $A(n)$ or $s(n)$, depending on which term appears.
For the first interval, if $1 \leq i \leq 2^{\ell_0} - 1$, then $2^{\ell_0 + 1} + 2^{\ell_0} + 1 \leq i' \leq 2^{\ell_0 + 2} - 1$, so
\begin{align*}
	s(2^{\ell_0 + 2} n + i)
	&= s(2^{\ell_0 + 2} n' + i') \\
	&= c \, A(4 n' + 3) - c + s(2^{\ell_0 + 2} - i')	& \text{(by inductive hypothesis)} \\
	&= c \, A(2^{\ell + 3} - (4 r + 1)) - c + s(i) \\
	&= c \, A(2^{\ell + 2} + 4 r + 1) - c + s(i)		& \text{(by Equation~\eqref{recurrence for A})} \\
	&= c \, A(4 n + 1) - c + s(i).
\end{align*}
The proofs for the remaining six intervals are routine at this point, so we omit the steps here.
\end{proof}

\begin{example}\label{exa:l0=2}
In Section~\ref{sec:block_PD}, we will use Theorem~\ref{thm:reflection_recurrence} with $\ell_0 = 2$ to conclude that $\Delta_0(n)_{n \geq 0}$ and $\mathcal{P}^{(1)}_{\mathbf{x}}(n)_{n \geq 0}$ are $2$-regular for the period-doubling word.
For $\ell_0 = 2$ the value of $s(16 n + i)$ is
\[
	s(16 n + i) =
	\begin{cases}
		c \, A(n) + s(0)		& \text{if $i = 0$} \\
		c \, A(4 n + 1) - c + s(i)	& \text{if $1 \leq i \leq 3$} \\
		c \, A(4 n + 1) + s(0)		& \text{if $i = 4$} \\
		s(4 n + 1) + c			& \text{if $i = 5$} \\
		c \, A(2 n + 1) + s(|i - 8|)	& \text{if $6 \leq i \leq 10$} \\
		s(4 n + 3) + c			& \text{if $i = 11$} \\
		c \, A(4 n + 3) + s(0)		& \text{if $i = 12$} \\
		c \, A(4 n + 3) - c + s(16 - i)	& \text{if $13 \leq i \leq 15$}.
	\end{cases}
\]
In Section~\ref{sec:block_TM}, we will use Theorem~\ref{thm:reflection_recurrence} with $\ell_0 = 1$ to conclude that $\Delta_{12}(n)_{n \geq 0}$ is $2$-regular for the Thue--Morse word.
\end{example}


\section{About regular sequences and words}\label{sec:3}

We will often make use of the following composition theorem for a function $F$ defined piecewise on several $k$-automatic sets.

\begin{lemma}\label{lem:compo}
Let $k \geq 2$.
Let $P_1,\ldots,P_\ell:\mathbb{N}\to\{0,1\}$ be unary predicates that are $k$-automatic. Let $f_1,\ldots,f_\ell$ be $k$-regular functions. The function $F:\mathbb{N}\to\mathbb{N}$ defined by
$$F(n)=\sum_{i=1}^\ell f_i(n)\, P_i(n)$$
is $k$-regular.
\end{lemma}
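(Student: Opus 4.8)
The plan is to reduce Lemma~\ref{lem:compo} to the two fundamental closure properties of $k$-regular sequences, namely that the set of $k$-regular sequences is closed under $\mathbb{Z}$-linear combinations (i.e., it forms a $\mathbb{Z}$-module) and under termwise (Hadamard) product. Both of these are standard facts established earlier in the paper's Section~\ref{sec:3}, and I would cite them directly. The whole statement then falls out once one knows that each predicate $P_i$, viewed as a $\{0,1\}$-valued sequence, is itself $k$-regular.

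The key observation driving the proof is that a $k$-automatic predicate $P_i : \mathbb{N} \to \{0,1\}$ is in particular a sequence over the finite alphabet $\{0,1\}$, and every $k$-automatic sequence is $k$-regular. (Indeed, the excerpt itself records in Section~\ref{sec:2} the general principle that a sequence taking finitely many values is $k$-regular if and only if it is $k$-automatic.) So each $P_i(n)_{n\ge 0}$ is $k$-regular. Now consider the individual summands $f_i(n)\,P_i(n)$. Since $f_i$ is $k$-regular by hypothesis and $P_i$ is $k$-regular by the preceding remark, the termwise product $f_i \cdot P_i$ is $k$-regular by the closure of $k$-regular sequences under product.

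Finally, I would assemble the full function $F(n) = \sum_{i=1}^\ell f_i(n)\,P_i(n)$ as a $\mathbb{Z}$-linear combination (with all coefficients equal to $1$) of the $\ell$ products $f_i \cdot P_i$, each of which was just shown to be $k$-regular. By closure of the $k$-regular sequences under sums, $F$ is $k$-regular. One should perhaps note that $F$ maps into $\mathbb{N}$ as required when the $f_i$ are $\mathbb{N}$-valued, but the regularity argument only uses the $\mathbb{Z}$-module structure and so is indifferent to this; the statement $F:\mathbb{N}\to\mathbb{N}$ is consistent provided the $f_i$ are nonnegative, which is the intended reading.

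There is no genuine obstacle here: the lemma is essentially a packaging of closure under products and sums, and the only content is the identification of an automatic predicate with a $k$-regular $\{0,1\}$-sequence. If anything the ``hard part'' is purely bookkeeping — making sure that the closure results have actually been proved (not merely stated) in the surrounding text before this lemma is invoked, so that the citation is legitimate. Since the paper explicitly lists ``stability properties of the set of $k$-regular sequences under sum and product'' among the contents of Section~\ref{sec:3}, these prerequisites are available, and the proof is a two-line application of them.
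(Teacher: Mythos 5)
Your proof is correct and follows essentially the same route as the paper's: the paper likewise cites the closure of $k$-regular sequences under termwise sum and product (Allouche--Shallit, Theorem~2.5) and the fact that $k$-automatic sequences are special cases of $k$-regular sequences. The only cosmetic difference is that you spell out the summand-by-summand assembly, which the paper leaves implicit.
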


\begin{proof}
    It is a direct consequence of \cite[Theorem~2.5]{Allouche--Shallit 1992}: if $s(n)_{n\ge 0}$ and $t(n)_{n\ge 0}$ are $k$-regular, then $(s(n)+t(n))_{n\ge 0}$ and $(s(n)t(n))_{n\ge 0}$ are both $k$-regular sequences. Recall that $k$-automatic sequences are special cases of $k$-regular sequences.
\end{proof}

Note that if, for each $n$, there is exactly one $i$ such that $P_i(n) = 1$, then we can write
\[
	F(n) =
	\begin{cases}
		f_1(n)		& \text{if $P_1(n) = 1$} \\
		f_2(n)		& \text{if $P_2(n) = 1$} \\
		\quad \vdots	& \qquad \vdots \\
		f_\ell(n)	& \text{if $P_\ell(n) = 1$}.
	\end{cases}
\]
This is the setting in which we will apply Lemma~\ref{lem:compo}.

We will also make use of the following classical results.

\begin{lemma}\cite[Theorem~2.3]{Allouche--Shallit 1992}
     Let $k \geq 2$ be an integer. A sequence taking finitely many values is $k$-regular if and only if it is $k$-automatic.
\end{lemma}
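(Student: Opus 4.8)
The plan is to prove the two implications separately; the forward implication is immediate and needs no hypothesis on the values, while the converse is where the finiteness of the value set is essential.

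For the easy direction, suppose $\mathbf{s}$ is $k$-automatic. By the characterization of $k$-automatic sequences used in this paper, its $k$-kernel $\mathcal{K}_k(\mathbf{s})$ is a finite set. A finite set is in particular a finite generating set for the $\mathbb{Z}$-module it spans, so $\langle \mathcal{K}_k(\mathbf{s}) \rangle$ is finitely generated and $\mathbf{s}$ is $k$-regular.

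For the converse, assume $\mathbf{s}$ takes its values in a finite set $V \subseteq \mathbb{Z}$ and is $k$-regular; I want to show that $\mathcal{K}_k(\mathbf{s})$ is finite. Write $M = \langle \mathcal{K}_k(\mathbf{s}) \rangle$ and fix a finite generating set $t_1, \dots, t_\ell \in \mathbb{Z}^{\mathbb{N}}$. Every element of $\mathcal{K}_k(\mathbf{s})$ is a subsequence $s(k^i n + j)_{n \geq 0}$ of $\mathbf{s}$ and hence also takes all of its values in $V$. The crux is a finite-dimensional linear-algebra observation. Let $W \subseteq \mathbb{Q}^{\mathbb{N}}$ be the $\mathbb{Q}$-linear span of $t_1, \dots, t_\ell$, a vector space of some finite dimension $d \leq \ell$. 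The coordinate functionals $w \mapsto w(n)$ separate the points of $W$ (a sequence all of whose entries vanish is zero), so they span $W^*$; choosing $d$ of them that form a basis yields indices $n_1, \dots, n_d$ for which the evaluation map $\phi \colon W \to \mathbb{Q}^d$, $\phi(w) = (w(n_1), \dots, w(n_d))$, is an isomorphism, and in particular injective on $M$. Since $\mathcal{K}_k(\mathbf{s}) \subseteq M \subseteq W$ and every $u \in \mathcal{K}_k(\mathbf{s})$ satisfies $u(n_j) \in V$ for each $j$, we obtain $\phi(\mathcal{K}_k(\mathbf{s})) \subseteq V^d$, a finite set. Injectivity of $\phi$ then forces $|\mathcal{K}_k(\mathbf{s})| \leq |V|^d < \infty$, so $\mathbf{s}$ is $k$-automatic.

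The main obstacle is the converse, and within it the step that evaluating at a well-chosen finite set of coordinates is injective on $M$: one must locate finitely many positions $n_1, \dots, n_d$ that already pin down an element of $M$. This is exactly where the two hypotheses combine. Finite generation guarantees that $W$ is finite-dimensional, so that such a separating finite set of coordinates exists; finiteness of $V$ then bounds the number of admissible value-tuples $\phi(u) \in V^d$, and hence the number of elements of the $k$-kernel. I would be careful to justify that the coordinate functionals span $W^*$ (equivalently, that their common kernel in $W$ is trivial) so that a basis of $d$ of them can be extracted, since this is the precise sense in which finitely many sampled values determine a sequence in $M$.
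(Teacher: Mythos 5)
Your proof is correct. Note first that the paper itself gives no proof of this lemma at all: it is quoted as Theorem~2.3 of Allouche and Shallit (1992), so there is no internal argument to compare against, and a self-contained proof like yours is genuinely additional content. Both directions check out: the forward implication is the standard ``finite kernel generates a finitely generated $\mathbb{Z}$-module'' observation, and in the converse your duality argument is sound --- the coordinate functionals restricted to $W$ have trivial common kernel, hence span $W^*$ (this is where finite-dimensionality of $W$, i.e.\ finite generation, enters), so one can extract indices $n_1,\dots,n_d$ at which evaluation is injective on $W$, and then $\mathcal{K}_k(\mathbf{s})$ embeds into the finite set $V^d$. It is worth pointing out how this compares with the classical argument of Allouche--Shallit: their proof first replaces the generators $t_1,\dots,t_\ell$ by elements of the kernel itself (possible because any generating set of $\langle\mathcal{K}_k(\mathbf{s})\rangle$ is expressible through finitely many kernel elements), so that the generators also take values in $V$; then a kernel element $u=\sum_i c_i t_i$ is determined by the values of the linear form $(x_1,\dots,x_\ell)\mapsto\sum_i c_i x_i$ on the finitely many vectors $\bigl(t_1(n),\dots,t_\ell(n)\bigr)\in V^\ell$, values which again lie in $V$. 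Your version buys a small simplification: by sampling at a well-chosen finite set of coordinates rather than normalizing the generators, you never need the generators themselves to take finitely many values, which fits the paper's Definition~\ref{def:regular} exactly as stated (where the $t_r$ are arbitrary integer sequences, not necessarily kernel elements). The price is invoking a bit of dual-space formalism where the classical proof uses only the pigeonhole-style finiteness of value vectors; both yield an explicit bound ($|V|^d$ versus a bound in terms of the number of distinct value vectors) on the size of the kernel.
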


\begin{lemma}\cite[Corollary~2.4]{Allouche--Shallit 1992}\label{lem:regmod}
    Let $k,m\ge 2$ be integers. If a sequence $s(n)_{n\ge 0}$ is $k$-regular, then $(s(n)\bmod{m})_{n\ge 0}$ is $k$-automatic.
\end{lemma}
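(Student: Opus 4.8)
The plan is to reduce the statement to the characterization of automatic sequences by finiteness of the $k$-kernel, recalled in the introduction. Since $s(n)_{n\ge 0}$ is $k$-regular, by Definition~\ref{def:regular} there are finitely many sequences $t_1,\dots,t_\ell$ such that every element of $\mathcal{K}_k(\mathbf{s})$ is a $\Z$-linear combination of them; let $M=\langle t_1,\dots,t_\ell\rangle$ be the finitely generated $\Z$-module they span. First I would observe that componentwise reduction modulo $m$ identifies the $k$-kernel of the reduced sequence with the reduction of $\mathcal{K}_k(\mathbf{s})$: indeed $(s(n)\bmod m)$ evaluated at $k^i n + j$ is exactly $(s(k^i n + j)\bmod m)_{n\ge 0}$, so each kernel element of the reduced sequence is the modular reduction of the corresponding kernel element of $\mathbf{s}$.

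The key step is to bound the image of $M$ under reduction modulo $m$. Each kernel element $s(k^i n + j)$ equals $\sum_{r=1}^\ell c_r\, t_r(n)$ for some integers $c_r$, so its reduction modulo $m$ equals $\sum_{r=1}^\ell (c_r\bmod m)\,(t_r(n)\bmod m)$ computed in $\Z/m\Z$. Hence every element of the $k$-kernel of $(s(n)\bmod m)_{n\ge 0}$ lies in the set of $\Z/m\Z$-linear combinations of the reduced generators $t_1\bmod m,\dots,t_\ell\bmod m$. This set is a module over the finite ring $\Z/m\Z$ generated by $\ell$ elements, so it has at most $m^\ell$ members and is therefore finite. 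Consequently the $k$-kernel of $(s(n)\bmod m)_{n\ge 0}$ is contained in a finite set of sequences, hence is itself finite, and by the characterization that a sequence is $k$-automatic if and only if its $k$-kernel is finite (see \cite{Eilenberg 1974}), the reduced sequence is $k$-automatic.

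The only point requiring care is the passage from $\Z$-linear to $\Z/m\Z$-linear combinations: one must check that reduction modulo $m$ is compatible with the module operations, so that the coefficients $c_r$ enter only through their residues, which is precisely what yields the uniform finite bound $m^\ell$ on the number of possible kernel elements. Everything else is bookkeeping, and in particular no appeal to the matrix representation of regular sequences is needed; the argument rests entirely on the fact that a finitely generated module over the finite ring $\Z/m\Z$ is finite.
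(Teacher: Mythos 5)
Your proof is correct. Note that the paper itself gives no argument for this lemma: it is stated with a citation to Corollary~2.4 of Allouche and Shallit (1992), so there is no internal proof to compare against. Your argument is a valid self-contained derivation and is essentially the standard one: reduction modulo $m$ commutes with extraction of kernel subsequences, each reduced kernel element is a $\Z/m\Z$-linear combination of the $\ell$ reduced generators $t_1 \bmod m, \dots, t_\ell \bmod m$, and a $\Z/m\Z$-module generated by $\ell$ elements has at most $m^\ell$ elements, so the $k$-kernel of $(s(n)\bmod m)_{n\ge 0}$ is finite and Eilenberg's characterization applies (legitimately so, since the reduced sequence takes values in the finite set $\{0,\dots,m-1\}$). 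One remark: within the toolkit the paper quotes, there is a marginally shorter route --- your finiteness computation shows in particular that the $\Z$-module spanned by the reduced kernel is finitely generated (being generated by a finite set), so $(s(n)\bmod m)_{n\ge 0}$ is $k$-regular and takes finitely many values, whence it is $k$-automatic by the paper's preceding lemma (Theorem~2.3 of Allouche--Shallit). Your version avoids even that dependency by invoking the kernel-finiteness characterization directly; both are fine, and the core computation (finiteness of the reduced kernel via the bound $m^\ell$) is identical.
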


\begin{lemma}\label{lem:shift}
    Let $k \geq 2$ be an integer. Let $s(n)_{n\ge 0}$ be a sequence.  The sequence $s(n)_{n\ge 0}$ is $k$-regular if and only if $s(n + 1)_{n\ge 0}$ is $k$-regular.
\end{lemma}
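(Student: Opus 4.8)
The plan is to prove both implications with the standard criterion for regularity: to show a sequence is $k$-regular it suffices to exhibit a finitely generated $\Z$-module containing its $k$-kernel, because $\Z$ is Noetherian and hence every submodule of a finitely generated $\Z$-module is again finitely generated (and the module generated by a subset of a submodule stays inside that submodule). Throughout I would write $S$ for the shift $(Sv)(n)=v(n+1)$ and $R$ for the operator that prepends a $0$, namely $(Rv)(0)=0$ and $(Rv)(n)=v(n-1)$ for $n\ge 1$. Both are $\Z$-linear on $\Z^{\mathbb{N}}$, so each sends a finitely generated module to a finitely generated one, since the images of a generating set generate the image.

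For the forward implication I would assume $s(n)_{n\ge 0}$ is $k$-regular and set $M=\langle\mathcal{K}_k(s)\rangle$, which is finitely generated. A typical element of $\mathcal{K}_k(Ss)$ is $s(k^i n+j+1)_{n\ge 0}$ with $0\le j\le k^i-1$, and I would split on whether adding $1$ produces a carry. If $j\le k^i-2$ then $j+1\le k^i-1$, so the sequence is literally a member of $\mathcal{K}_k(s)\subseteq M$. In the single carry case $j=k^i-1$ one has $s(k^i n+k^i)_{n\ge 0}=s(k^i(n+1))_{n\ge 0}=S\big(s(k^i n)_{n\ge 0}\big)$, which lies in $S(M)$ because $s(k^i n)_{n\ge 0}\in\mathcal{K}_k(s)\subseteq M$. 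Hence $\mathcal{K}_k(Ss)\subseteq M+S(M)$, a finitely generated module, and $s(n+1)_{n\ge 0}$ is $k$-regular.

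For the converse I would set $t(n)_{n\ge 0}:=s(n+1)_{n\ge 0}$, assume $t$ is $k$-regular, and put $N=\langle\mathcal{K}_k(t)\rangle$. Using $s(m)=t(m-1)$ for $m\ge 1$, any kernel element $s(k^i n+j)_{n\ge 0}$ with $j\ge 1$ equals $t(k^i n+(j-1))_{n\ge 0}\in\mathcal{K}_k(t)\subseteq N$. The case $j=0$ is where the boundary must be handled: for $n\ge 1$ one has $s(k^i n)=t(k^i(n-1)+(k^i-1))$, so the sequence $s(k^i n)_{n\ge 0}$ coincides with $R\big(t(k^i n+k^i-1)_{n\ge 0}\big)$ except in its $0$th entry, where it takes the value $s(0)$. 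Writing $\mathbf{e}=(1,0,0,\dots)$, this gives $s(k^i n)_{n\ge 0}=R(v_i)+s(0)\,\mathbf{e}$ with $v_i\in N$, hence $s(k^i n)_{n\ge 0}\in R(N)+\Z\mathbf{e}$. Therefore $\mathcal{K}_k(s)\subseteq N+R(N)+\Z\mathbf{e}$, which is finitely generated, so $s(n)_{n\ge 0}$ is $k$-regular.

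The step I expect to be the main obstacle is the carry case, and it is worth explaining why a more naive argument stalls there. If one tries to prove regularity of $Ss$ through the characterization ``$s$ is $k$-regular iff each subsequence $s(kn+a)_{n\ge 0}$ is,'' then the top decimation $s(kn+k-1)_{n\ge 0}=S\big(s(kn)_{n\ge 0}\big)$ reproduces a shift of a subsequence, and iterating this reduction never terminates. The resolution is to refuse to iterate: one collects all the shifted subsequences $S\big(s(k^i n)_{n\ge 0}\big)$ simultaneously and observes that they all lie in the single finitely generated module $S(M)$, after which Noetherianity of $\Z$ finishes the proof. The parallel bookkeeping in the backward direction is merely the stray initial value $s(0)$, absorbed by the extra generator $\mathbf{e}$.
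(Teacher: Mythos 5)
Your proof is correct, and it takes a genuinely different route from the paper, because the paper does not give a self-contained argument at all: its proof of Lemma~\ref{lem:shift} is a citation of two results from \cite{Allouche--Shallit 1992}, namely Theorem~2.6 there (closure of $k$-regularity under passing to arithmetic progressions $s(an+b)_{n\ge 0}$) together with the remark following it, which supplies the reverse direction; the shift is just the special case $a=b=1$. You instead argue directly from Definition~\ref{def:regular}, and every step checks out: the criterion that it suffices to trap the $k$-kernel inside \emph{some} finitely generated $\Z$-module is valid because $\Z$ is Noetherian; in the forward direction the split on whether $j+1$ carries is exhaustive, and the carry case $j=k^i-1$ correctly lands in $S(M)$ via $s(k^i(n+1))_{n\ge 0}=S\big(s(k^in)_{n\ge 0}\big)$; in the backward direction the case $j=0$ is correctly expressed as $s(k^in)_{n\ge 0}=R\big(t(k^in+k^i-1)_{n\ge 0}\big)+s(0)\,\mathbf{e}$, so that $\mathcal{K}_k(s)\subseteq N+R(N)+\Z\mathbf{e}$, a finitely generated module. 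Your closing remark is also the right diagnosis: one must not iterate the reduction $s(kn+k-1)_{n\ge0}=S\big(s(kn)_{n\ge0}\big)$, but rather collect all shifted decimations at once inside the single module $S(M)$ and invoke Noetherianity. What your approach buys is a self-contained, elementary proof that makes visible exactly where the module must be enlarged (by $S(M)$ in one direction, by $R(N)+\Z\mathbf{e}$ in the other); what the paper's citation buys is brevity and the extra strength of the quoted theorem, which handles all subsequences $s(an+b)_{n\ge 0}$ rather than only the shift.
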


\begin{proof}
It is a direct consequence of two results stated in \cite{Allouche--Shallit 1992}, namely  Theorem~2.6 and its following remark.
\end{proof}

Let us now give some definitions about combinatorics on words.

\begin{definition}
If a word $w$ starts with the letter $a$, then $a^{-1}w$ denotes the word obtained from $w$ by deleting its first letter. Similarly, if a word $w$ ends with the letter $a$, then $wa^{-1}$ denotes the word obtained from $w$ by deleting its last letter. As usual, we let $|w|$ denote the length of the finite word $w$.
If $a$ is a letter, we let $|w|_a$ denote the number of occurrences of $a$ in $w$.
If $w = w_0 \cdots w_{\ell-1}$, then we let $w^\mathrm{R} = w_{\ell-1} \cdots w_0$ denote the reversal of $w$.
Our convention is that we index letters in an infinite word beginning with $0$.
\end{definition}

Since we are interested in $\ell$-abelian complexity, it is natural to consider the following operation that permits us to compare factors of length $\ell$ occurring in an infinite word. Indeed, if two finite words are $\ell$-abelian equivalent, then their $\ell$-block codings are abelian equivalent (but the converse does not hold).

\begin{definition}\label{def:block}
Let $\ell\ge 1$. The {\em $\ell$-block coding} of the word $\mathbf{w}=w_0w_1w_2\cdots$ over the alphabet $A$ is the word 
$$\blo(\mathbf{w},\ell)=(w_0\cdots w_{\ell-1})\, (w_1\cdots w_\ell)\, (w_2\cdots w_{\ell+1})\cdots (w_j\cdots w_{j+\ell-1})\cdots$$
over the alphabet $A^\ell$. If $A=\{0,\ldots, r-1\}$, then it is convenient to identify $A^\ell$ with the set $\{0,\ldots, r^\ell-1\}$ and each word $w_0\cdots w_{\ell-1}$ of length $\ell$ is thus replaced with the integer obtained by reading the word in base $r$, i.e., 
$$\sum_{i=0}^{\ell-1} w_i\, r^{\ell-1-i}.$$
One can also define accordingly the $\ell$-block coding of a finite word $u$ of length at least $\ell$. The resulting word $\blo(u,\ell)$ has length $|u|-\ell+1$.
\end{definition}

\begin{example}
    The $2$-block codings of $011010011$ and $001101101$ are respectively $13212013$ and $01321321$, which are abelian equivalent.
\end{example}

\begin{lemma}\cite[Lemma~2.3]{Karhumaki--Saarela--Zamboni}\label{lem:abel}
    Let $\ell\ge 1$. Two finite words $u$ and $v$ of length at least $\ell-1$ are $\ell$-abelian equivalent if and only if they share the same prefix (resp.\ suffix) of length $\ell-1$ and the words $\blo(u,\ell)$ and $\blo(v,\ell)$ are abelian equivalent.
\end{lemma}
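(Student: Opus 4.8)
The plan is to reduce the whole statement to bookkeeping about factor counts $|u|_w$, using a single bridge between the block coding and these counts together with two elementary ``extension'' identities. The bridge is immediate from Definition~\ref{def:block}: for every word $w$ with $|w| = \ell$, the number of occurrences in $\blo(u,\ell)$ of the letter $[w] \in A^\ell$ corresponding to $w$ equals $|u|_w$, since the position $j$ of $\blo(u,\ell)$ carries exactly the factor $u_j \cdots u_{j+\ell-1}$. Consequently $\blo(u,\ell)$ and $\blo(v,\ell)$ are abelian equivalent (share the same Parikh vector) if and only if $|u|_w = |v|_w$ for all $w$ with $|w| = \ell$. This converts the ``$\blo$ abelian equivalent'' hypothesis into equality of all length-$\ell$ factor counts, and I would phrase the rest of the argument purely in terms of the quantities $|u|_w$.

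Next I would record, for any finite word $u$ and any $w$ with $|w| \le |u|$, the two counting recurrences
\[
    |u|_w = \sum_{a \in A} |u|_{wa} + [\,w \text{ is a suffix of } u\,],
    \qquad
    |u|_w = \sum_{a \in A} |u|_{aw} + [\,w \text{ is a prefix of } u\,],
\]
where each bracketed term is an indicator in $\{0,1\}$. The first holds because every occurrence of $w$ in $u$ is either followed by a unique letter $a$, contributing to $|u|_{wa}$, or ends at the last position of $u$, in which case $w$ is a suffix of $u$; the second is the mirror statement read to the left. These identities are the engine of both directions, since they let one move between factors of consecutive lengths while carrying a single boundary term controlled by a short prefix or suffix.

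For the forward direction, assume $u$ and $v$ are $\ell$-abelian equivalent, so $|u|_w = |v|_w$ for all $|w| \le \ell$ (in particular the single-letter counts agree, forcing $|u| = |v|$). Block-coding equivalence is then immediate from the bridge above. To recover the common prefix of length $\ell - 1$, I would rearrange the second recurrence as $[\,w \text{ prefix of } u\,] = |u|_w - \sum_{a \in A} |u|_{aw}$; when $|w| \le \ell - 1$ every word appearing has length at most $\ell$, so the right-hand side is the same for $u$ and $v$, and hence $w$ is a prefix of $u$ exactly when it is a prefix of $v$. Applying this to the unique length-$(\ell - 1)$ prefix shows the two prefixes coincide, and the suffix statement follows identically from the first recurrence (or by reversal).

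For the backward direction I would argue by downward induction on the factor length $k$ from $k = \ell$ down to $k = 0$. The case $k = \ell$ is precisely the block-coding hypothesis. For $k \le \ell - 1$, assuming $|u|_w = |v|_w$ for all words of length $k+1$, the suffix recurrence gives $|u|_w = |v|_w$: the sum matches since each $wa$ has length $k+1 \le \ell$, and the indicator $[\,w \text{ suffix of } u\,]$ equals $[\,w \text{ suffix of } v\,]$ because $u$ and $v$ share their length-$(\ell-1)$ suffix and $|w| \le \ell - 1$. This yields equality of all factor counts of length $\le \ell$, i.e.\ $\ell$-abelian equivalence; the prefix form of the hypothesis is handled symmetrically via the prefix recurrence. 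The degenerate case $|u| = \ell - 1$, where the block codings are empty, is checked separately: there the shared length-$(\ell-1)$ prefix or suffix already forces $u = v$. I expect the only genuinely delicate point to be the forward prefix/suffix claim rather than the counting, since equality of factor multiplicities is a position-blind condition whereas agreeing on a prefix is positional; the content lies exactly in the rearranged recurrence that extracts the indicator from data of length $\le \ell$, and one must take care to invoke the recurrences only for factors of length $\le \ell$ so that the available data applies.
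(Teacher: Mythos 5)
Your proof is correct: the bridge identity between letters of $\blo(u,\ell)$ and length-$\ell$ factor counts, the two occurrence recurrences $|u|_w = \sum_{a \in A} |u|_{wa} + [\,w \text{ suffix of } u\,]$ and its prefix mirror, the rearrangement extracting the prefix/suffix indicator in the forward direction, the downward induction in the backward direction, and the degenerate case $|u|=\ell-1$ all check out. Be aware, though, that there is no in-paper proof to compare against: the paper quotes this statement from \cite[Lemma~2.3]{Karhumaki--Saarela--Zamboni} without proof, and your argument is essentially the standard one given in that cited source, where the same bijection between occurrences of $w$ and occurrences of its one-letter extensions (plus a boundary term controlled by the shared prefix or suffix) does all the work. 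So your write-up can stand as a self-contained replacement for the citation; the only point worth making explicit is the one you already flag, namely that the recurrences are invoked only for factors of length at most $\ell$, so that the hypotheses genuinely suffice.
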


It is well known that the $\ell$-block coding of a $k$-automatic sequence is again a $k$-automatic sequence~\cite{Cobham 1972}. (Note that the operation of $\ell$-block compression that one also encounters in the literature is not the same as the $\ell$-block coding given in Definition~\ref{def:block}.)

\begin{example}\label{exa:cp2}
For the period-doubling word $\mathbf{p}$, the $2$-block coding is given by 
\[
    	\blo(\mathbf{p},2)=\phi^\omega(1) = 12001212120012001200121212001212 \cdots
\]
where $\phi$ is the morphism over $\{0,1,2\}^*$ defined by $\phi : 0 \mapsto 12, 1\mapsto 12, 2\mapsto 00$. 
\end{example}

\begin{example}\label{exa:ct2}
    For the Thue--Morse word $\mathbf{t}$,  the $2$-block coding is given by 
\[\blo(\mathbf{t},2)=\nu^\omega(1)=132120132012132120121320\cdots
\]
where $\nu$ is the morphism over $\{0,1,2,3\}^*$ defined by $\nu:0\mapsto 12,1\mapsto 13$, $2\mapsto 20,3 \mapsto 21$. 
\end{example}


\section{Abelian complexity of $\blo(\mathbf{p},2)$}\label{sec:block_PD}
We let $\mathbf{x}$ denote $\blo(\mathbf{p},2)=12001212120012001200121212001212\cdots$, the $2$-block coding of $\mathbf{p}$, introduced in Example~\ref{exa:cp2}. We consider in this section the abelian complexity of $\mathbf{x}$ and then, in Section~\ref{sec:2ab_comp_PD}, we compare $\mathcal{P}^{(1)}_{\mathbf{x}}(n)$ with $\mathcal{P}^{(2)}_{\mathbf{p}}(n)$.

\begin{definition}\label{def:M0}
We will make use of functions related to the number of $0$'s in the factors of $\mathbf x$ of a given length.
Let $n\in \mathbb N$. We let $\M_0(n)$ (resp.\ $\m_0(n)$) denote the maximum (resp.\ minimum) number of $0$'s in a factor of $\mathbf x$ of length $n$.
Let $\Delta_0(n)=\M_0(n)-\m_0(n)$ be the difference between these two values. 
\end{definition}

Each of the $\Delta_0(n) + 1$ integers in the interval $[\m_0(n), \M_0(n)]$ is attained as the number of $0$'s in some factor of $\mathbf x$ of length $n$, since when we slide a window of length $n$ along $\mathbf x$ from a factor with $\m_0(n)$ zeros to a factor with $\M_0(n)$ zeros, the number of $0$'s changes by at most $1$ per step.

\begin{lemma}\label{lem:minmaxeven}
If $n$ is even, then $\M_0(n)$, $\m_0(n)$ and $\Delta_0(n)$ are even. 
\end{lemma}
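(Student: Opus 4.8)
The plan is to exploit the self-similar block structure of $\mathbf{x}$ coming from $\mathbf{x} = \phi(\mathbf{x})$. Since $\phi$ is $2$-uniform, this identity gives $x_{2i}x_{2i+1} = \phi(x_i) \in \{00, 12\}$ for every $i$. First I would record the resulting parity dictionary: every even-indexed letter lies in $\{0,1\}$, every odd-indexed letter lies in $\{0,2\}$, and inside each aligned block one has the equivalence $x_{2i} = 0 \iff x_{2i+1} = 0$ (the block is either $00$ or $12$). In particular, a factor of even length $n$ that starts at an even position is a concatenation of such blocks, each contributing $0$ or $2$ zeros, so its number of $0$'s is even. Consequently an odd number of $0$'s can occur only in a factor that starts at an odd position; this observation drives the whole argument.

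Next I would prove that $\M_0(n)$ is even by contradiction. Suppose $n$ is even and $\M_0(n)$ is odd. By the previous paragraph any factor $w = x_a \cdots x_{a+n-1}$ attaining $|w|_0 = \M_0(n)$ must start at an odd index $a$ (so $a \geq 1$). I would then slide the length-$n$ window one step to the left and one step to the right, obtaining $w' = x_{a-1}\cdots x_{a+n-2}$ and $w'' = x_{a+1}\cdots x_{a+n}$, both of which start at even indices and hence have an even number of $0$'s. Each differs from $w$ by deleting one boundary letter and adjoining another, so $|w'|_0, |w''|_0 \in \{\M_0(n)-1,\, \M_0(n)+1\}$; maximality rules out $\M_0(n)+1$, and parity rules out $\M_0(n)$, forcing $|w'|_0 = |w''|_0 = \M_0(n) - 1$. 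Unwinding these two equalities pins down the boundary letters and, in particular, forces $x_a = 0$ while $x_{a-1} \neq 0$. But $(x_{a-1}, x_a)$ is an aligned block, so the dictionary gives $x_{a-1} = 0 \iff x_a = 0$, a contradiction. Hence $\M_0(n)$ is even.

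The statement for $\m_0(n)$ is entirely symmetric: assuming $\m_0(n)$ odd, a minimizing factor again starts at an odd index, the two shifted windows are forced to equal $\m_0(n)+1$, and unwinding forces $x_{a-1} = 0$ while $x_a \neq 0$, contradicting the same block equivalence. Finally $\Delta_0(n) = \M_0(n) - \m_0(n)$ is a difference of two even integers, hence even. I expect the only delicate point to be the bookkeeping of the four boundary letters $x_{a-1}, x_a, x_{a+n-1}, x_{a+n}$: one must check that in each case the extremality constraint forces exactly the pair of conditions that the block equivalence forbids, rather than some harmless alternative. The uniform length-preserving nature of $\phi$ together with the infinitude of $\mathbf{x}$ make the existence of the shifted windows automatic, so there are no end-of-word cases to handle beyond $a \geq 1$, which holds because $a$ is odd.
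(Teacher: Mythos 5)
Your proof is correct and takes essentially the same approach as the paper's: both exploit the block structure $\phi(0)=\phi(1)=12$, $\phi(2)=00$ to get the parity of zero-counts of aligned factors, and both slide the length-$n$ window by one position to show that an odd zero-count cannot be extremal. The only difference is organizational --- the paper argues directly that any even-length factor with an odd number of $0$'s admits shifted neighbors with one more and one fewer $0$ (so it is neither maximal nor minimal), while you run the same mechanism in reverse as a proof by contradiction, letting extremality force the boundary letters.
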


\begin{proof}
Suppose a factor $w=w_1\cdots w_{2n}$ of $\mathbf{x}$ of even length $2n$ has an odd number $n_0$ of zeros.
Since $\phi(0)=\phi(1)=12$ and $\phi(2)=00$, the factor $w$ starts or ends with $0$. Without loss of generality, assume it starts with $w_1=0$. Then its last letter must be $w_{2n}=1$. The words $0w_1\cdots w_{2n-1}$ and $w_2\cdots w_{2n}2$ are two factors of length $2n$ with respectively $n_0+1$ and $n_0-1$ zeros. Hence, these two factors have even numbers of zeros which are respectively greater than and less than $n_0$. The conclusion follows.
\end{proof}

We give two related proofs of the $2$-regularity of the sequence $\mathcal{P}^{(1)}_{\mathbf{x}}(n)_{n\ge 0}$. The first uses the following proposition, which we prove in Section~\ref{sec:51}, together with the fact that $\Delta_0(n)_{n\ge 0}$ is $2$-regular and the two sequences $(\Delta_0(n)\bmod{2})_{n\ge 0}$ and  $(\m_0(n)\bmod{2})_{n\ge 0}$ are $2$-automatic (see Section~\ref{sec:52}, Corollary~\ref{cor:2regDm_PD}). Then the $2$-regularity of the sequence $\mathcal{P}^{(1)}_{\mathbf{x}}(n)_{n\ge 0}$ will follow from Lemma~\ref{lem:compo}.
\begin{proposition}\label{prop:deltatoab_PD}
  For $n\in \mathbb N$,
 $$\mathcal{P}^{(1)}_{\mathbf{x}}(n)=
  \begin{cases}
    \frac{3}{2}\Delta_0(n)+\frac{3}{2} & \text{if } \Delta_0(n) \text{ is odd} \\
    \frac{3}{2}\Delta_0(n)+1 & \text{if } \Delta_0(n) \text{ and } n-\m_0(n) \text{ are even}\\
    \frac{3}{2}\Delta_0(n)+2 & \text{if } \Delta_0(n) \text{ and } n-\m_0(n)+1 \text{ are even}.\\
   \end{cases}
$$
\end{proposition}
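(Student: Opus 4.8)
The plan is to reduce the computation of $\mathcal{P}^{(1)}_{\mathbf{x}}(n)$ to counting, for each admissible number of $0$'s, how many length-$n$ Parikh classes realize it. Since $\mathbf{x}$ is over $\{0,1,2\}$, a factor $w$ of length $n$ is determined up to abelian equivalence by $(|w|_0,d)$ with $d:=|w|_1-|w|_2$, because $|w|_2=n-|w|_0-|w|_1$. The first step exploits the block structure coming from the $2$-uniform morphism $\phi$: grouping $\mathbf{x}=\phi^\omega(1)$ into consecutive length-$2$ blocks yields only the blocks $12$ and $00$. Each complete $12$-block contributes $(1,1)$ to $(|w|_1,|w|_2)$ and each complete $00$-block contributes $(0,0)$, so $d$ measures only the boundary defect: a factor beginning mid-block starts with that block's second letter (a lone $2$ if the block is $12$, contributing $-1$ to $d$, or a $0$), and a factor ending mid-block ends with that block's first letter (a lone $1$ if the block is $12$, contributing $+1$ to $d$, or a $0$). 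Hence $d\in\{-1,0,1\}$, and since $n-|w|_0=|w|_1+|w|_2\equiv|w|_1-|w|_2=d\pmod 2$, we always have $d\equiv n-n_0\pmod 2$ where $n_0:=|w|_0$.

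Writing $c(n_0)$ for the number of distinct Parikh classes of length-$n$ factors with exactly $n_0$ zeros, we get $\mathcal{P}^{(1)}_{\mathbf{x}}(n)=\sum_{n_0=\m_0(n)}^{\M_0(n)}c(n_0)$, every value in $\{\m_0(n),\dots,\M_0(n)\}$ being attained (as noted after Definition~\ref{def:M0}). The parity relation forces $c(n_0)=1$ when $n-n_0$ is even ($d=0$ is the only option), while $c(n_0)$ equals the number of signs in $\{-1,+1\}$ that are realized when $n-n_0$ is odd. The crux — and the step I expect to be the main obstacle — is the claim that \emph{both} signs occur in the latter case: for every $n_0$ with $n-n_0$ odd there is a length-$n$ factor with $n_0$ zeros ending in a lone $1$ and another beginning with a lone $2$, so that $c(n_0)=2$. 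One cannot deduce this from a symmetry, since the factor set of $\mathbf{x}$ is closed neither under $1\leftrightarrow2$ nor under reversal (for instance $02$ never occurs although $20$ does), so the two witnesses must be exhibited separately. I would prove the claim constructively from the block picture, using that the $00$-blocks are isolated (a $2$ is never immediately followed by a $2$, so no two $00$-blocks are adjacent): given a target number of covered $00$-blocks one positions the window to end just after the $1$ of a $12$-block for the sign $+1$, and to begin just before the $2$ of a $12$-block for the sign $-1$, adjusting the interior so that the zero-count is exactly $n_0$. Making this uniform over all $n_0\in[\m_0(n),\M_0(n)]$ — most cleanly by desubstituting factors under $\phi$ and inducting — is the technical heart of the argument.

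Granting the claim, the proposition follows by elementary parity bookkeeping. Writing $O$ for the number of indices $n_0\in\{\m_0(n),\dots,\M_0(n)\}$ with $n-n_0$ odd, each such index contributes $2$ and the remaining $\Delta_0(n)+1-O$ contribute $1$, so $\mathcal{P}^{(1)}_{\mathbf{x}}(n)=(\Delta_0(n)+1)+O$. As $n_0$ runs through the $\Delta_0(n)+1$ consecutive values from $\m_0(n)$ to $\M_0(n)$, the quantity $n-n_0$ runs through $\Delta_0(n)+1$ consecutive integers whose extreme values share a parity precisely when $\Delta_0(n)$ is even. A direct count then gives $O=(\Delta_0(n)+1)/2$ when $\Delta_0(n)$ is odd, and $O=\Delta_0(n)/2+1$ or $O=\Delta_0(n)/2$ when $\Delta_0(n)$ is even, according as $n-\m_0(n)$ is odd or even. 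Substituting yields $\tfrac32\Delta_0(n)+\tfrac32$, $\tfrac32\Delta_0(n)+2$, and $\tfrac32\Delta_0(n)+1$ respectively, which are exactly the three displayed cases; Lemma~\ref{lem:minmaxeven} guarantees that for even $n$ only the last arises, matching the way the cases are phrased. This completes the proof modulo the claim.
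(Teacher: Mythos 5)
Your reduction of the proposition to a single combinatorial claim, and the parity bookkeeping you perform once the claim is granted, are both correct and match the structure of the paper's own argument: the paper likewise counts, for each admissible number $n_0$ of zeros, one abelian class when $n-n_0$ is even and two when $n-n_0$ is odd, and then evaluates the same sum. However, your proof has a genuine gap exactly where you say it does: the claim that for every $n_0$ with $n-n_0$ odd \emph{both} values $|w|_1-|w|_2=+1$ and $|w|_1-|w|_2=-1$ are realized by factors with $n_0$ zeros is never proved. You only sketch a window-positioning construction and defer its uniform execution ("desubstituting under $\phi$ and inducting") as "the technical heart of the argument"; a proof modulo its own technical heart is not a proof.

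Moreover, the reason you give for abandoning a symmetry argument is mistaken, and this is worth absorbing. You correctly observe that the factor set of $\mathbf{x}$ is closed neither under the letter swap $\tau\colon 1\leftrightarrow 2$ alone nor under reversal alone (indeed $01$ and $20$ are factors while $02$ and $10$ are not), but you then conclude that no symmetry can produce the second witness. This overlooks the \emph{composition} of the two operations: the map $w\mapsto\tau(w)^\mathrm{R}$ does preserve the factor set of $\mathbf{x}$. This is precisely Lemma~\ref{lem:reversalPD} of the paper, proved by induction on $|w|$ via the commutation identity $\tau(\phi(2u1))^\mathrm{R}=\phi(\tau(12u)^\mathrm{R})$, and it settles your claim in one line: if $w$ is a factor with $|w|_1=|w|_2+1$, then $v=\tau(w)^\mathrm{R}$ is a factor of the same length with $|v|_0=|w|_0$ and $|v|_2=|v|_1+1$, so both signs occur. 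Your counterexamples are consistent with this composed symmetry (it sends the factor $20$ to the factor $01$). So the missing ingredient is not a delicate constructive induction but a closure property under a single anti-morphism; if you want to complete your write-up, prove that closure property (by desubstitution, as the paper does) and your argument becomes complete.
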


In the second proof, we prove in Section~\ref{sec:53} the following theorem, which allows us to apply our general result expressed by Theorem~\ref{thm:reflection_recurrence}.
\begin{theorem}\label{thm:recab_PD}
Let $\ell\geq 2$ and $r$ such that $0\leq r <2^{\ell}-1$. We have
  $$\mathcal{P}^{(1)}_{\mathbf{x}}(2^\ell+r)=
  \begin{cases}
    \mathcal{P}^{(1)}_{\mathbf{x}}(r)+3 & \text{if } r\leq 2^{\ell-1} \\
    \mathcal{P}^{(1)}_{\mathbf{x}}(2^{\ell+1}-r) & \text{if } r>2^{\ell-1}. \\
   \end{cases}
$$
In particular, the sequence $\mathcal{P}^{(1)}_{\mathbf{x}}(n)_{n\ge 0}$ is $2$-regular.
\end{theorem}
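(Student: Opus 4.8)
The second assertion (that $\mathcal{P}^{(1)}_{\mathbf{x}}(n)_{n\ge 0}$ is $2$-regular) would be an immediate application of Theorem~\ref{thm:reflection_recurrence} with $\ell_0 = 2$ and $c = 3$, once the displayed recurrence is in hand. The one caveat is that Theorem~\ref{thm:reflection_recurrence} demands the relation for every $r$ with $0 \le r \le 2^\ell - 1$, whereas the statement above is phrased for $0 \le r < 2^\ell - 1$; so I would also record the boundary value $r = 2^\ell - 1$ (equivalently $\mathcal{P}^{(1)}_{\mathbf{x}}(2^{\ell+1}-1) = \mathcal{P}^{(1)}_{\mathbf{x}}(2^\ell+1)$) separately before invoking the theorem. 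Thus the entire content is the reflection recurrence itself.

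To prove the recurrence I would not argue about factors of $\mathbf{x}$ directly, but rather feed Proposition~\ref{prop:deltatoab_PD} into the recurrence already available for $\Delta_0$. Write $\mathcal{P}^{(1)}_{\mathbf{x}}(n) = \tfrac32 \Delta_0(n) + \varepsilon(n)$, where $\varepsilon(n) \in \{1, \tfrac32, 2\}$ is the correction term of Proposition~\ref{prop:deltatoab_PD}; crucially, $\varepsilon(n)$ depends only on the residues $(\Delta_0(n) \bmod 2)$ and $((n - \m_0(n)) \bmod 2)$. From Section~\ref{sec:52} I may assume the reflection recurrence satisfied by $\Delta_0$ (with increment $c = 2$), together with the $2$-automaticity of $(\Delta_0(n) \bmod 2)$ and $(\m_0(n) \bmod 2)$ from Corollary~\ref{cor:2regDm_PD}. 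The plan is then a two-case computation matching the two branches of the recurrence.

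In the branch $r \le 2^{\ell-1}$ we have $\Delta_0(2^\ell + r) = \Delta_0(r) + 2$, so the term $\tfrac32 \Delta_0$ contributes exactly the desired $+3$, and since the two values of $\Delta_0$ differ by an even number, $(\Delta_0(n) \bmod 2)$ is unchanged. It then remains to check $\varepsilon(2^\ell + r) = \varepsilon(r)$, which reduces to $(2^\ell + r) - \m_0(2^\ell + r) \equiv r - \m_0(r) \pmod 2$; as $2^\ell$ is even for $\ell \ge 2$, this is just $\m_0(2^\ell + r) \equiv \m_0(r) \pmod 2$. In the branch $r > 2^{\ell-1}$ we have $\Delta_0(2^\ell + r) = \Delta_0(2^{\ell+1} - r)$ on the nose, so both the $\tfrac32\Delta_0$ term and its parity agree with those at $2^{\ell+1} - r$; and since $(2^\ell + r) + (2^{\ell+1} - r) = 3 \cdot 2^\ell$ is even, the two arguments have the same parity, so once more everything comes down to $\m_0(2^\ell + r) \equiv \m_0(2^{\ell+1} - r) \pmod 2$.

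Hence the main obstacle is purely the behavior of $(\m_0(n) \bmod 2)$ under the two operations $n \mapsto 2^\ell + r$ (for $r \le 2^{\ell-1}$) and the reflection $2^\ell + r \leftrightarrow 2^{\ell+1} - r$. I would dispatch this by establishing a companion reflection recurrence for $\m_0$ itself, proved, like Lemma~\ref{lem:minmaxeven}, by analyzing how a factor of length $2^\ell + r$ of $\mathbf{x} = \phi^\omega(1)$ decomposes under the morphism $\phi : 0,1 \mapsto 12,\ 2 \mapsto 00$; the two required congruences are then read off (the increment being even in the first case and the reflection being exact in the second). Alternatively, since $(\m_0(n) \bmod 2)$ is $2$-automatic by Corollary~\ref{cor:2regDm_PD}, the two congruences are statements about a fixed finite automaton and can be verified mechanically. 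Either way, combining the parity facts with the recurrence for $\Delta_0$ and Proposition~\ref{prop:deltatoab_PD} yields the stated recurrence for $\mathcal{P}^{(1)}_{\mathbf{x}}$, and the $2$-regularity follows as noted above.
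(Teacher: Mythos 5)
Your overall strategy is the same as the paper's own proof in Section~\ref{sec:53}: combine Proposition~\ref{prop:deltatoab_PD} with the recurrences of Proposition~\ref{prop:recdelta_PD} from Section~\ref{sec:52}, split on parities, and then obtain $2$-regularity from Theorem~\ref{thm:reflection_recurrence} with $\ell_0=2$ and $c=3$. Your treatment of the branch $r\le 2^{\ell-1}$ is correct (the congruence $\m_0(2^\ell+r)\equiv\m_0(r)\pmod{2}$ you need there is unconditionally true), and your remark about the boundary value $r=2^\ell-1$ is a fair point about how the statement is phrased. But your reduction of the reflection branch contains a genuine error.

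You assert that for $r>2^{\ell-1}$ everything comes down to the congruence $\m_0(2^\ell+r)\equiv\m_0(2^{\ell+1}-r)\pmod{2}$, and you propose to prove it by showing that $\m_0$ itself satisfies an exact reflection recurrence (``the reflection being exact in the second''). Both claims are false. Take $\ell=2$, $r=3$: the factor $12121$ of $\mathbf{x}$ shows $\m_0(5)=0$, whereas the longest run of nonzero letters in $\mathbf{x}$ has length $6$ (runs are $12$ or $121212$), so every factor of length $7$ contains a zero and $\m_0(7)=1$ (witnessed by $0121212$). Thus $\m_0(2^\ell+r)=1\not\equiv 0=\m_0(2^{\ell+1}-r)\pmod{2}$, so $\m_0$ certainly does not reflect exactly; and the failure is not confined to the boundary case, e.g.\ $\m_0(25)\not\equiv\m_0(23)\pmod{2}$ because $\Delta_0(23)=5$ is odd. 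The correct statement is the one in Proposition~\ref{prop:recdelta_PD}: $\m_0(2^\ell+r)\equiv\m_0(2^{\ell+1}-r)+\Delta_0(2^{\ell+1}-r)\pmod{2}$, i.e.\ under reflection the parity of $\m_0$ is twisted by the parity of $\Delta_0$. Your argument is saved by a case split you did not make: when $\Delta_0(2^\ell+r)=\Delta_0(2^{\ell+1}-r)$ is odd, Proposition~\ref{prop:deltatoab_PD} gives the correction term $\tfrac{3}{2}$ on both sides, so the parity of $\m_0$ is irrelevant (exactly the situation in the counterexample above); when it is even, the twisted congruence reduces to the one you wanted. This case analysis is precisely the content of the paper's justification that all the conditions of Proposition~\ref{prop:deltatoab_PD} are equivalent at $2^\ell+r$ and $2^{\ell+1}-r$. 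Note finally that your fallback plan, mechanically verifying your two congruences on the automaton for $(\m_0(n)\bmod 2)$, would not rescue the argument: the second congruence is simply false, so the verification would refute rather than establish it.
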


From Theorem~\ref{thm:recab_PD} we see that $\mathcal{P}^{(1)}_{\mathbf{x}}(2^\ell) = \mathcal{P}^{(1)}_{\mathbf{x}}(0) + 3 = 4$ for all $\ell \geq 2$.
Additionally, one can check that $\mathcal{P}^{(1)}_{\mathbf{x}}(2^1) = 4$.


\subsection{Proof of Proposition~\ref{prop:deltatoab_PD}}\label{sec:51}

First we mention some properties of factors of the word $\mathbf{x}$.

\begin{lemma}\label{lem:2facPD}
The set of factors of $\mathbf{x}$ of length $2$ is $\fac_{\mathbf{x}}(2)=\{00,01,12,20,21\}$.
\end{lemma}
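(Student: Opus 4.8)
The plan is to prove Lemma~\ref{lem:2facPD} directly from the morphic structure of $\mathbf{x}=\phi^\omega(1)$, where $\phi:0\mapsto 12,\ 1\mapsto 12,\ 2\mapsto 00$ as recorded in Example~\ref{exa:cp2}. The key observation is that every length-$2$ factor of $\mathbf{x}$ either sits entirely inside a single image block $\phi(a)$ for some letter $a\in\{0,1,2\}$, or it straddles the boundary between two consecutive image blocks $\phi(a)\phi(b)$. So I would first enumerate the \emph{interior} length-$2$ factors and then the \emph{boundary} ones, and argue that no other length-$2$ words can occur.

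First I would collect the interior factors. Since $\phi(0)=\phi(1)=12$ and $\phi(2)=00$, the only length-$2$ factors appearing inside a single block are $12$ (from $\phi(0)$ or $\phi(1)$) and $00$ (from $\phi(2)$). Next I would handle the boundary factors: a factor straddling $\phi(a)\phi(b)$ is formed by the last letter of $\phi(a)$ followed by the first letter of $\phi(b)$. The last letters of the blocks are $2,2,0$ for $a=0,1,2$ respectively, and the first letters are $1,1,0$ for $b=0,1,2$. Hence the boundary factor is one of $21$ (when $a\in\{0,1\}$ and $b\in\{0,1\}$), $20$ (when $a\in\{0,1\}$ and $b=2$), $01$ (when $a=2$ and $b\in\{0,1\}$), or $00$ (when $a=2$ and $b=2$). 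This gives the candidate set $\{00,01,12,20,21\}$. To finish the ``only if'' direction I would note that which pairs $ab$ actually occur as consecutive letters of $\mathbf{x}$ is itself governed by $\fac_{\mathbf{x}}(2)$, so I must verify that every boundary pattern listed is genuinely realized, not merely formally possible.

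For the ``if'' direction (each word in $\{00,01,12,20,21\}$ really occurs) I would exhibit each one explicitly in the prefix $12001212120012001200\cdots$ given in the excerpt: $12$, $20$, $00$, $01$, and $12$ again appear already in $1\,2\,0\,0\,1\,2\,\cdots$, and $21$ appears in the longer prefix $\cdots 121212\cdots$. A clean way to organize both directions simultaneously is to determine the set of length-$2$ factors as a fixed point: letting $S$ be the set of length-$2$ factors, $S$ must be closed under the rule that from each $ab\in S$ one reads off the interior factors of $\phi(a)$, the interior factors of $\phi(b)$, and the single boundary factor joining $\phi(a)$ to $\phi(b)$. Starting from the seed factor $12$ (the first two letters of $\mathbf{x}$) and iterating this closure, one reaches exactly $\{00,01,12,20,21\}$ and no more, which proves equality.

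The main obstacle will be making the ``no other factors occur'' half fully rigorous rather than merely enumerating images: I must rule out length-$2$ factors that would require a consecutive letter-pair $ab$ in $\mathbf{x}$ that never actually appears. The cleanest resolution is the fixed-point/closure argument just described, since it simultaneously shows $S$ contains all the claimed factors and that the iteration stabilizes at precisely five of them; alternatively one can invoke the standard fact that for a factor of length $2<|\phi(a)|$ the set of length-$2$ factors of a fixed point of a length-$2$ morphism is computable from the images of the letters together with the set of letter-pairs occurring in the fixed point, and then check that the latter set is itself $\{12,20,00,01,21\}$, closing the loop.
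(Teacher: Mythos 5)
Your proposal is correct and takes essentially the same approach as the paper: the paper's proof likewise verifies that the five words occur and then checks that the set $\{00,01,12,20,21\}$ is closed under taking the three length-$2$ factors of $\phi(u)$ for each $u$ in the set, which is exactly your interior/boundary decomposition of $\phi(a)\phi(b)$ combined with the seed-and-closure argument. The inductive justification you spell out (every length-$2$ factor of $\phi^{n+1}(1)$ lies inside $\phi(u)$ for some length-$2$ factor $u$ of $\phi^{n}(1)$) is left implicit in the paper but is the same underlying argument.
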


\begin{proof}
It is easy to check that these five words are factors. To prove that they are the only ones, it is enough to check that for any element $u$ in $\{00,01,12,20,21\}$ the three factors of length $2$ of $\phi(u)$ are in $\{00,01,12,20,21\}$.
\end{proof}

\begin{lemma}\label{lem:balancedPD}
If $w$ is a factor of $\mathbf{x}$ then $\big||w|_1 - |w|_2\big| \leq 1$. In particular, the letters $1$ and $2$ alternate in the sequence obtained from $\mathbf{x}$ after erasing the $0$'s.
\end{lemma}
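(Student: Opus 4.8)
The plan is to prove the stronger ``in particular'' statement first and deduce the balance inequality from it. Concretely, let $P$ denote the monoid morphism from $\{0,1,2\}^*$ to $\{1,2\}^*$ that erases every $0$ (so $P(0)=\varepsilon$, $P(1)=1$, $P(2)=2$), extended in the obvious way to the infinite word $\mathbf{x}$. The whole lemma reduces to the single assertion that $P(\mathbf{x})=(12)^\omega$: once this is known, any factor $w$ of $\mathbf{x}$ satisfies that $P(w)$ is a factor of $(12)^\omega$, hence an alternating word over $\{1,2\}$, and since $|w|_1=|P(w)|_1$ and $|w|_2=|P(w)|_2$, the two counts differ by at most $1$. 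So the real content is to identify the projection of $\mathbf{x}$.

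The key computation is a one-line interaction between $P$ and the morphism $\phi$ from Example~\ref{exa:cp2}. Since $\phi(0)=\phi(1)=12$ and $\phi(2)=00$, we have $P(\phi(0))=P(\phi(1))=12$ and $P(\phi(2))=\varepsilon$. Because $P$ is a morphism, for every finite word $u$ over $\{0,1,2\}$ this gives
\[
	P(\phi(u)) = (12)^{\,|u|_0 + |u|_1},
\]
a prefix of $(12)^\omega$ whose length is twice the number of non-$2$ letters of $u$. In other words, $\phi$ contributes exactly one block $12$ to the projection for each letter of $u$ that is not a $2$.

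Next I would push this through to the fixed point. The words $\phi^n(1)$ form a nested sequence of prefixes of $\mathbf{x}=\phi^\omega(1)$ (as $\phi(1)=12$ begins with $1$), so applying the displayed identity with $u=\phi^{n-1}(1)$ yields $P(\phi^n(1))=(12)^{m_n}$ with $m_n=|\phi^{n-1}(1)|_0+|\phi^{n-1}(1)|_1$. Each $P(\phi^n(1))$ is therefore a prefix of $(12)^\omega$, and it suffices to show $m_n\to\infty$. This is where I would invoke Lemma~\ref{lem:2facPD}: since $22\notin\fac_{\mathbf{x}}(2)$, the word $\mathbf{x}$ contains no two consecutive $2$'s, so it cannot have a suffix of the form $222\cdots$ and hence has infinitely many non-$2$ letters. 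Consequently $m_n$ is unbounded, the prefixes $P(\phi^n(1))$ of $(12)^\omega$ have unbounded length, and $P(\mathbf{x})=(12)^\omega$, completing the argument as outlined above.

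I expect the morphism computation to be routine; the only point needing care is the passage from finite prefixes to the infinite word, namely verifying that $m_n\to\infty$ rather than stabilizing (which would make $P(\mathbf{x})$ finite). That is precisely the step handled by the absence of the factor $22$, so the main obstacle is really just ensuring this unboundedness is stated cleanly rather than any genuine difficulty.
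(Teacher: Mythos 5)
Your proof is correct, and it takes a genuinely different route from the paper's. The paper works locally, factor by factor: either $w$ can be de-substituted as $w=\phi(v)$, in which case $|w|_1=|w|_2$ because $|\phi(i)|_1=|\phi(i)|_2$ for every letter $i$, or else trimming the first and/or last letter of $w$ reduces to that case at a cost of at most $1$ in the difference; alternation is then deduced \emph{from} the balance bound via the general observation that if every factor of a word has its counts of two letters within $1$, then those letters alternate. You run the implication in the opposite direction: you identify the whole $0$-erasure globally, $P(\mathbf{x})=(12)^\omega$, from the commutation $P(\phi(u))=(12)^{|u|_0+|u|_1}$ together with an unboundedness check, and then both statements of the lemma are immediate. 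Your route buys a stronger and cleaner structural fact (the projection is exactly the periodic word $(12)^\omega$) with essentially no case analysis, and it makes the ``in particular'' clause the primary statement rather than an afterthought. Two remarks. First, there is no circularity in citing Lemma~\ref{lem:2facPD} (it precedes this lemma in the paper and its proof does not use balance), but it is heavier than needed: every even-indexed letter of $\mathbf{x}$ is the first letter of $\phi(a)\in\{12,00\}$ for some letter $a$, hence equals $1$ or $0$, which already gives infinitely many non-$2$ letters; alternatively, $m_{n+1}=m_n+2\,|\phi^{n-1}(1)|_2>m_n$ for $n\ge 2$. Second, the paper's local method is the one that carries over to the Thue--Morse analogue (Lemma~\ref{lem:balancedTM}), where two pairs of letters must be balanced and the corresponding blocks $P(\nu(1))=1$, $P(\nu(2))=2$, $P(\nu(0))=12$, $P(\nu(3))=21$ no longer assemble into a power of $12$; your global-projection trick does not transfer directly there, which is presumably why the paper argues factor by factor.
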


\begin{proof}
Let $w$ be a factor of $\mathbf{x}$. There are two cases to consider.

If $w$ can be de-substituted (that is, $w = \phi(v)$ for some $v$), then $|w|_1 = |w|_2$ since $|\phi(i)|_1 = |\phi(i)|_2$ for all $i \in \{0, 1, 2\}$.

If $w$ cannot be de-substituted, then either $w$ has even length and occurs at an odd index in $\mathbf{x}$, or $w$ has odd length.
If $w$ has odd length, then deleting either the first or last letter results in a word that can be de-substituted, so $\big||w|_1 - |w|_2\big| \leq 1$.
If $w$ has even length and occurs at an odd index, then its first letter is $0$ or $2$ and its last letter is $0$ or $1$; deleting the first and last letters results in a word that can be de-substituted, so $\big||w|_1 - |w|_2\big| \leq 1$.

Finally, observe that if for all factors of a word $u$, the numbers of two letters $x$ and $y$ differ by at most $1$, then $x$ and $y$ alternate in $u$. 
\end{proof}

\begin{lemma}\label{lem:reversalPD}
Let $\tau$ be the morphism defined by $\tau : 0 \mapsto 0, 1 \mapsto 2, 2 \mapsto 1$.
If $w$ is a factor of $\mathbf{x}$, then $\tau(w)^\mathrm{R}$ is also a factor of $\mathbf{x}$.
\end{lemma}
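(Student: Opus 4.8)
The plan is to show that the reversal map composed with the letter-swap $\tau$ preserves the set of factors of $\mathbf{x}$, and the natural way to do this is to exhibit a symmetry of the word $\mathbf{x}$ at the level of the generating morphism $\phi$. First I would check how $\tau$ and reversal interact with $\phi$. Since $\phi : 0 \mapsto 12, 1 \mapsto 12, 2 \mapsto 00$, a direct computation gives $\tau(\phi(0))^{\mathrm{R}} = \tau(12)^{\mathrm{R}} = (21)^{\mathrm{R}} = 12$, and similarly $\tau(\phi(2))^{\mathrm{R}} = \tau(00)^{\mathrm{R}} = 00$. Comparing with $\phi(\tau(0)) = \phi(0) = 12$ and $\phi(\tau(2)) = \phi(1) = 12$, one sees that $\tau(\phi(a))^{\mathrm{R}}$ and $\phi(\tau(a))$ agree up to the relevant identifications; the key algebraic fact to isolate is a commutation relation of the shape $\tau(\phi(a))^{\mathrm{R}} = \phi(\tau(a))$ for each letter $a$ (or the analogous relation with images of the anti-morphism $w \mapsto \tau(w)^{\mathrm{R}}$). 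Establishing this identity on single letters is the conceptual core, and it extends to all words because $w \mapsto \tau(w)^{\mathrm{R}}$ is an anti-morphism while $\phi$ is a morphism.

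Once that commutation is in hand, I would argue by induction on the level of iteration of $\phi$. Write $\mathbf{x} = \phi^\omega(1)$, so every factor $w$ of $\mathbf{x}$ occurs inside $\phi^m(1)$ for some $m$. The goal is to prove that $\tau(\phi^m(1))^{\mathrm{R}}$ is itself (a factor of, or closely related to) $\phi^m$ of something already known to appear in $\mathbf{x}$, so that $\tau(w)^{\mathrm{R}}$ is forced to be a factor. Using the single-letter commutation relation repeatedly, one obtains $\tau(\phi^m(a))^{\mathrm{R}} = \phi^m(\tau(a))$ for every letter $a$ and every $m \geq 0$. Applying this with the seed letter and tracking how a window of length $|w|$ sits inside $\phi^m(1)$ versus inside its image, one concludes that the reversed-and-swapped copy of any sufficiently long prefix of $\mathbf{x}$ again lies in the language of $\mathbf{x}$.

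To pass from whole blocks $\phi^m(a)$ to an arbitrary factor $w$, I would use the standard fact that the set of factors of $\mathbf{x}$ is closed under the operation $\fac(\mathbf{x}) = \bigcup_m \fac(\phi^m(1))$, together with the observation that $w \mapsto \tau(w)^{\mathrm{R}}$ sends factors of $\phi^m(1)$ to factors of $\tau(\phi^m(1))^{\mathrm{R}} = \phi^m(\tau(1))$. Since $\tau(1) = 2$ and $2$ occurs in $\mathbf{x}$, the word $\phi^m(2)$ is a factor of $\mathbf{x}$ (indeed $\phi^{m}(2)$ appears as a block in $\phi^{m+1}(\cdot)$), so every factor of it—in particular $\tau(w)^{\mathrm{R}}$—is a factor of $\mathbf{x}$, as desired.

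The main obstacle I anticipate is the bookkeeping at the boundaries. The operation $w \mapsto \tau(w)^{\mathrm{R}}$ reverses order, so a window that is aligned with the $\phi$-block structure from the left becomes aligned from the right, and I must make sure the commutation relation is applied with the correct handedness and that no off-by-one misalignment creeps in when a factor straddles two consecutive $\phi$-images. Verifying the base identity $\tau(\phi(a))^{\mathrm{R}} = \phi(\tau(a))$ on each of the three letters is routine, but getting the induction and the factor-inclusion argument to respect reversal cleanly—rather than merely asserting the symmetry—is where the real care is required.
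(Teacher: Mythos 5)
There is a genuine gap, and it is exactly at the step you call the ``conceptual core.'' The commutation relation you propose, $\tau(\phi(a))^\mathrm{R} = \phi(\tau(a))$, is simply false for two of the three letters: for $a=1$ you get $\tau(\phi(1))^\mathrm{R} = \tau(12)^\mathrm{R} = 12$ but $\phi(\tau(1)) = \phi(2) = 00$, and for $a=2$ you get $\tau(\phi(2))^\mathrm{R} = 00$ but $\phi(\tau(2)) = \phi(1) = 12$. Your own computation displays the mismatch ($00$ versus $12$), and the phrase ``agree up to the relevant identifications'' is covering a falsehood --- these are distinct words, and no identification is available. Consequently the iterated identity $\tau(\phi^m(a))^\mathrm{R} = \phi^m(\tau(a))$ fails already at $m=1$, and the final step of your argument (that $\tau(w)^\mathrm{R}$ lies inside $\tau(\phi^m(1))^\mathrm{R} = \phi^m(2)$, hence in $\mathbf{x}$) collapses with it. What \emph{is} true at the letter level is $\tau(\phi(a))^\mathrm{R} = \phi(a)$ for every letter $a$, from which the anti-morphism property gives $\tau(\phi(v))^\mathrm{R} = \phi(v^\mathrm{R})$ for all $v$. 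But this cleaner identity does not rescue the induction: it trades the operation $w \mapsto \tau(w)^\mathrm{R}$ for plain reversal, and the factor set of $\mathbf{x}$ is \emph{not} closed under plain reversal (nor under $\tau$ alone): by Lemma~\ref{lem:2facPD}, $01$ is a factor of $\mathbf{x}$ while $10$ and $02$ are not. So knowing $\tau(w)^\mathrm{R}$ is a factor of $\phi(v^\mathrm{R})$ tells you nothing unless you already know $v^\mathrm{R}$-type words occur, which is precisely what you are trying to prove.

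This boundary issue is not deferrable bookkeeping; it is the crux, and the paper's proof is organized around it. Instead of a letter-level commutation, the paper proves by induction the shifted identity $\tau(\phi(2u1))^\mathrm{R} = \phi(\tau(12u)^\mathrm{R})$ for every factor of $\mathbf{x}$ of the form $2u1$. Note the cyclic displacement of the boundary letters: $2u1$ on the left becomes $12u$ on the right. This is what makes the right-hand side $\phi$ applied to the $\tau$-reversal of a genuine factor ($12u$ is a factor because, by Lemma~\ref{lem:2facPD}, every $2$ in $\mathbf{x}$ is preceded by $1$) that is roughly half the length of $\phi(2u1)$, so the outer induction on $|w|$ can invoke the hypothesis on $12u$ and conclude that $\phi(\tau(12u)^\mathrm{R})$, hence $\tau(w)^\mathrm{R}$, is a factor. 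To repair your proposal you would need to replace your false commutation by this conjugated version (or an equivalent device compensating for the fact that $\tau \circ {}^\mathrm{R}$ fixes the blocks $12$ and $00$ individually but reverses their order), and then run a length induction as the paper does.
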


\begin{proof}
We first prove by induction that $$\tau(\phi(2u1))^\mathrm{R} = \phi(\tau(12u)^\mathrm{R})$$
for every factor of the form $2u1$ of $\mathbf{x}$. 

One checks that this is true for $21$ and $2001$. If $2u1$ is a factor not equal to $21$ nor $2001$, then $u$ must contain a $2$ and we can write $2u1=2u'12u''1$ where $2u'1$ and $2u''1$ are factors of $\mathbf{x}$. By the induction hypothesis we have
\begin{align*}
\tau(\phi(2u1))^\mathrm{R}  &= \tau(\phi(2u'12u''1))^\mathrm{R}\\
        &= \tau(\phi(2u''1))^\mathrm{R}\tau(\phi(2u'1))^\mathrm{R}\\
        &= \phi(\tau(12u'')^\mathrm{R}) \phi(\tau(12u')^\mathrm{R})\\
        &= \phi(\tau(12u'12u'')^\mathrm{R})\\ 
        &= \phi(\tau(12u)^\mathrm{R}).       
\end{align*}

We now prove the lemma by induction on the length of $w$. One can check by hand that the lemma is true for $w$ of length at most $15$. Assume the lemma is true for every factor of length at most $n\geq 15$, and let $w$ be a factor of length $n+1$.
Then $w$ is a factor of $\phi(v)$ for some factor $v$ of $\mathbf{x}$ with $\frac{n+1}{2}\le|v|\le\frac{n+3}{2}$.

Since all factors of length $4$ contain a $1$ and a $2$, there exists a factor $u$ such that $v$ is a factor of $2u1$ and $|2u1|\le\frac{n+3}{2}+6$. In particular, $w$ is a factor of $\phi(2u1)$ and $\tau(w)^\mathrm{R}$ is a factor of $\tau(\phi(2u1))^\mathrm{R}$. To obtain the conclusion, we just need to show that $\tau(\phi(2u1))^\mathrm{R}$ is a factor of $\mathbf{x}$.

As by Lemma~\ref{lem:2facPD}, a $2$ is always preceded by a $1$ in $\mathbf{x}$, the word $12u$ is a factor of $\mathbf{x}$ and it has length $|12u|\le\frac{n+3}{2}+6\le n$. By induction hypothesis, $\tau(12u)^\mathrm{R}$ is a factor of $\mathbf{x}$. Hence $\phi(\tau(12u)^\mathrm{R})$ is also a factor. Finally, using the previous result, $\tau(\phi(2u1))^\mathrm{R}=\phi(\tau(12u)^\mathrm{R})$ is a factor of $\mathbf{x}$.
\end{proof}

We can now express $\mathcal{P}^{(1)}_{\mathbf{x}}$ in terms of $\Delta_0$.

\begin{proof}[Proof of Proposition~\ref{prop:deltatoab_PD}]
Let $w$ be a factor of $\mathbf{x}$ of length $|w| = n$.

If $|w| - |w|_0 = |w|_1 + |w|_2$ is even, it follows from Lemma~\ref{lem:balancedPD} that $|w|_1 = |w|_2$.
Therefore every factor of length $n$ containing exactly $|w|_0$ zeros is abelian-equivalent to $w$, so the pair $(n, |w|_0)$ determines a unique abelian equivalence class of factors.

If $|w| - |w|_0$ is odd, then by Lemma~\ref{lem:balancedPD} either $|w|_1 = |w|_2 + 1$ or $|w|_2 = |w|_1 + 1$.
By Lemma~\ref{lem:reversalPD}, there is another factor, $v = \tau(w)^\mathrm{R}$, of length $n$, with $|v|_0 = |w|_0$ and $|v|_1 - |v|_2 = |w|_2 - |w|_1$.
Therefore both possibilities occur, so the number of abelian equivalence classes corresponding to a pair $(n,|w|_0)$ is $2$.

There are $\Delta_0(n)$+1 possible values for the number of $0$'s in a factor of length $n$.
Since each value occurs for some factor, we have
\begin{align*}
	\mathcal{P}^{(1)}_{\mathbf{x}}(n)
	&= \sum_{i = \m_0(n)}^{\M_0(n)}
		\begin{cases}
			1	& \text{if $n - i$ is even} \\
			2	& \text{if $n - i$ is odd}
		\end{cases} \\
	&= \sum_{j = n - \M_0(n)}^{n - \m_0(n)}
		\begin{cases}
			1	& \text{if $j$ is even} \\
			2	& \text{if $j$ is odd}.
		\end{cases}
\end{align*}
Therefore $\mathcal{P}^{(1)}_{\mathbf{x}}(n) = \frac{3}{2} \Delta_0(n) + c(n)$, where $c(n)$ depends only on the parities of $\Delta_0(n)$ and $n - \m_0(n)$; computing four explicit values allows one to determine the values of $c(n)$ and obtain the equation claimed for $\mathcal{P}^{(1)}_{\mathbf{x}}(n)$.
\end{proof}


\subsection{$\Delta_0(n)_{n\ge 0}$ is $2$-regular, $(\m_0(n)\bmod{2})_{n\ge 0}$ is $2$-automatic}\label{sec:52}

In this section, we prove the following result.

\begin{proposition}\label{prop:recdelta_PD}
Let $\ell\geq 2$ and $r$ such that $0\leq r < 2^{\ell}$. We have
  $$\Delta_0(2^\ell+r)=
  \begin{cases}
    \Delta_0(r)+2 & \text{if } r\leq 2^{\ell-1} \\
    \Delta_0(2^{\ell+1}-r) & \text{if } r>2^{\ell-1}.\\
   \end{cases}
$$
Moreover, 
$$\m_0(2^\ell+r)\equiv
  \begin{cases}
    \m_0(r) \pmod{2} & \text{if } r\leq 2^{\ell-1} \\
    \m_0(2^{\ell+1}-r)+\Delta_0(2^{\ell+1}-r) \pmod{2}&\text{if } r>2^{\ell-1}.\\
   \end{cases}
$$
\end{proposition}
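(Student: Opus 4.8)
The plan is to prove both congruences by a single induction on $\ell$, with the substitution $\phi$ supplying the reduction in scale. The crucial structural fact is that, since $\phi(0)=\phi(1)=12$ and $\phi(2)=00$, every $0$ in $\mathbf{x}=\phi(\mathbf{x})$ is the image of a $2$ at the parent level, and these $0$'s occur in aligned blocks $00$. Consequently, a window of even length $2m$ that is aligned with the $\phi$-blocks is exactly $\phi(v)$ for a parent factor $v$ of length $m$, and it contains precisely $2\,|v|_2$ zeros. This ties the extremal $0$-counts at scale $2m$ to the extremal numbers of $2$'s at scale $m$. I therefore introduce the companion quantities $\M_2(m)$, $\m_2(m)$ and $\Delta_2(m)=\M_2(m)-\m_2(m)$ for the number of $2$'s in factors of length $m$, and record the dual identity read off from $\phi$: in an aligned child window the number of $2$'s equals the number of non-$2$ parent letters, namely $m-|v|_2$.

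First I would establish the scale-halving relations $\M_0(2m)=2\,\M_2(m)$, $\m_0(2m)=2\,\m_2(m)$, hence $\Delta_0(2m)=2\,\Delta_2(m)$, together with their odd-length versions, where a window covers one extra partial $\phi$-block and so picks up a boundary correction of $0$, $1$, or $2$ zeros. The heart of the argument is then a simultaneous induction on $\ell$ proving the coupled leading-digit recurrences
\begin{align*}
\Delta_0(2^\ell+r)&=\begin{cases}\Delta_0(r)+2 & \text{if } r\le 2^{\ell-1}\\ \Delta_0(2^{\ell+1}-r) & \text{if } r>2^{\ell-1},\end{cases}\\
\Delta_2(2^{\ell}+r)&=\begin{cases}\Delta_2(r)+1 & \text{if } r\le 2^{\ell-1}\\ \Delta_2(2^{\ell+1}-r) & \text{if } r>2^{\ell-1}.\end{cases}
\end{align*}
(Computer checks suggest $\Delta_2(m)=A(m)$, the sequence of \eqref{recurrence for A}, which is a useful sanity test for the increments $1$ and $2$.) The identity $\Delta_0(2m)=2\,\Delta_2(m)$ lets me pass from the $\Delta_0$-statement at scale $2^\ell$ to the $\Delta_2$-statement at scale $2^{\ell-1}$ and back, so that each application of the induction hypothesis strictly decreases $\ell$; the increment doubles because two $0$'s accompany each parent $2$. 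For the reflection halves ($r>2^{\ell-1}$) I would apply the de-substitution to both $\Delta_0(2^\ell+r)$ and $\Delta_0(2^{\ell+1}-r)$ (they have the same parity, since both differ from $2^\ell$ by $r$) and invoke the reflection half of the lower-scale $\Delta_2$-recurrence; the $0$-preserving reversal symmetry $w\mapsto\tau(w)^{\mathrm{R}}$ of Lemma~\ref{lem:reversalPD} can be used to match extremal windows of equal length, and a complementarity argument inside a window of length $3\cdot 2^{\ell}$ gives the reflection directly. A finite check of the base cases (small $\ell$ and small $n$) seeds the induction.

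For the parity statement I would track how $\m_0$ sits inside this scheme. By Lemma~\ref{lem:minmaxeven} even lengths contribute even extrema, so all parity information is carried by the single boundary correction in the odd-length de-substitution. In the case $r\le 2^{\ell-1}$ a minimizing window at length $2^\ell+r$ is, inductively, a minimizer at length $r$ extended by a balanced block of length $2^\ell$ that contributes an even number of $0$'s; this preserves parity and yields $\m_0(2^\ell+r)\equiv \m_0(r)$. In the case $r>2^{\ell-1}$ the reflection exchanges the roles of maximizer and minimizer, so $\m_0(2^\ell+r)$ is matched with $\M_0(2^{\ell+1}-r)=\m_0(2^{\ell+1}-r)+\Delta_0(2^{\ell+1}-r)$; reducing modulo $2$ gives exactly the stated congruence $\m_0(2^\ell+r)\equiv \m_0(2^{\ell+1}-r)+\Delta_0(2^{\ell+1}-r)$.

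The main obstacle I anticipate is the boundary and alignment bookkeeping. Windows that begin at an odd position or have odd length straddle partial $\phi$-blocks, each of which may or may not be a $00$, so one must argue that the extremal windows can always be chosen (or re-aligned via Lemma~\ref{lem:2facPD}, which pins down the local context of each $0$-block and each $2$) so that the correction is exactly $0$, $1$, or $2$ as required, and never off by one in a way that breaks the clean increments. This same delicate step is where the parity of $\m_0$ is decided, so the $0$/$1$/$2$-correction analysis and the parity analysis must be carried out together rather than separately.
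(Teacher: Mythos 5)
Your overall architecture (de-substitution via $\phi$ to trade $0$-counts at scale $2m$ for $2$-counts at scale $m$, the reversal symmetry of Lemma~\ref{lem:reversalPD}, induction on $\ell$ with even/odd interpolation) is the same as the paper's, but the engine you propose for the induction does not turn over. The identity $\Delta_0(2m)=2\Delta_2(m)$ makes the $\Delta_2$-recurrence at scale $2^{\ell-1}$ \emph{literally equivalent} to the even-$r$ half of the $\Delta_0$-recurrence at scale $2^\ell$: writing $r=2r'$, the claim $\Delta_0(2^\ell+2r')=\Delta_0(2r')+2$ becomes, after dividing by $2$, exactly the claim $\Delta_2(2^{\ell-1}+r')=\Delta_2(r')+1$. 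So ``passing to $\Delta_2$ and back'' achieves no reduction in $\ell$: if your simultaneous induction hypothesis includes the $\Delta_2$-statement at scale $2^{\ell-1}$, it already contains a restatement of the very thing you are proving at step $\ell$; and to maintain the $\Delta_2$-statement at scale $2^\ell$ for the next step, the only conversion your identity offers is to the $\Delta_0$-statement at scale $2^{\ell+1}$, which is not yet available. The $\Delta_2$ component of the induction is therefore never established at any level. To break this circle you need a \emph{same-scale} bridge between $2$-counts and $0$-counts, e.g.\ $\M_2(n)=\lceil(n-\m_0(n))/2\rceil$ and $\m_2(n)=\lfloor(n-\M_0(n))/2\rfloor$ (which follow from Lemmas~\ref{lem:balancedPD} and~\ref{lem:reversalPD}); but these involve $\m_0(n)$ and $\M_0(n)$ individually, together with their parities, so once you add them you are forced to carry the min and max sequences --- not merely their difference $\Delta_0$ --- through the induction, which is precisely what the paper does.

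Relatedly, your parity argument assumes exactly the statements that constitute the hard core of the proof. ``A minimizing window at length $2^\ell+r$ is, inductively, a minimizer at length $r$ extended by a balanced block of length $2^\ell$ contributing an even number of $0$'s'' is the additive decomposition $\m_0(2^\ell+r)=\m_0(2^\ell)+\m_0(r)$ (the paper's Lemma~\ref{lem:recminmax}), and ``the reflection matches $\m_0(2^\ell+r)$ with $\M_0(2^{\ell+1}-r)$'' is the complementarity $\m_0(2^\ell+r)=2^\ell-\M_0(2^{\ell+1}-r)$ (Lemma~\ref{lem:recminmax2}). Neither follows from $\Delta$-level recurrences, since the difference $\Delta_0$ carries no information about how an individual extremal window decomposes; the paper proves them by explicit extremal-window constructions (choosing a minimizer of $0$'s that simultaneously maximizes $2$'s, then transporting extremality through $\phi$ via Lemma~\ref{lem:max2max0}, with the alignment analysis you anticipate). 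So the boundary bookkeeping you flag as the main obstacle is indeed necessary, but it is not the principal gap: the principal gap is that $\Delta$-information alone cannot drive the induction, and the min/max-level facts your parity paragraph quietly relies on are never proven. Once you do prove them, the $\Delta_0$-recurrence and both congruences follow by subtraction, and the coupled $\Delta_0$/$\Delta_2$ scaffolding becomes unnecessary.
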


Before giving the proof, we prove a corollary.
The $2$-regularity of $\mathcal{P}^{(1)}_{\mathbf{x}}(n)_{n\ge0}$ follows from Proposition~\ref{prop:deltatoab_PD} and Corollary~\ref{cor:2regDm_PD}.

\begin{corollary}\label{cor:2regDm_PD} The following statements are true.
    \begin{itemize}
      \item The sequence $\Delta_0(n)_{n\ge 0}$ is $2$-regular.
      \item The sequence $(\Delta_0(n)\bmod{2})_{n\ge 0}$ is $2$-automatic.
      \item The sequence $(\m_0(n)\bmod{2})_{n\ge 0}$ is $2$-automatic.
    \end{itemize}
\end{corollary}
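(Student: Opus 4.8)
The plan is to establish the three claims of Corollary~\ref{cor:2regDm_PD} as direct consequences of Proposition~\ref{prop:recdelta_PD} together with the general machinery of Section~\ref{sec:2} and Section~\ref{sec:3}. For the first item, I would observe that the recurrence for $\Delta_0$ in Proposition~\ref{prop:recdelta_PD} is \emph{exactly} of the form required by Theorem~\ref{thm:reflection_recurrence}: setting $s(n) = \Delta_0(n)$, the recurrence
\[
	\Delta_0(2^\ell + r) =
	\begin{cases}
		\Delta_0(r) + 2	& \text{if $r \leq 2^{\ell-1}$} \\
		\Delta_0(2^{\ell+1}-r)	& \text{if $r > 2^{\ell-1}$}
	\end{cases}
\]
holds for all $\ell \geq 2$, which is Equation~\eqref{eq:eric} with $\ell_0 = 2$ and $c = 2$. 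Thus Theorem~\ref{thm:reflection_recurrence} applies immediately and yields the $2$-regularity of $\Delta_0(n)_{n \geq 0}$. (This is the application foreshadowed in Example~\ref{exa:l0=2}.)

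For the second item, the cleanest route is to invoke Lemma~\ref{lem:regmod}: since $\Delta_0(n)_{n \geq 0}$ is $2$-regular, the sequence $(\Delta_0(n) \bmod 2)_{n \geq 0}$ is automatically $2$-automatic with $k = m = 2$. This requires no further computation once the first item is in hand.

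The third item is the one demanding a little more care, since $\m_0(n)_{n \geq 0}$ itself is not directly handled by Theorem~\ref{thm:reflection_recurrence} (the recurrence in Proposition~\ref{prop:recdelta_PD} governs only $\m_0 \bmod 2$, not $\m_0$ on the nose). My plan here is to work directly with the parity sequence $p(n) = \m_0(n) \bmod 2$ and show that its $2$-kernel is finite. Using the second relation of Proposition~\ref{prop:recdelta_PD}, together with the fact that $(\Delta_0(n) \bmod 2)_{n \geq 0}$ is already known to be $2$-automatic from the second item, one sees that $p(2^\ell + r)$ is determined modulo $2$ by $p(r)$ (when $r \leq 2^{\ell-1}$) or by $p(2^{\ell+1} - r)$ plus a known $2$-automatic correction term (when $r > 2^{\ell-1}$). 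Reading this recurrence from the most significant digit shows that $p$ takes finitely many values and that its behavior is governed by a finite automaton, so $p$ is $2$-automatic by Lemma~\cite[Theorem~2.3]{Allouche--Shallit 1992}; alternatively, one may express $(\m_0(n) \bmod 2)$ via Lemma~\ref{lem:compo} as a piecewise combination of the $2$-automatic predicates ``$r \leq 2^{\ell-1}$'' and ``$r > 2^{\ell-1}$'' applied to the already-established $2$-automatic sequences.

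The main obstacle I anticipate is the third item, and specifically the interplay between $\m_0$ and $\Delta_0$ in its recurrence: because the correction term for $r > 2^{\ell-1}$ involves $\Delta_0(2^{\ell+1}-r) \bmod 2$, one must first have the second item (automaticity of $\Delta_0 \bmod 2$) in place, and then verify that the reflection structure of the argument $r \mapsto 2^{\ell+1}-r$ is compatible with a finite-state reading of the base-$2$ expansion. This is the crux where the proof of Proposition~\ref{prop:recdelta_PD} is genuinely used; the other two items are essentially immediate corollaries of the general theorem and the closure lemmas.
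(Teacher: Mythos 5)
Your first two items are correct and coincide exactly with the paper's proof: Proposition~\ref{prop:recdelta_PD} exhibits $\Delta_0$ as satisfying Equation~\eqref{eq:eric} with $\ell_0=2$ and $c=2$, so Theorem~\ref{thm:reflection_recurrence} gives $2$-regularity, and Lemma~\ref{lem:regmod} then gives the automaticity of $(\Delta_0(n)\bmod 2)_{n\ge0}$. The third item, however, contains a genuine gap. The relation for $\m_0$ in Proposition~\ref{prop:recdelta_PD} is self-referential at a \emph{reflected} argument: $p(2^\ell+r)$ is expressed through $p(r)$, or through $p(2^{\ell+1}-r)+\Delta_0(2^{\ell+1}-r)$ when $r>2^{\ell-1}$. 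Your claim that ``reading this recurrence from the most significant digit shows that \dots\ its behavior is governed by a finite automaton'' asserts precisely what has to be proved: such a reflection recurrence is read from the most significant digit, whereas $2$-automaticity via kernels is a statement about subsequences $n\mapsto 2^in+j$ that fix the \emph{least} significant digits, and bridging this mismatch is exactly the nontrivial content of Theorem~\ref{thm:reflection_recurrence}, whose proof requires the auxiliary sequence $A(n)$ and a delicate induction. Moreover, the $\m_0$ relation does \emph{not} satisfy the hypotheses of that theorem, because of the extra $\Delta_0(2^{\ell+1}-r)$ term in the reflected branch, so no result already in the paper can simply be cited. Your fallback via Lemma~\ref{lem:compo} is a misapplication: that lemma combines \emph{already-known} $k$-regular sequences $f_i$, all evaluated at the same argument $n$, with $k$-automatic predicates $P_i(n)$; it says nothing about a recurrence in which the unknown sequence itself appears on the right-hand side at transformed arguments. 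Likewise, invoking the fact that a finitely valued $2$-regular sequence is $2$-automatic is circular here, since the $2$-regularity of $p$ is what is at stake.

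What is actually needed, and what the paper does, is to produce explicit kernel relations and prove them by a joint induction on $n$ for $\m_0$ and $\Delta_0$ modulo $2$: for each $0\le i\le 31$ one shows that $\m_0(32n+i)$ and $\Delta_0(32n+i)$ are congruent mod $2$ either to $\m_0(8n+j)$, resp.\ $\Delta_0(8n+j)$, for an explicit $j(i)\in\{1,3,5,7\}$, or to $0$ (the even-$i$ cases following at once from Lemma~\ref{lem:minmaxeven}). The induction step for $r\ge 2^{\ell-1}$ is the crux you anticipated but did not resolve: one application of Proposition~\ref{prop:recdelta_PD} gives $\m_0(32n+i)\equiv \m_0(32n'+j)+\Delta_0(32n'+j)\pmod 2$ with $n'=2^{\ell+1}-r-1$ and $j=32-i$; then, because $\m_0$ and $\Delta_0$ satisfy the \emph{same} relations, the induction hypothesis converts this to $\m_0(8n'+k)+\Delta_0(8n'+k)$ for some $k$, and a \emph{second} application of Proposition~\ref{prop:recdelta_PD} folds the reflected argument back to an unreflected one, yielding $\m_0(8n+(8-k))$; finally one checks, case by case, that the index $8-k$ and the parity corrections agree with the relation claimed for $i$. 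These relations show that the $2$-kernel of $(\m_0(n)\bmod 2)_{n\ge0}$ is finite, whence the sequence is $2$-automatic. Your sketch correctly identifies the interplay between $\m_0$ and $\Delta_0$ as the difficulty, but supplies no mechanism for handling it; that mechanism --- the explicit relations, the joint induction, and the double use of the reflection to cancel itself --- is the essential missing part of the proof.
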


\begin{proof} The first assertion is a direct consequence of Proposition~\ref{prop:recdelta_PD} and Theorem~\ref{thm:reflection_recurrence}.
Note that one can obtain explicit relations satisfied by $\Delta_0(n)_{n\ge 0}$ from Example~\ref{exa:l0=2}.
The second assertion follows from Lemma~\ref{lem:regmod}. 

For the last assertion, for $i\in\{0,\ldots,31\}$ we prove that, modulo $2$,

\[
	\m_0(32n+i) \equiv \begin{cases}
	\m_0(8n+1)  & \text{if } i\in\{1,5,9,17,25\}\\
	\m_0(8n+3)  & \text{if } i=11\\
	\m_0(8n+5)  & \text{if } i=21\\
	\m_0(8n+7)  & \text{if } i\in\{7,15,23,27,31\}\\
	 0 & \text{otherwise}
\end{cases}
\]
and
\[
	\Delta_0(32n+i) \equiv \begin{cases}
	\Delta_0(8n+1)  & \text{if } i\in\{1,5,9,17,25\}\\
	\Delta_0(8n+3)  & \text{if } i=11\\
	\Delta_0(8n+5)  & \text{if } i=21\\
	\Delta_0(8n+7)  & \text{if } i\in\{7,15,23,27,31\}\\
	 0 & \text{otherwise.}
\end{cases}
\]

By Lemma~\ref{lem:minmaxeven}, we already know that $\m_0(2n)\equiv \Delta_0(2n)\equiv 0 \pmod{2}$ for any $n\in\mathbb{N}$. Hence the relations above are true for $i$ even.
We prove the other relations by induction on $n$. They are true for $n=0$.  Let $n>0$ and assume the relations are satisfied for all $n'$ such that $0\le n'<n$. We can write $n=2^\ell+r$ with $\ell\geq 0$ and $0\leq r < 2^\ell$. Let $i\in\{1,\ldots,31\}$ be odd.

Assume first that $r<2^{\ell-1}$. We have $32n+i=2^{\ell+5}+32r+i$ and $32r+i<2^{\ell+4}$.
\begin{align*}
\m_0(32n+i)&\equiv \m_0(32r+i) \tag{Proposition~\ref{prop:recdelta_PD}}\\
&\equiv \m_0(8r+j) \tag{induction}\\
&\equiv \m_0(2^{\ell+3}+8r+j) \tag{Proposition~\ref{prop:recdelta_PD}}\\
&\equiv \m_0(8n+j) \pmod{2}
\end{align*}
for some $j\in\{0,\ldots,7\}$ according to the relations. 
A similar reasoning holds for the $\Delta_0$ relations.

Assume now that $r\geq 2^{\ell-1}$. Since $32r+i>2^{\ell+4}$, we have 
\begin{align*}
\m_0(32n+i)&\equiv \m_0(2^{\ell+6}-32r-i) + \Delta_0(2^{\ell+6}-32r-i) \tag{Proposition~\ref{prop:recdelta_PD}}\\
&\equiv \m_0(32n'+j) + \Delta_0(32n'+j)\pmod{2}
\end{align*}
with $j=32-i$ and $n'=2^{\ell+1}-r-1$. If $i\in\{3,13,19,29\}$, then $j\in\{3,13,19,29\}$. By the induction hypothesis, $\m_0(32n'+j)\equiv \Delta_0(32n'+j)\equiv0\pmod{2}$ and we are done.

For the remaining cases, $i,j\not\in\{3,13,19,29\}$. As $\m_0$ and $\Delta_0$ satisfy the same recurrence relations, by the induction hypothesis, there exists $k\in\{1,3,5,7\}$ such that 
\begin{align*}
\m_0(32n+i)&\equiv \m_0(8n'+k) + \Delta_0(8n'+k)\\
&\equiv \m_0(2^{\ell+4}-(8r+8-k))+\Delta_0(2^{\ell+4}-(8r+8-k))\\
&\equiv \m_0(2^{\ell+3}+(8r+8-k))\tag{{Proposition~\ref{prop:recdelta_PD}}}\\
&\equiv \m_0(8n+(8-k))\pmod{2}.
\end{align*}
Observe that the value of $8-k$ is the value given in the relation for $i$. This concludes the proof of the $\m_0$ relations. A similar argument works for the $\Delta_0$ relations.
\end{proof}

We break the proof of Proposition~\ref{prop:recdelta_PD} into three parts, covered by Lemmas~\ref{lem:power2PD},~\ref{lem:recminmax} and \ref{lem:recminmax2}. We first deal with powers of 2.

\begin{lemma}\label{lem:power2PD}
  Let $\ell \in \mathbb N$, $\ell\geq 1$. We have $\mathcal{P}^{(1)}_{\mathbf{x}}(2^\ell)=4$,
  \[\quad \Delta_0(2^\ell)=2,\ \M_0(2^{\ell+1})=2^\ell-\m_0(2^\ell) \text{ and }\ \m_0(2^{\ell+1})=2^\ell-\M_0(2^\ell).\]
\end{lemma}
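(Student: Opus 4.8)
The plan is to reduce everything to \emph{aligned} factors, i.e.\ factors of the form $\phi(v)$ that begin at an even position of $\mathbf{x}=\phi(\mathbf{x})$. Writing $\mathbf{x}$ as the concatenation of its length-$2$ blocks $B_i:=\phi(x_i)$, each block is either $12$ (when $x_i\in\{0,1\}$) or $00$ (when $x_i=2$), and the only $0$'s of $\mathbf{x}$ are the two letters of the $00$-blocks, so $|\phi(v)|_0=2|v|_2$. My first step is to show that, for factors of \emph{even} length, the global extrema $\M_0$ and $\m_0$ are already attained by aligned factors. Consider a factor $w$ of length $2m$ occurring at an odd position $2j+1$: it consists of the second letter of block $j$, the interior blocks $j+1,\dots,j+m-1$, and the first letter of block $j+m$. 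Setting $a=[B_j=00]$, $b=[B_{j+m}=00]$ and letting $c$ be the number of $00$-blocks among the interior blocks, one obtains $|w|_0=a+2c+b$, whereas the two aligned windows obtained by shifting $w$ left and right by one letter have $2a+2c$ and $2c+2b$ zeros. Since $2\min(a,b)\le a+b\le 2\max(a,b)$, the value $|w|_0$ is bracketed by the two aligned counts, and both shifted windows are themselves aligned factors of $\mathbf{x}$. Hence for every even $2m$ the maximum and minimum of $|w|_0$ over all factors coincide with those over aligned factors $\phi(v)$ with $|v|=m$.

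The second step computes these aligned extrema. By Lemma~\ref{lem:balancedPD} we have $\big||v|_1-|v|_2\big|\le 1$, so for $|v|=2^\ell$ one can write $2|v|_2=(2^\ell-|v|_0)+\delta$ with $\delta=|v|_2-|v|_1\in\{-1,0,1\}$; since $2|v|_2$ and $2^\ell$ are even, $\delta\equiv|v|_0\pmod 2$. Choosing a factor $v$ with $|v|_0=\m_0(2^\ell)$, which is even by Lemma~\ref{lem:minmaxeven}, the balance then forces $|v|_1=|v|_2$, giving $2|v|_2=2^\ell-\m_0(2^\ell)$, and the same parity bound shows no factor can do better. This yields $\M_0(2^{\ell+1})=2^\ell-\m_0(2^\ell)$, and symmetrically $\m_0(2^{\ell+1})=2^\ell-\M_0(2^\ell)$ upon choosing $|v|_0=\M_0(2^\ell)$.

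From these two identities, subtracting gives $\Delta_0(2^{\ell+1})=\M_0(2^\ell)-\m_0(2^\ell)=\Delta_0(2^\ell)$, so $\Delta_0(2^\ell)$ is constant for $\ell\ge 1$. A direct inspection of $\fac_{\mathbf{x}}(2)=\{00,01,12,20,21\}$ (Lemma~\ref{lem:2facPD}) gives $\M_0(2)=2$ and $\m_0(2)=0$, whence $\Delta_0(2^\ell)=2$ for all $\ell\ge 1$. For the abelian complexity I would then apply Proposition~\ref{prop:deltatoab_PD} with $n=2^\ell$: here $\Delta_0(n)=2$ is even and $n-\m_0(n)=2^\ell-\m_0(2^\ell)$ is even by Lemma~\ref{lem:minmaxeven}, so the middle case applies and $\mathcal{P}^{(1)}_{\mathbf{x}}(2^\ell)=\tfrac32\Delta_0(n)+1=4$.

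The main obstacle is the first step, namely controlling the \emph{misaligned} factors; the aligned computation is then a routine balance-and-parity argument, and the rest follows from Proposition~\ref{prop:deltatoab_PD}. The key observation that resolves the obstacle is that shifting a misaligned window by a single letter in either direction produces two aligned factors whose $0$-counts bracket that of the original window. I would be careful to record that the left shift is always legal (an odd occurrence position is at least $1$, so the shifted window still lies inside $\mathbf{x}$), since this legality is exactly what makes the sandwiching valid for \emph{every} factor and hence lets the global extrema be replaced by the aligned ones.
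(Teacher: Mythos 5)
Your proposal is correct, but it takes a genuinely different route from the paper's. The paper proves a stronger statement by induction on $\ell$: the set of Parikh vectors of factors of $\mathbf{x}$ of length $2^\ell$ is exactly the four-element set $\{P_\ell,\,P_\ell+(-2,1,1),\,P_\ell+(-1,1,0),\,P_\ell+(-1,0,1)\}$ with $P_\ell$ explicit and with $\Psi(\phi^\ell(0))$, $\Psi(\phi^\ell(2))$ identified; the induction proceeds by de-substituting factors of length $2^\ell$ into the forms $\phi(v)$, $0^{-1}\phi(v)0$, $1^{-1}\phi(v)1$, $0^{-1}\phi(v)1$, $1^{-1}\phi(v)0$, and uses Lemma~\ref{lem:reversalPD} plus a continuity argument to show all four vectors actually occur. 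All four claims of the lemma then drop out at once; in particular $\mathcal{P}^{(1)}_{\mathbf{x}}(2^\ell)=4$ is read off from the four Parikh classes without appealing to Proposition~\ref{prop:deltatoab_PD}. You instead (a) reduce the extrema of the number of $0$'s over even-length factors to aligned factors $\phi(v)$ by a one-step sandwich argument --- this is in substance the paper's Lemma~\ref{lem:max2max0} (equivalently $\M_0(2n)=2\M_2(n)$ and $\m_0(2n)=2\m_2(n)$), which the paper proves \emph{after} this lemma and by a different extension argument; (b) derive the two cross-scale identities directly from the balance property (Lemma~\ref{lem:balancedPD}) and the parity statement (Lemma~\ref{lem:minmaxeven}), with no induction at all; (c) get $\Delta_0(2^\ell)=2$ by telescoping down to $\Delta_0(2)=2$; and (d) invoke Proposition~\ref{prop:deltatoab_PD} for the value $4$. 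That last step is legitimate: the proposition is proved in Section~\ref{sec:51}, before this lemma, and its proof (via Lemmas~\ref{lem:2facPD}, \ref{lem:balancedPD}, \ref{lem:reversalPD}) does not use the lemma, so there is no circularity. As for what each approach buys: the paper's induction yields strictly more information (the exact Parikh spectrum and explicit values of $\M_0(2^\ell)$, $\m_0(2^\ell)$, some of which is reused later) and keeps the lemma self-contained; your argument is more elementary and shorter, avoids computing any explicit Parikh vectors, and your alignment step doubles as a clean proof of Lemma~\ref{lem:max2max0} for even lengths, at the cost of outsourcing the complexity count to Proposition~\ref{prop:deltatoab_PD}.
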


\begin{proof}
Recall that $\Psi(w) = (|w|_0, |w|_1, |w|_2)$ is the Parikh vector of $w$.
We show by induction that 
\begin{multline*}
\{\Psi(w) : w\text{ factor of }\mathbf{x}\text{ with }|w|=2^\ell\}\\
= \{P_\ell +(0,0,0),P_\ell +(-2,1,1),P_\ell +(-1,1,0),P_\ell +(-1,0,1)\}
\end{multline*}
and that
\begin{align*}
	\Psi(\phi^\ell(0)) &=
	\begin{cases}
		P_\ell			& \text{if $\ell$ is even}\\
		P_\ell +(-2,1,1)	& \text{if $\ell$ is odd}
	\end{cases} \\
	\Psi(\phi^\ell(2)) &=
	\begin{cases}
		P_\ell +(-2,1,1)	& \text{if $\ell$ is even}\\
		P_\ell			& \text{if $\ell$ is odd,}
	\end{cases}
\end{align*}
where $P_\ell=(\frac{2^\ell+4}{3},\frac{2^\ell-2}{3},\frac{2^\ell-2}{3})$ if $\ell$ is odd and $P_\ell=(\frac{2^\ell+2}{3},\frac{2^\ell-1}{3},\frac{2^\ell-1}{3})$ if $\ell$ is even.
Since Parikh vectors of factors of length $2^\ell$ can take exactly four values, the conclusion is immediate. 

The result is true for $\ell\in\{1,2\}$. Let $\ell>2$ and assume the result holds for $\ell-1$. Let $w$ be a factor of length $2^\ell$. 

If $w$ can be de-substituted, then $w=\phi(v)$ for some factor $v$ of length $2^{\ell-1}$, and $\Psi(w)=(2|v|_2,|v|_0+|v|_1,|v|_0+|v|_1)$. Using the induction hypothesis, it is easy to check that $\Psi(w)=P_{\ell}$ or $\Psi(w)=P_{\ell}+(-2,1,1)$ and that the equalities for $\Psi(\phi^\ell(0)),\Psi(\phi^\ell(2))$ are satisfied.

If $w$ cannot be de-substituted, then $w$ occurs at an odd index in $\mathbf{x}$ and $w$ is of the form $$0^{-1}\phi(v)0, \quad 1^{-1}\phi(v)1, \quad 0^{-1}\phi(v)1 \quad \text{ or } \quad 1^{-1}\phi(v)0$$ for some factor $v$ of length $2^{\ell-1}$. If $w$ is of one of the first two forms, then $\Psi(w)=\Psi(\phi(v))$ and $\Psi(w)=P_{\ell}$ or $\Psi(w)=P_{\ell}+(-2,1,1)$ (as in the previous case).

If $w=0^{-1}\phi(v)1$, then $w$ can also be written as $w=0\phi(u)2^{-1}$ for some factor $u$ of length $2^{\ell-1}$. So both Parikh vectors $\Psi(\phi(v))$ and $\Psi(\phi(u))$ belong to $\{P_{\ell},P_{\ell}+(-2,1,1)\}$. Since by construction $\phi(v)$ has two more zeros than $\phi(u)$, we obtain $\Psi(\phi(v))=P_\ell$ and $\Psi(\phi(u))=P_\ell+(-2,1,1)$. Thus $\Psi(w)=\Psi(\phi(v))+(-1,1,0)=P_\ell+(-1,1,0)$.

Similarly, if $w=1^{-1}\phi(v)0$, then $\Psi(w)=P_\ell+(-1,0,1)$.

To conclude the proof, we just need to show that these four cases actually occur for all $\ell$. 
Since $\{\Psi(\phi^\ell(0)),\Psi(\phi^\ell(2))\}=\{P_\ell,P_\ell+(-2,1,1)\}$, consider all factors of length $2^\ell$ occurring between two consecutive occurrences of $\Psi(\phi^\ell(0))$ and $\Psi(\phi^\ell(2))$. By continuity\footnote{We mean by {\em continuity} that the number of $0$'s is varying by at most 1 between two factors of the same length starting at consecutive indexes.}, of the number of $0$'s, one of these factors must have a Parikh vector equal to $P_\ell+(-1,1,0)$ or $P_\ell+(-1,0,1)$. Using Lemma~\ref{lem:reversalPD}, we obtain that
$w$ is a factor of length $2^\ell$ with $\Psi(w)=P_\ell+(-1,1,0)$ if and only if 
$\tau(w)^\mathrm{R}$ is a factor of length $2^\ell$  with $\Psi(w)=P_\ell+(-1,0,1)$. So all four values actually occur. 
\end{proof}

To show Lemmas~\ref{lem:recminmax} and~\ref{lem:recminmax2}, we first prove the following technical result.

\begin{lemma}\label{lem:max2max0}
Let $u$ be a factor of $\mathbf{x}$ of length $n\ge1$. Let $\M_2(n)$ (resp.\ $\m_2(n)$) denote the maximum (resp.\ minimum) of $\{|w|_2: w\text{ factor of }\mathbf{x}\text{ of length }n\}$. We have $|u|_2=\M_2(n)$ if and only if $|\phi(u)|_0=\M_0(2n)$, and $|u|_2=\m_2(n)$ if and only if $|\phi(u)|_0=\m_0(2n)$.
\end{lemma}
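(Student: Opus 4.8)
The plan is to reduce the entire lemma to the two identities $\M_0(2n) = 2\,\M_2(n)$ and $\m_0(2n) = 2\,\m_2(n)$. The starting point is the elementary observation that, since $\phi(0) = \phi(1) = 12$ contain no $0$ while $\phi(2) = 00$ contains exactly two, we have $|\phi(u)|_0 = 2|u|_2$ for every word $u$. In particular, if $u$ is a factor of $\mathbf{x}$ of length $n$, then $\phi(u)$ is a factor of $\mathbf{x}$ of length $2n$ (because $\mathbf{x}$ is a fixed point of $\phi$) with $|\phi(u)|_0 = 2|u|_2$. Once the two displayed identities are established, the stated equivalences are immediate: $|\phi(u)|_0 = \M_0(2n)$ holds if and only if $2|u|_2 = 2\,\M_2(n)$, i.e.\ if and only if $|u|_2 = \M_2(n)$, and symmetrically for the minimum.

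For the easy inequalities $\M_0(2n) \ge 2\,\M_2(n)$ and $\m_0(2n) \le 2\,\m_2(n)$, I would take a factor $u$ of length $n$ realizing $\M_2(n)$ (resp.\ $\m_2(n)$) and note that $\phi(u)$ is a factor of length $2n$ with $2\,\M_2(n)$ (resp.\ $2\,\m_2(n)$) zeros. The substance of the lemma is therefore the reverse inequalities: no factor of length $2n$ has more than $2\,\M_2(n)$ zeros, nor fewer than $2\,\m_2(n)$.

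To prove these, I would classify a factor $w$ of length $2n$ by the parity of a position at which it occurs. If $w$ occurs at an even position, then $w = \phi(v)$ for a factor $v$ of length $n$, so $|w|_0 = 2|v|_2$ lies between $2\,\m_2(n)$ and $2\,\M_2(n)$. If $w$ occurs at an odd position $2q+1$, then $w$ is obtained from $\phi(v')$, where $v' = x_q \cdots x_{q+n}$ has length $n+1$, by deleting its first and last letters; tracking which deleted letters equal $0$ gives $|w|_0 = 2|v'|_2 - [x_q = 2] - [x_{q+n} = 2]$. Writing $v'$ in the two ways $x_q \cdot v''$ and $v''' \cdot x_{q+n}$, with $v'' = x_{q+1}\cdots x_{q+n}$ and $v''' = x_q \cdots x_{q+n-1}$ both of length $n$, this becomes either $|w|_0 = 2|v''|_2 + [x_q=2] - [x_{q+n}=2]$ or $|w|_0 = 2|v'''|_2 - [x_q=2] + [x_{q+n}=2]$. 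The key is that one may choose whichever decomposition makes the correction term favorable: for the upper bound I use the first expression when $[x_q=2] \le [x_{q+n}=2]$ and the second otherwise, obtaining $|w|_0 \le 2\,\M_2(n)$ in each case; for the lower bound I reverse the selection and obtain $|w|_0 \ge 2\,\m_2(n)$.

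The main obstacle is precisely this odd-position case. A naive estimate through $v'$ of length $n+1$ would only yield $|w|_0 \le 2\,\M_2(n+1)$, which is too weak, while committing to a single length-$n$ decomposition leaves a correction term of the wrong sign in one subcase (e.g.\ $x_q = 2$, $x_{q+n} \ne 2$ could force $|w|_0 = 2\,\M_2(n)+1$). Having \emph{both} decompositions available, and selecting between them according to the indicator values $[x_q=2]$ and $[x_{q+n}=2]$, is what makes the bound close; everything else is routine.
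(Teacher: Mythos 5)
Your proof is correct, but it takes a genuinely different route from the paper's. The paper argues by contradiction and normalizes an extremal factor: given $v$ of length $2n$ with $|v|_0=\M_0(2n)$, it invokes the evenness of $\M_0(2n)$ (Lemma~\ref{lem:minmaxeven}) and a sliding argument, resting on where the $00$ blocks sit in $\mathbf{x}$, to replace $v$ by a factor with the same number of zeros that starts with $00$, hence is de-substitutable as $v=\phi(z)$; this gives $\M_0(2n)=2|z|_2\le 2\M_2(n)$, the converse direction is the same one-line computation you use, and the minimum case is dispatched by ``similar arguments.'' You instead prove the explicit identities $\M_0(2n)=2\M_2(n)$ and $\m_0(2n)=2\m_2(n)$ by bounding the zero-count of an \emph{arbitrary} factor $w$ of length $2n$, splitting on the parity of a position where $w$ occurs; the key ingredient, absent from the paper, is keeping \emph{both} length-$n$ truncations $v''$ and $v'''$ of the covering word $\phi(v')$ available and selecting the one whose correction term $\pm\left([x_q=2]-[x_{q+n}=2]\right)$ has the favorable sign, which is exactly what closes the bound in the mixed cases. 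What each approach buys: the paper's proof is shorter given the machinery already in place, but it leans on Lemma~\ref{lem:minmaxeven} and on structural properties of $\mathbf{x}$ (zeros occurring only as aligned images $\phi(2)=00$); yours is entirely self-contained --- it uses nothing beyond $\mathbf{x}=\phi(\mathbf{x})$ and $|\phi(u)|_0=2|u|_2$ --- treats maximum and minimum symmetrically rather than by analogy, and delivers the two closed-form identities from which both stated equivalences follow immediately.
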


\begin{proof}
For the first assertion, assume that $|u|_2=\M_2(n)$ and suppose that $|\phi(u)|_0<\M_0(2n)$. Note that $|\phi(u)|_0=2|u|_2$ by definition of $\phi$. Let $v$ be a factor of length $2n$ such that $|v|_0=\M_0(2n)$, which is even by Lemma~\ref{lem:minmaxeven}. In addition, we can assume that $v$ starts with $00$. Indeed, if it is not the case, then either $v$ starts with $01$ and ends with $0$, or $v$ is of the form $t00s$ where $t$ does not contain any zero. In the first case, we can consider the word $0v0^{-1}$ that starts with $00$ and has $\M_0(2n)$ zeros. In the second case, we can consider the word $00sw$ for some $w$ with $|w|=|t|$. This factor has also $\M_0(2n)$ zeros. Therefore $v$ can be de-substituted. So $v=\phi(z)$ and $|z|_2=\frac{1}{2}|v|_0 > |u|_2$, which is a contradiction.

For the other direction, assume $|\phi(u)|_0=\M_0(2n)$ and suppose $|u|_2$ does not maximize the number of $2$'s. Then there exists a factor $v$ of length $n$ such that $|v|_2=\M_2(n)$. Hence,
$$|\phi(v)|_0=2|v|_2 >2|u|_2=|\phi(u)|_0=\M_0(2n),$$
which is a contradiction. Similar arguments hold for the second assertion. 
\end{proof}

\begin{lemma}\label{lem:recminmax}
If $\ell\geq2$ and $0\leq r\leq 2^{\ell-1}$, then 
\begin{align*}
	\M_0(2^\ell+r) &= \M_0(2^\ell)+\M_0(r), \\
	\m_0(2^\ell+r) &= \m_0(2^\ell)+\m_0(r).
\end{align*}
\end{lemma}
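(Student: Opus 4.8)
The plan is to prove each of the two equalities by establishing both inequalities separately. One direction is immediate: any factor $w$ of length $2^\ell+r$ splits as a length-$2^\ell$ prefix $w'$ followed by a length-$r$ suffix $w''$, and $|w|_0=|w'|_0+|w''|_0$, so $\M_0(2^\ell+r)\le \M_0(2^\ell)+\M_0(r)$ and, dually, $\m_0(2^\ell+r)\ge \m_0(2^\ell)+\m_0(r)$. All the content therefore lies in the reverse inequalities, i.e.\ in realizing these sums by a single factor. I would prove them by induction on $\ell$, the base case $\ell=2$ (finitely many $r\in\{0,1,2\}$) being a direct computation from the first values of $\mathbf{x}$.

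For the inductive step I would first treat $r$ even, say $r=2s$ with $s\le 2^{\ell-2}$, where the length $2^\ell+r=2(2^{\ell-1}+s)$ is even. Here Lemma~\ref{lem:max2max0} gives $\M_0(2m)=2\M_2(m)$ and $\m_0(2m)=2\m_2(m)$, while Lemmas~\ref{lem:balancedPD} and~\ref{lem:reversalPD} yield the elementary identities $\M_2(m)=\lceil (m-\m_0(m))/2\rceil$ and $\m_2(m)=\lfloor (m-\M_0(m))/2\rfloor$ (the reversal $\tau(\cdot)^{\mathrm{R}}$ is what guarantees that, at the extremal zero-count, both parities of $|w|_2-|w|_1$ are attained). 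Applying the induction hypothesis at scale $m=2^{\ell-1}+s$ (legitimate since $s\le 2^{\ell-2}$ and $\ell-1\ge 2$) to rewrite $\m_0(m)$ and $\M_0(m)$ additively, and using $\M_0(2^\ell)=2^{\ell-1}-\m_0(2^{\ell-1})$ and $\m_0(2^\ell)=2^{\ell-1}-\M_0(2^{\ell-1})$ from Lemma~\ref{lem:power2PD}, the formula falls out; the key arithmetic point is that $2^{\ell-1}-\m_0(2^{\ell-1})$ is even (Lemma~\ref{lem:minmaxeven}), so the ceiling and floor distribute over the sum.

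The hard part will be the case $r$ odd, where $2^\ell+r$ is odd and Lemma~\ref{lem:max2max0} does not apply directly. Here I would decompose a factor of odd length $2t+1$ along the block boundaries of $\mathbf{x}$: since $\phi$ has length $2$ with $\phi(\{0,1\})=12$ and $\phi(2)=00$, up to deleting a single boundary letter every such factor is the image $\phi(z)$ of a factor $z$ of length $t+1$, and the two end letters of $\phi(z)$ are $0$ exactly when the corresponding end letter of $z$ is $2$. This gives $\M_0(2t+1)=2\M_2(t+1)-\delta$ with $\delta\in\{0,1\}$, where $\delta=0$ precisely when some $z$ attaining $\M_2(t+1)$ has first or last letter different from $2$ (and symmetrically for $\m_0$). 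With $t+1=2^{\ell-1}+(s+1)$, the even-case computation already shows $2\M_2(t+1)=\M_0(2^\ell)+\M_0(r+1)$, so the desired equality reduces to the scale-invariance statement $\delta\bigl(2^{\ell-1}+(s+1)\bigr)=\delta(s+1)$. Proving this invariance is the crux: one must show that the boundary letters available on the min-zero (equivalently, max-$2$) factors are preserved under the self-similarity $\mathbf{x}=\phi(\mathbf{x})$. I expect this to come from a careful de-substitution, with the reversal symmetry of Lemma~\ref{lem:reversalPD} relating the left-end condition to the right-end condition so that only one end must be analysed; but it is this step that requires genuine structural input rather than bookkeeping. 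The minimum equalities are handled by the same decomposition, with the roles of $\M$ and $\m$ and of the letters $1$ and $2$ exchanged via $\tau$.
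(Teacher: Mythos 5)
Your reduction to the reverse inequalities via subadditivity is correct, and your even case is complete and correct: the identities $\M_0(2m)=2\M_2(m)$ and $\m_0(2m)=2\m_2(m)$ (from Lemma~\ref{lem:max2max0}) and $\M_2(m)=\lceil(m-\m_0(m))/2\rceil$, $\m_2(m)=\lfloor(m-\M_0(m))/2\rfloor$ (from Lemmas~\ref{lem:balancedPD} and~\ref{lem:reversalPD}) do hold, and combining them with the induction hypothesis at scale $2^{\ell-1}+s$, with Lemma~\ref{lem:power2PD}, and with the parity fact of Lemma~\ref{lem:minmaxeven} makes the computation go through. This is a genuinely different, and arguably cleaner, treatment of even $r$ than the paper's, which instead constructs an explicit factor $u=vw$ at half scale that simultaneously minimizes zeros and maximizes $2$'s and pushes it through $\phi$.

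The gap is in the odd case. You correctly reduce the statement to the invariance $\delta\bigl(2^{\ell-1}+(s+1)\bigr)=\delta(s+1)$, but you do not prove it; you only say you ``expect'' it to follow from a careful de-substitution. That invariance is essentially as hard as the lemma itself: it asks whether the extremal factors at scale $2^{\ell-1}+s+1$ admit the same boundary letters as those at scale $s+1$, which is exactly the kind of structural control over extremal factors that the lemma is supposed to deliver. So as written the proof is incomplete. Moreover, the detour is unnecessary: once the even case is known, odd $r$ follows from a short sandwich argument, which is what the paper does. Since $r$ is odd and $r\le 2^{\ell-1}$, both $r-1$ and $r+1$ are even and at most $2^{\ell-1}$, so the even case gives $\M_0(2^\ell+r-1)=\M_0(2^\ell)+\M_0(r-1)$ and $\M_0(2^\ell+r+1)=\M_0(2^\ell)+\M_0(r+1)$. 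The function $\M_0$ is non-decreasing, increases by at most $1$ per step, and is even at even arguments (Lemma~\ref{lem:minmaxeven}); hence either $\M_0(r+1)=\M_0(r-1)$, in which case all values in between agree and $\M_0(2^\ell+r)=\M_0(2^\ell+r-1)$, $\M_0(r)=\M_0(r-1)$, or $\M_0(r+1)=\M_0(r-1)+2$, in which case $\M_0(r)=\M_0(r-1)+1$ and $\M_0(2^\ell+r)=\M_0(2^\ell+r-1)+1$; the identity follows in both cases, and the same reasoning handles $\m_0$. Replacing your $\delta$-invariance step with this argument closes the gap using nothing beyond what you have already established.
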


\begin{proof}
We work by induction on $\ell$. One checks the case $\ell=2$. Let $\ell>2$ and assume the statements are true for $\ell-1$. Let $r$ such that $0\leq r\leq 2^{\ell-1}$.

Assume first that $r$ is even. We shall exhibit a factor of length $2^\ell+r$ that has $\M_0(2^\ell)+\M_0(r)$ zeros and maximizes the number of $0$'s. By the induction hypothesis, the result is true for $2^{\ell-1}+r/2$. So there exists a factor $u$ of length $2^{\ell-1}+r/2$ with $\m_0(2^{\ell-1}+r/2)=\m_0(2^{\ell-1})+\m_0(r/2)$ zeros.  In addition, we can assume that $u$ maximizes the number of $2$'s. Indeed, since $|u|_0=\m_0(2^{\ell-1}+r/2)$, $|u|_1+|u|_2$ is maximal among all factors of length $2^{\ell-1}+r/2$. If the number of $1$ and $2$ in $u$ is even, then $|u|_2=|u|_1$ is maximal. Otherwise, either $|u|_2=|u|_1+1$ and $|u|_2$ is maximal, or $|u|_2=|u|_1-1$ and $u$ does not maximize the number of $2$'s. In the last case, by Lemma~\ref{lem:reversalPD}, we can consider the factor $\tau(u)^\mathrm{R}$ which satisfies $|\tau(u)^\mathrm{R}|_0=|u|_0$ and $|\tau(u)^\mathrm{R}|_2=|u|_1$. Hence, $\tau(u)^\mathrm{R}$ minimizes the number of $0$'s and maximizes the number of $2$'s.

Let us write $u=vw$ with $|v|=2^{\ell-1}$ and $|w|=r/2$. Then, as $|v|_0+|w|_0=|u|_0=\m_0(2^{\ell-1})+\m_0(r/2)$, the words $v$ and $w$ minimize the number of $0$'s for words of their respective lengths.
The word $v$ maximizes also the number of $2$'s for factors of length $2^{\ell-1}$ because $|v|$ and $|v|_0=\m_0(2^{\ell-1})$ are even by Lemma~\ref{lem:power2PD} and so is $|v|_1+|v|_2$. Since $u$ maximizes the number of $2$'s and $|v|_2=|v|_1$, the word $w$ also maximizes the number of $2$'s. Hence, by Lemma~\ref{lem:max2max0}, $\phi(u)$, $\phi(v)$ and $\phi(w)$ maximize the number of $0$'s for words of their respective lengths. Thus,
$$\M_0(2^\ell+r)=|\phi(u)|_0=|\phi(v)|_0+|\phi(w)|_0=\M_0(2^\ell)+\M_0(r).$$

If $r$ is odd, we still have $0\leq r-1 \leq r+1 \leq 2^{\ell-1}$ and we can use the previous results:
\begin{align*}
	\M_0(2^\ell+r-1) &= \M_0(2^\ell)+\M_0(r-1), \\
	\M_0(2^\ell+r+1) &= \M_0(2^\ell)+\M_0(r+1).
\end{align*}
Note that $\M_0$ is even for even values and can only grow by 0 or 1. So there are two cases to consider: either $\M_0(2^\ell+r+1)=\M_0(2^\ell+r-1)$ or $\M_0(2^\ell+r+1)=\M_0(2^\ell+r-1)+2$.

If the two maxima are equal, then $\M_0(r+1)=\M_0(r-1)$, $\M_0(2^\ell+r)=\M_0(2^\ell+r-1)$ and $\M_0(r)=\M_0(r-1)$, and we are done. Otherwise, the two maxima differ by $2$, and then $\M_0(r+1)=\M_0(r-1)+2$, $\M_0(2^\ell+r)=\M_0(2^\ell+r-1)+1$ and $\M_0(r)=\M_0(r-1)+1$, and we are done.

A similar proof shows that $\m_0(2^\ell+r) = \m_0(2^\ell)+\m_0(r)$.
\end{proof}

Lemma~\ref{lem:recminmax2} will follow directly from the following lemma.

\begin{lemma}
If $\ell\geq 2$ and $2^{\ell-1}\leq r \leq 2^\ell$, then
\begin{align*}
	\M_0(2^{\ell+1})&=\M_0(2^{\ell}+r)+\m_0(2^{\ell}-r),\\
	\m_0(2^{\ell+1})&=\m_0(2^\ell+r)+\M_0(2^\ell-r).
\end{align*}
Moreover, there is a factor of length $2^{\ell+1}$ maximizing (resp.\ minimizing) the number of $0$'s such that the prefix of length $2^\ell+r$ also maximizes (resp.\ minimizes) the number of $0$'s. In addition, the first equality $\M_0(2^{\ell+1})=\M_0(2^{\ell}+r)+\m_0(2^{\ell}-r)$ holds even if $\ell=1$.
\end{lemma}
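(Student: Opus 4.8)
The plan is to prove the two displayed equalities; the ``moreover'' statement and the remark about $\ell=1$ then come for free. Throughout write $a=2^\ell+r$ and $b=2^\ell-r$, so that $a+b=2^{\ell+1}$ and $0\le b\le 2^{\ell-1}\le a$. Since every factor of $\mathbf{x}$ of length $a$ extends on the right to a factor of length $2^{\ell+1}$, taking a length-$a$ factor $U$ with $|U|_0=\M_0(a)$ and appending any suffix $V$ of length $b$ gives $\M_0(2^{\ell+1})\ge|UV|_0=\M_0(a)+|V|_0\ge\M_0(a)+\m_0(b)$; dually $\m_0(2^{\ell+1})\le\m_0(a)+\M_0(b)$. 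Thus one inequality in each equality is immediate. Moreover, once the equalities are known, the \emph{same} extension argument forces $|V|_0=\m_0(b)$ (resp.\ $|V|_0=\M_0(b)$) for every such $W=UV$ built from an extremal prefix, which is exactly the ``moreover'' claim. Hence everything reduces to the reverse inequalities
$$\M_0(2^{\ell+1})\le\M_0(a)+\m_0(b),\qquad \m_0(2^{\ell+1})\ge\m_0(a)+\M_0(b),$$
or equivalently to the assertion that every factor of length $2^{\ell+1}$ attaining $\M_0$ (resp.\ $\m_0$) has its length-$b$ suffix attaining $\m_0(b)$ (resp.\ $\M_0(b)$).

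I would prove these by induction on $\ell$, the base cases $\ell=1$ (first equality) and $\ell=2$ being settled by inspecting a sufficiently long prefix of $\mathbf{x}$. For the inductive step I distinguish the parity of $r$. When $r=2r'$ is even, I de-substitute: by (the proof of) Lemma~\ref{lem:max2max0}, a length-$2^{\ell+1}$ factor with the maximal number of $0$'s may be taken of the form $\phi(z)$ with $z$ a factor of length $2^\ell$ attaining $\M_2(2^\ell)$, and $\M_0(2n)=2\M_2(n)$, $\m_0(2n)=2\m_2(n)$. This turns the statement at scale $\ell$ into the analogous statement about the $2$-counts at scale $\ell-1$, namely $\M_2(2^\ell)=\M_2(2^{\ell-1}+r')+\m_2(2^{\ell-1}-r')$ and its minimal counterpart, where $2^{\ell-2}\le r'\le 2^{\ell-1}$ lies in the admissible range.

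I then convert $2$-counts back to $0$-counts through the identities $\M_2(n)=\lceil(n-\m_0(n))/2\rceil$ and $\m_2(n)=\lfloor(n-\M_0(n))/2\rfloor$, which follow from the balance property $\big||w|_1-|w|_2\big|\le1$ (Lemma~\ref{lem:balancedPD}) together with the reversal symmetry $w\mapsto\tau(w)^{\mathrm R}$ (Lemma~\ref{lem:reversalPD}), the latter guaranteeing that an extremal factor with the required surplus of $2$'s over $1$'s actually occurs. The resulting inequality among $0$-counts of the lengths $2^{\ell-1}\pm r'$ (whose sum is $2^\ell$) is then matched against the induction hypothesis — the lemma at scale $\ell-1$, which supplies $\m_0(2^\ell)=\m_0(2^{\ell-1}+r')+\M_0(2^{\ell-1}-r')$ and its dual — together with Lemma~\ref{lem:recminmax} and the power-of-$2$ values of Lemma~\ref{lem:power2PD}. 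When $r$ is odd I would use the squeezing trick of Lemma~\ref{lem:recminmax}: apply the already-established even cases to $r-1$ and $r+1$, then conclude for $r$ using Lemma~\ref{lem:minmaxeven} (so $\M_0,\m_0$ take even values at even lengths) and the fact that the number of $0$'s in a window changes by at most $1$ when the window length changes by $1$. The minimal equality is handled in parallel, with the roles of $\M_0$ and $\m_0$, and of ceilings and floors, interchanged.

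The main obstacle is precisely this passage between $0$-counts and $2$-counts in the even case: the identities for $\M_2$ and $\m_2$ carry floor/ceiling corrections, so a short computation shows that the reduction to the induction hypothesis closes only up to a parity condition on the numbers $n-\m_0(n)$ and $n-\M_0(n)$ of non-zero letters at the relevant half-lengths. Verifying that this parity always falls the right way is the delicate point; it requires tracking the residues of $\m_0$ and $\Delta_0$ modulo $2$ (in the spirit of Corollary~\ref{cor:2regDm_PD}) and, crucially, using the reversal symmetry of Lemma~\ref{lem:reversalPD} to ensure the existence of an extremal factor realizing the needed $1$–$2$ imbalance.
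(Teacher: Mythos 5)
Your architecture departs from the paper's proof in a genuinely different way in the even case, and most of it is correct. The paper's even step takes the extremal factor $u=vw$ of length $2^\ell$ supplied by the induction hypothesis (this is exactly why the paper carries the ``moreover'' clause through the induction), performs delicate surgery on $u$ so that $u$, $v$, $w$ simultaneously extremize the number of $2$'s, and pushes everything through $\phi$ via Lemma~\ref{lem:max2max0}. You instead work numerically: the identities $\M_0(2n)=2\M_2(n)$ and $\m_0(2n)=2\m_2(n)$ (immediate from Lemma~\ref{lem:max2max0}) reduce the even case to a $2$-count identity, which you convert back to $0$-counts via $\M_2(n)=\lceil(n-\m_0(n))/2\rceil$ and $\m_2(n)=\lfloor(n-\M_0(n))/2\rfloor$; your justification of these through Lemma~\ref{lem:balancedPD} plus the reversal symmetry of Lemma~\ref{lem:reversalPD} is right, and that is indeed where the existence of a factor with the required $1$--$2$ imbalance is needed. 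Your odd case is the paper's squeezing argument verbatim. Two of your observations are cleaner than the paper: the one-sided inequalities by right-extension are correct, and the ``moreover'' clause does come for free once the equalities are known, whereas the paper constructs those factors by hand from the factors for $r\pm1$.

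The one real gap is the point you flag and leave open: verifying that the floor/ceiling corrections cancel. It closes in two lines from ingredients already at hand, with no extra parity bookkeeping. Write $p=2^{\ell-1}+r'$, $q=2^{\ell-1}-r'$. The induction hypothesis (the $\m_0$-equality at scale $\ell-1$) gives $\m_0(p)+\M_0(q)=\m_0(2^\ell)$, which is even by Lemma~\ref{lem:minmaxeven}; hence $(p-\m_0(p))+(q-\M_0(q))=2^\ell-\m_0(2^\ell)$ is even, so the two summands have the same parity, and in either case (even--even or odd--odd) the ceiling and floor corrections cancel, giving $\lceil(p-\m_0(p))/2\rceil+\lfloor(q-\M_0(q))/2\rfloor=\bigl(2^\ell-\m_0(2^\ell)\bigr)/2=\M_2(2^\ell)$, i.e., $\M_2(p)+\m_2(q)=\M_2(2^\ell)$ as required; the $\m_2$-identity is dual, using the $\M_0$-equality of the induction hypothesis. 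One warning: do not invoke Corollary~\ref{cor:2regDm_PD} here, even ``in spirit''. That corollary is deduced from Proposition~\ref{prop:recdelta_PD}, which rests on the present lemma, so any genuine use of it would be circular. With the two-line parity argument above substituted for your final paragraph, your proof is complete.
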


\begin{proof}
We proceed by induction on $\ell$. One checks that the results are true for $\ell=2$ and, for the first equality, for $\ell=1$. Let $\ell>2$ and assume both equalities hold for $\ell-1$. Let $r$ such that $2^{\ell-1}\le r\le 2^{\ell}$.

Assume first that $r$ is even. By the induction hypothesis, there exists a factor $u=vw$ of length $2^{\ell}$ such that
$$|u|_0=\m_0(2^{\ell})=\m_0(2^{\ell-1}+r/2)+\M_0(2^{\ell-1}-r/2),$$
$|v|=2^{\ell-1}+r/2$ and $v$ minimizes the number of $0$'s. Hence, $|v|_0=\m_0(2^{\ell-1}+r/2)$ and $|w|_0=\M_0(2^{\ell-1}-r/2)$.

Observe that $u$ maximizes the number of $2$'s as $|u|$ and $|u|_0=\m_0(2^{\ell})$ are even. In addition, we can assume that $v$ also maximizes the number of $2$'s. Indeed, if $v$ is of even length,  $|v|_0=\m_0(2^{\ell-1}+r/2)$ implies  $|v|_2$ is maximal. If $v$ is of odd length and $v$ does not maximize the number of $2$'s, then it ends with $1$. Thus, $v$ is followed by a $2$. In particular, $v$ occurs at an even index in $\mathbf{x}$. So is $u$ and $u12$ or $u00$ is a factor of $\mathbf{x}$. If $u12$ is a factor, then consider, instead of $u$, $u'=z^{-1}u1$ where $z$ denotes the first letter of $u$. In that case, the prefix of length $2^{\ell-1}+r/2$ of $u'$ is $z^{-1}v2$. It still minimizes the number of $0$'s and now maximizes the number of $2$'s. Assume now that $u00$ is a factor. Observe that $\mathbf{x}$ is the fixed point of $\phi$. So it is also the fixed point of $\phi^2$. Therefore, $\mathbf{x}$ is a concatenation of blocks of length $4$ of the form $\phi^2(0)=\phi^2(1)=1200$ and $\phi^2(2)=1212$. Since $u00$ is a factor of $\mathbf{x}$, the only extension of this factor is $12u00$ as $|u|=2^\ell\equiv 0\pmod{4}$. Consider then $u'=2u2^{-1}$.

Since $|u|_1=|u|_2$ and $|v|_2\ge |v|_1$, $|w|_1\ge|w|_2$. Thus, as $|w|_0=\M_0(2^{\ell-1}-r/2)$, $w$ minimizes the number of $2$'s. By Lemma~\ref{lem:max2max0}, we obtain $|\phi(u)|_0=\M_0(2^{\ell+1})$, $|\phi(v)|_0=\M_0(2^\ell+r)$, $|\phi(w)|_0=\m_0(2^\ell-r)$. So
\begin{align*}
	\M_0(2^{\ell+1})&=|\phi(u)|_0=|\phi(v)|_0+|\phi(w)|_0 \\
	&=\M_0(2^\ell+r)+\m_0(2^\ell-r).
\end{align*}

We can show similarly that $\m_0(2^{\ell+1})=\m_0(2^\ell+r)+\M_0(2^\ell-r)$. Note that in this case, we can assume that the factor $u$ with $|u|_0=\M_0(2^\ell)$, given by the induction hypothesis, starts with $00$ as in the proof of Lemma~\ref{lem:max2max0}.

Assume now that $r$ is odd. Then $2^{\ell-1}\le r-1<r+1\le 2^{\ell}$ and we can apply the previous result:
\begin{align*}
	\M_0(2^{\ell+1})&=\M_0(2^{\ell}+r-1)+\m_0(2^{\ell}-r+1)\\
	&=\M_0(2^{\ell}+r+1)+\m_0(2^{\ell}-r-1).
\end{align*}
Since $\M_0$ is even for even values and can only grow by 0 or 1, there are two cases to consider: either $\M_0(2^\ell+r-1)=\M_0(2^\ell+r+1)$ or $\M_0(2^\ell+r-1)+2=\M_0(2^\ell+r+1)$.

If the two maxima are equal, then $\m_0(2^\ell-r+1)=\m_0(2^\ell-r-1)=\m_0(2^\ell-r)$ and $\M_0(2^{\ell}+r)=\M_0(2^{\ell}+r-1)$, and we are done. Otherwise, the two maxima differ by $2$, and then $\m_0(2^\ell-r+1)-2=\m_0(2^\ell-r-1)$. So $\M_0(2^{\ell}+r)=\M_0(2^{\ell}+r-1)+1$ and $\m_0(2^\ell-r)=\m_0(2^\ell-r+1)-1$, and we are done.
Using similar argument, we can conclude that $\m_0(2^{\ell+1})=\m_0(2^\ell+r)+\M_0(2^\ell-r)$.

For the construction of the factors, one can construct them using the factors $\phi(u)$ and $\phi(u')$ given for $r-1$ and $r+1$ in the previous construction. We consider the same two cases as before.

If the maxima are equal, then $\M_0(2^{\ell}+r)=\M_0(2^{\ell}+r-1)$. By construction, $\phi(u)$ has a prefix $\phi(v)$ of length $2^\ell+r-1$, maximizing the number of $0$'s. The letter $z$ following the prefix $\phi(v)$ in $\phi(u)$ is not a $0$. Otherwise, $\phi(v)0$ would be a factor of length $2^\ell+r$ with $\M_0(2^{\ell}+r)+1$ zeros, which is a contradiction. Hence, $\phi(v)z$ is a prefix of length $2^\ell+r$ of $\phi(u)$ that maximizes the number of $0$'s.

If $\M_0(2^\ell+r-1)+2=\M_0(2^\ell+r+1)$, then  $\M_0(2^{\ell}+r)=\M_0(2^{\ell}+r+1)-1$. By construction, $\phi(u')$ has a prefix $\phi(v')$ of length $2^\ell+r+1$, maximizing the number of $0$'s. This prefix must end with $0$. Otherwise, deleting the last letter of $\phi(v')$ would give a factor of length $2^\ell+r$ with $\M_0(2^{\ell}+r+1)=\M_0(2^{\ell}+r)+1$ zeros, which is a contradiction. Hence, $\phi(v')0^{-1}$ is a prefix of length $2^\ell+r$ of $\phi(u')$ that maximizes the number of $0$'s.

A similar construction yields a factor of length $2^{\ell+1}$ minimizing the number of $0$'s such that the prefix of length $2^\ell+r$ also minimizes the number of $0$'s.
\end{proof}

The previous lemma permits us to reformulate some relations between the two sequences $\M_0(n)_{n\ge0}$ and $\m_0(n)_{n\ge0}$.

\begin{lemma}\label{lem:recminmax2}
If $\ell\geq 2$ and $2^{\ell-1}\leq r \leq 2^\ell$, then
\begin{align*}
\M_0(2^{\ell}+r)&=2^{\ell} -\m_0(2^{\ell+1}-r),\\
\m_0(2^{\ell}+r)&=2^\ell-\M_0(2^{\ell+1}-r).
\end{align*}
The first equality holds even if $\ell=1$.
\end{lemma}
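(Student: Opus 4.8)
The plan is to derive Lemma~\ref{lem:recminmax2} directly from the preceding (unlabeled) lemma, which already establishes the two key identities
\[
	\M_0(2^{\ell+1})=\M_0(2^{\ell}+r)+\m_0(2^{\ell}-r)
	\quad\text{and}\quad
	\m_0(2^{\ell+1})=\m_0(2^\ell+r)+\M_0(2^\ell-r)
\]
for $2^{\ell-1}\le r\le 2^\ell$ (with the first holding even for $\ell=1$). The bridge between these identities and the desired statement is Lemma~\ref{lem:power2PD}, which gives $\M_0(2^{\ell+1})=2^\ell-\m_0(2^\ell)$ and $\m_0(2^{\ell+1})=2^\ell-\M_0(2^\ell)$. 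So the whole proof is essentially an algebraic substitution rather than a new combinatorial argument.

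First I would take the identity $\M_0(2^{\ell+1})=\M_0(2^{\ell}+r)+\m_0(2^{\ell}-r)$ and solve for $\M_0(2^\ell+r)$, obtaining $\M_0(2^\ell+r)=\M_0(2^{\ell+1})-\m_0(2^\ell-r)$. Substituting $\M_0(2^{\ell+1})=2^\ell-\m_0(2^\ell)$ from Lemma~\ref{lem:power2PD} yields
\[
	\M_0(2^\ell+r)=2^\ell-\m_0(2^\ell)-\m_0(2^\ell-r).
\]
The remaining task is to show $\m_0(2^\ell)+\m_0(2^\ell-r)=\m_0(2^{\ell+1}-r)$; this is exactly Lemma~\ref{lem:recminmax} applied with the roles arranged so that the argument $2^\ell-r$ lies in $[0,2^{\ell-1}]$ (valid since $2^{\ell-1}\le r\le 2^\ell$ forces $0\le 2^\ell-r\le 2^{\ell-1}$), because Lemma~\ref{lem:recminmax} states $\m_0(2^\ell+s)=\m_0(2^\ell)+\m_0(s)$ for $0\le s\le 2^{\ell-1}$, and here $2^{\ell+1}-r=2^\ell+(2^\ell-r)$. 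Combining these gives the first claimed equality $\M_0(2^\ell+r)=2^\ell-\m_0(2^{\ell+1}-r)$. The second equality, $\m_0(2^\ell+r)=2^\ell-\M_0(2^{\ell+1}-r)$, follows by the symmetric computation: solve the $\m_0$ identity for $\m_0(2^\ell+r)$, substitute $\m_0(2^{\ell+1})=2^\ell-\M_0(2^\ell)$, and use the max-version of Lemma~\ref{lem:recminmax}, namely $\M_0(2^{\ell+1}-r)=\M_0(2^\ell)+\M_0(2^\ell-r)$.

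I expect no serious obstacle here, since Lemma~\ref{lem:recminmax2} is explicitly advertised as a reformulation (``The previous lemma permits us to reformulate some relations''). The only point demanding care is the boundary case $\ell=1$: the preceding lemma guarantees only the \emph{first} identity for $\ell=1$, so I can conclude $\M_0(2^\ell+r)=2^\ell-\m_0(2^{\ell+1}-r)$ for $\ell=1$ but not the $\m_0$ equality, which matches the statement exactly. I would also double-check that the index range $2^{\ell-1}\le r\le 2^\ell$ keeps $2^\ell-r$ inside the hypothesis interval $[0,2^{\ell-1}]$ of Lemma~\ref{lem:recminmax} at the endpoints, so that every invocation of that lemma is legitimate.
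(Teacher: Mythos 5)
For $\ell \geq 2$ your argument is exactly the paper's proof: solve the first identity of the preceding lemma for $\M_0(2^\ell + r)$, substitute $\M_0(2^{\ell+1}) = 2^\ell - \m_0(2^\ell)$ from Lemma~\ref{lem:power2PD}, and absorb $\m_0(2^\ell) + \m_0(2^\ell - r)$ into $\m_0(2^{\ell+1} - r)$ via Lemma~\ref{lem:recminmax}, after noting that $2^{\ell-1} \le r \le 2^\ell$ forces $0 \le 2^\ell - r \le 2^{\ell-1}$; the $\m_0$ equality follows by the symmetric computation. That part is correct and needs no further comment.

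The gap is in your treatment of $\ell = 1$. You claim the same chain of substitutions yields the first equality at $\ell = 1$, but the third link --- the application of Lemma~\ref{lem:recminmax} at level $\ell$ to get $\m_0(2^\ell) + \m_0(2^\ell - r) = \m_0(2^{\ell+1} - r)$ --- is only available for $\ell \geq 2$; that hypothesis in Lemma~\ref{lem:recminmax} is not a formality. Indeed the $\M_0$ half of that lemma genuinely fails at $\ell = 1$: $\M_0(3) = 2$ whereas $\M_0(2) + \M_0(1) = 2 + 1 = 3$ (the word $000$ is not a factor of $\mathbf{x}$, since $00$ is always followed by $12$). The $\m_0$ half does happen to hold at $\ell = 1$ (because $\m_0(0)=\m_0(1)=\m_0(2)=\m_0(3)=0$), but that is a fact you must verify separately; it cannot be obtained by citing the lemma. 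The paper sidesteps this entirely by checking the first equality directly for $\ell = 1$, which amounts to the two verifications $\M_0(3) = 2 - \m_0(3)$ and $\M_0(4) = 2 - \m_0(2)$, both true since $\M_0(3)=\M_0(4)=2$ and $\m_0(2)=\m_0(3)=0$. So your proof is easily repaired --- either do this finite check or note explicitly that the $\m_0$ identity of Lemma~\ref{lem:recminmax} extends to $\ell=1$ --- but as written the $\ell=1$ clause rests on an out-of-range application of Lemma~\ref{lem:recminmax}.
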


\begin{proof}
One can check the first equality for $\ell=1$. Let $\ell\geq 2$ and $r$ such that $2^{\ell-1}\leq r \leq 2^\ell$. From the previous lemma, we have
$$\M_0(2^{\ell}+r)=\M_0(2^{\ell+1})-\m_0(2^{\ell}-r).$$
Note that, by Lemma~\ref{lem:power2PD}, we have $\M_0(2^{\ell+1})=2^\ell-\m_0(2^\ell)$. Moreover, by Lemma~\ref{lem:recminmax}, since $0\le 2^{\ell}-r\le 2^{\ell}$, we get
$$\m_0(2^\ell)+\m_0(2^\ell-r)=\m_0(2^{\ell}+2^{\ell}-r).$$
Since similar relations hold when exchanging $\m_0$ and $\M_0$, the conclusion follows.
\end{proof}

The proof of Proposition~\ref{prop:recdelta_PD} about the reflection relation satisfied by $\Delta_{0}(n)$ and the recurrence relation of $\m_{0}(n)$ is now immediate.

\begin{proof}[Proof of Proposition~\ref{prop:recdelta_PD}]
Let $\ell\geq 2$. For $r$ such that $0\leq r\leq 2^{\ell-1}$, subtracting the two relations provided by Lemma~\ref{lem:recminmax} gives $\Delta_{0}(2^\ell+r)=\Delta_0(2^\ell)+\Delta_{0}(r)$ and we can conclude using the first relation given in Lemma~\ref{lem:power2PD}, $\Delta_0(2^\ell)=2$. Furthermore, $\m_0(2^\ell+r)\equiv \m_0(2^\ell)+\m_0(r) \pmod{2}$ by Lemma~\ref{lem:recminmax}. The expression for $\m_0(2^\ell+r)$ follows since $\m_0(2^\ell)\equiv 0 \pmod{2}$ by Lemma~\ref{lem:power2PD}.

For $2^{\ell-1}< r < 2^\ell$, subtracting the two relations provided by Lemma~\ref{lem:recminmax2} permits us to conclude the proof of the expression claimed for $\Delta_0(2^\ell+r)$. Moreover, using  Lemma~\ref{lem:recminmax2}, we get
\begin{align*}
\m_0(2^\ell+r)&\equiv \M_0(2^{\ell+1}-r) \pmod{2}\\
&\equiv  \m_0(2^{\ell+1}-r)+\Delta_0(2^{\ell+1}-r) \pmod{2}. \qedhere
\end{align*}
\end{proof}


\subsection{Another proof of the $2$-regularity of $\mathcal{P}^{(1)}_{\mathbf{x}}(n)_{n\ge 0}$}\label{sec:53} 

In this section we prove the $2$-regularity of the abelian complexity $\mathcal{P}^{(1)}_{\mathbf{x}}(n)_{n\ge 0}$ in a second way, by proving Theorem~\ref{thm:recab_PD}.
The proof makes use of Propositions~\ref{prop:deltatoab_PD} and \ref{prop:recdelta_PD}.

\begin{proof}[Proof of Theorem~\ref{thm:recab_PD}]
If $2^{\ell-1}\leq r \leq 2^\ell$, since all the conditions in Proposition~\ref{prop:deltatoab_PD} are equivalent whether considering $2^\ell+r$ or $2^{\ell+1}-r$, we have $$\mathcal{P}^{(1)}_{\mathbf{x}}(2^\ell+r)=\mathcal{P}^{(1)}_{\mathbf{x}}(2^{\ell+1}-r).$$

Assume now that $0\leq r \leq 2^{\ell-1}$.
If $\Delta_0(2^\ell+r)$ is odd, $\Delta_0(r)$ is also odd by Proposition~\ref{prop:recdelta_PD}. By Proposition~\ref{prop:deltatoab_PD}, we have $\mathcal{P}^{(1)}_{\mathbf{x}}(2^\ell+r)=\frac{3}{2}(\Delta_{0}(2^\ell+r)+1)$ and $\mathcal{P}^{(1)}_{\mathbf{x}}(r)=\frac{3}{2}(\Delta_{0}(r)+1)$. By Proposition~\ref{prop:recdelta_PD}, we have $\Delta_{0}(2^\ell+r)=\Delta_{0}(r)+2$. Putting these three equalities together, we get $\mathcal{P}^{(1)}_{\mathbf{x}}(2^\ell+r)=\mathcal{P}^{(1)}_{\mathbf{x}}(r)+3$.

The other cases can be done similarly.
If $\Delta_{0}(2^\ell+r)$ and $2^{\ell}+r-\m_{0}(2^\ell+r)$ are even, then $\Delta_0(r)$ and $r-\m_0(r)$ are even and
\begin{align*}
	\mathcal{P}^{(1)}_{\mathbf{x}}(2^\ell+r)
	&= \tfrac{3}{2}\Delta_{0}(2^\ell+r)+1	&\text{(by Proposition~\ref{prop:deltatoab_PD})}\phantom{.}\\
	&= \tfrac{3}{2}(\Delta_{0}(r)+2)+1	&\text{(by Proposition~\ref{prop:recdelta_PD})}\phantom{.}\\
	&= \mathcal{P}^{(1)}_{\mathbf{x}}(r)+3	&\text{(by Proposition~\ref{prop:deltatoab_PD})}.
\end{align*}

If $\Delta_{0}(2^\ell+r)$ is even and $2^{\ell}+r-\m_{0}(2^\ell+r)$ is odd, then $\Delta_0(r)$ is even and $r-\m_0(r)$ is odd. Then
\begin{align*}
	\mathcal{P}^{(1)}_{\mathbf{x}}(2^\ell+r)
	&= \tfrac{3}{2}\Delta_{0}(2^\ell+r)+2	& \text{(by Proposition~\ref{prop:deltatoab_PD})}\phantom{.}\\
	&= \tfrac{3}{2}(\Delta_{0}(r)+2)+2	& \text{(by Proposition~\ref{prop:recdelta_PD})}\phantom{.}\\
	&= \mathcal{P}^{(1)}_{\mathbf{x}}(r)+3	& \text{(by Proposition~\ref{prop:deltatoab_PD})}. & \qedhere
\end{align*}
\end{proof}

One can prove the following result in a manner similar to the proof of Theorem~\ref{thm:reflection_recurrence}.
There may be simpler recurrences, but these relations exhibit the same symmetry as in Theorem~\ref{thm:reflection_recurrence}.

\begin{theorem}
The abelian complexity sequence $\mathcal{P}^{(1)}_{\mathbf{x}}(n)_{n\ge0}$ of the $2$-block coding of the period-doubling word satisfies the following relations.
{\small
\begin{align*}
\mathcal{P}^{(1)}_{\mathbf{x}}(8n)&= \mathcal{P}^{(1)}_{\mathbf{x}}(2n) \\
4\mathcal{P}^{(1)}_{\mathbf{x}}(8n+1)&=-2\mathcal{P}^{(1)}_{\mathbf{x}}(2n+1)+7\mathcal{P}^{(1)}_{\mathbf{x}}(4n+1)-2\mathcal{P}^{(1)}_{\mathbf{x}}(4n+2)+\mathcal{P}^{(1)}_{\mathbf{x}}(4n+3)\\
4\mathcal{P}^{(1)}_{\mathbf{x}}(8n+2)&= -6\mathcal{P}^{(1)}_{\mathbf{x}}(2n+1)+9\mathcal{P}^{(1)}_{\mathbf{x}}(4n+1)-2\mathcal{P}^{(1)}_{\mathbf{x}}(4n+2)+3\mathcal{P}^{(1)}_{\mathbf{x}}(4n+3)\\
4 \mathcal{P}^{(1)}_{\mathbf{x}}(8n+3)&= -6\mathcal{P}^{(1)}_{\mathbf{x}}(2n+1)+5\mathcal{P}^{(1)}_{\mathbf{x}}(4n+1)+2\mathcal{P}^{(1)}_{\mathbf{x}}(4n+2)+3\mathcal{P}^{(1)}_{\mathbf{x}}(4n+3)\\
\mathcal{P}^{(1)}_{\mathbf{x}}(8n+4)&=\mathcal{P}^{(1)}_{\mathbf{x}}(4n+2)\\
4 \mathcal{P}^{(1)}_{\mathbf{x}}(8n+5)&= -6\mathcal{P}^{(1)}_{\mathbf{x}}(2n+1)+3\mathcal{P}^{(1)}_{\mathbf{x}}(4n+1)+2\mathcal{P}^{(1)}_{\mathbf{x}}(4n+2)+5\mathcal{P}^{(1)}_{\mathbf{x}}(4n+3)\\
4 \mathcal{P}^{(1)}_{\mathbf{x}}(8n+6)&= -6\mathcal{P}^{(1)}_{\mathbf{x}}(2n+1)+3\mathcal{P}^{(1)}_{\mathbf{x}}(4n+1)-2\mathcal{P}^{(1)}_{\mathbf{x}}(4n+2)+9 \mathcal{P}^{(1)}_{\mathbf{x}}(4n+3)\\
4\mathcal{P}^{(1)}_{\mathbf{x}}(8n+7)&=-2\mathcal{P}^{(1)}_{\mathbf{x}}(2n+1)+\mathcal{P}^{(1)}_{\mathbf{x}}(4n+1)-2\mathcal{P}^{(1)}_{\mathbf{x}}(4n+2)+7\mathcal{P}^{(1)}_{\mathbf{x}}(4n+3)
\end{align*}}
\end{theorem}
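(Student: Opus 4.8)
Throughout write $s(n) := \mathcal{P}^{(1)}_{\mathbf{x}}(n)$ for brevity. The starting point is that, by Theorem~\ref{thm:recab_PD} together with the boundary values recorded just after it, $s$ satisfies the reflection recurrence of Theorem~\ref{thm:reflection_recurrence} with $\ell_0 = 2$ and $c = 3$. Consequently the explicit $2$-kernel identity of Example~\ref{exa:l0=2} is available, expressing every $s(16n + i)$ as an integer combination of $A(n)$, $A(4n+1)$, $A(2n+1)$, $A(4n+3)$ and of low-level shifts of $s$. The plan is to prove the eight stated relations by a single induction on $n$ modeled on Proposition~\ref{A007302 is 2-regular}, reducing every term on both sides by the reflection recurrence and matching coefficients; the kernel identity guarantees that such a reduction must close.

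First I would dispatch the two even-index relations $s(8n) = s(2n)$ and $s(8n+4) = s(4n+2)$ separately, as they are cleanest. For an even argument $m$, Lemma~\ref{lem:minmaxeven} forces $\Delta_0(m)$ and $m - \m_0(m)$ to be even, so Proposition~\ref{prop:deltatoab_PD} collapses to $s(m) = \tfrac{3}{2}\Delta_0(m) + 1$. Hence these two relations are equivalent to $\Delta_0(8n) = \Delta_0(2n)$ and $\Delta_0(8n+4) = \Delta_0(4n+2)$, which follow from the reflection recurrence for $\Delta_0$ in Proposition~\ref{prop:recdelta_PD} (equivalently, from the $\Delta_0$-analogue of Example~\ref{exa:l0=2}, where one checks $\Delta_0(4n)=2A(n)$ and $\Delta_0(4n+2)=2A(2n+1)$). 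For the six odd-index relations I would first halve the work using the reflection symmetry $s(4n+1) \leftrightarrow s(4n+3)$ visible in the coefficients: the relation for $8n+1$ mirrors that for $8n+7$, $8n+2$ mirrors $8n+6$, and $8n+3$ mirrors $8n+5$ (in each pair the coefficients of $s(4n+1)$ and $s(4n+3)$ are swapped while those of $s(2n+1)$ and $s(4n+2)$ are fixed). This symmetry is exactly the factor symmetry of Lemma~\ref{lem:reversalPD}, equivalently the $s(2^\ell + r) = s(2^{\ell+1} - r)$ branch of the recurrence, so it suffices to establish three of the six, say those for $8n+1$, $8n+2$, $8n+3$, and transport the remaining three.

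For the inductive step on these three relations I would write $n = 2^\ell + r$ and use Theorem~\ref{thm:recab_PD} to rewrite $s(8n+i)$ and, simultaneously, each of $s(2n+1)$, $s(4n+1)$, $s(4n+2)$, $s(4n+3)$ in terms of arguments involving $r$, then apply the induction hypothesis and recombine, exactly as in the chains of equalities in Proposition~\ref{A007302 is 2-regular}; base cases reduce to finitely many numerical checks, taking care that Theorem~\ref{thm:recab_PD} is stated only for $0 \le r < 2^\ell - 1$, so the endpoints $r = 2^\ell - 1$ and small $n$ must be verified by hand. I expect the main obstacle to be precisely that the eight relations are coupled across the three scales $2n$, $4n+\cdot$, and $8n+\cdot$: peeling off the most significant bit reduces the level-$3$ term on the left but sends each right-hand generator to its own reduced form at a different level, so no single relation closes in isolation; one must carry all four right-hand sequences through the reduction together and check that the coefficients with denominator $4$ balance. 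An equivalent, more mechanical route avoids induction entirely: lift every term to level $2^4$ by the parity splits $n \mapsto 2k,\, 2k+1$ (cascading until all arguments have the form $16k + \mathrm{const}$), rewrite using Example~\ref{exa:l0=2} and the $A$-relations of Proposition~\ref{A007302 is 2-regular} and Lemma~\ref{lem:shift}, and verify the resulting finite linear identities over $\mathbb{Q}$. The bookkeeping of these cascading splits is the only real labor, and the appearance of the factor $\tfrac{1}{4}$ is explained by the two $A(4n+\cdot)$ generators that the level-$3$ subsequences require.
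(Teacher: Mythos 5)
Your primary argument is correct and is, in substance, the paper's own: the paper proves this theorem only by the remark that it can be done ``in a manner similar to the proof of Theorem~\ref{thm:reflection_recurrence}'', which is exactly the induction on $n=2^\ell+r$ through the two branches of Theorem~\ref{thm:recab_PD} that you set up. The supporting observations you make are sound and are precisely the reasons the induction closes: in each odd-index relation the right-hand coefficients sum to $4$, so the first-half ($+3$) branch balances; under the reflection branch the relation for $8n+i$ turns into the relation for $8n'+(8-i)$ with $n'=2^{\ell+1}-r-1<n$ (your mirror pairing $4m+j\leftrightarrow 4n'+(4-j)$, $2m+1\leftrightarrow 2n'+1$ is the right bookkeeping); and for even indices Lemma~\ref{lem:minmaxeven} indeed collapses Proposition~\ref{prop:deltatoab_PD} to $\mathcal{P}^{(1)}_{\mathbf{x}}(m)=\tfrac{3}{2}\Delta_0(m)+1$, reducing those two relations to reflection identities for $\Delta_0$ covered by Proposition~\ref{prop:recdelta_PD}.

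The one genuine defect is your claim that the alternative ``mechanical route \dots avoids induction entirely''. It does not, because the identity of Example~\ref{exa:l0=2} is not a closed form: its cases $i=5$ and $i=11$ read $s(16m+5)=s(4m+1)+3$ and $s(16m+11)=s(4m+3)+3$, so each rewriting pass reintroduces $s$-terms at a lower level, and the cascade of parity splits never terminates in pure $A$-expressions. Concretely, for $n=4j+1$ the relation $4s(8n+1)=-2s(2n+1)+7s(4n+1)-2s(4n+2)+s(4n+3)$ becomes, after rewriting all level-$16$ terms, the assertion $7s(4j+1)-2s(8j+3)=12A(4j+1)+3A(2j+1)-4$, which is a statement of the same kind as the one being proved rather than a finite linear identity over $\mathbb{Q}$; the residual $s$-terms do not cancel between the two sides. (For the two even-index relations they do vanish, which is why your $\Delta_0(4n)=2A(n)$, $\Delta_0(4n+2)=2A(2n+1)$ shortcut is legitimate.) So either keep the induction, or upgrade the remark to an honest finite procedure by exhibiting an explicit linear representation of $\mathcal{P}^{(1)}_{\mathbf{x}}$ and invoking a bound on how many initial terms must agree. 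A related minor slip: the endpoint $r=2^\ell-1$ excluded from the statement of Theorem~\ref{thm:recab_PD} is not a finite set of cases one can check ``by hand''; the clean fix is that the reflection branch does hold there, since Proposition~\ref{prop:recdelta_PD} is valid for all $0\le r<2^\ell$ and Proposition~\ref{prop:deltatoab_PD} transfers it, exactly as in the paper's proof of Theorem~\ref{thm:recab_PD}. Neither point undermines your main argument, which stands as written.
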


\section{$2$-abelian complexity of the period-doubling word}\label{sec:2ab_comp_PD}

To prove the $2$-regularity of $\mathcal{P}^{(2)}_{\mathbf{p}}(n)_{n\ge 0}$, the aim of this section is to express the $2$-abelian complexity $\mathcal{P}^{(2)}_{\mathbf{p}}$ in terms of the $1$-abelian complexity $\mathcal{P}^{(1)}_{\mathbf{x}}$ and the following additional $2$-regular functions.

\begin{definition}
    We define the {\em max-jump} function $\JM_0:\mathbb N \to \{0,1\}$ by $\JM_0(0)=0$ and, for $n\geq 1$,
\[
	\JM_0(n)=
	\begin{cases}
		1	& \text{if $\M_0(n)>\M_0(n-1)$} \\
		0	& \text{otherwise},
	\end{cases}
\]
i.e., $\JM_0(n)=1$ when the function $\M_0$ increases.  Similarly, let  $\jm_0:\mathbb N \to \{0,1\}$ be the {\em min-jump} function defined by
\[
	\jm_0(n)=
	\begin{cases}
		1	& \text{if $\m_0(n+1) > \m_0(n)$} \\
		0	& \text{otherwise}.
	\end{cases}
\]
\end{definition}

Since $\M_0(n)$ and $\m_0(n)$ are non-decreasing, we can write
\begin{align*} 
	\JM_0(n+1)&=\M_0(n+1)-\M_0(n), \\
	\jm_0(n)&=\m_0(n+1)-\m_0(n).
\end{align*}

The relationship between these sequences and $\mathcal{P}^{(2)}_{\mathbf{p}}$ and $\mathcal{P}^{(1)}_{\mathbf{x}}$ is stated in the following result.
\begin{proposition}\label{prop:abto2ab_PD} Let $n\geq 1$ be an integer.  Then
$$\mathcal{P}^{(2)}_{\mathbf{p}}(n+1)-\mathcal{P}^{(1)}_{\mathbf{x}}(n)=
  \begin{cases}
    0 & \text{if }  n\text{ is odd}\\
    \frac{\Delta_0(n)}{2}+ 1-\JM_0(n)-\jm_0(n) & \text{if } n\text{ is even.}\\
   \end{cases}
$$
\end{proposition}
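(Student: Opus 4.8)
The plan is to replace both complexities by counts of discrete invariants and then compare them. By Lemma~\ref{lem:abel} with $\ell=2$, two factors $u,u'$ of $\mathbf{p}$ of length $n+1$ are $2$-abelian equivalent if and only if they have the same first letter and $\Psi(\blo(u,2))=\Psi(\blo(u',2))$. The length-$n$ factors of $\mathbf{x}=\blo(\mathbf{p},2)$ are exactly the words $\blo(u,2)$ with $u$ a factor of $\mathbf{p}$ of length $n+1$, so $\mathcal{P}^{(1)}_{\mathbf{x}}(n)$ counts the realized Parikh vectors $(n_{00},n_{01},n_{10})$ while $\mathcal{P}^{(2)}_{\mathbf{p}}(n+1)$ counts the pairs consisting of a first letter together with such a Parikh vector. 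Consequently the difference equals the number of Parikh vectors realized by two factors $u$ with different first letters. Writing $w=\blo(u,2)$ one has $|w|_1=n_{01}$ and $|w|_2=n_{10}$, and counting ascents and descents in $u\in\{0,1\}^*$ gives $n_{01}-n_{10}=[\,u\text{ begins with }0\,]-[\,u\text{ ends with }0\,]$. Combined with Lemma~\ref{lem:balancedPD}, which bounds $|n_{01}-n_{10}|\le 1$, this shows the first letter of $u$ is forced by the Parikh vector unless $n_{01}=n_{10}$, in which case $u$ begins and ends with the same letter. Hence a Parikh vector contributes to the difference exactly when it is \emph{balanced} ($n_{01}=n_{10}$) and is realized both by a factor beginning and ending with $0$ and by one beginning and ending with $1$.

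Because $\mathbf{p}$ carries a $0$ in every even position and contains no factor $11$, any factor of $\mathbf{p}$ beginning with $1$ begins at an odd index. When $n$ is odd, such a factor of length $n+1$ ends at an even index and therefore ends with $0$; it cannot be balanced. Thus for odd $n$ no balanced Parikh vector is realized by a factor beginning with $1$, every class has a single first letter, and the difference vanishes.

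Assume now that $n$ is even. Through the block dictionary, the letter $2$ is the only admissible $2$-block beginning with $1$ and the letter $1$ is the only admissible $2$-block ending with $1$; hence a balanced factor $w$ of $\mathbf{x}$ comes from a factor $u$ beginning and ending with $0$ exactly when $w$ neither begins with $2$ nor ends with $1$ (type A), and from one beginning and ending with $1$ exactly when $w$ begins with $2$ and ends with $1$ (type B), and every balanced $w$ is of exactly one type. A balanced Parikh vector is determined by its number of zeros $i=|w|_0$, which is even and lies in $[\m_0(n),\M_0(n)]$; by Lemma~\ref{lem:minmaxeven} and the continuity of the number of $0$'s there are $\Delta_0(n)/2+1$ admissible values of $i$. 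So the difference is the number of even $i\in[\m_0(n),\M_0(n)]$ for which factors of both types with $i$ zeros exist, and it suffices to prove that type A realizes every such $i$ while type B fails to realize exactly $\jm_0(n)$ of them at the bottom and $\JM_0(n)$ at the top.

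The extreme obstructions are immediate. If $\JM_0(n)=1$ then deleting the first, resp.\ last, letter of any factor with $\M_0(n)$ zeros must delete a $0$ (otherwise a word of length $n-1$ would already have more than $\M_0(n-1)$ zeros), so every such factor begins and ends with $0$ and is of type A, and no type B factor attains $\M_0(n)$. If $\jm_0(n)=1$ then any factor with $\m_0(n)$ zeros, read at an occurrence with a left neighbour, is preceded by a $0$ (otherwise a word of length $n+1$ would have only $\m_0(n)$ zeros); as a $0$ of $\mathbf{x}$ is the block $00$, its preimage $u$ begins with $0$, so again no type B factor attains $\m_0(n)$. The heart of the argument — and the step I expect to be hardest — is the converse together with the interior: every even $i$ must admit a type A factor, every strictly interior even $i$ a type B factor, and type B must reach $\M_0(n)$ when $\JM_0(n)=0$ and $\m_0(n)$ when $\jm_0(n)=0$. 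I would establish these by sliding a length-$n$ window along $\mathbf{x}$: the number of $0$'s then changes by at most $1$ per step, so the $0$-counts realized within each type form intervals, and the structural facts that maximal blocks of nonzero letters have the form $1212\cdots 12$, that a $2$ is always preceded by a $1$ (Lemma~\ref{lem:2facPD}), that $\mathbf{x}$ has no three consecutive $0$'s, and the reversal symmetry $w\mapsto\tau(w)^{\mathrm R}$ of Lemma~\ref{lem:reversalPD}, allow one to move a window's endpoints between a $0$ and a nonzero letter without changing the prescribed $0$-count. This fills the type A range to all of $[\m_0(n),\M_0(n)]$ and pins the type B range to $[\m_0(n)+2\jm_0(n),\,\M_0(n)-2\JM_0(n)]$, whence the difference is $\Delta_0(n)/2+1-\JM_0(n)-\jm_0(n)$.
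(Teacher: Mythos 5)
Your reduction is correct and, in substance, the same as the paper's: via Lemma~\ref{lem:abel}, the difference $\mathcal{P}^{(2)}_{\mathbf{p}}(n+1)-\mathcal{P}^{(1)}_{\mathbf{x}}(n)$ counts Parikh vectors of length-$n$ factors of $\mathbf{x}$ that are realized with both possible first letters of the $\mathbf{p}$-preimage; your identity $n_{01}-n_{10}=[\,u\text{ begins with }0\,]-[\,u\text{ ends with }0\,]$ correctly isolates the balanced vectors (this plays the role of the paper's Lemma~\ref{lem:diff12}); your parity argument for odd $n$ is sound and is a clean variant of Corollary~\ref{cor:nodd}; and your two ``extreme obstruction'' arguments for $\JM_0(n)=1$ and $\jm_0(n)=1$ are correct and correspond to the easy direction of the paper's Lemma~\ref{lem:ext}.

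However, there is a genuine gap at exactly the step you flag as hardest, and it is not a detail: the assertions that type A realizes \emph{every} even $i\in[\m_0(n),\M_0(n)]$ and that type B realizes \emph{exactly} the even $i\in[\m_0(n)+2\jm_0(n),\,\M_0(n)-2\JM_0(n)]$ constitute the entire content of the paper's Lemma~\ref{lem:ext} (converse directions) and Lemma~\ref{lem:nevenn0even}, and your sketch contains no argument that would prove them. The claim that ``the $0$-counts realized within each type form intervals'' does not follow from sliding a window: the count changes by at most $1$ per step, but the \emph{type} of the window changes as well, so the counts seen at type-B positions may a priori skip values; continuity alone rules out nothing. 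Likewise, ``moving a window's endpoints between a $0$ and a nonzero letter without changing the $0$-count'' is precisely what must be proved, and when $\jm_0(n)=0$ (resp.\ $\JM_0(n)=0$) you must actually \emph{exhibit} a type-B factor with $\m_0(n)$ (resp.\ $\M_0(n)$) zeros; no construction is offered. The arguments that close this gap in the paper are of a different nature: de-substitution normal forms --- a balanced factor beginning with $2$ has the form $2\phi(v)1$ and is always preceded by $1$, so $12\phi(v)$ is an abelian-equivalent factor beginning with $1$ (incidentally, this is how your type-A claim is proved easily, since a balanced factor beginning with $0$ or $1$ has a preimage beginning and ending with $0$) --- together with the uniform recurrence of $\mathbf{x}$ and a connectedness contradiction: if a non-splitting class contained both a factor beginning with $0$ (all of whose neighbouring windows have $n_0-1$ zeros) and a factor beginning with $1$ (all of whose neighbours have $n_0+1$ zeros), then sliding between two such occurrences the count would have to pass through $n_0$ at an intermediate position, contradicting the choice of consecutive occurrences. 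Until your sketch is replaced by arguments of this kind, the even case of the proposition is unproven.
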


We require several preliminary results.

\begin{proposition}
Let $u$ and $v$ be factors of $\mathbf{p}$ of length $n$. Let $u'$ and $v'$ be the $2$-block codings of $u$ and $v$. The factors $u$ and $v$ are $2$-abelian equivalent if and only if $u'$ and $v'$ are abelian equivalent and either $u'$ and $v'$ both start with $2$ or none of them start with $2$.
\end{proposition}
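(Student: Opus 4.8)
The plan is to reduce the statement to Lemma~\ref{lem:abel} (the Karhumäki--Saarela--Zamboni criterion) with $\ell = 2$, and then to show that the suffix condition appearing there becomes redundant once the prefix condition is imposed. Since $u'$ and $v'$ are $2$-block codings, we have $n \geq 2$ and $|u'| = |v'| = n - 1 \geq 1$. First I would record the coding dictionary: by Definition~\ref{def:block} with $r = 2$, a length-$2$ factor $ab$ of $\mathbf p$ is encoded by $2a + b$, so $00 \mapsto 0$, $01 \mapsto 1$, $10 \mapsto 2$, while the letter $3$ (encoding $11$) never occurs because $\mathbf p$ has no factor $11$; equivalently $\mathbf x = \blo(\mathbf p, 2)$ is a word over $\{0,1,2\}$ (Example~\ref{exa:cp2}). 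Consequently $u'$ begins with $2$ if and only if $u$ begins with $1$, and $u'$ begins with $0$ or $1$ if and only if $u$ begins with $0$. Hence the hypothesis ``$u'$ and $v'$ both begin with $2$, or neither does'' is exactly equivalent to ``$u$ and $v$ have the same first letter.''

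The crucial ingredient is a telescoping identity relating the letter counts of $u'$ to the two endpoints of $u$. Writing $u = a_0 a_1 \cdots a_{n-1}$ with $a_i \in \{0,1\}$, the letter $1$ of $u'$ records each block $01$ (contributing $+1$) and the letter $2$ records each block $10$ (contributing $-1$), while the letter $0$ records each block $00$ (contributing $0$), so
\[
	|u'|_1 - |u'|_2 = \sum_{i=0}^{n-2}(a_{i+1} - a_i) = a_{n-1} - a_0 .
\]
Thus the difference between the last and first letters of $u$ is completely determined by the Parikh vector of $u'$. This is the key point, and essentially the only place the specific combinatorics enters.

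With these two observations the argument is immediate. By Lemma~\ref{lem:abel}, $u$ and $v$ are $2$-abelian equivalent if and only if they share the same first letter, share the same last letter, and $u'$ and $v'$ are abelian equivalent. Now suppose $u'$ and $v'$ are abelian equivalent; then $|u'|_1 = |v'|_1$ and $|u'|_2 = |v'|_2$, so the identity above (applied to $u$ and $v$) gives $a_{n-1} - a_0 = b_{n-1} - b_0$, where $v = b_0 \cdots b_{n-1}$. Therefore, \emph{under the assumption of abelian equivalence of the block codings}, the conditions ``same first letter'' ($a_0 = b_0$) and ``same last letter'' ($a_{n-1} = b_{n-1}$) are equivalent; and by the first paragraph each is equivalent to the start-with-$2$ condition of the statement. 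Combining, $u$ and $v$ are $2$-abelian equivalent if and only if $u'$ and $v'$ are abelian equivalent and the start-with-$2$ condition holds, as claimed. The forward direction uses Lemma~\ref{lem:abel} directly, and the backward direction uses the telescoping identity to recover the matching suffix.

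I do not expect a serious obstacle here: the only real content is recognizing the telescoping identity, which shows that abelian equivalence of the block codings already pins down the relation $a_{n-1} - a_0$ between the endpoints, so that the suffix condition of Lemma~\ref{lem:abel} is automatic once the prefix condition is encoded through the start-with-$2$ hypothesis. The mild care required is only bookkeeping: confirming the encoding $10 \mapsto 2$ and the absence of the factor $11$ in $\mathbf p$, which together guarantee the clean equivalence between ``$u'$ starts with $2$'' and ``$u$ starts with $1$.''
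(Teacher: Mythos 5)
Your proof is correct, and it rests on the same key ingredient as the paper's: both arguments reduce the statement to Lemma~\ref{lem:abel} with $\ell = 2$ and then translate the first-letter condition on $u,v$ into the start-with-$2$ condition on $u',v'$, using that $11$ is not a factor of $\mathbf{p}$, so $u$ starts with $1$ exactly when $u'$ starts with $2$. The one place you diverge is in your reading of Lemma~\ref{lem:abel}: the ``(resp.\ suffix)'' there signals two \emph{alternative} formulations --- the same prefix of length $\ell-1$, or the same suffix of length $\ell-1$, either one together with abelian equivalence of the block codings is already equivalent to $\ell$-abelian equivalence --- not a conjunction of both. The paper simply invokes the prefix formulation, so the last-letter condition never enters and no further work is needed. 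You instead adopted the conjunctive reading (which is a true statement, obtained by combining the two formulations) and then had to discharge the suffix condition; your telescoping identity $|u'|_1 - |u'|_2 = a_{n-1} - a_0$ does exactly that, and it is correct. So your argument is sound but carries a step the paper's does not need; in exchange, the telescoping identity is a nice structural observation (the Parikh vector of the block coding already pins down $a_{n-1}-a_0$), and it makes your proof independent of which formulation of Lemma~\ref{lem:abel} one takes.
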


\begin{proof}
By Lemma~\ref{lem:abel}, $u$ and $v$ are $2$-abelian equivalent if and only if they start with the same letter and have the same number of factors $00$, $01$ and $10$. The number of $00$ (respectively $01$ and $10$) in $u$ is exactly the number of $0$ (resp.\ $1$ and $2$) in $u'$.  Moreover,  $u$ starts with $0$ (resp.\ by $1$) if and only if $u'$ starts with $0$ or $1$ (resp.\ by $2$). Therefore, $u$ and $v$ are $2$-abelian equivalent if and only if $u'$ and $v'$ are abelian equivalent and both start with $2$ or none of them start with $2$.\end{proof}

To compute $\mathcal{P}^{(2)}_{\mathbf{p}}$, we will use the abelian complexity of $\mathbf{x}=\blo(\mathbf{p},2)$, $\mathcal{P}^{(1)}_{\mathbf{x}}$, and study when an abelian equivalence class of $\mathbf x$ splits into two $2$-abelian equivalence classes of $\mathbf{p}$, or in other words, study when two abelian equivalent factors of $\mathbf x$ can start, respectively, with $2$ and with $0$ or $1$.
If the class does not split, we say that it leads to only one class.

\begin{lemma}\label{lem:diff12}
Let $\mathcal X$ be an abelian equivalence class of factors of length $n$ of $\mathbf x$. 
If the number of $1$'s in an element of $\mathcal X$ differs from the number of $2$'s, then $\mathcal X$ leads to only one $2$-abelian equivalence class of $\mathbf{p}$.
\end{lemma}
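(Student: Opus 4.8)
The plan is to reduce the statement to a purely ``local'' question about the first letter of a factor. Recall from the preceding proposition that the abelian class $\mathcal{X}$ leads to two distinct $2$-abelian classes of $\mathbf{p}$ exactly when $\mathcal{X}$ contains both a factor that starts with $2$ and a factor that does not; otherwise it leads to a single class. Since $|w|_1$ and $|w|_2$ are abelian invariants, they take the same value on every $w\in\mathcal{X}$, so it suffices to prove that, under the hypothesis $|w|_1\neq|w|_2$, whether a factor $w$ of $\mathbf{x}$ begins with $2$ is completely determined by the pair $(|w|_1,|w|_2)$. Every element of $\mathcal{X}$ will then share the same first-letter behaviour, and $\mathcal{X}$ cannot split.

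First I would show that a factor of $\mathbf{x}$ begins with $2$ if and only if its first non-zero letter is $2$. By Lemma~\ref{lem:2facPD}, the only length-$2$ factors of $\mathbf{x}$ starting with $0$ are $00$ and $01$, so in $\mathbf{x}$ a $0$ is never immediately followed by a $2$. Consequently, after an initial (possibly empty) run of $0$'s in a factor $w$, the next letter is a $1$; hence if $w$ does not start with $2$ then its first non-zero letter is $1$, and the claimed equivalence follows.

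Next I would read off the first non-zero letter from the counts $|w|_1$ and $|w|_2$. Erasing the $0$'s from $\mathbf{x}$ yields the alternating word $(12)^\omega=121212\cdots$, so the word $w^{*}$ obtained by deleting the $0$'s from $w$ is an alternating factor of $(12)^\omega$, and its first letter is precisely the first non-zero letter of $w$. For such an alternating word, the difference $|w^{*}|_1-|w^{*}|_2=|w|_1-|w|_2$ equals $+1$ exactly when $w^{*}$ begins (and ends) with $1$, and $-1$ exactly when it begins with $2$. By Lemma~\ref{lem:balancedPD} we always have $\big||w|_1-|w|_2\big|\le 1$, so the hypothesis $|w|_1\neq|w|_2$ forces $|w|_1-|w|_2=\pm1$; combined with the previous paragraph this gives that $w$ starts with $2$ if and only if $|w|_2=|w|_1+1$. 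As this condition depends only on the Parikh vector, it is constant on $\mathcal{X}$, which completes the argument.

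The one step requiring care---and the only place where the specific combinatorics of $\mathbf{x}$ enter---is the equivalence ``$w$ starts with $2$ $\Leftrightarrow$ the first non-zero letter of $w$ is $2$'': one must rule out a factor that opens with a block of $0$'s followed by a $2$. This is exactly what the factor set in Lemma~\ref{lem:2facPD} forbids, so the obstacle dissolves once that lemma is invoked; the remainder is the elementary observation that the sign of $|w|_1-|w|_2$ pins down the endpoints of an alternating word.
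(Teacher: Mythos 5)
Your proof is correct and follows essentially the same route as the paper: reduce via the preceding proposition to showing that, when $|w|_1\neq|w|_2$, whether a factor starts with $2$ is determined by its Parikh vector, using the alternation of $1$'s and $2$'s (Lemma~\ref{lem:balancedPD}) together with the fact that $02$ is not a factor (Lemma~\ref{lem:2facPD}). The paper's own argument is just a terser version of this ("if $u$ starts with $2$, all elements have more $2$'s than $1$'s; but any factor with more $2$'s than $1$'s starts with a $2$"), with the two cited lemmas left implicit where you spell them out.
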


\begin{proof}
It is enough to prove that if an element of $\mathcal X$ starts with $2$, all the other elements of $\mathcal X$ start with $2$.
If $u$ starts with $2$, then all the elements of $\mathcal X$ have more $2$'s than $1$'s. But any factor with more $2$'s than $1$'s starts with a $2$.
\end{proof}

\begin{corollary}\label{cor:nodd}
If $n$ is odd, $\mathcal{P}^{(2)}_{\mathbf{p}}(n+1)=\mathcal{P}^{(1)}_{\mathbf{x}}(n)$.
\end{corollary}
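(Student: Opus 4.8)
The plan is to reduce the statement to a counting fact about abelian classes of factors of $\mathbf{x}$ and then to settle it using the positions at which the letters $1$ and $2$ occur in $\mathbf{x}$. First I would record that the $2$-block coding $w \mapsto \blo(w,2)$ induces a bijection between the factors of $\mathbf{p}$ of length $n+1$ and the factors of $\mathbf{x}$ of length $n$, since the overlapping pairs of letters recorded by $\blo(w,2)$ determine $w$. Feeding this through the proposition preceding Lemma~\ref{lem:diff12}, two length-$(n+1)$ factors of $\mathbf{p}$ are $2$-abelian equivalent precisely when their codings lie in a common abelian class of $\mathbf{x}$ and agree on whether they begin with the letter $2$. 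Consequently each abelian class $\mathcal{X}$ of length-$n$ factors of $\mathbf{x}$ gives rise to one or two $2$-abelian classes of $\mathbf{p}$, and
\[
	\mathcal{P}^{(2)}_{\mathbf{p}}(n+1) - \mathcal{P}^{(1)}_{\mathbf{x}}(n) = \#\{\mathcal{X} : \mathcal{X}\text{ contains a factor starting with }2\text{ and a factor not starting with }2\}.
\]
It thus suffices to show that, for $n$ odd, no abelian class splits in this way.

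Now Lemma~\ref{lem:diff12} already rules out splitting for every class in which $|w|_1 \neq |w|_2$. Hence only classes with $|w|_1 = |w|_2$ could possibly split, and such a class splits only if it contains a factor starting with $2$. So the whole corollary reduces to the claim that, for $n$ odd, every factor $w$ of $\mathbf{x}$ that starts with $2$ satisfies $|w|_2 = |w|_1 + 1$: this forces $|w|_1 \neq |w|_2$, so no balanced class can contain a $2$-starting factor, and no class splits.

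To prove the claim I would exploit the block structure of $\mathbf{x}$. Because $\mathbf{x} = \phi^\omega(1)$ with $\phi$ uniform of length $2$ (Example~\ref{exa:cp2}), we have $\mathbf{x} = \phi(\mathbf{x})$, so $\mathbf{x}$ is a concatenation of the blocks $\phi(0) = \phi(1) = 12$ and $\phi(2) = 00$ occupying positions $(2i,2i+1)$; in particular every $1$ sits at an even position, every $2$ at an odd position, and the $1$'s and $2$'s pair up block by block. A factor beginning with $2$ therefore begins at an odd position, and when $n$ is odd it also ends at an odd position. Its leading $2$ comes from a $12$-block whose partner $1$ lies just before the factor and is excluded, contributing one unmatched $2$; at the right-hand end no unmatched $1$ can arise, since a dangling $1$ would have to occupy an even endpoint, whereas the factor ends at an odd position. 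Every other $12$-block meeting the factor lies entirely inside or entirely outside it, so all remaining $1$'s and $2$'s cancel, leaving $|w|_2 = |w|_1 + 1$.

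The main obstacle is this final structural step: the cancellation hinges on the fact that a factor of odd length starting at an odd position ends at an odd position as well, so that the only boundary letter left unpaired is the leading $2$ and never a trailing $1$. This is exactly where the oddness of $n$ enters, and it is precisely what breaks down when $n$ is even (for instance $21$ is a factor of $\mathbf{x}$ that starts with $2$ yet has $|w|_1 = |w|_2$), which is the source of the correction term appearing in Proposition~\ref{prop:abto2ab_PD}.
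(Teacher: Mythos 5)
Your proof is correct and takes essentially the same route as the paper: the paper also reduces the statement via the proposition and Lemma~\ref{lem:diff12} to the classes containing a factor starting with $2$, and for such a factor $u$ of odd length writes $u = 2\phi(u')$ to conclude $|u|_1 \neq |u|_2$ (via the parity of $|u|_0$), which is exactly the content of your block-parity count showing $|u|_2 = |u|_1 + 1$. The only cosmetic differences are that you make the bijection between factors of $\mathbf{p}$ of length $n+1$ and factors of $\mathbf{x}$ of length $n$ explicit, and that you extract the extra information $|u|_2 = |u|_1 + 1$ rather than mere inequality.
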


\begin{proof}
Let $\mathcal X$ be an abelian equivalence class of factors of odd length $n$.
If no element of $\mathcal X$ starts with a $2$, $\mathcal X$ leads to only one $2$-abelian equivalence class of factors of $\mathbf{p}$.
So assume that there is a factor $u$ in $\mathcal X$ starting with $2$. Since $n$ is odd, we can write $u=2\phi(u')$. 
Then the number of $0$'s in $u$ is even and there is a different number of $2$'s than $1$'s. By Lemma~\ref{lem:diff12}, $\mathcal X$ again leads to a unique $2$-abelian equivalence class of $\mathbf p$.
\end{proof}

\begin{corollary}\label{cor:nevenn0odd}
Let $\mathcal X$  be an abelian equivalence class of factors of $\mathbf x$ of even length $n$ with an odd number of zeros. 
Then $\mathcal X$ leads to only one $2$-abelian equivalence class of $\mathbf{p}$.
\end{corollary}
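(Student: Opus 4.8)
The plan is to reduce directly to Lemma~\ref{lem:diff12} by a parity count on the Parikh vector. Since all factors in $\mathcal{X}$ are abelian equivalent, they share a common Parikh vector; in particular every element $w \in \mathcal{X}$ has the same values of $|w|_0$, $|w|_1$, and $|w|_2$, so the parity conditions in the hypothesis apply uniformly across the whole class.

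First I would compute the number of non-zero letters. For any $w \in \mathcal{X}$ we have $|w|_1 + |w|_2 = n - |w|_0$. By hypothesis $n$ is even while $|w|_0$ is odd, so $n - |w|_0$ is odd, whence $|w|_1 + |w|_2$ is odd. An odd sum cannot split into two equal parts, so $|w|_1 \neq |w|_2$ for every element of $\mathcal{X}$; that is, the number of $1$'s differs from the number of $2$'s throughout the class.

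It then remains only to invoke Lemma~\ref{lem:diff12}, which states precisely that whenever the number of $1$'s differs from the number of $2$'s in an element of $\mathcal{X}$, the class $\mathcal{X}$ leads to only one $2$-abelian equivalence class of $\mathbf{p}$. This finishes the argument.

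There is no genuine obstacle here: the entire content is the parity bookkeeping above, and the combinatorial work has already been carried out, namely the balance property $\big||w|_1 - |w|_2\big| \le 1$ of Lemma~\ref{lem:balancedPD} and its consequence Lemma~\ref{lem:diff12}. The only point to state carefully is that abelian equivalence yields a well-defined common Parikh vector on $\mathcal{X}$, so that the parity condition transfers from the hypothesis on the class to each of its members.
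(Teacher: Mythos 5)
Your proof is correct and follows essentially the same route as the paper: both observe that $|w|_1 + |w|_2 = n - |w|_0$ is odd (even length minus odd number of zeros), so $|w|_1 \neq |w|_2$, and then conclude by Lemma~\ref{lem:diff12}. Your write-up simply makes explicit the bookkeeping (the shared Parikh vector across the class) that the paper leaves implicit.
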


\begin{proof}
Factors in $\mathcal X$ have an odd number of $1$'s and $2$'s counted together, so the number of $1$'s and the number of $2$'s are different and we can apply Lemma~\ref{lem:diff12}.
\end{proof}

Thus, an abelian equivalence class $\mathcal X$ of factors of length $n$ of $\mathbf x$ can possibly lead to two $2$-abelian equivalence classes of factors of length $n+1$ of $\mathbf {p}$ only if $n$ is even and if there are an even number of zeros in $\mathcal X$. In most cases $\mathcal X$ will indeed lead to two different equivalence classes. The exceptions are identified by the following lemma.

\begin{lemma}\label{lem:ext}
Let $n$ be a positive even integer and $n_0$ such that $\m_0(n)\leq n_0 \leq \M_0(n)$.
Let $\mathcal X$ be an abelian equivalence class of factors of $\mathbf x$ of length $n$ with exactly $n_0$ zeros.
\begin{itemize}
\item
We have $n_0=\M_0(n)$ and $\JM_0(n)=1$ if and only if every factor $u$ in $\mathcal X$ can be written as $u=00u'00$.
\item
We have $n_0=\m_0(n)$ and $\jm_0(n)=1$ if and only if every factor $u$ in $\mathcal X$ is preceded and followed only by $00$.
\end{itemize}
\end{lemma}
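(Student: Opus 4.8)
The plan is to read everything off the block structure of $\mathbf{x}$. Since $\mathbf{x}=\phi(\mathbf{x})$ and $\phi$ is $2$-uniform with $\phi(0)=\phi(1)=12$, $\phi(2)=00$, the word $\mathbf{x}$ is a concatenation of length-$2$ blocks, each equal to $00$ or $12$ and beginning at an even position; the $i$-th block is $00$ exactly when the $i$-th letter of $\mathbf{x}$ is $2$. By Lemma~\ref{lem:balancedPD} the letters $1$ and $2$ alternate after the $0$'s are erased, so two $00$-blocks are never adjacent. Hence a $0$ in an even position opens a $00$-block (and is followed by $0$), a $0$ in an odd position closes a $00$-block (and is preceded by $0$), and $\mathbf{x}$ contains no factor $0000$. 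I treat each bullet as an equivalence and prove the two implications separately, the four arguments being mirror images of one another. Throughout, since $n$ and the relevant extremal values are even (Lemma~\ref{lem:minmaxeven}), the class $\mathcal X$ attaining an extremal count is the \emph{unique} class with that many $0$'s, so ``every $u\in\mathcal X$'' means ``every extremal factor.''

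For the forward implications I would first \emph{peel}. If $n_0=\M_0(n)$ and $\JM_0(n)=1$, then $\M_0(n-1)=\M_0(n)-1$, so deleting the first (resp.\ last) letter of any $u\in\mathcal X$ must destroy a $0$; thus every maximizing factor begins and ends with $0$. Analogously, for the second bullet the hypothesis $\jm_0(n)=1$ means $\m_0(n+1)=\m_0(n)+1$, so any one-letter extension of a minimizing factor creates a $0$; thus every minimizing factor is always preceded and always followed by $0$. The crucial remaining step is to upgrade a single boundary $0$ to a full $00$-block, that is, to show the factor (resp.\ its occurrence) is \emph{aligned} with the block decomposition. The device is a comparison with the aligned neighbour $u^\ast$ obtained by shifting the window one step left. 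In the maximizing case, if $u$ started at an odd position then (using that $u$ begins and ends with $0$ and that $00$-blocks are isolated) both exposed end positions carry a $0$, so $u^\ast$ has the same number of $0$'s and is again maximizing; yet $u^\ast$ ends in the second letter of a $12$-block, i.e.\ in a $2$, contradicting the fact that maximizing factors end with $0$. Hence $u$ is aligned, and alignment plus the parity of $n$ turns the two boundary $0$'s into blocks $00$, giving $u=00u'00$. In the minimizing case the same $u^\ast$ ends in a $2$, and extending it one step further left brings in the second letter of another $12$-block, again nonzero; this gives a length-$(n+1)$ factor with only $\m_0(n)$ zeros, forcing $\jm_0(n)=0$, a contradiction. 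So every occurrence is aligned, whereupon the surrounding $0$'s (guaranteed by peeling) become full $00$-blocks.

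For the converse implications I would use the continuity of the zero-count as the length-$n$ window slides along $\mathbf{x}$: the count changes by at most $1$ per step and attains every value between $\m_0(n)$ and $\M_0(n)$. If every $u\in\mathcal X$ has the form $00u'00$, then each occurrence is aligned and, because $00$-blocks are isolated, is flanked by nonzero letters; sliding the window off $u$ either way therefore strictly lowers the count, so each occurrence is a strict local maximum. Were $n_0<\M_0(n)$, continuity would force the count to rise from $n_0$ up to $\M_0(n)$ and hence to cross $n_0$ on an increasing step at some position lying in $\mathcal X$, contradicting strict maximality; thus $n_0=\M_0(n)$, and a one-letter extension argument (no $\mathcal X$-factor ends in a nonzero letter) then yields $\JM_0(n)=1$. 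Symmetrically, if every $u\in\mathcal X$ is preceded and followed only by $00$, sliding off an occurrence brings in a $0$, so the count never decreases at a neighbour and each occurrence is a local minimum; the same continuity argument gives $n_0=\m_0(n)$, and examining each length-$(n+1)$ factor gives $\jm_0(n)=1$. Should $n_0$ be odd, the two abelian classes with $n_0$ zeros are swapped by $w\mapsto\tau(w)^{\mathrm{R}}$ (Lemma~\ref{lem:reversalPD}), which preserves both structural conditions, so the argument applies to all factors with $n_0$ zeros and shows this parity cannot occur.

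The main obstacle, and the step I would be most careful about, is the alignment argument. A lone boundary $0$ does not by itself reveal whether it opens or closes a $00$-block, and the whole lemma turns on excluding the ``half-block'' (odd-position) occurrences. The comparison with the aligned neighbour $u^\ast$—same number of $0$'s, but an exposed \emph{nonzero} letter on the side away from $u$—is exactly the maneuver that converts the jump hypotheses $\JM_0(n)=1$ and $\jm_0(n)=1$ into alignment, and thence into the $00u'00$ shape and the $00$-context conclusions.
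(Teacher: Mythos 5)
Your proof is correct, and it is built from the same raw materials as the paper's---the $00$/$12$ block decomposition of $\mathbf{x}=\phi(\mathbf{x})$, the peeling step that converts $\JM_0(n)=1$ (resp.\ $\jm_0(n)=1$) into boundary $0$'s, and the continuity of the zero count under window sliding---but it is organized around genuinely different devices. Where the paper excludes misaligned factors by classifying de-substitution types ($u=01u'20$, $u=0\phi(u')0$, etc.) and exhibiting a factor of length $n-1$ or $n+1$ that violates the jump hypothesis, you compare $u$ with the shifted window $u^\ast$ and contradict peeling; the constructions are close cousins (in the maximum case the paper's witness $001u'$ is exactly your $u^\ast$ with its final $2$ removed), but your version keeps everything at length $n$ in that case and reads uniformly across the two bullets. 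In the converse direction the paper produces a factor with $n_0\pm1$ zeros, writes it as $0\phi(v')1$ or $2\phi(v')0$, and shifts it into $\mathcal{X}$ to contradict the structure; your observation that every occurrence of an element of $\mathcal{X}$ is a strict local maximum (resp.\ weak local minimum) of the sliding count, so that no crossing step can happen at such an occurrence, replaces that case analysis by a cleaner monotonicity statement. You are also more scrupulous than the paper on one point: when $n-n_0$ is odd there can be two abelian classes with $n_0$ zeros, a subtlety the paper dispatches with the unjustified assertion ``in particular, $n_0$ is even,'' whereas you transport the structural hypothesis to the possible second class via $w\mapsto\tau(w)^{\mathrm{R}}$ and then rule that parity out a posteriori using Lemma~\ref{lem:minmaxeven}. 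Only two details deserve to be written out in a final version: the claim that this involution preserves the property ``preceded and followed only by $00$'' needs the one-line check that if $vy$ (resp.\ $yv$) is a factor with $v=\tau(w)^{\mathrm{R}}$, then $\tau(y)w$ (resp.\ $w\tau(y)$) is a factor by Lemma~\ref{lem:reversalPD}, so $y\neq0$ would give $w$ a forbidden context; and, exactly as in the paper's own proof, the first bullet implicitly assumes $n\geq4$, since for $n=2$ the class $\{00\}$ satisfies $n_0=\M_0(2)$ and $\JM_0(2)=1$ yet cannot be written $00u'00$ (the paper sidesteps this degenerate case by handling $n=2$ separately in Lemma~\ref{lem:nevenn0even}).
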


\begin{proof}
We start by proving the first part of the lemma.
Assume that all the elements of $\mathcal X$ have the form $00u'00$. 
In particular, $n_0$ is even. If $n_0\neq \M_0(n)$, it means that there is a factor $v$ of length $n$ with $n_0+1$ zeros. Indeed, sliding a window of length $n$ from a word of $\mathcal X$ to a factor with $\M_0(n)$ zeros gives factors with all possibilities between $n_0$ and $\M_0(n)$ for the number of zeros. Since $|v|_0$ is odd and $n$ is even, we must have $v=0\phi(v')1$ or $v=2\phi(v')0$. But then $0^{-1}v2$ or $1v0^{-1}$ is an element of $\mathcal X$ not of the form $00u'00$, a contradiction.
Hence $n_0=\M_0(n)$. If $\JM_0(n)=0$, then $\M_0(n-1)=n_0$ and there is a factor $v$ of odd length $n-1$ with even number $n_0$ of $0$'s. We must have $v=2\phi(v')$ or $v=\phi(v')1$ but then $1v$ or $v2$ is an element of $\mathcal X$ not of the form $00u'00$, a contradiction and $\JM_0(n)=1$.

For the other direction, assume that $n_0=\M_0(n)$ and $\JM_0(n)=1$. In particular, $\M_0(n-1)=n_0-1$. Assume there exists a factor $u$ of $\mathcal X$ not of the form $u=00u'00$. Since $u$ has even length and even number of $0$'s, we must have $u=01u'20$ or $u$ has its first or last letter $y$ not equal to $0$. In the first case, $v=001u'$ has length $n-1$ and $n_0$ zeros, a contradiction. In the second case, removing the letter $y$ leads also to a factor of length $n-1$ with $n_0$ zeros.

The second part of the lemma is similar. Assume first that all the elements of $\mathcal X$ are preceded and followed by $00$.  In particular, $n_0$ is even. If $n_0\neq \m_0(n)$, there is a factor $v$ of length $n$ with $n_0-1$ zeros. Since $|v|_0$ is odd but $n$ is even, we must have $v=0\phi(v')1$ or $v=2\phi(v')0$ but then $0v1^{-1}$ or $2^{-1}v0$ is an element of $\mathcal X$ that starts or ends with $00$ and so is preceded or followed by $12$, a contradiction.
Hence we have $n_0=\m_0(n)$. If $\jm_0(n)= 0$, then $\m_0(n+1)=n_0$ and there is a factor $v$ of odd length $n+1$ with even number $n_0$ of $0$'s. We must have $v=2\phi(v')$ or $v=\phi(v')1$ but then $\phi(v')$ is an element of $\mathcal X$ without a $00$ preceding or following it.

For the other direction, assume that $n_0=\m_0(n)$ and $\jm_0(n)=1$. In particular $\m_0(n+1)=n_0+1$. If there exists a factor $u$ of $\mathcal X$ such that $1u$, $2u$, $u1$ or $u2$ is a factor, then $\m_0(n+1)\leq n_0$, a contradiction. Hence all the factors $u$ of $\mathcal X$ can only be extended by $0u0$.
Finally, note that $u\in \mathcal X$ cannot occur in $\mathbf{x}$ at odd index. In other words, any $u\in \mathcal X$ can be de-substituted. Indeed, if it is not the case, then $u$ is of the form $0\phi(u')0$, $0\phi(u')1$, $2\phi(u')0$ or $2\phi(u')1$. If $u$ is of the first form, then $\phi(u')001$ is a factor of length $n+1$ with only $n_0$ zeros, which is a contradiction. Otherwise, $u$ is of one of the last three forms. Then either $u2$ or $1u$ is a factor of $\mathbf{x}$, which is not possible. So the only extension of $u$ as a factor of $\mathbf{x}$ is $00u00$.
\end{proof}

\begin{lemma}\label{lem:nevenn0even}
Let $n$ be a positive even integer and $n_0$ even such that $\m_0(n)\leq n_0 \leq \M_0(n)$.
Let $\mathcal X$ be an abelian equivalence class of factors of $\mathbf x$ of length $n$ with $n_0$ zeros.
The class $\mathcal X$ leads to only one $2$-abelian equivalence class of $\mathbf{p}$ if and only if $n_0=\m_0(n)$ and $\jm_0(n)=1$ or $n_0=\M_0(n)$ and $\JM_0(n)=1$.
Otherwise, $\mathcal X$ splits into two classes.
\end{lemma}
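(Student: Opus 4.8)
The plan is to reduce the splitting question to a purely local condition on the starting letters of factors of $\mathbf{x}$, and then match that condition with the two configurations isolated in Lemma~\ref{lem:ext}. By the characterization of $2$-abelian equivalence via $2$-block codings established above (the Proposition preceding Lemma~\ref{lem:diff12}), two factors of $\mathbf{p}$ whose codings lie in $\mathcal X$ are $2$-abelian equivalent exactly when their codings both begin with $2$ or both begin with a letter of $\{0,1\}$. Hence $\mathcal X$ leads to a single class when all of its factors share this ``begins with $2$ or not'' status, and splits otherwise. First I would record that a factor $u\in\mathcal X$ \emph{beginning} with $2$ must \emph{end} with $1$: since $n$ is even such a $u$ occupies an odd-to-even range of indices, so its last letter is the first letter of an image $\phi(\cdot)$, i.e.\ a letter of $\{0,1\}$; were it $0$ we would get $|u|_0$ odd, contradicting that $n_0$ is even. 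Because a $2$ is always preceded by $1$ in $\mathbf{x}$ (Lemma~\ref{lem:2facPD}), deleting this final $1$ and prepending the preceding $1$ turns $u$ into a factor of $\mathcal X$ (the number of $0$'s is unchanged) that begins with $1$. Thus a single $2$-starting factor already forces both statuses to occur, and the clean dichotomy emerges: $\mathcal X$ leads to one class if and only if \emph{no} factor of $\mathcal X$ begins with $2$, and splits otherwise.

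With this reduction in hand, the easy direction is to show that each of the two exceptional hypotheses prevents a $2$-start. Using Lemma~\ref{lem:ext}, the hypothesis $n_0=\M_0(n)$ and $\JM_0(n)=1$ says that every factor of $\mathcal X$ has the form $00u'00$, and the hypothesis $n_0=\m_0(n)$ and $\jm_0(n)=1$ says that every factor of $\mathcal X$ is both preceded and followed by $00$. In the first case every factor begins with $0$; in the second case every factor is immediately preceded by $0$, so by Lemma~\ref{lem:2facPD} it begins with a letter of $\{0,1\}$. Either way no factor begins with $2$, so by the previous paragraph $\mathcal X$ leads to a single class.

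The substance of the proof, and the main obstacle, is the converse: if neither exceptional hypothesis holds, I must exhibit a factor of $\mathcal X$ beginning with $2$. By the de-substitution analysis of the first paragraph, such a factor is exactly a word $2\,\phi(z)\,1$ with $|z|=(n-2)/2$ and $|z|_2=n_0/2$, occurring in $\mathbf{x}$ bordered by letters of $\{0,1\}$ on both sides (so that the surrounding images supply the initial $2$ and the final $1$). I would first settle \emph{value feasibility}: by Lemma~\ref{lem:max2max0} the extreme numbers of $2$'s in factors of length $(n-2)/2$ are $\tfrac12\m_0(n-2)$ and $\tfrac12\M_0(n-2)$, and since $\M_0$ is even at even arguments and changes by at most $1$, one has $\M_0(n)=\M_0(n-2)+2\JM_0(n)$, with the analogous identity on the $\m_0$ side. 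These identities show that the target value $n_0/2$ lies in $[\tfrac12\m_0(n-2),\tfrac12\M_0(n-2)]$ precisely when we are \emph{not} in the case $n_0=\M_0(n)$, $\JM_0(n)=1$ (the only way to force $n_0/2>\tfrac12\M_0(n-2)$), while the lower bound holds automatically. It then remains to secure \emph{an occurrence bordered by non-$2$ letters}: I would slide a window of length $n$ along $\mathbf{x}$, keeping the $0$-count fixed at $n_0$, and use the continuity of the $0$-count together with the reversal symmetry $w\mapsto\tau(w)^{\mathrm R}$ of Lemma~\ref{lem:reversalPD} to move from a known factor of $\mathcal X$ to one that begins with $2$, the sole obstruction being the rigid ``surrounded by $00$'' situation of Lemma~\ref{lem:ext}, i.e.\ the case $n_0=\m_0(n)$, $\jm_0(n)=1$. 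I expect this final step---showing that the two exceptional configurations are the \emph{only} barriers to reaching a $2$-start while the $0$-count is held at $n_0$---to be the most delicate part, requiring a careful case analysis of the up- and down-transitions of the $0$-count and their interaction with the positions of the letter $2$.
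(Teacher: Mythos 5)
Your reduction and your easy direction are sound, and they match the paper's own steps: the parity argument showing a $2$-starting element of $\mathcal X$ must have the form $2\phi(u')1$, the move $u \mapsto 1u1^{-1}$ producing a $1$-starting element of $\mathcal X$ (so that a single $2$-start forces a split), and the use of Lemma~\ref{lem:ext} to rule out $2$-starts in the two exceptional cases are all exactly the paper's opening moves. The problem is the converse, which is the entire substance of the lemma: you must show that if neither exceptional hypothesis holds, then $\mathcal X$ contains a factor beginning with $2$. Here you only give a plan, and you explicitly defer its core (``I expect this final step \dots to be the most delicate part, requiring a careful case analysis''). That deferred step is not a routine verification; it is precisely where all the work lies, so the proposal has a genuine gap. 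Moreover, the mechanism you propose is not literally available: one cannot ``slide a window of length $n$ along $\mathbf{x}$, keeping the $0$-count fixed at $n_0$'' --- the count fluctuates as the window moves, and the whole difficulty is to control how it fluctuates between occurrences of elements of $\mathcal X$. Your ``value feasibility'' step (that $n_0 \le \M_0(n-2)$ unless $n_0=\M_0(n)$ and $\JM_0(n)=1$) is correct but only produces some factor of length $n-2$ with $n_0$ zeros; it says nothing about whether any such factor ever occurs bordered by letters of $\{0,1\}$ on both sides, which is the actual obstruction.

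For comparison, the paper closes this gap by a different route: assuming $\mathcal X$ leads to one class (hence no $2$-starts), it shows that every element of $\mathcal X$ starting with $1$ must be preceded and followed by $00$ in $\mathbf{x}$ (so its two neighbouring windows have $n_0+1$ zeros), and every element starting with $0$ must have the form $00\phi(u')00$ (so its neighbouring windows have $n_0-1$ zeros). If both types occurred, then --- using uniform recurrence of $\mathbf{x}$ --- one finds consecutive occurrences of a $0$-type and a $1$-type element with no element of $\mathcal X$ in between; the $0$-count of the sliding window would then have to pass from $n_0-1$ to $n_0+1$ without ever equalling $n_0$, contradicting the fact that it changes by at most $1$ per step. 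Hence all elements are of one type, and Lemma~\ref{lem:ext} identifies the exceptional case. Some argument of this kind (or an equally careful substitute) is what your final paragraph would have to contain; as written, the proposal proves only the easy half of the equivalence.
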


\begin{proof}
The factors in $\mathbf x$ of length $n = 2$ are $00,01,12,21,20$. The two classes to consider are $\mathcal X_1=\{00\}$, which leads to one class, and $\mathcal X_2=\{12,21\}$, which splits into two classes. Since $\JM_0(2)=1$ and $\jm_0(2)=0$, the proposition is true.

Hence let $n\geq 4$ even.
If $n_0=\m_0(n)$ and $\jm_0(n)=1$, then by Lemma~\ref{lem:ext}, all the elements of $\mathcal X$ are preceded by $00$. In particular, they all start with $1$ and $\mathcal X$ leads to only one $2$-abelian equivalence class. Similarly, if $n_0=\M_0(n)$ and $\JM_0(n)=1$, then by Lemma~\ref{lem:ext}, all the elements of $\mathcal X$ start with $0$ and we have only one class.

Assume now that $\mathcal X$ leads to only one class. 
If an element $u$ of $\mathcal X$ starts with $2$, we have $u=2\phi(u')1$ since $n$ and $n_0$ are even. Then $1u1^{-1}$ is an element of $\mathcal X$ starting with $1$ and $\mathcal X$ splits into two classes. Hence every element $u$ of $\mathcal X$ starts with $0$ or $1$. Assume there exists a factor $u$ in $\mathcal X$ that starts with a $1$. Then $u=12\phi(u')$ and $u$ cannot be followed by a $1$ since otherwise $1^{-1}u1$ would be an element of $\mathcal X$ starting with $2$. Hence $u$ is always followed by $00$ and so ends with $12$. Similarly, it can only be preceded by $00$. Hence all the factors in $\mathcal X$ starting with a $1$ are preceded and followed by  $00$. In particular, if a factor in $\mathcal X$ starts with $1$ and occurs in $\mathbf x$ at index $i$, then the two factors starting at indices $i - 1$ and $i + 1$ in $\mathbf x$ have $n_0+1$ zeros.
Assume now there exists a factor $u$ in $\mathcal X$ starting with a $0$. Then, $u$ can be de-substituted. Otherwise, as $n$ and $n_0$ are even, $u$ is of the form $0\phi(u')0$ where $\phi(u')$ ends with $12$. Thus $2\phi(u')2^{-1}$ is an element of $\mathcal X$ starting with $2$, which is a contradiction. Hence $u$ starts with $00$. If $u$ ends with $12$, then again, $2u2^{-1}$ is an element of $\mathcal X$ starting with $2$. Hence $u=00\phi(u')00$ and all elements of $\mathcal X$ starting with $0$ start and end with $00$. In particular, if a factor in $\mathcal X$ starts with $0$ and occurs in $\mathbf x$ at index $i$, then the two factors starting at indices $i - 1$ and $i + 1$ in $\mathbf x$ have $n_0-1$ zeros.

If no elements of $\mathcal X$ start with $1$ or no elements start with $0$, we are done by Lemma~\ref{lem:ext}. 
Otherwise, since one can show that $\mathbf x$ is uniformly recurrent\footnote{A word is \emph{uniformly recurrent} if every factor occurs infinitely often and, for each factor, there is a constant $c$ such that two consecutive occurrences of the factor occur within $c$ of each other. To prove that $\mathbf x$ is uniformly recurrent, it is enough to observe that $\phi$ is primitive since for each letter $y\in\{0,1,2\}$, $\phi^3(y)$ contains all the letters.}, we can assume that there exist a factor $u\in \mathcal X$ that starts with $0$ and occurs at index $i$ in $\mathbf x$, and a factor $v\in \mathcal X$ that starts with $1$ and occurs at index $i+\ell$ in $\mathbf x$, such that any factor $w_s$ of length $n$ occurring at index $i+s$ in $\mathbf x$ does not belong to $\mathcal{X}$ for $0<s<\ell$. Then $w_1$ has $n_0-1$ zeros whereas $w_{\ell-1}$ has $n_0+1$ zeros. But there is no factor $w_s$ with $n_0$ zeros. This is a contradiction since the number of $0$'s changes by at most one between two factors of the same length starting at consecutive indexes.  
\end{proof}

\begin{proof}[Proof of Proposition~\ref{prop:abto2ab_PD}]
The case $n$ odd is given by Corollary~\ref{cor:nodd}. Assume now that $n$ is even. Then by Lemma~\ref{lem:minmaxeven}, $\m_0(n)$ and $\M_0(n)$ are even, and therefore $\Delta_0(n)$ is even as well.
 Let $\mathcal X$ be an abelian equivalence class of factors of $\mathbf x$ of length $n$. Let $n_0$ be the number of $0$'s in the elements of $\mathcal X$. There are exactly $\frac{\Delta_0(n)}{2}$ odd values of $n_0$ and $\frac{\Delta_0(n)}{2}+1$ even values.
 By Corollary~\ref{cor:nevenn0odd}, if $n_0$ is odd, $\mathcal X$ leads to one $2$-abelian equivalence class of $\mathbf{p}$. 
 By Lemma~\ref{lem:nevenn0even}, $\mathcal X$ splits into two classes except for $n_0=\m_0(n)$ if $\jm_0(n)=1$ and for $n_0=\M_0(n)$ if $\JM_0(n)=1$. Hence there are in total $\frac{\Delta_0(n)}{2}+1 -\JM_0(n)-\jm_0(n)$ cases where $\mathcal X$ leads to two $2$-abelian equivalence classes of $\mathbf{p}$ instead of one and this is exactly the difference between $\mathcal{P}^{(2)}_{\mathbf{p}}(n+1)$ and $\mathcal{P}^{(1)}_{\mathbf{x}}(n)$.
\end{proof}

\begin{corollary}
The sequence $\mathcal{P}^{(2)}_{\mathbf{p}}(n)_{n\ge 0}$ is $2$-regular.
\end{corollary}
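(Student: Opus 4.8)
The plan is to assemble the corollary from Proposition~\ref{prop:abto2ab_PD} together with the closure properties of $2$-regular and $2$-automatic sequences; no new combinatorics is needed, since the reflection recurrences doing the real work have already been established in Theorem~\ref{thm:recab_PD} and Corollary~\ref{cor:2regDm_PD}. By Lemma~\ref{lem:shift} it suffices to prove that $\mathcal{P}^{(2)}_{\mathbf{p}}(n+1)_{n\ge 0}$ is $2$-regular. For $n \ge 1$, Proposition~\ref{prop:abto2ab_PD} writes this as $\mathcal{P}^{(2)}_{\mathbf{p}}(n+1) = \mathcal{P}^{(1)}_{\mathbf{x}}(n) + g(n)$, where $g(n) = 0$ for odd $n$ and $g(n) = \tfrac{1}{2}\Delta_0(n) + 1 - \JM_0(n) - \jm_0(n)$ for even $n$. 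Since $\mathcal{P}^{(1)}_{\mathbf{x}}(n)_{n\ge 0}$ is $2$-regular by Theorem~\ref{thm:recab_PD}, since the $2$-regular sequences are closed under addition, and since altering the single initial term at $n=0$ does not affect $2$-regularity (the identity need only hold for $n \ge 1$), the problem reduces to showing that $g(n)_{n\ge 0}$ is $2$-regular. I would realize $g$ as the piecewise sum $g(n) = f_1(n)\,P_1(n) + f_2(n)\,P_2(n)$, where $P_1 = [n \text{ odd}]$ and $P_2 = [n \text{ even}]$ are the manifestly $2$-automatic parity predicates, $f_1 = 0$, and $f_2$ is a globally defined integer-valued $2$-regular sequence agreeing with $\tfrac{1}{2}\Delta_0(n) + 1 - \JM_0(n) - \jm_0(n)$ on even $n$; Lemma~\ref{lem:compo} then yields the $2$-regularity of $g$.

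Two auxiliary facts drive the construction of $f_2$. First, $\JM_0$ and $\jm_0$ are $2$-automatic. Using $\JM_0(n) = \M_0(n) - \M_0(n-1) \in \{0,1\}$ and $\jm_0(n) = \m_0(n+1) - \m_0(n) \in \{0,1\}$, these are equal to $(\M_0(n) + \M_0(n-1)) \bmod 2$ and $(\m_0(n+1) + \m_0(n)) \bmod 2$ respectively. From $\M_0 = \m_0 + \Delta_0$ and Corollary~\ref{cor:2regDm_PD}, both $(\m_0(n)\bmod 2)_{n\ge 0}$ and $(\M_0(n)\bmod 2)_{n\ge 0}$ are $2$-automatic; closure of $2$-automatic sequences under shifts (Lemma~\ref{lem:shift}) and under termwise sum modulo $2$ (the sum is $2$-regular and finite-valued, then reduce via Lemma~\ref{lem:regmod}) shows that $\JM_0$ and $\jm_0$ are $2$-automatic, hence $2$-regular.

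Second, I must produce an integer-valued $2$-regular sequence equal to $\tfrac{1}{2}\Delta_0(n)$ on even $n$; this is the one genuinely delicate point, since $\Delta_0(n)/2$ is not an integer for odd $n$, so halving $\Delta_0$ itself is not even a well-defined operation on $\Z$-valued sequences. The trick is to mask the odd indices before halving. Set $\widetilde{\Delta}(n) = \Delta_0(n)\,P_2(n)$, which is $2$-regular by Lemma~\ref{lem:compo}; it is even at every index, being $\Delta_0(n)$ (even, by Lemma~\ref{lem:minmaxeven}) at even $n$ and $0$ at odd $n$. Halving a uniformly even $2$-regular sequence preserves $2$-regularity: if $\langle\mathcal{K}_2(t)\rangle$ is generated by $g_1,\dots,g_r$, then $\mathcal{K}_2(t/2) = \{\tfrac{1}{2}u : u \in \mathcal{K}_2(t)\}$ gives $\langle\mathcal{K}_2(t/2)\rangle = \tfrac{1}{2}\langle\mathcal{K}_2(t)\rangle$, which is generated by $\tfrac{1}{2}g_1,\dots,\tfrac{1}{2}g_r$, so $\widetilde{\Delta}/2$ is $2$-regular and integer-valued. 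Then $f_2 = \tfrac{1}{2}\widetilde{\Delta} + 1 - \JM_0 - \jm_0$ is $2$-regular and agrees with the desired expression on even $n$, completing the assembly. The main obstacle is precisely this division by $2$; everything else is routine closure bookkeeping.
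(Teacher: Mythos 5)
Your proposal is correct and follows essentially the same route as the paper: apply Lemma~\ref{lem:compo} to the case decomposition of Proposition~\ref{prop:abto2ab_PD}, deduce the $2$-automaticity of $\JM_0$ and $\jm_0$ from the parities of $\m_0$ and $\Delta_0$ supplied by Corollary~\ref{cor:2regDm_PD}, and finish with Lemma~\ref{lem:shift}. The one place you go beyond the paper is your masking-then-halving construction for the term $\tfrac{1}{2}\Delta_0(n)$, a legitimate (and welcome) patch for a point the paper's proof silently glosses over, namely that $\tfrac{1}{2}\Delta_0(n)$ on its own is not an integer-valued sequence.
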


\begin{proof} We can make use of Lemma~\ref{lem:compo}. 
Thanks to Proposition~\ref{prop:abto2ab_PD}, $\mathcal{P}^{(2)}_{\mathbf{p}}(n+1)$ can be expressed as a combination of $\mathcal{P}^{(1)}_{\mathbf{x}}(n)$, $\Delta_0(n)$, $\JM_0(n)$, $\jm_0(n)$ using the predicate $(n\bmod{2})$.
Note that the predicate $(n\bmod{2})$ is trivially $2$-automatic.

We proved the $2$-regularity of $\mathcal{P}^{(1)}_{\mathbf{x}}(n)_{n \geq 0}$ and of $\Delta_0(n)_{n\ge 0}$ in Section~\ref{sec:block_PD}.
Observe that $$\JM_0(n+1)=\M_0(n+1)-\M_0(n)= \m_0(n+1)+\Delta_0(n+1)-\m_0(n)-\Delta_0(n).$$
Since $\JM_0(n+1)$ can only take the values $0$ and $1$, the latter relation can also be expressed using $(\m_0(n)\bmod{2})_{n\ge 0}$ and $(\Delta_0(n)\bmod{2})_{n\ge 0}$. These latter sequences are $2$-regular by Corollary~\ref{cor:2regDm_PD}. By Lemma~\ref{lem:shift}, $\JM_0(n+1)_{n\ge 0}$ is thus a combination of four $2$-regular sequences. Applying again Lemma~\ref{lem:shift}, $\JM_0(n)_{n\ge 0}$ is also $2$-regular. We can show similarly that $\jm_0(n)_{n\ge 0}$ is $2$-regular. In fact, both sequences $\JM_0(n)_{n\ge 0}$ and $\jm_0(n)_{n\ge 0}$ are $2$-automatic since they only take values $0$ and $1$. Thus, all the functions in the expression for $\mathcal{P}^{(2)}_{\mathbf{p}}(n + 1)$ are $2$-regular. 

Finally, as $\mathcal{P}^{(2)}_{\mathbf{p}}(n + 1)_{n\ge 0}$ is $2$-regular, $\mathcal{P}^{(2)}_{\mathbf{p}}(n)_{n\ge 0}$ is $2$-regular by Lemma~\ref{lem:shift}.
\end{proof}


\section{Abelian complexity of $\blo(\mathbf{t},2)$}\label{sec:block_TM}

In this section, we turn our attention to the Thue--Morse word $\mathbf{t}$.
Let $\mathbf{y}$ denote $$\blo(\mathbf{t},2)=132120132012132120121320\cdots,$$ the $2$-block coding of $\mathbf{t}$ introduced in Example~\ref{exa:ct2}.
Recall that $\mathbf{y}$ is a fixed point of the morphism $\nu$ defined by $\nu:0\mapsto 12,1\mapsto 13$, $2\mapsto 20,3 \mapsto 21$.
The approach here is similar to that of the period-doubling word: we consider in this section the abelian complexity of $\mathbf{y}$, and then we compare $\mathcal{P}^{(1)}_{\mathbf{y}}(n)$ with $\mathcal{P}^{(2)}_{\mathbf{t}}(n)$ in Section~\ref{sec:2ab_comp_TM}.

Our study of the period-doubling word in Sections~\ref{sec:block_PD} and \ref{sec:2ab_comp_PD} made substantial use of counting $0$'s in factors of $\mathbf x$.
Alternatively, we could have counted the total number of $1$'s and $2$'s in factors of $\mathbf x$, since this is equivalent information and since the letters $1$ and $2$ alternate in $\mathbf x$.

For the Thue--Morse word, the appropriate statistic for factors of $\mathbf y$ is the total number of $1$'s and $2$'s (or, equivalently, the total number of $0$'s and $3$'s).
We will show in Lemma~\ref{lem:balancedTM} that the letters $1$ and $2$ alternate in $\mathbf y$.
Therefore, for $n\in\mathbb{N}$ we set
\begin{align*}
	\M_{12}(n)&:=\max\{|u|_1+|u|_2 \, : \, u \text{ is a factor of } \mathbf{y}\text{ with }|u|=n\}, \\
	\m_{12}(n)&:=\min\{|u|_1+|u|_2 \, : \, u \text{ is a factor of } \mathbf{y}\text{ with }|u|=n\}, \\
	\Delta_{12}(n)&:=\M_{12}(n)-\m_{12}(n).
\end{align*}

\begin{remark}
    Note that $g(\mathbf{y})$ is exactly the period-doubling word $\mathbf{p}$, where $g$ is the coding defined by $g(0)=1$, $g(1)=0$, $g(2)=0$ and $g(3)=1$. In particular, $\Delta_{12}(n) + 1$ is the abelian complexity function of the period-doubling word. This function was also studied in \cite{Blanchet, Karhumaki--Saarela--Zamboni arxiv}.  Here we obtain relations of the same type as the relations in Theorem~\ref{thm:reflection_recurrence}.
\end{remark}

 The fact that $\mathcal{P}^{(1)}_{\mathbf{y}}(n)_{n \geq 0}$ is $2$-regular will follow from the next statement. 

\begin{proposition}\label{prop:deltatoab_TM}
  Let $n\in \mathbb N$. We have
  \begin{equation}
      \label{eq:conddef}
 \mathcal{P}^{(1)}_{\mathbf{y}}(n)=
  \begin{cases}
    2\Delta_{12}(n)+2 & \text{if }n \text{ is odd} \\
    \frac{5}{2}\Delta_{12}(n)+\frac{5}{2} & \text{if }n \text{ and } \Delta_{12}(n)+1 \text{ are even} \\
    \frac{5}{2}\Delta_{12}(n)+4 & \text{if $n$, $\Delta_{12}(n)$ and $\m_{12}(n)+1$ are even}\\
    \frac{5}{2}\Delta_{12}(n)+1 & \text{if $n$, $\Delta_{12}(n)$ and $\m_{12}(n)$ are even}.\\
   \end{cases}
  \end{equation}
\end{proposition}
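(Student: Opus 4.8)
The plan is to follow the template of the proof of Proposition~\ref{prop:deltatoab_PD}, but now tracking two independent balance statistics instead of one. The relevant quantity for a factor $w$ of $\mathbf{y}$ is $s := |w|_1 + |w|_2$, which ranges over the $\Delta_{12}(n)+1$ integers in $[\m_{12}(n), \M_{12}(n)]$; by continuity of the sliding window every such value is attained. I would write $\mathcal{P}^{(1)}_{\mathbf{y}}(n) = \sum_{s=\m_{12}(n)}^{\M_{12}(n)} N(s)$, where $N(s)$ is the number of distinct Parikh vectors of factors of length $n$ with $|w|_1+|w|_2 = s$, and then show that $N(s)$ depends only on the parities of $s$ and $n$.

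First I would record two alternation (balance) facts. One is Lemma~\ref{lem:balancedTM}, giving $\big||w|_1 - |w|_2\big| \le 1$. The other is the \emph{analogous} statement that $0$ and $3$ alternate in $\mathbf{y}$, hence $\big||w|_0 - |w|_3\big| \le 1$; this can be proved by the same de-substitution and induction argument used for Lemma~\ref{lem:balancedPD}, or read off from the pair interpretation of $\mathbf{y} = \blo(\mathbf{t},2)$, in which the letters $0,3$ are the constant blocks $00,11$ of $\mathbf{t}$ and are separated by transitions. Consequently, for fixed $s$ there are at most two admissible values of $(|w|_1,|w|_2)$ — one if $s$ is even, two if $s$ is odd — and, since $|w|_0 + |w|_3 = n - s$, at most two admissible values of $(|w|_0,|w|_3)$ — one if $n-s$ is even, two if $n-s$ is odd.

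Next I would supply two symmetry lemmas, in the spirit of Lemma~\ref{lem:reversalPD}, that let me realize all admissible combinations. Writing $\rho$ for the letter swap $1 \leftrightarrow 2$ (fixing $0,3$) and $\gamma$ for $0\leftrightarrow 3$, $1\leftrightarrow 2$, I would show that $\rho(w)^\mathrm{R}$ and $\gamma(w)$ are factors of $\mathbf{y}$ whenever $w$ is. Via the coding of consecutive $\mathbf{t}$-bits, these correspond to the reversal-closure and the complementation-closure of the Thue--Morse language; both preserve $n$ and $s$. The map $\rho(\cdot)^\mathrm{R}$ flips the $(1,2)$-balance while fixing the $(0,3)$-balance, and the composite $\gamma\big(\rho(\cdot)^\mathrm{R}\big)$ flips the $(0,3)$-balance while fixing the $(1,2)$-balance. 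Starting from any factor with statistic $s$ and applying these symmetries, I would conclude that every admissible balance combination actually occurs, so that $N(s)$ equals the product of the two factor counts: $N(s) = 1$ when $s$ and $n$ are both even, $N(s) = 4$ when $s$ is odd and $n$ is even, and $N(s) = 2$ when $n$ is odd.

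Finally I would sum $N(s)$ over the $\Delta_{12}(n)+1$ consecutive values of $s$. For $n$ odd every term equals $2$, giving $2\Delta_{12}(n)+2$. For $n$ even the terms alternate between $1$ (even $s$) and $4$ (odd $s$), so the total is $\tfrac{5}{2}\Delta_{12}(n)$ plus a constant fixed by the parities of $\Delta_{12}(n)$ and $\m_{12}(n)$, which a short count (or, as in Proposition~\ref{prop:deltatoab_PD}, evaluating a few explicit values) pins down to the three even-$n$ cases stated. The main obstacle is the third step: proving that the two balances can be flipped \emph{independently} and that for each fixed $s$ all admissible Parikh vectors genuinely appear. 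This requires both the $0,3$-alternation (which is less transparent than the $1,2$-alternation) and the two closure symmetries of the Thue--Morse language, together with the boundary bookkeeping about when a factor can be de-substituted that already appears in the period-doubling lemmas.
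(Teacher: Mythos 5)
Your proposal is correct, and its counting core coincides with the paper's: partition the factors of length $n$ by $s=|w|_1+|w|_2$, observe that the $(1,2)$-balance and the $(0,3)$-balance each admit one or two values according to the parities of $s$ and $n-s$, show via symmetries that all admissible combinations occur, and sum over the $\Delta_{12}(n)+1$ values of $s$. The genuine difference lies in how you justify the two key lemmas. First, note that the $(0,3)$-balance bound you propose to prove separately is already part of Lemma~\ref{lem:balancedTM} as stated in the paper, so that step costs you nothing. Second, and more substantially, your symmetry maps $\rho(\cdot)^\mathrm{R}$ and $\gamma$ are exactly the paper's $\tau(\cdot)^\mathrm{R}$ and $\tau'\circ\tau$ from Lemma~\ref{lem:reversalTM} (and your composite $\gamma(\rho(\cdot)^\mathrm{R})$ equals $\tau'(\cdot)^\mathrm{R}$), but you establish their factor-preservation by pulling back through the bijection between factors of $\mathbf{y}$ of length $n$ and factors of $\mathbf{t}$ of length $n+1$: reversal-closure and complementation-closure of the Thue--Morse language induce precisely these operations on $2$-block codings. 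The paper instead proves Lemma~\ref{lem:reversalTM} intrinsically in $\mathbf{y}$, by an induction on length using the commutation relation $\tau'(\nu(u))^\mathrm{R}=a^{-1}\nu(\tau(u)^\mathrm{R})b$ for the morphism $\nu$. Your route is shorter and more transparent, since it leans on classical, easily verified properties of $\mathbf{t}$ (both closures follow from $\sigma^n(0)^\mathrm{R}\in\{\sigma^n(0),\sigma^n(1)\}$ and $\overline{\sigma^n(0)}=\sigma^n(1)$); the paper's self-contained induction has the advantage of working entirely at the level of the coded word, in parallel with the treatment of $\blo(\mathbf{p},2)$ in Lemma~\ref{lem:reversalPD}, where no such clean pullback to a well-understood language is available. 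Your final bookkeeping (all terms equal to $2$ for odd $n$; terms alternating $1$ and $4$ for even $n$, with the constant fixed by the parities of $\Delta_{12}(n)$ and $\m_{12}(n)$) matches the paper's case analysis exactly.
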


To be able to apply the composition result given by Lemma~\ref{lem:compo} to the expression of $\mathcal{P}^{(1)}_{\mathbf{y}}$ derived in Proposition~\ref{prop:deltatoab_TM}, we have therefore to prove that 
\begin{itemize}
  \item the sequence $\Delta_{12}(n)_{n\ge 0}$ is $2$-regular and
  \item the predicates occurring in \eqref{eq:conddef} are $2$-automatic.
\end{itemize}

Section~\ref{sec:41} is dedicated to the proof of Proposition~\ref{prop:deltatoab_TM}. In Section~\ref{sec:42}, we give a proof of the two previous items. In particular, we show that $\Delta_{12}(n)_{n\ge 0}$ satisfies a reflection symmetry. This permits us to express recurrence relations for $\mathcal{P}^{(1)}_{\mathbf{y}}$ at the end of Section~\ref{sec:42}.


\subsection{Proof of Proposition~\ref{prop:deltatoab_TM}}\label{sec:41}

We first need three technical lemmas about factors of $\mathbf{y}=\blo(\mathbf{t},2)$. 

\begin{lemma}\label{lem:2facTM}
The set of factors of $\mathbf{y}$ of length $2$ is $\fac_{\mathbf{y}}(2)=\{01,12,13,20,21,32\}$.
\end{lemma}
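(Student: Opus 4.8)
The plan is to establish the two inclusions separately. The inclusion $\{01,12,13,20,21,32\}\subseteq\fac_{\mathbf{y}}(2)$ is immediate: each of the six words occurs within the prefix $132120\cdots$ exhibited in Example~\ref{exa:ct2}, since $13,\ 32,\ 21,\ 12,\ 20,\ 01$ appear consecutively as $y_0y_1,\ y_1y_2,\ y_2y_3,\ y_3y_4,\ y_4y_5,\ y_5y_6$. So the real content is the reverse inclusion, namely that these six words are the only length-$2$ factors.

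For the reverse inclusion I would exploit that $\mathbf{y}$ is a fixed point of the $2$-uniform morphism $\nu$, so that $\mathbf{y}=\nu(y_0)\nu(y_1)\nu(y_2)\cdots$ splits cleanly into length-$2$ blocks. A length-$2$ factor $w=y_iy_{i+1}$ then either sits inside a single block $\nu(y_j)$ (when $i$ is even), in which case $w\in\{\nu(a):a\in\{0,1,2,3\}\}=\{12,13,20,21\}$, or it straddles two consecutive blocks $\nu(y_j)\nu(y_{j+1})$ (when $i$ is odd), in which case $w$ is a length-$2$ factor of $\nu(y_jy_{j+1})$ where $y_jy_{j+1}$ is itself a length-$2$ factor of $\mathbf{y}$. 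This reduces the statement to a finite closure check: writing $S=\{01,12,13,20,21,32\}$, I would verify that for every $u\in S$ all three length-$2$ factors of $\nu(u)$ lie in $S$. For instance $\nu(01)=1213$ contributes $\{12,21,13\}$ and $\nu(32)=2120$ contributes $\{21,12,20\}$; the remaining four cases $\nu(12)=1320$, $\nu(13)=1321$, $\nu(20)=2012$, $\nu(21)=2013$ are checked in the same way, and all four single-letter images $\nu(a)$ already lie in $S$.

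To turn this closure property into a proof that $S$ contains every length-$2$ factor, I would run an induction on $N$ showing that every length-$2$ factor of $\nu^N(1)$ lies in $S$. The base case is the direct computation $\nu^2(1)=1321$, whose length-$2$ factors $\{13,32,21\}$ lie in $S$. For the inductive step I would apply the block decomposition above to $\nu^{N+1}(1)=\nu(\nu^N(1))$: an interior factor is an image $\nu(a)\in S$, while each straddling factor arises from $\nu(u)$ for some length-$2$ factor $u$ of $\nu^N(1)$, which lies in $S$ by the induction hypothesis, so the closure check places the straddling factor in $S$ as well. Since every factor of $\mathbf{y}$ occurs in some $\nu^N(1)$, this yields $\fac_{\mathbf{y}}(2)\subseteq S$ and hence equality.

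I do not expect a genuine obstacle here: the argument is a routine closure computation of the same flavour as the proof of Lemma~\ref{lem:2facPD}. The only point requiring a little care is the logical justification, that is, making explicit via the induction on $\nu^N(1)$ why the finite check that $S$ is closed under taking length-$2$ factors of images actually forces $S$ to exhaust all length-$2$ factors. Once the $2$-uniformity of $\nu$ is used to align the blocks, the interior/straddling dichotomy makes this transparent, and the whole proof reduces to the six bookkeeping computations listed above.
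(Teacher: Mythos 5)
Your proposal is correct and follows essentially the same route as the paper: the paper's proof is exactly the closure check that for each $u\in\{01,12,13,20,21,32\}$ the three length-$2$ factors of $\nu(u)$ remain in the set, plus the direct verification that all six words occur. You merely make explicit the induction on $\nu^N(1)$ (via the interior/straddling block dichotomy) that the paper leaves implicit, which is a fair elaboration rather than a different argument.
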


\begin{proof}
It is easy to check that these six words are factors. To prove that they are the only ones, it is enough to check that for any element $u$ in $\{01,12,13,20,21,32\}$ the three factors of length $2$ of $\nu(u)$ are still in $\{01,12,13,20,21,32\}$.
\end{proof}

The following lemma has already been observed in \cite[Lemma~10]{Karhumaki--Saarela--Zamboni arxiv}.

\begin{lemma}\label{lem:balancedTM} 
If $w$ is a factor of $\mathbf{y}$, then $\big||w|_1-|w|_2\big|\le 1$ and $\big||w|_0-|w|_3\big|\le 1$.
In particular, the letters $1$ and $2$ (respectively $0$ and $3$) alternate in $\mathbf y$.
\end{lemma}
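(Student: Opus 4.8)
The plan is to observe that each of the two inequalities is equivalent to the corresponding alternation statement, and then to prove the two alternation statements separately; the second one is the genuinely Thue--Morse-specific part and will be the main obstacle. Throughout I would use that, by Definition~\ref{def:block} and Example~\ref{exa:ct2}, the letter $\mathbf y_i$ encodes the pair $(t_i,t_{i+1})$, with $\mathbf y_i\in\{1,2\}$ exactly when $t_i\neq t_{i+1}$ (a \emph{transition}: $1$ for $01$ and $2$ for $10$) and $\mathbf y_i\in\{0,3\}$ exactly when $t_i=t_{i+1}$ (a \emph{repeat}: $0$ for $00$ and $3$ for $11$). The reduction itself is immediate: if two letters $a,b$ strictly alternate in the subsequence of $\mathbf y$ formed by the positions carrying $a$ or $b$, then reading any factor $w$ from left to right the occurrences of $a$ and $b$ alternate, so $\big||w|_a-|w|_b\big|\le 1$. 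Hence it suffices to prove that $1$ and $2$ alternate and that $0$ and $3$ alternate.

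For the alternation of $1$ and $2$, I would take consecutive transition positions $i<j$, that is $\mathbf y_i,\mathbf y_j\in\{1,2\}$ with $\mathbf y_k\in\{0,3\}$ for $i<k<j$. Then $t_{i+1}=\cdots=t_j$, so the two transitions have opposite direction: if $\mathbf y_i=1$ then $t_{i+1}=1$, forcing $t_j=1$ and hence $\mathbf y_j=2$, and symmetrically. This step uses no special property of $\mathbf t$ and already yields $\big||w|_1-|w|_2\big|\le 1$ and the alternation of $1$ and $2$.

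The main obstacle is the alternation of $0$ and $3$, where I would exploit the self-similarity $\mathbf t=\sigma(\mathbf t)$ with $\sigma(0)=01$, $\sigma(1)=10$. Since neither $\sigma(0)$ nor $\sigma(1)$ equals $00$ or $11$, a factor $00$ can occur only straddling a block boundary, as the final $0$ of a block $\sigma(1)=10$ followed by the initial $0$ of a block $\sigma(0)=01$; consequently every occurrence of $00$ \emph{begins at an odd index} of $\mathbf t$, and likewise every occurrence of $11$ (the final $1$ of $\sigma(0)=01$ followed by the initial $1$ of $\sigma(1)=10$) begins at an odd index. Now suppose, for contradiction, that two consecutive repeat positions $i<j$ (only transitions in between) carry the same letter, say both $0$. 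By the alternation of transitions just established, $t_{i+1},\dots,t_j$ is strictly alternating starting from $t_{i+1}=0$, and requiring $t_j=0$ forces $j-i$ to be odd, so $i$ and $j$ have opposite parities. This contradicts the fact that both $00$-occurrences begin at odd indices. The same parity computation rules out two consecutive $3$'s, whereas two repeats of different types give no contradiction. Hence $0$ and $3$ alternate and $\big||w|_0-|w|_3\big|\le 1$.

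I expect the parity fact ``$00$ and $11$ begin only at odd indices of $\mathbf t$'' to be the crux, everything else being bookkeeping. An alternative route mirroring the proof of Lemma~\ref{lem:balancedPD} is also available: writing $\delta_{12}(w)=|w|_1-|w|_2$ and $\delta_{03}(w)=|w|_0-|w|_3$, one checks directly from $\nu$ the identities $\delta_{12}(\nu(v))=\delta_{12}(v)$ and $\delta_{03}(\nu(v))=-\delta_{12}(v)$, which settle the case of a full $\nu$-image by induction; the boundary cases, with at most one letter deleted at each end, would then be controlled using that the first letter of every image $\nu(a)$ is a transition, together with the alternation of $1$ and $2$, to keep both balances within $1$.
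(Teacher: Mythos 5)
Your proof is correct, but it takes a genuinely different route from the paper's. The paper proves the lemma by induction on $|w|$ entirely inside $\mathbf{y}$: it de-substitutes $w$ with respect to $\nu$, uses the counting identities $|\nu(v)|_1-|\nu(v)|_2=|v|_1-|v|_2$ and $|\nu(v)|_0-|\nu(v)|_3=|v|_2-|v|_1$ (exactly the identities in your closing sketch), and then disposes of the factors that are not exact $\nu$-images by a case analysis on the letters clipped at each end, using facts from Lemma~\ref{lem:2facTM} such as ``a $0$ is always followed by a $1$'' and ``a $0$ is always preceded by a $2$'' to pin down the sign of the balance in the de-substituted word. You instead work directly in $\mathbf{t}$: the alternation of $1$ and $2$ is the coding-independent fact that consecutive transitions in any binary word must point in opposite directions, and the alternation of $0$ and $3$ follows from the parity observation that, since $\mathbf{t}=\sigma(\mathbf{t})$ and neither $\sigma(0)$ nor $\sigma(1)$ is $00$ or $11$, every repeat straddles a block boundary and hence begins at an odd index of $\mathbf{t}$; two consecutive repeats of the same type would force $j-i$ odd, i.e., opposite parities, a contradiction. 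Both arguments are complete (your reduction from alternation to the two inequalities is sound), and yours is arguably more elementary: it avoids the boundary case analysis and isolates precisely what is special about Thue--Morse, namely the parity of repeat positions. What the paper's approach buys in exchange is that it stays within the $\nu$-machinery used throughout Section~\ref{sec:block_TM}; in particular the counting identities established in its proof are reused verbatim in Lemmas~\ref{lem:power2TM} and \ref{lem:minmaxphi}. Your final ``alternative route'' is essentially the paper's proof, with the boundary cases left as a sketch, but since your primary argument is self-contained this does not affect correctness.
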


\begin{proof}
First note that if for all factors of a word $u$, the numbers of two letters $x$ and $y$ differ by at most $1$, then $x$ and $y$ alternate in $u$. Furthermore, if the first or the last occurrence of one of these letters is $x$, then $|u|_x\geq |u|_y$. If both the first and the last occurrences are $x$, then $|u|_x=|u|_y+1$.

We prove the result by induction on the length $\ell$ of the factor. The result is true for factors of length $\ell=1$. Let $w$ be a factor of length $\ell>1$ and assume the result holds for factors of length smaller than $\ell$. If $w$ can be de-substituted as $w=\nu(w')$, we have
\begin{align*}
|w|_0&=|w'|_2,\\
|w|_1&=|w'|_0+|w'|_1+|w'|_3,\\
|w|_2&=|w'|_0+|w'|_2+|w'|_3,\\
|w|_3&=|w'|_1.
\end{align*}
Using the induction hypothesis, we have $$\big||w|_1-|w|_2\big|=\big||w|_0-|w|_3\big|= \big||w'|_1-|w'|_2\big| \le 1.$$

If $w$ cannot be de-substituted and has odd length, we have $$w \in \left\{1^{-1}\nu(w'), \, 2^{-1}\nu(w'), \, \nu(w')1, \, \nu(w')2\right\}$$ for some factor $w'$ with $|w'|<\ell$. Assume that $w=1^{-1}\nu(w')$. Then as before $\big||w|_0-|w|_3\big|= \big||w'|_1-|w'|_2\big| \le 1$. 
For the numbers of $1$ and $2$, $w'$ starts with $0$ or $1$. Since by Lemma~\ref{lem:2facTM} a $0$ is always followed by a $1$, $w'$ starts either with $01$ or with $1$. In both cases, since $1$ and $2$ alternate, we have $|w'|_1\geq |w'|_2$ and thus
$$\big||w|_1-|w|_2\big|= \big||w'|_1-|w'|_2-1\big| \le 1.$$
The same reasoning can be done for $w=2^{-1}\nu(w')$.
If $w=\nu(w')1$, then we clearly have $\big||w|_0-|w|_3\big|\le 1$ using the result on $\nu(w')$.
By Lemma~\ref{lem:2facTM}, the factor $\nu(w')$ must end either with $0$ or $2$. So $w'$ ends with $0$ or $2$ as well. Since a $0$ is always preceded by a $2$, we necessarily have $|w'|_2\geq |w'|_1$ and
 $$\big||w|_1-|w|_2\big|= \big||w'|_1-|w'|_2+1\big| \le 1.$$
The same reasoning applies to $w=\nu(w')2$.

If $w$ cannot be de-substituted and has even length, then we have 
$$w \in \left\{1^{-1}\nu(w')1, \, 1^{-1}\nu(w')2, \, 2^{-1}\nu(w')1, \, 2^{-1}\nu(w')2\right\}$$
for some factor $w'$ with $|w'|<\ell$. If the same letter is removed and added to $\nu(w')$, then the result is clearly true. Otherwise, assume that $w=1^{-1}\nu(w')2$ (the same reasoning holds for the last case). It is clear that $\big||w|_0-|w|_3\big|\le 1$ using the result on $\nu(w')$. For the numbers of $1$ and $2$, as before, $w'$ starts with $01$ or $1$ and ends with $13$ or $1$. Hence we have $|w'|_1=|w'|_2+1$ and then 
\[
	\big||w|_1-|w|_2\big|= \big||w'|_1-|w'|_2-2\big| \le 1. \qedhere
\]
\end{proof}

\begin{lemma}\label{lem:reversalTM}
Let $\tau,\tau'$ be the morphisms respectively defined by 
\[
\tau:\left\{\begin{array}{c}
0\mapsto0\\
1\mapsto2\\
2\mapsto1\\
3\mapsto3\\
\end{array}\right.
\qquad \text{ and } \qquad
\tau':\left\{\begin{array}{c}
0\mapsto3\\
1\mapsto1\\
2\mapsto2\\
3\mapsto0\\
\end{array}\right. .
\]
If $w$ is a factor of $\mathbf{y}$, then $\tau'(w)^\mathrm{R}$, $\tau(w)^\mathrm{R}$ and $\tau'(\tau(w))$ are also factors of $\mathbf{y}$.
\end{lemma}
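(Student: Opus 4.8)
The plan is to reduce the three claimed symmetries of $\mathbf{y}$ to the two classical symmetries of the Thue--Morse word $\mathbf{t}$: that $\fac(\mathbf{t})$ is closed under the letter exchange $0\leftrightarrow 1$ (complementation, which I write $w\mapsto\overline{w}$) and under reversal $w\mapsto w^\mathrm{R}$. Both are standard. Complementation-closure follows from $\sigma(\overline{a})=\overline{\sigma(a)}$ together with the primitivity of $\sigma$ (indeed $\sigma^2(0)=0110$ and $\sigma^2(1)=1001$ each contain both letters, so $\mathbf{t}=\sigma^\omega(0)$ and $\overline{\mathbf{t}}=\sigma^\omega(1)$ lie in the same minimal subshift and hence share their factor sets); reversal-closure is a classical consequence of the palindromicity of the blocks $\sigma^2(0)=0110$ and $\sigma^2(1)=1001$ (see \cite{Allouche--Shallit 2003b}).

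First I would record the dictionary between factors of $\mathbf{y}$ and factors of $\mathbf{t}$. Writing each letter of $\mathbf{y}=\blo(\mathbf{t},2)$ as $\mathbf{y}_i=2t_i+t_{i+1}$, every factor $u$ of $\mathbf{y}$ of length $m$ is exactly $\blo(w,2)$ for the factor $w=t_i\cdots t_{i+m}$ of $\mathbf{t}$ of length $m+1$ at which it occurs, and conversely $\blo(w,2)$ is a factor of $\mathbf{y}$ for every factor $w$ of $\mathbf{t}$ with $|w|\ge 2$. Next I would compute how each operation acts through this encoding. Flipping both bits of a block sends $2a+b\mapsto 3-(2a+b)$, i.e.\ $0\leftrightarrow3$ and $1\leftrightarrow2$, which is precisely $\tau'\circ\tau$; since this is performed letterwise it commutes with block coding, giving $(\tau'\circ\tau)(u)=\blo(\overline{w},2)$. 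Swapping the two bits of a block sends $2a+b\mapsto 2b+a$, i.e.\ $0\mapsto0$, $1\leftrightarrow2$, $3\mapsto3$, which is $\tau$; and this bit-swap is exactly the discrepancy between reversing a block coding and block-coding a reversed word, so that $\tau(u)^\mathrm{R}=\blo(w^\mathrm{R},2)$.

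With these two identities in hand, the lemma is immediate. Since $\overline{w}$ is a factor of $\mathbf{t}$, $\tau'(\tau(u))=\blo(\overline{w},2)$ is a factor of $\mathbf{y}$; since $w^\mathrm{R}$ is a factor of $\mathbf{t}$, $\tau(u)^\mathrm{R}=\blo(w^\mathrm{R},2)$ is a factor of $\mathbf{y}$; and combining the two (using that $\tau,\tau'$ are codings, hence commute with reversal, and that $\tau'=(\tau'\circ\tau)\circ\tau$) yields $\tau'(u)^\mathrm{R}=(\tau'\circ\tau)(\tau(u)^\mathrm{R})=\blo(\overline{w^\mathrm{R}},2)$, which is a factor of $\mathbf{y}$ because $\overline{w^\mathrm{R}}$ is a factor of $\mathbf{t}$. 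I expect the only real subtlety to be the reversal identity: block coding does not commute with reversal, and correctly accounting for the leftover bit-swap $\tau$ is the crux of the argument. Everything else is routine bookkeeping on the four-letter alphabet together with an appeal to the two classical symmetries of $\mathbf{t}$.
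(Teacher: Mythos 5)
Your proof is correct, but it takes a genuinely different route from the paper's. The paper works entirely inside $\mathbf{y}$ as a fixed point of $\nu$: it first establishes, by induction, the commutation identity $\tau'(\nu(u))^\mathrm{R} = a^{-1}\nu(\tau(u)^\mathrm{R})b$ (with boundary letters $a,b\in\{1,2\}$ determined by the first and last letters of $u$), then runs a second induction on $|w|$, de-substituting $w$ inside $\nu(v)$ for a shorter factor $v$ and applying that identity; the third claim is reduced to the first two via $\tau'(\tau(w)) = \tau'(\tau(w)^\mathrm{R})^\mathrm{R}$. You instead work through the coding $\mathbf{y}=\blo(\mathbf{t},2)$: writing a factor $u$ of $\mathbf{y}$ as $\blo(w,2)$ with $w$ a factor of $\mathbf{t}$, you identify $\tau'\circ\tau$ as the letterwise bit-complement, so $(\tau'\circ\tau)(u)=\blo(\overline{w},2)$, and $\tau$ as the bit swap that exactly absorbs the failure of $\blo(\cdot,2)$ to commute with reversal, so $\tau(u)^\mathrm{R}=\blo(w^\mathrm{R},2)$; the lemma then follows from the two classical symmetries of the Thue--Morse word (closure of its factor set under complementation and under reversal). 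I checked your two key identities, the dictionary between factors of $\mathbf{y}$ and factors of $\mathbf{t}$, and the composition $\tau'=(\tau'\circ\tau)\circ\tau$ (valid since $\tau^2=\mathrm{id}$); all are correct, and you rightly flag the reversal identity as the only delicate point. What your route buys is brevity and a conceptual explanation of where $\tau$ and $\tau'$ come from (bit swap and bit complement of the $2$-blocks); the cost is importing two external facts about $\mathbf{t}$, of which reversal-closure is the nontrivial one (classical, via the palindromes $\sigma^{2n}(0)$). The paper's route requires more bookkeeping but is self-contained in the morphism $\nu$ and runs in exact parallel with Lemma~\ref{lem:reversalPD} for the period-doubling word, which is presumably why the authors chose it; note that your block-coding argument would equally well re-derive Lemma~\ref{lem:reversalPD} from reversal-closure of $\fac(\mathbf{p})$, so your method is in fact the more unifiable of the two.
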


\begin{proof}
We prove the lemma for $\tau'(w)^\mathrm{R}$ and $\tau(w)^\mathrm{R}$ since $\tau'(\tau(w)) =\tau'(\tau(w)^\mathrm{R})^\mathrm{R}$.

We first prove by induction that for any factor $u$ starting with the letter $x$ and ending with the letter $y$, 
\begin{equation}\label{eq:rectau}
\tau'(\nu(u))^\mathrm{R} = a^{-1}\nu(\tau(u)^\mathrm{R})b
\end{equation}
where $a=1$ (respectively $a=2$, $b=1$, $b=2$) if and only if $y\in\{0,2\}$ (resp.\ $y\in \{1,3\}$, $x\in \{0,1\}$, $x\in \{2,3\}$).
Note that $a^{-1}\nu(\tau(u)^\mathrm{R})b$ is well defined. Indeed, if $y\in\{0,2\}$, then $\tau(u)^\mathrm{R}$ starts with $0$ or $1$ and thus $\nu(\tau(u)^\mathrm{R})$ starts with $a=1$. The same holds with $y\in \{1,3\}$.

The relation~\eqref{eq:rectau} is true for $u$ of length $1$. We have for example 
$$\tau'(\nu(0))^\mathrm{R} = 21 =1^{-1}\nu(0)1 =1^{-1}\nu(\tau(0)^\mathrm{R})1$$ and $$\tau'(\nu(1))^\mathrm{R} = 01 =  2^{-1}\nu(2)1=2^{-1}\nu(\tau(1)^\mathrm{R})1.$$

Let $u=u'yx$ be a factor with at least two letters $x$ and $y$.
Assume the conclusion holds for words of length at most $|u| - 1$.
By the induction hypothesis, we have $\tau'(\nu(u'y))^\mathrm{R}=a^{-1}\nu(\tau(u'y)^\mathrm{R})b$ and  $\tau'(\nu(x))^\mathrm{R}=c^{-1}\nu(\tau(x)^\mathrm{R})d$ with appropriate $a,b,c,d$. Since $yx$ is a factor, one can check using Lemma~\ref{lem:2facTM} that $a=d$. Indeed, if  $y\in \{0,2\}$, then $x\in\{0,1\}$. So $a=1$ and $d=1$. Similarly, if $y\in \{1,3\}$, then $x\in \{2,3\}$. Hence, $a=2$ and $d=2$. Thus, we have 
\begin{align*}
\tau'(\nu(u))^\mathrm{R}  &= \tau'(\nu(u'yx)^\mathrm{R})\\
        &= \tau'(\nu(x))^\mathrm{R}\tau'(\nu(u'y))^\mathrm{R}\\
        &= c^{-1}\nu(\tau(x)^\mathrm{R})da^{-1}\nu(\tau(u'y)^\mathrm{R})b\\
        &= c^{-1}\nu(\tau(u'yx)^\mathrm{R})b\\ 
        &= c^{-1}\nu(\tau(u)^\mathrm{R})b.       
\end{align*}

We can similarly prove by induction that for any factor $u$ starting with the letter $x$ and ending with the letter $y$, 
\begin{equation*}
\tau(\nu(u))^\mathrm{R} = a^{-1}\nu(\tau'(u)^\mathrm{R})b
\end{equation*}
where $a=1$ (respectively $a=2$, $b=1$, $b=2$) if and only if $y\in\{1,3\}$ (resp.\ $y\in \{0,2\}$, $x\in \{2,3\}$, $x\in \{0,1\}$).

We now prove the lemma (for $\tau$ and $\tau'$ together) by induction on the length of $w$. One can check by hand that the lemma is true for $w$ of length at most $4$. Assume the lemma is true for any factor of length at most $n\geq 4$, and let $w$ be a factor of length $n+1$.
There exist some factors $s$, $t$ and $v$ such that $swt=\nu(v)$, $0\leq |t|\leq 1$ and $1\leq |s| \leq 2$. 
Then we have $|v|\leq \frac{n+4}{2} \leq n$.
By the induction hypothesis, $\tau(v)^\mathrm{R}$ is a factor of $\mathbf{y}$. Hence $\nu(\tau(v)^\mathrm{R})$ is also a factor of $\mathbf{y}$. Using the previous result, $\tau'(\nu(v))^\mathrm{R}=a^{-1}\nu(\tau(v)^\mathrm{R})b$ for some letters $a$ and $b$. But we also have $\tau'(\nu(v))^\mathrm{R}=\tau'(t)^\mathrm{R}\tau'(w)^\mathrm{R}\tau'(s)^\mathrm{R}$ and since $s$ has at least one letter, $\tau'(w)^\mathrm{R}$ is a factor of $\nu(\tau(v)^\mathrm{R})$. Hence it is a factor of $\mathbf{y}$. We do the same proof for $\tau(w)^\mathrm{R}$.
\end{proof}

We are now ready to prove the relationship between $\mathcal{P}^{(1)}_{\mathbf{y}}(n)$ and $\Delta_{12}(n)$.

\begin{proof}[Proof of Proposition~\ref{prop:deltatoab_TM}]
Let $u$ be a factor of length $n$ of $\mathbf{y}$. Let $n_{12}=|u|_1+|u|_2$ and $n_{03}=|u|_0+|u|_3$. 

Assume first that $n$ is odd. 
If $n_{12}$ is even, then there are the same number of $1$'s and $2$'s in $u$ by Lemma~\ref{lem:balancedTM}.
Since $n_{13}$ is odd, if $|u|_0=|u|_3+1$ (resp.\ $|u|_3=|u|_0+1$), then $\tau'(u)^\mathrm{R}$ is a factor by Lemma~\ref{lem:reversalTM} and $|\tau'(u)^\mathrm{R}|_3=|\tau'(u)^\mathrm{R}|_0+1$ (resp.\ $|\tau'(u)^\mathrm{R}|_0=|\tau'(u)^\mathrm{R}|_3+1$).  In either case, $\tau'(u)^\mathrm{R}$ still has $n_{12}$ ones and twos. Hence there are exactly two abelian equivalence classes for fixed $n$ odd and $n_{12}$ even.
We can do the same reasoning if $n_{12}$ is odd. Finally, there are $\Delta_{12}(n)+1$ possible values for $n_{12}$ and thus $2(\Delta_{12}(n)+1)$ abelian equivalence classes for a fixed odd $n$.

Assume now that $n$ is even. If both $n_{12}$ and $n_{03}$ are even, then $u$ necessarily has the same number of $1$'s as $2$'s and the same number of $0$'s as $3$'s, and thus there is only one abelian equivalence class.
Hence assume that $n_{12}$ and $n_{03}$ are odd. We have $(|u|_0-|u|_3,|u|_1-|u|_2) \in \{-1,1\}^2$. By Lemma~\ref{lem:reversalTM}, the four factors  $u$, $\tau'(u)^\mathrm{R}$, $\tau(u)^\mathrm{R}$ and $\tau'(\tau(u))$ realize the four possibilities for $(|u|_0-|u|_3,|u|_1-|u|_2)$.
Hence if $n_{12}$ and $n_{03}$ are both odd, there are four abelian equivalence classes.

Now, we just have to count pairs $(n,n_{12})$ with $n$ and $n_{12}$ even.
If $\Delta_{12}(n)$ is odd, there are exactly $(\Delta_{12}(n)+1)/2$ such pairs. So there are 
$$1\cdot(\Delta_{12}(n)+1)/2+4\cdot(\Delta_{12}(n)+1)/2=\frac{5}{2}(\Delta_{12}(n)+1)$$ abelian classes for this value of $n$.
If $\Delta_{12}(n)$ is even and $\m_{12}(n)$ is odd, there are exactly $\Delta_{12}(n)/2$ even values for $n_{12}$, and so there are $$1\cdot \Delta_{12}(n)/2+4\cdot(\Delta_{12}(n)/2+1)=\frac{5}{2}\Delta_{12}(n)+4$$ abelian classes.
Finally, if $\Delta_{12}(n)$ is even and $\m_{12}(n)$ is even, there are $\Delta_{12}(n)/2+1$ even values for $n_{12}$, and so there are $$1\cdot (\Delta_{12}(n)/2+1)+4\cdot\Delta_{12}(n)/2=\frac{5}{2}\Delta_{12}(n)+1$$ abelian classes.
\end{proof}


\subsection{$\Delta_{12}(n)_{n\ge0}$ is $2$-regular, $(\m_{12}(n)\bmod{2})_{n\ge0}$ is $2$-automatic}\label{sec:42}

In this section, we prove the following result.

\begin{proposition}\label{prop:recdelta_TM}
Let $\ell\geq 1$ and $r$ such that $0\leq r < 2^{\ell}$. We have
  $$\Delta_{12}(2^\ell+r)=
  \begin{cases}
    \Delta_{12}(r)+1& \text{if } r\leq 2^{\ell-1} \\
    \Delta_{12}(2^{\ell+1}-r) & \text{if } r>2^{\ell-1}. \\
   \end{cases}
$$
Moreover,
$$\m_{12}(2^\ell+r)\equiv
  \begin{cases}
    \m_{12}(r)+\ell \pmod{2} & \text{if } r\leq 2^{\ell-1} \\
    \m_{12}(2^{\ell+1}-r)+\Delta_{12}(2^{\ell+1}-r) \pmod{2} & \text{if } r > 2^{\ell-1}. \\
   \end{cases}
$$
\end{proposition}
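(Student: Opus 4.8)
The plan is to follow the same architecture as the proof of Proposition~\ref{prop:recdelta_PD}: I would prove the two displayed relations not for $\Delta_{12}$ and $\m_{12}$ directly, but by first establishing companion relations for the two monotone functions $\M_{12}(n)_{n\ge 0}$ and $\m_{12}(n)_{n\ge 0}$, and then recovering the statement by subtraction (for $\Delta_{12}$) and by reducing modulo $2$ (for $\m_{12}$). The engine driving everything is the transfer identity
\[
	n_{12}(\nu(w)) = 2|w| - n_{12}(w), \qquad n_{12}(w) := |w|_1 + |w|_2,
\]
which one checks letter by letter on $\nu:0\mapsto 12,1\mapsto 13,2\mapsto 20,3\mapsto 21$. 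This plays the role that the equality $|\phi(u)|_0 = 2|u|_2$ played in the period-doubling case, except that it exchanges maxima and minima: a length-$m$ factor minimises $n_{12}$ if and only if its image under $\nu$ maximises $n_{12}$ among factors of length $2m$ (and vice versa). Making this precise is the analogue of Lemma~\ref{lem:max2max0}, and its proof should proceed exactly as there, using Lemma~\ref{lem:2facTM} to describe how an extremal factor of even length can be normalised at its two ends so that it becomes de-substitutable.

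With this transfer lemma in hand, I would first treat the powers of two (the analogue of Lemma~\ref{lem:power2PD}). Enumerating, by induction on $\ell$ and using Lemma~\ref{lem:balancedTM} together with the three symmetries of Lemma~\ref{lem:reversalTM}, the finitely many Parikh vectors realised by factors of length $2^\ell$, one obtains $\Delta_{12}(2^\ell)=1$ for all $\ell\ge 1$ and the companion relations
\[
	\M_{12}(2^{\ell+1}) = 2^{\ell+1}-\m_{12}(2^\ell), \qquad \m_{12}(2^{\ell+1}) = 2^{\ell+1}-\M_{12}(2^\ell).
\]
From these the crucial parity fact $\m_{12}(2^\ell)\equiv\ell\pmod 2$ follows by induction (using $\M_{12}(2^\ell)=\m_{12}(2^\ell)+1$); this is the source of the extra summand $\ell$ appearing in the statement, a term that has no counterpart in Proposition~\ref{prop:recdelta_PD} because there $\m_0(2^\ell)$ was always even.

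Next come the two halves of the recurrence. For $0\le r\le 2^{\ell-1}$ I would prove the additive relations $\M_{12}(2^\ell+r)=\M_{12}(2^\ell)+\M_{12}(r)$ and $\m_{12}(2^\ell+r)=\m_{12}(2^\ell)+\m_{12}(r)$ (the analogue of Lemma~\ref{lem:recminmax}): for $r$ even one builds an extremal preimage $u=vw$ of length $2^{\ell-1}+r/2$, applies the transfer identity to each of $\nu(v)$ and $\nu(w)$, and uses the symmetries of Lemma~\ref{lem:reversalTM} to force both $v$ and $w$ to be individually extremal. For $2^{\ell-1}\le r\le 2^\ell$ I would establish, through an intermediate relation $\M_{12}(2^{\ell+1})=\M_{12}(2^\ell+r)+\m_{12}(2^\ell-r)$ proven by induction, the reflection relation
\[
	\M_{12}(2^\ell+r)=2^{\ell+1}-\m_{12}(2^{\ell+1}-r), \qquad \m_{12}(2^\ell+r)=2^{\ell+1}-\M_{12}(2^{\ell+1}-r)
\]
(the analogue of Lemma~\ref{lem:recminmax2}), by combining the power-of-two relations with the additive relations applied to $2^\ell-r\le 2^{\ell-1}$. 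Subtracting the $\M_{12}$ and $\m_{12}$ identities in each range immediately yields $\Delta_{12}(2^\ell+r)=\Delta_{12}(r)+1$ and $\Delta_{12}(2^\ell+r)=\Delta_{12}(2^{\ell+1}-r)$, and reducing the $\m_{12}$ identities modulo $2$ (where $2^{\ell+1}\equiv 0$ and $\m_{12}(2^\ell)\equiv\ell$) gives the two congruences for $\m_{12}$, exactly as in the proof of Proposition~\ref{prop:recdelta_PD}.

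The main obstacle I anticipate is the odd-$r$ bookkeeping inside the additive and reflection lemmas. In the period-doubling case the interpolation between $r-1$ and $r+1$ worked cleanly because $\M_0$ and $\m_0$ are even on even arguments (Lemma~\ref{lem:minmaxeven}), so the two-step increments are $0$ or $2$ and the middle value is pinned down. For $\mathbf{y}$ there is no such evenness --- indeed $\M_{12}(4)=3$ is odd --- so a two-step increment of $1$ is possible and the squeeze alone does not determine $\M_{12}(2^\ell+r)$. Resolving this will require the more careful, constructive argument of matching the location of the jump of $\M_{12}$ near $2^\ell+r$ with the corresponding jump near $r$ (the explicit prefix construction carried out at the end of the proof of Lemma~\ref{lem:recminmax2}), now complicated by the four-letter alphabet and the need to track both the $1/2$ alternation and the $0/3$ alternation of Lemma~\ref{lem:balancedTM} simultaneously.
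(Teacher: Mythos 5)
Your proposal reproduces the paper's architecture almost lemma for lemma: the transfer identity exchanging minima and maxima under $\nu$ is Lemma~\ref{lem:minmaxphi}, the power-of-two analysis is Lemma~\ref{lem:power2TM}, the additive relations for $r\le 2^{\ell-1}$ are Lemma~\ref{lem:recminmaxTM}, and the intermediate relation $\M_{12}(2^{\ell+1})=\M_{12}(2^\ell+r)+\m_{12}(2^\ell-r)$ with its reformulation is Lemma~\ref{lem:recminmax2TM}; the final subtraction and reduction modulo $2$ is exactly the paper's proof of Proposition~\ref{prop:recdelta_TM}. Your derivation of $\m_{12}(2^\ell)\equiv\ell\pmod 2$ by induction from $\m_{12}(2^{\ell+1})=2^{\ell+1}-\m_{12}(2^\ell)-1$ is a harmless variant of the paper's, which reads the parity off the explicit values $A_\ell=\tfrac{2^{\ell+1}+(-1)^\ell}{3}$ and $B_\ell=\tfrac{2^{\ell+1}+2(-1)^{\ell+1}}{3}$.

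The genuine gap is precisely the point you flag at the end and leave unresolved: the odd-$r$ step. You are right that the period-doubling squeeze between $r-1$ and $r+1$ breaks down (there is no analogue of Lemma~\ref{lem:minmaxeven}), but the repair you sketch --- tracking and matching the locations of the jumps of $\M_{12}$ via explicit prefix constructions --- is never carried out, and it is far heavier than what is needed. The paper closes this case with Lemma~\ref{lem:minmaxnodd}: for \emph{odd} $n$ one has the exact equalities $\m_{12}(n)=\m_{12}(n+1)-1$ and $\M_{12}(n)=\M_{12}(n-1)+1$. Its proof is three lines and rests on a feature of $\nu$ with no counterpart for $\phi$: all four images $12,13,20,21$ begin with $1$ or $2$, so every factor of \emph{even} length either de-substitutes (hence starts with $1$ or $2$) or has its last letter at an even position of $\mathbf{y}$ (hence that letter is $1$ or $2$); deleting or adjoining such a letter changes $|u|_1+|u|_2$ by exactly one, and monotonicity of the extremal functions does the rest. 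With this lemma each odd-$r$ case collapses to a one-line computation, e.g.\ $\M_{12}(2^\ell+r)=\M_{12}(2^\ell+r-1)+1=\M_{12}(2^\ell)+\M_{12}(r-1)+1=\M_{12}(2^\ell)+\M_{12}(r)$, with no case analysis and no jump bookkeeping at all --- so the odd case for $\mathbf{y}$ is in fact \emph{easier} than for $\blo(\mathbf{p},2)$, not harder. Without Lemma~\ref{lem:minmaxnodd} or an equivalent substitute, the inductions in your additive and reflection lemmas do not close, so as written the proposal does not prove the proposition.
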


Note that those latter relations have a form similar to (but slightly different from) the assumptions of Theorem~\ref{thm:reflection_recurrence}. Before giving the proof, we prove a corollary. The $2$-regularity of $\mathcal{P}_{\mathbf{y}}^{(1)}(n)_{n\ge0}$ follows from Proposition~\ref{prop:deltatoab_TM} and Corollary~\ref{cor:2regDm_TM}.

\begin{corollary}\label{cor:2regDm_TM}
The following statements are true.
\begin{itemize}
	\item The sequence $\Delta_{12}(n)_{n\ge0}$ is $2$-regular.
	\item The sequence $(\Delta_{12}(n) \bmod 2)_{n\ge0}$ is $2$-automatic.
	\item The sequence $(\m_{12}(n)\bmod{2})_{n\ge0}$ is $2$-automatic.
\end{itemize}
\end{corollary}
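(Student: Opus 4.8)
The three assertions are of decreasing ease, so I would dispatch the first two quickly and concentrate on the third. For the first item, note that the recurrence for $\Delta_{12}$ in Proposition~\ref{prop:recdelta_TM} is exactly the hypothesis of Theorem~\ref{thm:reflection_recurrence} with $\ell_0 = 1$ and $c = 1$: it holds for every $\ell \geq 1$ and every $r$ with $0 \leq r \leq 2^\ell - 1$, and the two values $\Delta_{12}(0)$ and $\Delta_{12}(1)$ supply the $2^{\ell_0}$ initial conditions. Applying the theorem gives the $2$-regularity of $\Delta_{12}(n)_{n\ge0}$; this is precisely the instance $\ell_0 = 1$ announced in Example~\ref{exa:l0=2}. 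The second item is then immediate from Lemma~\ref{lem:regmod}, since the reduction modulo $2$ of any $2$-regular sequence is $2$-automatic.

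For the third item I would follow the template of the proof of Corollary~\ref{cor:2regDm_PD}. The plan is to fix a depth $a$ and exhibit, for every residue $0 \leq i < 2^a$, explicit congruences modulo $2$ expressing $\m_{12}(2^a n + i)$ and $\Delta_{12}(2^a n + i)$ in terms of a fixed finite family of subsequences $\m_{12}(2^b n + j) \bmod 2$ and $\Delta_{12}(2^b n + j) \bmod 2$ with $b < a$, and then to prove these congruences by a single induction on $n$. Writing $n = 2^\ell + r$, the induction splits according to whether $r \leq 2^{\ell-1}$ or $r > 2^{\ell-1}$ and invokes the matching branch of Proposition~\ref{prop:recdelta_TM} to descend to an argument in $r$, apply the induction hypothesis, and ascend back to an argument in $n$. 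Since the right-hand branch of the $\m_{12}$ recurrence couples $\m_{12}$ to $\Delta_{12}$ through the term $\Delta_{12}(2^{\ell+1}-r)$, the congruences for $\m_{12}\bmod 2$ and $\Delta_{12}\bmod 2$ must be verified simultaneously, exactly as $\m_0$ and $\Delta_0$ were handled together in Corollary~\ref{cor:2regDm_PD}. Once a closed finite system of such congruences is confirmed, the $2$-kernel of $(\m_{12}(n)\bmod 2)_{n\ge0}$ is finite and the sequence is $2$-automatic.

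The \emph{main obstacle}, and the feature absent from the period-doubling word, is the extra summand $\ell$ in the left branch $\m_{12}(2^\ell + r) \equiv \m_{12}(r) + \ell \pmod 2$ of Proposition~\ref{prop:recdelta_TM}, which makes the parity of $\m_{12}$ depend on the level $\ell = \lfloor \log_2 n \rfloor$. The point I expect to be decisive is that this dependence cancels within the induction: in the left-half case, descending from $2^a n + i$ to $2^a r + i$ contributes the level $a + \ell$, while ascending from $2^b r + j$ to $2^b n + j$ contributes the level $b + \ell$, and since these two applications of the left branch differ only by the fixed constant $a - b$, the $\ell$-terms cancel modulo $2$ and leave only a constant shift. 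Thus no genuine level-dependence survives and the system closes; should a residual dependence appear for some residues, it can be absorbed by carrying the auxiliary sequence $(\lfloor \log_2 n \rfloor \bmod 2)_{n\ge0}$, which is itself $2$-automatic (one checks $s(4n+j) = s(n)$ for $j \in \{0,1,2,3\}$ and $s(2n) = s(2n+1) = 1 - s(n)$, so its $2$-kernel is $\{s, 1-s\}$) and shifts predictably under $n \mapsto 2n$ and $n \mapsto 2n + 1$. A secondary care, again unlike the period-doubling case, is that there is no analogue of Lemma~\ref{lem:minmaxeven} fixing the parity of $\m_{12}$ on even arguments, so the residues that were trivial before must here also be governed by the congruences. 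With these points settled, the inductive checking is routine, and the resulting $2$-automaticity of $(\m_{12}(n) \bmod 2)_{n\ge0}$ feeds, via Proposition~\ref{prop:deltatoab_TM} and Lemma~\ref{lem:compo}, into the $2$-regularity of $\mathcal{P}^{(1)}_{\mathbf{y}}$.
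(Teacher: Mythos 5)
Your proposal is correct and follows essentially the same route as the paper: the first two items via Theorem~\ref{thm:reflection_recurrence} (with $\ell_0=1$, $c=1$) and Lemma~\ref{lem:regmod}, and the third via a simultaneous induction establishing explicit congruences for $\m_{12}(2^a n+i)$ and $\Delta_{12}(2^a n+i)$ modulo $2$ (the paper takes $a=4$, $b=2$, i.e., relations for $\m_{12}(16n+i)$ in terms of $\m_{12}(4n+j)$), with the two branches of Proposition~\ref{prop:recdelta_TM} used to descend and ascend. In particular, the point you single out as decisive---that the level terms $\ell+a$ and $\ell+b$ from the two applications of the left branch cancel modulo $2$ because $a-b$ is a fixed even constant---is exactly the mechanism in the paper's computation, so your fallback of tracking $\lfloor\log_2 n\rfloor \bmod 2$ is never needed.
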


\begin{proof}
The first assertion is a direct consequence of Proposition~\ref{prop:recdelta_TM} and Theorem~\ref{thm:reflection_recurrence}. The second assertion follows from Lemma~\ref{lem:regmod}. 

To prove the last assertion, we prove by induction that, modulo $2$,
\[
\m_{12}(16n+i) \equiv  \begin{cases}
     \m_{12}(4n) & \text{if } i=0\\
 \m_{12}(4n+1) &\text{if } i \in \{1,4,5\}\\
 \m_{12}(4n+1)+1& \text{if } i \in \{2,3\}\\
 \m_{12}(4n+2) &\text{if } i \in \{6,8,9\}\\
 \m_{12}(4n+2)+1& \text{if } i \in \{7,10\}\\
 \m_{12}(4n+3) &\text{if } i \in \{12,13,15\}\\
 \m_{12}(4n+3)+1& \text{if } i \in \{11,14\}
\end{cases}
\]
and
\[
\Delta_{12}(16n+i)\equiv 
\begin{cases}\Delta_{12}(4n)& \text{if } i=0\\
 \Delta_{12}(4n+1)& \text{if } i \in \{1,2,4\}\\
 \Delta_{12}(4n+1)+1 &\text{if } i \in \{3,5\}\\
 \Delta_{12}(4n+2)& \text{if } i=8\\
 \Delta_{12}(4n+2)+1& \text{if } i \in \{6,7,9,10\}\\
 \Delta_{12}(4n+3)& \text{if } i \in \{12,14,15\}\\
 \Delta_{12}(4n+3)+1 &\text{if } i \in \{11,13\}.
\end{cases}
\]
The relations are true for $n=0$. Let $n>0$ and assume they are true for $n'<n$. We can write $n=2^\ell+r$ with $\ell\geq 0$ and $0\leq r < 2^\ell$. Let $i\in\{0,\ldots,15\}$. We consider two cases.

Assume first that $r<2^{\ell-1}$. We have $16n+i=2^{\ell+4}+16r+i$ and $16r+i<2^{\ell+3}$.
\begin{align*}
\m_{12}(16n+i)&\equiv \m_{12}(16r+i) +\ell+4 \tag{Proposition~\ref{prop:recdelta_TM}}\\
& \equiv \m_{12}(4r+j)+\delta+\ell+4\tag{induction}\\
&\equiv \m_{12}(2^{\ell+2}+4r+j)+\delta\tag{Proposition~\ref{prop:recdelta_TM}}\\
&\equiv \m_{12}(4n+j)+\delta \pmod{2}
\end{align*}
for some $j\in\{0,\ldots,3\}$ and $\delta\in\{0,1\}$ according to the relations. A similar reasoning holds for the $\Delta_{12}$ relations.

Assume now that $r\geq 2^{\ell-1}$ and $i \neq 0$. Setting $i'=16-i$ and $n'=2^{\ell+1}-r-1$, we obtain $16n'+i'=2^{\ell+5}-16r-i$. It follows that, by Proposition~\ref{prop:recdelta_TM},
\begin{align*}
\m_{12}(16n+i)&\equiv \m_{12}(2^{\ell+5}-16r-i) + \Delta_{12}(2^{\ell+5}-16r-i)\\
&\equiv\m_{12}(16n'+i')+ \Delta_{12}(16n'+i') \\
&\equiv \m_{12}(4n'+k) +\delta + \Delta_{12}(4n'+k') +\delta'\tag{induction}
\end{align*}
for some $k,k'\in\{0,\ldots,3\}$ and $\delta,\delta'\in\{0,1\}$ according to the relations.
Note that we have $k = k'$, so
\begin{align*}
\m_{12}(16n+i)&\equiv \m_{12}(4n'+k) +\delta + \Delta_{12}(4n'+k) +\delta'\\
& \equiv \m_{12}(2^{\ell+3}-(4r+4-k)) +\delta + \Delta_{12}(2^{\ell+3}-(4r+4-k)) +\delta'\\
& \equiv \m_{12}(2^{\ell+2}+(4r+4-k)) +\delta + \delta'\tag{Proposition~\ref{prop:recdelta_TM}}\\
& \equiv \m_{12}(4n+(4-k)) +\delta + \delta' \pmod{2}.
\end{align*}

Table~\ref{tab:indicesTM} gives the values of $i'$, $k$, $\delta$ and $\delta'$ for all the values of $i\ne 0$. Observe that the values of $4-k$ and $(\delta+\delta'\bmod{2})$ are the values given in the relation for $i$. To conclude the proof, consider the case $i=0$. We have
\begin{align*}
	\m_{12}(16n)
	&\equiv \m_{12}(16(2^{\ell+1}-r)) + \Delta_{12}(16 (2^{\ell+1}-r))&\text{(Proposition~\ref{prop:recdelta_TM})}\phantom{.} \\
	&\equiv \m_{12}(4(2^{\ell+1}-r))+\Delta_{12}(4(2^{\ell+1}-r))&\text{(induction)}\phantom{.} \\
	&\equiv \m_{12}(4n) \pmod{2}&\text{(Proposition~\ref{prop:recdelta_TM})}.
\end{align*}
A similar reasoning works for the $\Delta_{12}$ relations.\qedhere
\begin{table}[h!tb]
\[\begin{array}{|c||c|c|c|c|c|c|c|c|c|c|c|c|c|c|c|}
\hline
i&1&2&3&4&5&6&7&8&9&10&11&12&13&14&15\\\hline
i'&15&14&13&12&11&10&9&8&7&6&5&4&3&2&1\\\hline
k&3&3&3&3&3&2&2&2&2&2&1&1&1&1&1\\\hline
\delta&0&1&0&0&1&1&0&0&1&0&0&0&1&1&0\\\hline
\delta'&0&0&1&0&1&1&1&0&1&1&1&0&1&0&0\\\hline
\end{array}
\]
    \caption{The corresponding values of $i'=16-i$, $k$, $\delta$ and $\delta'$.}
    \label{tab:indicesTM}
\end{table}
\end{proof}

Proposition~\ref{prop:recdelta_TM} is a direct consequence of Lemmas~\ref{lem:power2TM},~\ref{lem:recminmaxTM} and~\ref{lem:recminmax2TM} given in this section.

\begin{lemma}\label{lem:power2TM}
  Let $\ell \in \mathbb N$, $\ell\geq 1$. We have $\Delta_{12}(2^\ell)=1,\ \m_{12}(2^\ell)\equiv \ell \pmod{2},$
$$\m_{12}(2^\ell)+\M_{12}(2^{\ell+1})=2^{\ell+1}\text{ and } \M_{12}(2^\ell)+\m_{12}(2^{\ell+1})=2^{\ell+1}.$$ 
\end{lemma}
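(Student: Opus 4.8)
The plan is to reduce everything to a single clean identity for the morphism $\nu$ and then run one induction. The engine is the letter-by-letter observation that $|\nu(x)|_1 + |\nu(x)|_2 = 2 - (|x|_1 + |x|_2)$ for each $x \in \{0,1,2,3\}$, which one checks on the four images $12,13,20,21$; summing over the letters of an arbitrary factor $u$ gives
\[
	|\nu(u)|_1 + |\nu(u)|_2 = 2|u| - \big(|u|_1 + |u|_2\big).
\]
Writing $a_\ell = |\nu^\ell(0)|_1 + |\nu^\ell(0)|_2$ and $b_\ell = |\nu^\ell(1)|_1 + |\nu^\ell(1)|_2$, this identity yields the recurrences $a_\ell = 2^\ell - a_{\ell-1}$ and $b_\ell = 2^\ell - b_{\ell-1}$ with $a_0 = 0$, $b_0 = 1$. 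From these I immediately read off that $a_\ell$ is always even, $b_\ell$ is always odd, $a_\ell - b_\ell = (-1)^{\ell+1}$ (so the two differ by exactly $1$), and $a_\ell + a_{\ell-1} = b_\ell + b_{\ell-1} = 2^\ell$. Since $0$ and $3$ (resp.\ $1$ and $2$) play symmetric roles for this statistic, $\nu^\ell(3)$ also contributes the value $a_\ell$ and $\nu^\ell(2)$ the value $b_\ell$.

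The heart of the proof is the claim that for every $\ell \geq 0$ the set of values $\{|u|_1+|u|_2 : u \text{ a factor of } \mathbf{y},\ |u| = 2^\ell\}$ equals $\{a_\ell, b_\ell\}$. I would prove this by induction on $\ell$, the base case $\ell = 0$ being the four single letters. Both values are always attained, by the factors $\nu^\ell(0)$ and $\nu^\ell(1)$. For the upper part --- that no factor escapes $\{a_\ell, b_\ell\}$ --- I split a factor $w$ of length $2^\ell$ into two types. If $w$ de-substitutes, $w = \nu(v)$ with $|v| = 2^{\ell-1}$, and the displayed identity together with the induction hypothesis gives $|w|_1+|w|_2 = 2^\ell - (|v|_1+|v|_2) \in \{a_\ell, b_\ell\}$.

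The only real obstacle is the non-de-substitutable case, i.e.\ a factor $w$ occurring at an odd index. Here I would use the bookkeeping from the proof of Lemma~\ref{lem:power2PD}: such a $w$ is obtained from the block-aligned factor $\nu(v')$ starting one position to its left (with $|v'| = 2^{\ell-1}$) by deleting the first letter of $\nu(v')$ and appending the next letter of $\mathbf{y}$. The key point --- and what makes this case collapse rather than spread the values out --- is that \emph{every} image $\nu(x)$ begins with a letter of $\{1,2\}$. Hence the deleted letter (the first letter of $\nu(v')$) and the appended letter (the first letter of the following $\nu$-block) both lie in $\{1,2\}$, so one contribution to $|w|_1+|w|_2$ is removed and one is added; thus $|w|_1 + |w|_2 = |\nu(v')|_1 + |\nu(v')|_2 \in \{a_\ell, b_\ell\}$ by the de-substitutable case. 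This closes the induction, and I expect this cancellation to be the only subtle point.

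Finally I would harvest the four assertions of the lemma from the claim. Since $\{a_\ell, b_\ell\}$ are consecutive integers, $\Delta_{12}(2^\ell) = 1$ and $\m_{12}(2^\ell) = \min(a_\ell, b_\ell)$; using that $a_\ell$ is even, $b_\ell$ is odd, and $a_\ell - b_\ell = (-1)^{\ell+1}$ gives $\m_{12}(2^\ell) \equiv \ell \pmod 2$. For the two ``reflection at powers of $2$'' identities, the claim at level $\ell+1$ gives $\M_{12}(2^{\ell+1}) = \max(a_{\ell+1}, b_{\ell+1})$ and $\m_{12}(2^{\ell+1}) = \min(a_{\ell+1}, b_{\ell+1})$; substituting $a_{\ell+1} = 2^{\ell+1} - a_\ell$ and $b_{\ell+1} = 2^{\ell+1} - b_\ell$ turns these into $\M_{12}(2^{\ell+1}) = 2^{\ell+1} - \m_{12}(2^\ell)$ and $\m_{12}(2^{\ell+1}) = 2^{\ell+1} - \M_{12}(2^\ell)$, which are exactly the claimed relations. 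The remaining arithmetic with $a_\ell$ and $b_\ell$ is routine.
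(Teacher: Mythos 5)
Your proposal is correct and follows essentially the same route as the paper's proof: the same two-valued claim $\{|u|_1+|u|_2 : |u|=2^\ell\}=\{a_\ell,b_\ell\}$ proved by induction, the same identity $|\nu(u)|_1+|\nu(u)|_2=2|u|-(|u|_1+|u|_2)$, and the same cancellation for odd-index factors $w=a^{-1}\nu(v')b$ with $a,b\in\{1,2\}$. The only differences are cosmetic (you define $a_\ell,b_\ell$ by recurrence where the paper gives the closed forms $A_\ell,B_\ell$, and you start the induction at $\ell=0$ rather than $\ell=1$).
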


\begin{proof}
Let $\ell\geq 1$, $A_\ell=\frac{2^{\ell+1}+(-1)^{\ell}}{3}$ and $B_\ell=\frac{2^{\ell+1}+2(-1)^{\ell+1}}{3}$.
The sequences 
$$(A_\ell)_{\ell\ge 1}=(1,3,5,11,21,\ldots) \text{ and } (B_\ell)_{\ell\ge 1}=(2,2,6,10,22,\ldots)$$ 
are integer sequences and both satisfy the recurrence relation $X_{\ell+1}=2^{\ell+1}-X_\ell$. Moreover we have $A_\ell=B_\ell+1$ for even $\ell$ and $B_\ell=A_\ell+1$ for odd $\ell$. Note that $|\nu^\ell(1)|_1+|\nu^\ell(1)|_2=A_\ell$ and $|\nu^\ell(0)|_1+|\nu^\ell(0)|_2=B_\ell$.

We show by induction that 
\begin{align*}
\left\{|w|_1+|w|_2 :w\text{ factor of }\mathbf{y}\text{ with }|w|=2^\ell\right\}= \{A_\ell,B_\ell\}.
\end{align*}

Note that this result will imply the lemma and that we already have $A_\ell$ and $B_\ell$ in the set.

It is easy to check the result for $\ell=1$. Assume the result is true for $\ell\geq 1$. Let $w$ be a factor of $\mathbf{y}$ of length $2^{\ell+1}$. 
If $w$ can be de-substituted, then $w=\nu(u)$ and $|w|_1+|w|_2=2|u|_0+|u|_1+|u|_2+2|u|_3$ as in the proof of Lemma~\ref{lem:balancedTM}. Hence $|w|_1+|w|_2=2|u|-(|u|_1+|u|_2)=2^{\ell+1}-(|u|_1+|u|_2)$. Using the recurrence relation for $A_\ell$ and $B_\ell$ and since $|u|_1+|u|_2\in \{A_{\ell},B_{\ell}\}$, we have $|w|_1+|w|_2 \in \{A_{\ell+1},B_{\ell+1}\}$.
If $w$ cannot be de-substituted, then we can write $w=a^{-1}\nu(u)b$ for some letters $a,b\in \{1,2\}$ and $|\nu(u)|=2^{\ell+1}$. So $|w|_1+|w|_2=|\nu(u)|_1+|\nu(u)|_2$. Since we already proved that $|\nu(u)|_1+|\nu(u)|_2 \in \{A_{\ell+1},B_{\ell+1}\}$, we are done.

To prove the second assertion of the lemma, observe that $\m_{12}(2^\ell)=A_\ell$ if $\ell$ is odd and $\m_{12}(2^\ell)=B_\ell$ if $\ell$ is even. Furthermore, $A_\ell$ is always odd whereas $B_\ell$ is always even.
\end{proof}

In order to prove Lemmas~\ref{lem:recminmaxTM} and~\ref{lem:recminmax2TM}, we first need some technical results.

\begin{lemma}\label{lem:minmaxphi}
Let $u$ be a factor of $\mathbf{y}$ of length $n$.
We have $|u|_1+|u|_2=\M_{12}(n)$ if and only if $|\nu(u)|_1+|\nu(u)|_2=\m_{12}(2n)$, and
$|u|_1+|u|_2=\m_{12}(n)$ if and only if $|\nu(u)|_1+|\nu(u)|_2=\M_{12}(2n)$.
\end{lemma}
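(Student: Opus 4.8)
The plan is to reduce the entire statement to a single counting identity for the morphism $\nu$ together with the two reflection relations $\m_{12}(2n)=2n-\M_{12}(n)$ and $\M_{12}(2n)=2n-\m_{12}(n)$. First I would record the identity
\[
	|\nu(u)|_1+|\nu(u)|_2 = 2n-(|u|_1+|u|_2),
\]
valid for every factor $u$ of length $n$. This is obtained by summing contributions letter by letter: since $|\nu(0)|_1+|\nu(0)|_2=|\nu(3)|_1+|\nu(3)|_2=2$ and $|\nu(1)|_1+|\nu(1)|_2=|\nu(2)|_1+|\nu(2)|_2=1$, one gets $|\nu(u)|_1+|\nu(u)|_2 = 2(|u|_0+|u|_3)+(|u|_1+|u|_2)=2n-(|u|_1+|u|_2)$, using $|u|_0+|u|_3=n-(|u|_1+|u|_2)$. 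This is exactly the computation already appearing in the proof of Lemma~\ref{lem:power2TM}.

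Granting this identity, the lemma becomes immediate once the two reflection relations are known. Indeed, $|u|_1+|u|_2=\M_{12}(n)$ holds if and only if $2n-(|u|_1+|u|_2)=2n-\M_{12}(n)=\m_{12}(2n)$, i.e.\ if and only if $|\nu(u)|_1+|\nu(u)|_2=\m_{12}(2n)$; the minimum case is symmetric, using $\M_{12}(2n)=2n-\m_{12}(n)$.

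It therefore remains to establish the two reflection relations, and here I would argue by inequalities in both directions. One direction follows from applying the identity to an optimal factor: if $u$ of length $n$ has $|u|_1+|u|_2=\M_{12}(n)$, then $\nu(u)$ has length $2n$ and $|\nu(u)|_1+|\nu(u)|_2=2n-\M_{12}(n)$, whence $\m_{12}(2n)\le 2n-\M_{12}(n)$; symmetrically $\M_{12}(2n)\ge 2n-\m_{12}(n)$. For the reverse inequalities I would show that every factor $w$ of length $2n$ satisfies $|w|_1+|w|_2=2n-(|v|_1+|v|_2)$ for some factor $v$ of length $n$. If $w$ can be de-substituted, then $w=\nu(v)$ and the identity applies directly. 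Otherwise $w$ occurs at an odd index, and the decomposition used in the even-length case of Lemma~\ref{lem:balancedTM} gives $w=a^{-1}\nu(v)b$ with $a,b\in\{1,2\}$ and $|v|=n$; since both $a$ and $b$ lie in $\{1,2\}$, deleting $a$ and appending $b$ leaves $|w|_1+|w|_2=|\nu(v)|_1+|\nu(v)|_2=2n-(|v|_1+|v|_2)$. In both cases $|w|_1+|w|_2\ge 2n-\M_{12}(n)$ and $|w|_1+|w|_2\le 2n-\m_{12}(n)$, yielding $\m_{12}(2n)\ge 2n-\M_{12}(n)$ and $\M_{12}(2n)\le 2n-\m_{12}(n)$. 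Combining the two directions gives both relations.

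The only subtle point is the odd-index case: one must verify that the two border letters lost and gained in passing between $w$ and $\nu(v)$ are both in $\{1,2\}$, so that $|\cdot|_1+|\cdot|_2$ is preserved. This is precisely the content of the even-length analysis in Lemma~\ref{lem:balancedTM}, where these border letters arise as first letters of $\nu$-blocks and are hence always $1$ or $2$; everything else is routine bookkeeping with the identity above.
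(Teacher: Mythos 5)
Your proposal is correct: the counting identity $|\nu(u)|_1+|\nu(u)|_2=2n-(|u|_1+|u|_2)$ holds (each of $0,3$ contributes $2$ and each of $1,2$ contributes $1$ under $\nu$), and the decomposition of a non-de-substitutable factor of even length as $a^{-1}\nu(v)b$ with $a,b\in\{1,2\}$ is exactly right, since the border letters are first letters of $\nu$-blocks. The paper's proof uses the same two ingredients, but organizes them as a proof by contradiction: assuming, say, that $u$ maximizes while $\nu(u)$ does not minimize, it invokes the intermediate-value property of the sliding window (every count between $\m_{12}(2n)$ and $\M_{12}(2n)$ is attained) to produce a factor $w$ of length $2n$ with exactly one fewer occurrence of $1$ and $2$ than $\nu(u)$, reduces $w$ to a de-substitutable factor $\nu(v)$ via the same border-letter trick, and derives $|v|_1+|v|_2=\M_{12}(n)+1$, a contradiction. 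Your route is slightly cleaner on two counts: it dispenses with the intermediate-value argument entirely (only the existence of optimal factors and the decomposition of arbitrary factors are needed), and it isolates the reflection relations $\m_{12}(2n)=2n-\M_{12}(n)$ and $\M_{12}(2n)=2n-\m_{12}(n)$ for \emph{all} $n$ as an explicit intermediate statement, from which the biconditional follows by injectivity of $t\mapsto 2n-t$. The paper only records such reflection relations at powers of two (Lemma~\ref{lem:power2TM}) and in the restricted form of Lemma~\ref{lem:recminmax2TM}, so your global version is a genuinely reusable strengthening; the cost is negligible, since the two-sided inequality bookkeeping is no longer than the contradiction argument it replaces.
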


\begin{proof}
Recall that $|\nu(u)|_1+|\nu(u)|_2=2n-(|u|_1+|u|_2)$.
Assume that $|u|_1+|u|_2=\M_{12}(n)$ and  that $|\nu(u)|_1+|\nu(u)|_2=x>\m_{12}(2n)$. Thus $x=2n-\M_{12}(n)$. 
There exists a factor $w$ of length $2n$ with $x-1$ ones and twos. We can assume that $w$ can be de-substituted. Otherwise, we can write $w$ as $w=a^{-1}\nu(v)b$ for some $a,b\in \{1,2\}$. Thus $\nu(v)$ has the same length as $w$ and the same number of $1$'s and $2$'s. So we can assume $w=\nu(v)$. Then $|v|_1+|v|_2=2n-(x-1)=\M_{12}(n)+1$, a contradiction.

For the other direction, assume that $|u|_1+|u|_2=x<\M_{12}(n)$ and  that $|\nu(u)|_1+|\nu(u)|_2=\m_{12}(2n)$. Thus $x=n-\m_{12}(n)$. As before, there exists a factor $v$ of length $n$ with $x+1$ ones and twos. Then $\nu(v)$ has $\m_{12}(n)-1$ ones and twos, a contradiction.  

The second part of the lemma is similar.
\end{proof}

\begin{lemma}\label{lem:minmaxnodd}
Let $n$ be an odd integer. Then we have
\begin{align*}
	\m_{12}(n)&=\m_{12}(n+1)-1, \\
	\M_{12}(n)&=\M_{12}(n-1)+1.
\end{align*}
\end{lemma}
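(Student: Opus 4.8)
The plan is to reduce both identities to a single structural fact about $\mathbf{y}$. Since $\mathbf{y}=\nu(\mathbf{y})$ is a fixed point and each image $\nu(x)$ with $x\in\{0,1,2,3\}$ begins with a letter of $\{1,2\}$, every letter of $\mathbf{y}$ sitting at an even index belongs to $\{1,2\}$. Writing $a(u):=|u|_1+|u|_2$ for the statistic whose extrema are $\M_{12}$ and $\m_{12}$, this observation says that a letter at an even absolute position always contributes $1$ to $a$. I would record this first, as it is the only point where the particular shape of $\nu$ is used; everything else is counting. (The companion fact that $0$ and $3$ never sit adjacent, available from Lemma~\ref{lem:2facTM}, is not even needed for this lemma.)

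Next I would dispatch the easy inequalities, which are pure continuity: deleting one letter changes $a$ by $0$ or $1$, so $\M_{12}(n)\le\M_{12}(n-1)+1$ (pass to the length-$(n-1)$ prefix of a maximizing factor) and $\m_{12}(n+1)\le\m_{12}(n)+1$ (extend a minimizing factor of length $n$ by one letter) hold for all $n$. It then remains to prove the reverse inequalities $\M_{12}(n)\ge\M_{12}(n-1)+1$ and $\m_{12}(n+1)\ge\m_{12}(n)+1$ when $n$ is odd, and this is exactly where oddness enters, through a parity-of-endpoints trick. The common mechanism is that a factor of \emph{even} length $L$ occupies $L$ consecutive indices, so its two endpoints, and likewise its two outer neighbours, have opposite parities; hence exactly one member of each such pair lies at an even index and is therefore a letter of $\{1,2\}$.

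For the maximum I would take a factor $W$ of even length $n-1$ with $a(W)=\M_{12}(n-1)$, occurring at some index $s$. One of its neighbours, at $s-1$ or $s+n-1$, lies at an even index and so is a letter of $\{1,2\}$; appending it produces a factor of length $n$ with $a$-value $\M_{12}(n-1)+1$. If the admissible neighbour is the right one it exists because $\mathbf{y}$ is infinite; if it is the left one then $s-1$ is even, so $s$ is odd, hence $s\ge 1$ and the extension is again legitimate. For the minimum I would take a factor $V$ of even length $n+1$ with $a(V)=\m_{12}(n+1)$: one of its two endpoints lies at an even index, so deleting that endpoint removes a letter of $\{1,2\}$ and yields a factor of length $n$ with $a$-value $\m_{12}(n+1)-1\ge\m_{12}(n)$. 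Combining with the easy inequalities gives $\M_{12}(n)=\M_{12}(n-1)+1$ and $\m_{12}(n+1)=\m_{12}(n)+1$, the latter being precisely the claimed $\m_{12}(n)=\m_{12}(n+1)-1$.

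The only delicate point—and it is a mild one—is ensuring that the endpoint or neighbour one wants to use is genuinely at an even index and genuinely corresponds to a factor of $\mathbf{y}$; this is guaranteed by the opposite-parity statement for even-length windows, together with the trivial remark that a left extension is invoked only when the starting index is odd and hence positive. Because the whole argument rests on the single fact that even positions carry letters of $\{1,2\}$, no further analysis of the block structure of $\mathbf{y}$ is required, and I expect the proof to be short.
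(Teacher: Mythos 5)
Your proof is correct and follows essentially the same route as the paper: both rest on the fact that every letter of $\mathbf{y}$ at an even index (i.e., at the start of a $\nu$-image) lies in $\{1,2\}$, so by parity an even-length factor has an endpoint (resp.\ an outer neighbour) in $\{1,2\}$, and the conclusion follows from the continuity of $\m_{12}$ and $\M_{12}$. The only cosmetic difference is that for the maximum the paper extends a maximizing factor on both sides to $aub$, whereas you append a single letter chosen by parity, which incidentally avoids the paper's implicit assumption that the factor occurs at a positive index.
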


\begin{proof}
Let $u$ be a factor of even length $n+1$ minimizing the number of $1$'s and $2$'s. Then either $u$ starts with $1$ or $2$, or ends with $1$ or $2$. Indeed, if $u$ can be de-substituted, then it starts with $1$ or $2$. Otherwise, its last letter is the beginning of an image of $\nu$ and thus is $1$ or $2$.
Removing this letter, we get a word of length $n$ with $\m_{12}(n+1)-1$ ones and twos. Since the function $\m_{12}$ increases by $0$ or $1$ from $n$ to $n+1$, we have $\m_{12}(n)=\m_{12}(n+1)-1$.

For the second equality, consider a factor $u$ of even length $n-1$ with $\M_{12}(n-1)$ ones and twos. There exist two letters $a$ and $b$ such that $aub$ is a factor. Then, as before, since $aub$ has even length, $a$ or $b$ must be a $1$ or a $2$. Then $au$ or $ub$ is a factor of length $n$ with $\M_{12}(n-1)+1$ ones and twos and we conclude as before.
\end{proof}

\begin{lemma}\label{lem:recminmaxTM}
If $\ell\geq 1$ and $0\leq r\leq 2^{\ell-1}$, then 
\begin{align*}
\M_{12}(2^\ell+r)&=\M_{12}(2^\ell)+\M_{12}(r)\\
\m_{12}(2^\ell+r)&=\m_{12}(2^\ell)+\m_{12}(r).
\end{align*}
\end{lemma}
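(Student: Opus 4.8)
The plan is to argue by induction on $\ell$, treating the two parities of $r$ separately, and to off-load the real content onto the two lemmas already at our disposal: Lemma~\ref{lem:minmaxphi}, which controls how the statistic $|\cdot|_1+|\cdot|_2$ behaves under the morphism $\nu$, and Lemma~\ref{lem:minmaxnodd}, which relates the odd-length values of $\M_{12}$ and $\m_{12}$ to the neighbouring even-length ones.

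First I would extract from Lemma~\ref{lem:minmaxphi} the two clean identities
\[
	\M_{12}(2m)=2m-\m_{12}(m), \qquad \m_{12}(2m)=2m-\M_{12}(m),
\]
valid for all $m\ge 0$. These follow at once because $|\nu(u)|_1+|\nu(u)|_2=2|u|-(|u|_1+|u|_2)$, so a factor minimizing $|\cdot|_1+|\cdot|_2$ at length $m$ is carried by $\nu$ to one maximizing it at length $2m$, and conversely; the factors of length $2m$ that are not of the form $\nu(u)$ are exactly what Lemma~\ref{lem:minmaxphi} takes care of. The base case $\ell=1$ (only $r\in\{0,1\}$) is then checked directly, the value $r=0$ being trivial for every $\ell$.

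For the inductive step with $\ell\ge 2$, I would first dispose of the even values $r=2r'$, where $0\le r'\le 2^{\ell-2}$. Writing $2^\ell+r=2(2^{\ell-1}+r')$ and applying the first flip identity with $m=2^{\ell-1}+r'$ gives
\[
	\M_{12}(2^\ell+r)=2(2^{\ell-1}+r')-\m_{12}(2^{\ell-1}+r').
\]
The induction hypothesis at level $\ell-1$ turns $\m_{12}(2^{\ell-1}+r')$ into $\m_{12}(2^{\ell-1})+\m_{12}(r')$, and regrouping together with a second application of the flip identity (with $m=2^{\ell-1}$ and with $m=r'$) recovers exactly $\M_{12}(2^\ell)+\M_{12}(2r')=\M_{12}(2^\ell)+\M_{12}(r)$. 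The computation for $\m_{12}$ is identical after exchanging the roles of the two flip identities.

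Finally, for odd $r$ the argument I would use is shorter still. Since $2^\ell+r$ is then odd, Lemma~\ref{lem:minmaxnodd} gives $\M_{12}(2^\ell+r)=\M_{12}(2^\ell+r-1)+1$ and $\m_{12}(2^\ell+r)=\m_{12}(2^\ell+r+1)-1$. Here $r-1$ and $r+1$ are even and still lie in the admissible range $[0,2^{\ell-1}]$, so the even case just established applies to $2^\ell+(r-1)$ and to $2^\ell+(r+1)$; a second use of Lemma~\ref{lem:minmaxnodd}, now at $n=r$, converts $\M_{12}(r-1)+1$ into $\M_{12}(r)$ and $\m_{12}(r+1)-1$ into $\m_{12}(r)$, closing both equalities. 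I expect the only delicate points to be purely bookkeeping: keeping the max--min flip of Lemma~\ref{lem:minmaxphi} straight, and checking that the shifted arguments $r'$ and $r\pm1$ remain in the ranges where the induction hypothesis and the even case are available. There is no genuine combinatorial obstacle once the two flip identities are in hand, which is what makes the Thue--Morse computation cleaner than its period-doubling analogue in Lemma~\ref{lem:recminmax}.
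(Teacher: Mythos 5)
Your proof is correct, and its skeleton coincides with the paper's: induction on $\ell$, a split on the parity of $r$, Lemma~\ref{lem:minmaxphi} powering the even case, and Lemma~\ref{lem:minmaxnodd} reducing the odd case to the even one (your odd case is in fact identical to the paper's). Where you genuinely diverge is in how the even case is executed. The paper works with explicit factors: it takes a factor $u$ of length $2^{\ell-1}+r/2$ realizing $\m_{12}$, writes $u=vw$ with $|v|=2^{\ell-1}$ and $|w|=r/2$, argues that both halves must themselves be minimizing, and then pushes $u$, $v$, $w$ through $\nu$ via Lemma~\ref{lem:minmaxphi}. You instead package Lemma~\ref{lem:minmaxphi} together with the count $|\nu(u)|_1+|\nu(u)|_2=2|u|-\left(|u|_1+|u|_2\right)$ into the two numerical identities $\M_{12}(2m)=2m-\m_{12}(m)$ and $\m_{12}(2m)=2m-\M_{12}(m)$, after which the inductive step is pure arithmetic; this derivation is sound, since Lemma~\ref{lem:minmaxphi} exactly disposes of the even-length factors that cannot be de-substituted. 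One point worth making explicit: your $\M_{12}$ step at level $\ell$ consumes the $\m_{12}$ hypothesis at level $\ell-1$ and vice versa, so the two equalities must be carried as a joint induction --- but the paper's induction is joint for the same reason, so nothing is lost. As for what each approach buys: yours is shorter and eliminates the one delicate combinatorial step (that both halves of an extremal factor are extremal), reducing the even case to bookkeeping; the paper's constructive style, on the other hand, produces explicit extremal factors whose prefixes are also extremal, which is not needed for this lemma but is precisely the extra structural information asserted and used in the lemma that follows it (the one preceding Lemma~\ref{lem:recminmax2TM}).
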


\begin{proof}
We prove the two results together by induction on $\ell$. One checks the case $\ell=1$. Let $\ell>1$ and assume the result is true for $\ell-1$. Let $r$ such that $0\leq r\leq 2^{\ell-1}$.

Assume first that $r$ is even. By the induction hypothesis, there exists a factor $u$ of length $2^{\ell-1}+r/2$ such that 
$$|u|_1+|u|_2=\m_{12}(2^{\ell-1}+r/2)=\m_{12}(2^{\ell-1})+\m_{12}(r/2).$$ 
We can write $u=vw$ with $v$ of length $2^{\ell-1}$ and $w$ of length $r/2$. Both the words $v$ and $w$ must minimize the number of $1$'s and $2$'s for their respective lengths. By Lemma~\ref{lem:minmaxphi}, $\nu(u)=\nu(v)\nu(w)$ maximizes the number of $1$'s and $2$'s and so do $\nu(v)$ and $\nu(w)$. Thus, $\M_{12}(2^\ell+r)=|\nu(u)|_1+|\nu(u)|_2$ and
\[
\M_{12}(2^\ell+r)=|\nu(v)|_1+|\nu(v)|_2+|\nu(w)|_1+|\nu(w)|_2=\M_{12}(2^{\ell})+\M_{12}(r).
\]
A similar proof shows that $\m_{12}(2^\ell+r)=\m_{12}(2^{\ell})+\m_{12}(r)$.

Assume now that $r$ is odd. We still have $0\leq r-1 < r+1\leq 2^{\ell-1}$. Hence we can apply the previous result to obtain  $\M_{12}(2^\ell+r-1)=\M_{12}(2^{\ell})+\M_{12}(r-1)$.
By Lemma~\ref{lem:minmaxnodd}, 
\begin{align*}
	\M_{12}(2^\ell+r)
	&=\M_{12}(2^\ell+r-1)+1\\
	&=\M_{12}(2^{\ell})+\M_{12}(r-1)+1\\
	&=\M_{12}(2^{\ell})+\M_{12}(r).
\end{align*}

For the $\m_{12}$ equality, a similar argument holds (using the previous result for $r+1$).
\end{proof}

\begin{lemma}
    If $\ell\geq 1$ and $2^{\ell-1}\leq r \leq 2^\ell$, then
    \begin{align*}
	\M_{12}(2^{\ell+1})&=\M_{12}(2^\ell+r)+\m_{12}(2^\ell-r)\\
	\m_{12}(2^{\ell+1})&=\m_{12}(2^\ell+r)+\M_{12}(2^\ell-r).
	\end{align*}
Moreover, there is a factor of length $2^{\ell+1}$ maximizing (resp.\ minimizing) the number of $1$'s and $2$'s such that the prefix of length $2^\ell+r$ also maximizes (resp.\ minimizes) the number of $1$'s and $2$'s.
\end{lemma}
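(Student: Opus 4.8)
The plan is to argue by induction on $\ell$, with the base case $\ell = 1$ (where $r \in \{1,2\}$) checked directly using the values supplied by Lemma~\ref{lem:power2TM}; for instance $\M_{12}(4) = 4 - \m_{12}(2) = 3$ and $\m_{12}(4) = 4 - \M_{12}(2) = 2$. The two engines of the induction are Lemma~\ref{lem:minmaxphi}, which says that applying $\nu$ exchanges the maximizing and minimizing properties of the statistic $|\cdot|_1 + |\cdot|_2$ while doubling lengths, and Lemma~\ref{lem:minmaxnodd}, which says that passing to an odd length raises $\M_{12}$ (resp.\ $\m_{12}$) by exactly $1$. I use throughout the identity $|\nu(w)|_1 + |\nu(w)|_2 = 2|w| - (|w|_1 + |w|_2)$ from the proof of Lemma~\ref{lem:minmaxphi}, together with the fact that $\nu$ respects concatenation.

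First I would settle the case $r$ even, which is the clean one. Writing $r = 2r'$ with $2^{\ell-2} \le r' \le 2^{\ell-1}$, the induction hypothesis at level $\ell-1$ applied to $r'$ yields a factor $u = vw$ of length $2^\ell$ minimizing $|\cdot|_1+|\cdot|_2$, whose prefix $v$ of length $2^{\ell-1}+r'$ also minimizes it; the decomposition $\m_{12}(2^\ell) = \m_{12}(2^{\ell-1}+r') + \M_{12}(2^{\ell-1}-r')$ then forces $w$ to be maximizing. Applying $\nu$ and invoking Lemma~\ref{lem:minmaxphi} on each piece, $\nu(u)$ maximizes $|\cdot|_1+|\cdot|_2$ over length $2^{\ell+1}$, its prefix $\nu(v)$ (length $2^\ell+r$) maximizes, and its suffix $\nu(w)$ (length $2^\ell-r$) minimizes; reading off $|\nu(u)|_1+|\nu(u)|_2 = |\nu(v)|_1+|\nu(v)|_2 + |\nu(w)|_1+|\nu(w)|_2$ gives $\M_{12}(2^{\ell+1}) = \M_{12}(2^\ell+r) + \m_{12}(2^\ell-r)$ together with the required prefix. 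The equality $\m_{12}(2^{\ell+1}) = \m_{12}(2^\ell+r)+\M_{12}(2^\ell-r)$ follows identically, starting from a maximizing $u$ and the first equality of the induction hypothesis.

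Next I would reduce the case $r$ odd (so $2^{\ell-1} < r < 2^\ell$ and $r\pm1$ are even and still lie in $[2^{\ell-1},2^\ell]$) to the even case. For the two equalities, I apply the even case to $r-1$ (for $\M_{12}$) and to $r+1$ (for $\m_{12}$) and insert the unit jumps of Lemma~\ref{lem:minmaxnodd} at the odd arguments: for example $\M_{12}(2^\ell+r)+\m_{12}(2^\ell-r) = \bigl(\M_{12}(2^\ell+r-1)+1\bigr)+\bigl(\m_{12}(2^\ell-r+1)-1\bigr) = \M_{12}(2^{\ell+1})$ by the even case at $r-1$, and symmetrically for $\m_{12}$ using $r+1$. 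For the prefix construction the maximizing case is clean: taking the factor $\nu(u) = \nu(v)\nu(w)$ built from $r-1$, the letter immediately after the maximizing prefix $\nu(v)$ of length $2^\ell+r-1$ is the first letter of $\nu(w)$, which is always a $1$ or a $2$, so appending it yields a prefix of length $2^\ell+r$ whose count has risen by exactly $1$ to $\M_{12}(2^\ell+r)$.

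The main obstacle is the minimizing prefix construction when $r$ is odd. The natural move — take the $r+1$ construction, whose minimizing prefix $\nu(v)$ has even length $2^\ell+r+1$, and delete its last letter — requires deleting a $1$ or a $2$, since $\m_{12}(2^\ell+r) = \m_{12}(2^\ell+r+1)-1$; that is, $\nu(v)$ must end in $1$ or $2$. This is precisely what is not automatic: every $\nu$-image de-substitutes, so by the proof of Lemma~\ref{lem:minmaxnodd} it carries its removable $1$/$2$ at its \emph{front}, the wrong end for a prefix, and moreover $\m_{12}$ can be genuinely flat from an even to the following odd length (for instance $\m_{12}(2)=\m_{12}(3)=1$), so the shorter $r-1$ factor cannot simply be extended either. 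I would resolve this exactly as in the construction for $\blo(\mathbf{p},2)$, splitting on whether $\m_{12}(2^\ell+r)$ exceeds $\m_{12}(2^\ell+r-1)$ by $1$ or by $0$. In the first sub-case the $r-1$ factor extends, its next letter (the first letter of $\nu(w)$) being a $1$ or $2$ that raises the count to $\m_{12}(2^\ell+r)$. In the flat sub-case one must instead exhibit a minimizing factor of length $2^\ell+r+1$ that terminates in a $1$ or $2$ — one occurring at an odd position of $\mathbf y$, so that it does not de-substitute — whose length-$(2^\ell+r)$ prefix therefore minimizes; the reversal maps of Lemma~\ref{lem:reversalTM}, which preserve $|\cdot|_1+|\cdot|_2$, and the adjacency constraints of Lemma~\ref{lem:2facTM} are the tools for producing and extending such a factor. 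This two-sub-case analysis, mirroring the two-case construction already carried out for the period-doubling word, is the only genuinely delicate point of the proof.
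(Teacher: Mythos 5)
Your base case, even case, odd-case equalities, and odd-case maximizing-prefix construction all match the paper's proof. The genuine gap is exactly where you flagged it: the minimizing prefix for odd $r$. Your plan to resolve it ``exactly as in the construction for $\blo(\mathbf{p},2)$'' does not go through, because the forcing argument that powers the period-doubling case is unavailable here. For $\blo(\mathbf{p},2)$ one derives a contradiction from a deletion or extension that would push the number of $0$'s above $\M_0$ or below $\m_0$; but for $\mathbf{y}$, Lemma~\ref{lem:minmaxnodd} gives $\m_{12}(2^\ell+r)=\m_{12}(2^\ell+r+1)-1$, so what you need is that the \emph{last} letter of the minimizing prefix $\nu(v')$ of length $2^\ell+r+1$ lies in $\{1,2\}$, and nothing forces this: if that letter is $0$ or $3$, deleting it yields a factor whose count is one above the minimum, which contradicts nothing. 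Your fallback --- exhibit some minimizing factor of length $2^\ell+r+1$, occurring at an odd position and ending in $1$ or $2$ --- can indeed be produced via Lemma~\ref{lem:reversalTM}, and deleting its last letter gives a minimizing factor of length $2^\ell+r$; but this severs the connection to the full-length factor. The lemma demands a \emph{single} factor of length $2^{\ell+1}$ that minimizes and whose prefix of length $2^\ell+r$ minimizes. A factor occurring at an odd position cannot be the prefix $\nu(v')$ of $\nu(u')$ (which occurs at an even position and de-substitutes), and you do not show that it extends on the right to a minimizing factor of length $2^{\ell+1}$ --- that extension would itself require a suffix of length $2^\ell-r-1$ realizing $\M_{12}(2^\ell-r-1)$, which is a new claim, not a consequence of what you have.

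The paper closes this with a window shift that also makes your sub-case split unnecessary: take the minimizing factor $\nu(u')$ of length $2^{\ell+1}$ from the even case at $r+1$, and pass to $w=a^{-1}\nu(u')b$, where $a$ is the first letter of $\nu(u')$ and $b$ is the letter following it in $\mathbf{y}$. Since $\nu(u')$ occurs at an even position and has even length, both $a$ and $b$ sit at even positions, hence are first letters of $\nu$-images and lie in $\{1,2\}$. So $w$ still has exactly $\m_{12}(2^{\ell+1})$ ones and twos, while its prefix of length $2^\ell+r$, namely $a^{-1}\nu(v')$, has $\m_{12}(2^\ell+r+1)-1=\m_{12}(2^\ell+r)$ of them by Lemma~\ref{lem:minmaxnodd}. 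Both requirements hold simultaneously for the same factor $w$, uniformly in $r$, with no case analysis at all.
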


\begin{proof}
We proceed by induction on $\ell$. The result is true for $\ell=1$ since the only non-trivial case is $r=1$. Then $\M_{12}(4)=\M_{12}(3)+\m_{12}(1)$ and $\m_{12}(4)=\m_{12}(3)+\M_{12}(1)$ and the factors $2120$ and $0132$ satisfy the claim.

Let $\ell>1$ and assume the result is true for $\ell-1$.
Let $r$ such that $2^{\ell-1}\leq r \leq 2^\ell$. Assume first that $r$ is even.
Then $2^{\ell-2}\leq r/2 \leq 2^{\ell-1}$. By the induction hypothesis, there is a factor $u$ of length $2^\ell$ minimizing the number of $1$'s and $2$'s such that the prefix $v$ of length $2^{\ell-1}+r/2$ minimizes the number of $1$'s and $2$'s. Thus we can write $u=vw$ and $|v|_1+|v|_2=\m_{12}(2^{\ell-1}+r/2)$ and necessarily $|w|_1+|w|_2=\M_{12}(2^{\ell-1}-r/2)$.
By Lemma~\ref{lem:minmaxphi}, $\nu(u)$ and $\nu(v)$ maximize the number of $1$'s and $2$'s  and $\nu(w)$ minimizes the number of $1$'s and $2$'s. So we can conclude the result.
A similar proof shows the other relation.
If $r$ is odd, then we still have  $2^{\ell-1}\leq r-1  \leq 2^\ell$ since $\ell>1$.
Thus we can use the previous result and together with Lemma~\ref{lem:minmaxnodd}, we have
\begin{align*}
\M_{12}(2^{\ell+1})&=\M_{12}(2^\ell+r-1)+\m_{12}(2^\ell-r+1)\\
&=\M_{12}(2^{\ell}+r)-1+\m_{12}(2^{\ell}-r)+1\\
&=\M_{12}(2^{\ell}+r)+\m_{12}(2^{\ell}-r).
\end{align*}

Similarly, using the fact that $r+1\leq 2^\ell$,
\begin{align*}
\m_{12}(2^{\ell+1})&=\m_{12}(2^\ell+r+1)+\M_{12}(2^\ell-r-1)\\
&=\m_{12}(2^{\ell}+r)+1+\M_{12}(2^{\ell}-r)-1\\
&=\m_{12}(2^{\ell}+r)+\M_{12}(2^{\ell}-r).
\end{align*}

For the construction of the factors, one can construct them using the factor $\nu(u)$ maximizing the number of $1$'s and $2$'s given for $r-1$ and the factor $\nu(u')$ minimizing the number of $1$'s and $2$'s given for $r+1$ in the previous construction. Since $r$ is odd, the letter between the prefix $\nu(v)$ of length $2^\ell+r-1$ and $2^\ell+r$ of $\nu(u)$ is  $1$ or $2$. Since the prefix of length $2^\ell+r-1$ of $\nu(u)$ maximizes the number of $1$'s and $2$'s, so does the prefix of length $2^\ell +r$ of $\nu(u)$. For $\m_{12}$, consider $\nu(u')$. There exist letters $a$ and $b$ such that $w=a^{-1}\nu(u')b$ is still a factor. We must have $a,b\in \{1,2\}$. Then the prefix of length $2^\ell+r$ of $w$ minimizes the number of $1$'s and $2$'s.
\end{proof}

The previous lemma permits us to reformulate some relations between the two sequences $\M_{12}(n)_{n\ge 0}$ and $\m_{12}(n)_{n\ge 0}$.

\begin{lemma}\label{lem:recminmax2TM}
If $\ell\geq 1$ and $2^{\ell-1}\leq r \leq 2^\ell$, then 
\begin{align*}
	\M_{12}(2^{\ell}+r)&=2^{\ell+1}-\m_{12}(2^{\ell+1}-r)\\
	\m_{12}(2^{\ell}+r)&=2^{\ell+1}-\M_{12}(2^{\ell+1}-r).
\end{align*}
\end{lemma}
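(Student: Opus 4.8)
The plan is to mirror the proof of Lemma~\ref{lem:recminmax2} for the period-doubling word, assembling the two claimed identities from three results already in hand: the immediately preceding lemma (which expresses $\M_{12}(2^{\ell+1})$ and $\m_{12}(2^{\ell+1})$ as sums of values at $2^\ell+r$ and $2^\ell-r$), Lemma~\ref{lem:power2TM} (which evaluates $\M_{12}$ and $\m_{12}$ at consecutive powers of two), and Lemma~\ref{lem:recminmaxTM} (which is additive on the lower half-interval $[0,2^{\ell-1}]$). No new combinatorial input about the factors of $\mathbf{y}$ is needed; once these three results are available, the statement is a purely arithmetic consequence.

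First I would fix $\ell\geq 1$ and $r$ with $2^{\ell-1}\leq r\leq 2^\ell$, and observe that $r':=2^\ell-r$ then satisfies $0\leq r'\leq 2^{\ell-1}$, which is exactly the range in which Lemma~\ref{lem:recminmaxTM} applies. For the first identity, I would start from the preceding lemma, which gives $\M_{12}(2^{\ell+1})=\M_{12}(2^\ell+r)+\m_{12}(2^\ell-r)$, hence
\[
	\M_{12}(2^\ell+r)=\M_{12}(2^{\ell+1})-\m_{12}(2^\ell-r).
\]
By Lemma~\ref{lem:power2TM} we have $\M_{12}(2^{\ell+1})=2^{\ell+1}-\m_{12}(2^\ell)$, and by Lemma~\ref{lem:recminmaxTM} applied to $r'=2^\ell-r$ we have $\m_{12}(2^{\ell+1}-r)=\m_{12}(2^\ell)+\m_{12}(2^\ell-r)$. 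Substituting these gives
\[
	\M_{12}(2^\ell+r)=2^{\ell+1}-\m_{12}(2^\ell)-\m_{12}(2^\ell-r)=2^{\ell+1}-\m_{12}(2^{\ell+1}-r),
\]
as required.

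The second identity is proved identically after exchanging the roles of $\M_{12}$ and $\m_{12}$: the preceding lemma gives $\m_{12}(2^\ell+r)=\m_{12}(2^{\ell+1})-\M_{12}(2^\ell-r)$, Lemma~\ref{lem:power2TM} gives $\m_{12}(2^{\ell+1})=2^{\ell+1}-\M_{12}(2^\ell)$, and the additive relation for $\M_{12}$ in Lemma~\ref{lem:recminmaxTM} gives $\M_{12}(2^{\ell+1}-r)=\M_{12}(2^\ell)+\M_{12}(2^\ell-r)$; combining these yields $\m_{12}(2^\ell+r)=2^{\ell+1}-\M_{12}(2^{\ell+1}-r)$.

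Since every step is a direct substitution, there is no real obstacle here; the only point to watch is that the shifted index $2^\ell-r$ lands in the half-interval $[0,2^{\ell-1}]$ where the additive Lemma~\ref{lem:recminmaxTM} is valid, which follows immediately from the hypothesis $2^{\ell-1}\leq r\leq 2^\ell$. I would also note that the two boundary cases $r=2^{\ell-1}$ and $r=2^\ell$ are covered without modification, since the corresponding values $r'=2^{\ell-1}$ and $r'=0$ both lie in the allowed range, and that the case $\ell=1$ presents no difficulty because all three auxiliary results hold for $\ell\geq 1$.
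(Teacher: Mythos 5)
Your proof is correct and is essentially identical to the paper's: both rearrange the preceding lemma's identity $\M_{12}(2^{\ell+1})=\M_{12}(2^\ell+r)+\m_{12}(2^\ell-r)$, substitute $\M_{12}(2^{\ell+1})=2^{\ell+1}-\m_{12}(2^\ell)$ from Lemma~\ref{lem:power2TM}, and use the additivity of Lemma~\ref{lem:recminmaxTM} at the index $2^\ell-r\in[0,2^{\ell-1}]$, with the second identity obtained by exchanging $\M_{12}$ and $\m_{12}$. Your extra remarks on the boundary cases and on $\ell=1$ are accurate but not needed beyond what the paper already does.
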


\begin{proof}
From the previous lemma, we have $$\M_{12}(2^\ell+r)=\M_{12}(2^{\ell+1})-\m_{12}(2^\ell-r).$$ 
By Lemma~\ref{lem:power2TM}, we have 
$\M_{12}(2^{\ell+1})=2^{\ell+1}-\m_{12}(2^\ell)$. Moreover, by Lemma~\ref{lem:recminmaxTM}, since $0\le 2^{\ell}-r\le 2^{\ell-1}$, we get
$$\m_{12}(2^\ell-r)=\m_{12}(2^{\ell}+2^{\ell}-r)-\m_{12}(2^\ell).$$
Similar relations hold when changing $\M_{12}$ to $\m_{12}$.
\end{proof}

The proof of Proposition~\ref{prop:recdelta_TM} about the reflection relation satisfied by $\Delta_{12}(n)$ and the recurrence relation of $\m_{12}(n)$ is now immediate. 
\begin{proof}[Proof of Proposition~\ref{prop:recdelta_TM}]
If $\ell\geq 1$ and  $0\leq r\leq 2^{\ell-1}$, then subtracting the two relations provided by Lemma~\ref{lem:recminmaxTM} gives 
$$\Delta_{12}(2^\ell+r)=\Delta_{12}(\ell)+\Delta_{12}(r)$$
and we can conclude using the first relation given in Lemma~\ref{lem:power2TM}, $\Delta_{12}(2^\ell)=1$. 
By Lemma~\ref{lem:recminmaxTM}, $\m_{12}(2^\ell+r)\equiv \m_{12}(2^\ell)+\m_{12}(r) \pmod{2}$. The expression for $\m_{12}(2^\ell+r)$ follows since $\m_{12}(2^\ell)\equiv \ell \pmod{2}$ by Lemma~\ref{lem:power2TM}.

If $\ell\geq 1$ and $2^{\ell-1} < r < 2^\ell$, then  subtracting the two relations provided by Lemma~\ref{lem:recminmax2TM} permits us to conclude the proof of the expression claimed for $\Delta_{12}(2^\ell+r)$.
Moreover, using  Lemma~\ref{lem:recminmax2TM}, we get
\begin{align*}
\m_{12}(2^\ell+r)&\equiv \M_{12}(2^{\ell+1}-r) \pmod{2}\\
&\equiv  \m_{12}(2^{\ell+1}-r)+\Delta_{12}(2^{\ell+1}-r) \pmod{2}. \qedhere
\end{align*}
\end{proof}

Using Propositions~\ref{prop:deltatoab_TM} and~\ref{prop:recdelta_TM}, we can express recurrence relations for $\mathcal{P}^{(1)}_{\mathbf{y}}$ as we did for the proof of Theorem~\ref{thm:recab_PD}.
 
\begin{theorem}\label{thm:recab_TM}
Let $\ell\geq 2$ and $r$ such that $0\leq r < 2^{\ell}$. For $r\leq 2^{\ell-1}$, we have
{\small
$$ \mathcal{P}^{(1)}_{\mathbf{y}}(2^\ell+r)=
  \begin{cases}
    \mathcal{P}^{(1)}_{\mathbf{y}}(r)+2 & \text{if $r$ is odd}\\
    \mathcal{P}^{(1)}_{\mathbf{y}}(r)+1 & \text{if $(r$, $\Delta_{12}(2^\ell+r)$ and }\m_{12}(2^\ell+r)\text{ are even})\\
    & \text{or }(  r\text{ and } \Delta_{12}(2^\ell+r)+1\text{ are even }\\
    &\text{\phantom{or }and }\m_{12}(2^\ell+r)\equiv \ell+1 \pmod{2})\\    
    \mathcal{P}^{(1)}_{\mathbf{y}}(r)+4 & \text{otherwise.} 
   \end{cases}$$}
For $ r>2^{\ell-1}$, we have $\mathcal{P}^{(1)}_{\mathbf{y}}(2^\ell+r)=\mathcal{P}^{(1)}_{\mathbf{y}}(2^{\ell+1}-r)$.
\end{theorem}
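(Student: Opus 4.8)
The plan is to mimic the proof of Theorem~\ref{thm:recab_PD}, combining the expression of $\mathcal{P}^{(1)}_{\mathbf{y}}$ in terms of $\Delta_{12}$ and parities supplied by Proposition~\ref{prop:deltatoab_TM} with the recurrences for $\Delta_{12}$ and for $\m_{12}\bmod 2$ supplied by Proposition~\ref{prop:recdelta_TM}. The point is that $\mathcal{P}^{(1)}_{\mathbf{y}}(n)$ is determined entirely by the value $\Delta_{12}(n)$ together with the three parities $n\bmod 2$, $\Delta_{12}(n)\bmod 2$, and $\m_{12}(n)\bmod 2$, so in every case it suffices to track how these data transform under $n\mapsto 2^\ell+r$.

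First I would dispatch the reflection case $r>2^{\ell-1}$. Writing $n=2^\ell+r$ and $n'=2^{\ell+1}-r$, Proposition~\ref{prop:recdelta_TM} gives $\Delta_{12}(n)=\Delta_{12}(n')$; moreover $n-n'=2r-2^\ell$ is even, so $n$ and $n'$ share the same parity, and consequently $\Delta_{12}(n)$ and $\Delta_{12}(n')$ do too. The parity of $\m_{12}$ affects the formula of Proposition~\ref{prop:deltatoab_TM} only when $n$ is even and $\Delta_{12}(n)$ is even; in that situation $\Delta_{12}(n')=\Delta_{12}(n)$ is even, and the second clause of Proposition~\ref{prop:recdelta_TM} yields $\m_{12}(n)\equiv \m_{12}(n')+\Delta_{12}(n')\equiv \m_{12}(n')\pmod 2$. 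Hence every datum selecting a branch of Proposition~\ref{prop:deltatoab_TM} agrees for $n$ and $n'$, giving $\mathcal{P}^{(1)}_{\mathbf{y}}(2^\ell+r)=\mathcal{P}^{(1)}_{\mathbf{y}}(2^{\ell+1}-r)$.

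For $r\le 2^{\ell-1}$ I would use the identities $\Delta_{12}(2^\ell+r)=\Delta_{12}(r)+1$ and $\m_{12}(2^\ell+r)\equiv \m_{12}(r)+\ell\pmod 2$ from Proposition~\ref{prop:recdelta_TM}, noting that $n=2^\ell+r$ has the same parity as $r$ since $\ell\ge 2$. If $r$ is odd, both $n$ and $r$ fall in the first branch of Proposition~\ref{prop:deltatoab_TM}, so $\mathcal{P}^{(1)}_{\mathbf{y}}(n)=2\Delta_{12}(n)+2=2(\Delta_{12}(r)+1)+2=\mathcal{P}^{(1)}_{\mathbf{y}}(r)+2$. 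If $r$ is even, then adding $1$ to $\Delta_{12}$ flips its parity, so a factor of length $2^\ell+r$ lands in the complementary family of branches relative to $r$: when $\Delta_{12}(2^\ell+r)$ is even one applies branches $3$ and $4$ to $n$ and branch $2$ to $r$, and when $\Delta_{12}(2^\ell+r)$ is odd one applies branch $2$ to $n$ and branches $3$ and $4$ to $r$. In each of these combinations $\mathcal{P}^{(1)}_{\mathbf{y}}(2^\ell+r)-\mathcal{P}^{(1)}_{\mathbf{y}}(r)$ is computed directly from the four explicit formulas and comes out to $+1$ or $+4$.

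The bookkeeping for the even case is where the care is needed, and it is the main (if routine) obstacle: the branch conditions in the statement are phrased via the parities of $\Delta_{12}(2^\ell+r)$ and $\m_{12}(2^\ell+r)$, whereas applying Proposition~\ref{prop:deltatoab_TM} to $\mathcal{P}^{(1)}_{\mathbf{y}}(r)$ requires the parities of $\Delta_{12}(r)$ and $\m_{12}(r)$. Translating between them through $\Delta_{12}(2^\ell+r)=\Delta_{12}(r)+1$ and $\m_{12}(r)\equiv \m_{12}(2^\ell+r)+\ell\pmod 2$, one checks that the difference equals $+1$ exactly in the two listed subcases, namely $\Delta_{12}(2^\ell+r)$ even together with $\m_{12}(2^\ell+r)$ even, and $\Delta_{12}(2^\ell+r)+1$ even together with $\m_{12}(2^\ell+r)\equiv \ell+1\pmod 2$, and equals $+4$ in the complementary two subcases. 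I would record this final verification as a compact case table keyed on the parities of $\Delta_{12}(2^\ell+r)$, $\m_{12}(2^\ell+r)$, and $\ell$, so as to avoid transcription errors.
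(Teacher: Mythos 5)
Your proposal is correct and follows essentially the same route as the paper: the paper's own (sketched) proof of Theorem~\ref{thm:recab_TM} is precisely to combine Propositions~\ref{prop:deltatoab_TM} and~\ref{prop:recdelta_TM} in the manner of the proof of Theorem~\ref{thm:recab_PD}, handling the reflection case by noting all branch-selecting parities agree and the case $r\le 2^{\ell-1}$ by tracking how the parity flip $\Delta_{12}(2^\ell+r)=\Delta_{12}(r)+1$ and the shift $\m_{12}(2^\ell+r)\equiv\m_{12}(r)+\ell\pmod 2$ move one between branches. Your case bookkeeping (the $+1$ cases being exactly $\Delta_{12}(2^\ell+r)$, $\m_{12}(2^\ell+r)$ both even, or $\Delta_{12}(2^\ell+r)$ odd with $\m_{12}(2^\ell+r)\equiv\ell+1\pmod 2$, and $+4$ in the two complementary subcases) checks out against the explicit formulas.
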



\section{$2$-abelian complexity of the Thue--Morse word}\label{sec:2ab_comp_TM}

The aim of this section is to express, in Theorem~\ref{thm:2abtoabTM}, $\mathcal{P}^{(2)}_{\mathbf{t}}(n+1)$ in terms of $\mathcal{P}^{(1)}_{\mathbf{y}}(n)$, $\Delta_{12}(n)$, $(\m_{12}(n) \bmod{2})$ and two new functions $\JM_{03}(n)$ and $\jm_{03}(n)$ that are defined analogously to $\JM_0(n)$ and $\jm_0(n)$ of Section~\ref{sec:2ab_comp_PD}.
Let
\begin{align*}
	\M_{03}(n)&:=\max\{|u|_0+|u|_3 \, : \, u \text{ is a factor of } \mathbf{y}\text{ with }|u|=n\}, \\
	\m_{03}(n)&:=\min\{|u|_0+|u|_3 \, : \, u \text{ is a factor of } \mathbf{y}\text{ with }|u|=n\},
\end{align*}
and let
\begin{align*}
	\JM_{03}(n)&:=
	\begin{cases}
		1	& \text{if $\M_{03}(n)>\M_{03}(n-1)$} \\
		0	& \text{otherwise},
	\end{cases} \\
	\jm_{03}(n)&:=
	\begin{cases}
		1	& \text{if $\m_{03}(n+1) > \m_{03}(n)$} \\
		0	& \text{otherwise}.
	\end{cases}
\end{align*}

\begin{theorem}\label{thm:2abtoabTM}
For $n$ odd, we have
\begin{multline*}
	\mathcal{P}^{(2)}_{\mathbf{t}}(n+1)-\mathcal{P}^{(1)}_{\mathbf{y}}(n) =\\
	\begin{cases}
		\Delta_{12}(n)+2-2\JM_{03}(n)-2\jm_{03}(n)	&\text{if $\m_{12}(n)$ and $\Delta_{12}(n)$ are even}\\
		\Delta_{12}(n)+1-2\JM_{03}(n)			&\text{if $\m_{12}(n)$ and $\Delta_{12}(n)+1$ are even}\\
	\Delta_{12}(n)+1-2\jm_{03}(n)				&\text{if $\m_{12}(n)$ and $\Delta_{12}(n)$ are odd}\\
	\Delta_{12}(n)						&\text{if $\m_{12}(n)+1$ and $\Delta_{12}(n)$ are even}.
	\end{cases}
\end{multline*}
For $n$ even, we have
\[
\mathcal{P}^{(2)}_{\mathbf{t}}(n+1)-\mathcal{P}^{(1)}_{\mathbf{y}}(n)=
	\begin{cases}
		\frac{1}{2}\Delta_{12}(n)+1		&\text{if }\m_{12}(n) \text{ and } \Delta_{12}(n) \text{ are even}\\
		\frac{1}{2}\Delta_{12}(n)		&\text{if }\m_{12}(n)+1 \text{ and } \Delta_{12}(n) \text{ are even}\\
		\frac{1}{2}\Delta_{12}(n)+\frac{1}{2}	&\text{if }\Delta_{12}(n) \text{ is odd}.\\
	\end{cases}
\]
\end{theorem}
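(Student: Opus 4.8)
The plan is to mirror the strategy of Section~\ref{sec:2ab_comp_PD}, relating $2$-abelian classes of $\mathbf{t}$ to abelian classes of $\mathbf{y}=\blo(\mathbf{t},2)$ split according to a single bit. By Lemma~\ref{lem:abel}, two factors $u,v$ of $\mathbf{t}$ of length $n+1$ are $2$-abelian equivalent iff their block codings $u',v'$ are abelian equivalent and $u,v$ share their first letter. The crucial observation is that the high bit of the block letter $\mathbf{y}_j=\mathbf{t}_j\mathbf{t}_{j+1}$ is exactly $\mathbf{t}_j$, so the sequence of first letters of windows of $\mathbf{y}$ is $\mathbf{t}$ itself; writing $a=\mathbf{t}_i$ and $b=\mathbf{t}_{i+n}$ for the first and last letters of the $\mathbf{t}$-factor $w=\mathbf{t}_i\cdots\mathbf{t}_{i+n}$, a direct count of $\{1,2\}$- versus $\{0,3\}$-letters gives $|u|_1-|u|_2=b-a$ and $|u|_0-|u|_3=n+a+b-2|w|_1$. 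In particular, $|u|_1\ne|u|_2$ forces $a$, hence the first $\mathbf{t}$-letter; this is the analog of Lemma~\ref{lem:diff12}: an abelian class of $\mathbf{y}$ can split into two $2$-abelian classes of $\mathbf{t}$ only if $|u|_1=|u|_2$, i.e.\ only if $n_{12}:=|u|_1+|u|_2$ is even. Thus $\mathcal{P}^{(2)}_{\mathbf{t}}(n+1)-\mathcal{P}^{(1)}_{\mathbf{y}}(n)$ equals the number of balanced (that is, $n_{12}$ even) abelian classes that genuinely split.

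Next I would treat $n$ even. Here every balanced class has $|u|_0-|u|_3=0$ and is indexed by its even value of $n_{12}\in[\m_{12}(n),\M_{12}(n)]$, while the odd-$n_{12}$ classes (the four-element families from the proof of Proposition~\ref{prop:deltatoab_TM}) all satisfy $|u|_1\ne|u|_2$ and cannot split. Using uniform recurrence of $\mathbf{y}$ together with the one-step continuity of the number of $0$'s (as in Lemma~\ref{lem:nevenn0even}) and the reversal symmetry $\tau$ of Lemma~\ref{lem:reversalTM}, which preserves the Parikh vector on a balanced class while reversing a word, I would show that each balanced class contains factors with both $a=0$ and $a=1$ and hence splits. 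The difference then equals the number of even values of $n_{12}$ in $[\m_{12}(n),\M_{12}(n)]$, which is $\tfrac{\Delta_{12}(n)}{2}+1$, $\tfrac{\Delta_{12}(n)}{2}$, or $\tfrac{\Delta_{12}(n)+1}{2}$ according to the parities of $\m_{12}(n)$ and $\Delta_{12}(n)$, matching the three $n$-even cases. No jump terms arise because the extreme values $\M_{03}(n),\m_{03}(n)$ are even and so are never attained by a balanced (odd-$n_{03}$) class.

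For $n$ odd, the balanced classes occur in pairs $|u|_0-|u|_3=\pm1$ for each even $n_{12}$, with $n_{03}=n-n_{12}$ odd. The heart of the argument is an analog of Lemmas~\ref{lem:ext} and~\ref{lem:nevenn0even}: a balanced class splits unless it sits at an extreme of $n_{03}$ where the alternation of $0$ and $3$ (Lemma~\ref{lem:balancedTM}) forces every factor to begin and end with the same $\{0,3\}$-letter, thereby fixing $a$. Concretely, at $n_{03}=\M_{03}(n)$ with $\JM_{03}(n)=1$ the factor is forced to start with its extremal $\{0,3\}$-letter ($0$ when $|u|_0-|u|_3=+1$, giving $a=0$; $3$ when $|u|_0-|u|_3=-1$, giving $a=1$), so both classes of that top pair fail to split; symmetrically at $n_{03}=\m_{03}(n)$ with $\jm_{03}(n)=1$. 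Since a balanced class reaches the top (resp.\ bottom) extreme only when $\M_{03}(n)=n-\m_{12}(n)$ (resp.\ $\m_{03}(n)=n-\M_{12}(n)$) is odd, the corrections $-2\JM_{03}(n)$ and $-2\jm_{03}(n)$ switch on exactly in the parity cases recorded in the statement. Counting $2\times(\text{number of even }n_{12})$ and subtracting these corrections reproduces the four $n$-odd cases.

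The main obstacle will be proving the splitting dichotomy precisely, namely the analogs of Lemma~\ref{lem:ext} and Lemma~\ref{lem:nevenn0even}, since $\mathbf{y}$ carries two interleaved alternating pairs ($1,2$ and $0,3$) rather than the single relevant letter of the period-doubling case. I expect the delicate points to be (i) showing that away from the extremes a balanced class really does contain both first-letter types, which again rests on uniform recurrence and the one-step continuity of $n_{03}$, and (ii) the careful de-substitution bookkeeping needed to show that at a genuine jump the boundary letters are forced, so that precisely the two classes of the extreme pair collapse. Once these are established, the final tally is the routine parity computation sketched above.
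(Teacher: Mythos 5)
Your outline is essentially the paper's own proof: reduce $\mathcal{P}^{(2)}_{\mathbf{t}}(n+1)-\mathcal{P}^{(1)}_{\mathbf{y}}(n)$ to counting which abelian classes of $\mathbf{y}$ split into two $2$-abelian classes of $\mathbf{t}$, show that odd-$n_{12}$ classes never split, show that for $n$ even every even-$n_{12}$ class splits, show that for $n$ odd an even-$n_{12}$ class splits except at an $n_{03}$-extreme witnessing a jump, and finish with the parity bookkeeping. Your final tally (which of $\JM_{03},\jm_{03}$ is active in each of the four parity cases, and the factor $2$ coming from the pair of classes with $|u|_0-|u|_3=\pm1$) agrees exactly with the paper's, and your telescoping identity $|u'|_1-|u'|_2=b-a$ gives a slicker proof of the odd-$n_{12}$ no-split criterion than the paper's alternation argument in Lemma~\ref{lem:n12odd}.

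Two genuine problems remain. First, your mechanism for the $n$-even case fails as stated: the map $w\mapsto\tau(w)^{\mathrm{R}}$ does preserve a balanced class, but the first letter of $\tau(w)^{\mathrm{R}}$ is $\tau$ applied to the \emph{last} letter of $w$ (it lies in $\{0,1\}$ exactly when $w$ ends with $0$ or $2$), which bears no controlled relation to the first letter of $w$; so reversal alone does not produce representatives of both first-letter types. The paper's Lemma~\ref{lem:neven} instead applies the plain letter map $\tau'\circ\tau$ (no reversal), also guaranteed to send factors to factors by Lemma~\ref{lem:reversalTM}: it exchanges $0\leftrightarrow 3$ and $1\leftrightarrow 2$, hence fixes the Parikh vector precisely when $n_1=n_2$ and $n_0=n_3$ (true when $n$ and $n_{12}$ are even) and always flips the type $\{0,1\}\leftrightarrow\{2,3\}$ of the initial letter, so the class splits. (Your parity justification for ``no jump terms'' when $n$ is even is also garbled: balanced classes then have \emph{even} $n_{03}$, and the extremes need not be even, e.g.\ $\M_{03}(2)=1$; but that sentence becomes unnecessary once every balanced class is shown to split.) Second, the dichotomy you flag as the main obstacle --- that for $n$ odd a balanced class fails to split exactly when $n_{03}$ is extremal and the corresponding jump function equals $1$ --- is not a routine adaptation but the real content of the matter; it is what the paper's Lemmas~\ref{lem:ext03} and~\ref{lem:nodd} establish through a substantial de-substitution and uniform-recurrence case analysis. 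You have correctly identified what must be proved and with which tools, but until those two lemmas are carried out the proposal is a correct plan rather than a complete proof.
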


As in Section~\ref{sec:2ab_comp_PD}, we study when an abelian equivalence class of $\mathbf{y}=\blo(\mathbf{t},2)$ splits into two $2$-abelian equivalence classes of $\mathbf{t}$. We have similar propositions.

\begin{proposition}
Let $u$ and $v$ be factors of $\mathbf{t}$ of length $n$. Let $u'$ and $v'$ be the $2$-block codings of $u$ and $v$. The factors $u$ and $v$ are $2$-abelian equivalent if and only if $u'$ and $v'$ (of length $n-1$) are abelian equivalent and either $u'$ and $v'$ both have first letter in $\{0,1\}$ or both have first letter in $\{2,3\}$.
\end{proposition}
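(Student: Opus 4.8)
The plan is to mirror the proof of the analogous statement for the period-doubling word proved in Section~\ref{sec:2ab_comp_PD}, the only genuine difference being that the $2$-block coding of $\mathbf{t}$ uses the full alphabet $\{0,1,2,3\}$ rather than a three-letter subalphabet. I would invoke Lemma~\ref{lem:abel} with $\ell = 2$: two factors $u$ and $v$ of $\mathbf{t}$ of length $n \geq 1$ are $2$-abelian equivalent if and only if they share the same first letter and $\blo(u,2)$ and $\blo(v,2)$ are abelian equivalent. The task then reduces to rephrasing each of these two conditions purely in terms of $u'$ and $v'$.

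First I would record the encoding dictionary from Definition~\ref{def:block} between length-$2$ factors over $\{0,1\}$ and letters of $\{0,1,2,3\}$: a pair $ab$ is encoded as $2a+b$, so $00,01,10,11$ become $0,1,2,3$ respectively. Under this identification the number of occurrences of $00$ (resp.\ $01$, $10$, $11$) in $u$ equals $|u'|_0$ (resp.\ $|u'|_1$, $|u'|_2$, $|u'|_3$). Hence $u$ and $v$ contain the same number of each length-$2$ factor exactly when $\Psi(u') = \Psi(v')$, that is, exactly when $u'$ and $v'$ are abelian equivalent.

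Next I would translate the first-letter condition. If $u = u_0 u_1 \cdots$, then the first letter of $u'$ is $2u_0 + u_1$, so $u$ begins with $0$ precisely when $u'$ begins with a letter in $\{0,1\}$, and $u$ begins with $1$ precisely when $u'$ begins with a letter in $\{2,3\}$. Consequently $u$ and $v$ start with the same letter if and only if $u'$ and $v'$ both have first letter in $\{0,1\}$ or both have first letter in $\{2,3\}$. Combining this with the previous paragraph yields the equivalence claimed in the proposition.

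I do not expect a serious obstacle: the argument is a direct application of Lemma~\ref{lem:abel} together with the encoding dictionary, exactly parallel to the period-doubling case. The one point requiring care---and the place where the Thue--Morse case genuinely differs from the period-doubling case---is that both pairs $10$ and $11$ occur in $\mathbf{t}$, so the letters arising from a factor starting with $1$ form the two-element set $\{2,3\}$ rather than the single letter $2$; this is precisely why the first-letter condition splits the alphabet as $\{0,1\}$ versus $\{2,3\}$. For the degenerate case $n=1$ the block coding is empty and the statement should be read with $n \geq 2$.
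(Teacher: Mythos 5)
Your proof is correct and takes essentially the same approach as the paper: the paper states this proposition without proof (remarking only that it is similar to the period-doubling case), and its proof of that analogous proposition is exactly your argument, namely Lemma~\ref{lem:abel} with $\ell=2$ combined with the dictionary between length-$2$ factors and letters of the coding alphabet. Your added observation that the first-letter condition splits as $\{0,1\}$ versus $\{2,3\}$ (rather than $\{0,1\}$ versus $\{2\}$ as for $\mathbf{p}$, since both $10$ and $11$ occur in $\mathbf{t}$) is precisely the point that distinguishes the Thue--Morse case, and you handle it correctly.
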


Let $\mathcal X$ be an abelian equivalence class of factors of $\mathbf{y}$ of length $n$. For a letter $a$, let $n_{a}$ denote the number of $a$'s in each element of $\mathcal X$ and let $n_{12}=n_1+n_2$, $n_{03}=n_0+n_3$.

\begin{lemma}\label{lem:n12odd}
If $n_{12}$ is odd, then $\mathcal X$ leads to a unique $2$-abelian equivalence class of $\mathbf{t}$.
\end{lemma}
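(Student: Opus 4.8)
The plan is to mirror the period-doubling argument of Lemma~\ref{lem:diff12}, reducing everything to a statement about the \emph{first letter} of a factor. By the preceding proposition, the class $\mathcal X$ fails to lead to a unique $2$-abelian class of $\mathbf{t}$ precisely when some element of $\mathcal X$ has its first letter in $\{0,1\}$ while another has its first letter in $\{2,3\}$. Hence it suffices to prove that all elements of $\mathcal X$ begin with a letter in the same set. Since the elements of $\mathcal X$ are abelian equivalent, they share common values $n_1$ and $n_2$; as $n_{12}=n_1+n_2$ is odd we have $n_1\ne n_2$, so by Lemma~\ref{lem:balancedTM} exactly one of $n_1=n_2+1$ or $n_2=n_1+1$ holds.

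The heart of the proof is then the following claim, which plays the role of ``any factor with more $2$'s than $1$'s starts with $2$'' from the period-doubling case: a factor $u$ of $\mathbf{y}$ with $|u|_1=|u|_2+1$ begins with a letter in $\{0,1\}$, while a factor with $|u|_2=|u|_1+1$ begins with a letter in $\{2,3\}$. Granting the claim, every element of $\mathcal X$ begins in the same set (determined by the sign of $n_1-n_2$), so $\mathcal X$ leads to a single $2$-abelian equivalence class, as desired.

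To establish the claim I would combine two structural facts. First, by Lemma~\ref{lem:balancedTM} the letters $1$ and $2$ alternate in $\mathbf{y}$, so deleting every $0$ and $3$ from $u$ yields an alternating word over $\{1,2\}$; this word begins with $1$ when $|u|_1=|u|_2+1$ and with $2$ when $|u|_2=|u|_1+1$. In other words, the sign of $n_1-n_2$ forces the value of the \emph{first} letter of $u$ lying in $\{1,2\}$. Second, by Lemma~\ref{lem:2facTM} the only length-$2$ factors of $\mathbf{y}$ beginning with $0$ or $3$ are $01$ and $32$, so every $0$ is immediately followed by $1$ and every $3$ by $2$; in particular a letter of $\{0,3\}$ cannot be followed by another letter of $\{0,3\}$, so the first occurrence of a letter from $\{1,2\}$ in $u$ happens at position $0$ or $1$. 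A short case analysis on $u_0\in\{0,1,2,3\}$ then matches this first $\{1,2\}$-letter to the value forced above: when $|u|_1=|u|_2+1$ the choices $u_0=2$ and $u_0=3$ both make the first $\{1,2\}$-letter equal to $2$ and are thus excluded, leaving $u_0\in\{0,1\}$, and symmetrically for the case $|u|_2=|u|_1+1$.

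Once the two structural lemmas are invoked, the remaining verification is a finite check, so I do not expect a genuine obstacle. The only point needing care is the observation that a leading $0$ or $3$ is immediately followed by a letter of $\{1,2\}$, guaranteeing that the first $\{1,2\}$-letter occurs within the first two positions; this is exactly what Lemma~\ref{lem:2facTM} supplies, and it is what makes the first-letter group depend only on the parity data $n_1-n_2$ rather than on the detailed placement of $0$'s and $3$'s.
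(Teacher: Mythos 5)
Your proof is correct and follows essentially the same route as the paper's: both arguments use the alternation of $1$'s and $2$'s (Lemma~\ref{lem:balancedTM}) to show that the sign of $n_1-n_2$ determines the first letter from $\{1,2\}$, and the length-$2$ factor set of Lemma~\ref{lem:2facTM} (every $0$ followed by $1$, every $3$ by $2$) to rule out the incompatible starting letters, concluding that all elements of $\mathcal X$ start in the same group $\{0,1\}$ or $\{2,3\}$. The paper merely treats the case $n_1>n_2$ and notes the other is symmetric, exactly as your case analysis does.
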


\begin{proof}
Assume that $n_1>n_2$ (the other case is similar). Then a word of $\mathcal X$ cannot start with $2$ since the letters $1$ and $2$ alternate in $\mathbf y$ by Lemma~\ref{lem:balancedTM}. It cannot start with $3$ neither since $n_1>n_2$ and a $3$ is always followed by $2$ by Lemma~\ref{lem:2facTM}. Hence it starts with $0$ or $1$. Thus $\mathcal X$ leads to a unique $2$-abelian equivalence class.
\end{proof}

\begin{lemma}\label{lem:neven}
If $n$ and $n_{12}$ are even, then $\mathcal X$ splits into two $2$-abelian equivalence classes of $\mathbf{t}$.
\end{lemma}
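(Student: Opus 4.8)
The plan is to read off, from the Proposition recalled at the beginning of this section, exactly when $\mathcal X$ fails to lead to a unique class. Since the first letter of an element of $\mathcal X$ lies either in $\{0,1\}$ or in $\{2,3\}$, the class $\mathcal X$ contributes more than one $2$-abelian class of $\mathbf t$ precisely when it contains a factor whose first letter is in $\{0,1\}$ \emph{and} a factor whose first letter is in $\{2,3\}$; and in that event it splits into exactly two, never more. So it suffices to produce, starting from an arbitrary $w\in\mathcal X$, a second element of $\mathcal X$ whose first letter lies in the opposite half.

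The tool will be the morphisms $\tau,\tau'$ of Lemma~\ref{lem:reversalTM}, through the composite $\tau'\circ\tau$. A direct computation gives $\tau'(\tau(0))=3$, $\tau'(\tau(1))=2$, $\tau'(\tau(2))=1$ and $\tau'(\tau(3))=0$, so $\tau'\circ\tau$ interchanges the two halves $\{0,1\}$ and $\{2,3\}$. First I would observe that the hypotheses force $n_{03}=n-n_{12}$ to be even; together with Lemma~\ref{lem:balancedTM} this yields $|w|_1=|w|_2$ and $|w|_0=|w|_3$ for every $w\in\mathcal X$. Consequently $\tau$ (which only exchanges the counts of $1$ and $2$) and $\tau'$ (which only exchanges the counts of $0$ and $3$) leave the Parikh vector of $w$ unchanged, and reversal never alters a Parikh vector. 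Hence $\tau'(\tau(w))$ has the same Parikh vector as $w$, and by Lemma~\ref{lem:reversalTM} it is again a factor of $\mathbf y$; therefore $\tau'(\tau(w))\in\mathcal X$.

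To finish, since $\tau'\circ\tau$ is letter-to-letter, the first letter of $\tau'(\tau(w))$ is the image of the first letter of $w$ under $\tau'\circ\tau$, which lies in the half complementary to that of $w$. Thus whichever half the first letter of $w$ occupies, $\tau'(\tau(w))$ supplies an element of $\mathcal X$ in the other half, so both halves are realized and, by the criterion above, $\mathcal X$ splits into exactly two $2$-abelian equivalence classes of $\mathbf t$.

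The only genuinely substantive point---and the step I expect to require care---is the claim that $\tau'(\tau(w))$ remains in the \emph{same} abelian class $\mathcal X$. This is exactly where both parity hypotheses enter: without $n$ and $n_{12}$ even (hence $n_{03}$ even) one could not invoke Lemma~\ref{lem:balancedTM} to obtain $|w|_1=|w|_2$ and $|w|_0=|w|_3$, and then $\tau$ or $\tau'$ would move the factor into a different abelian class. Everything else is bookkeeping on first letters, and---pleasantly, in contrast with the period-doubling analysis of Section~\ref{sec:2ab_comp_PD}---no explicit construction of extremal factors or tracking of positions in $\mathbf y$ is needed.
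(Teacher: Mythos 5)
Your proof is correct and follows essentially the same route as the paper: both use the parity hypotheses (via Lemma~\ref{lem:balancedTM}) to get $n_1=n_2$ and $n_0=n_3$, so that $\tau'(\tau(w))$ from Lemma~\ref{lem:reversalTM} stays in $\mathcal X$ while its first letter moves to the opposite half of $\{0,1\}\cup\{2,3\}$. Your write-up merely makes explicit two points the paper leaves implicit, namely why the Parikh vector is preserved and why the split is into exactly two classes.
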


\begin{proof}
If $n$ and $n_{12}$ are even, then $n_{03}$ is also even and thus $n_1=n_2$ and $n_0=n_3$.
Let $u$ be an element of $\mathcal X$. Then $u' = \tau'(\tau(u))$ is also an element of $\mathcal X$.  Moreover, the first letter of $u$ is in $\{0,1\}$ if and only if the first letter of $u'$ is in $\{2,3\}$. Hence $\mathcal X$ splits into two $2$-abelian equivalence classes.
\end{proof}

So the last and hardest case happens when $n$ is odd and $n_{12}$ is even, i.e.,\ when $n$ and $n_{03}$ are odd. The $\JM_{03}$ and $\jm_{03}$ functions permit us to handle this case.

\begin{lemma}\label{lem:ext03}
Let $n$ and $n_{03}$ are odd. Let $a\in \{0,3\}$ (resp.\ $b\in \{0,3\}$) be the letter in majority (resp.\ in minority) in factors in $\mathcal{X}$, among $\{0,3\}$.
\begin{itemize}
\item We have $n_{03}=\M_{03}(n)$ and $\JM_{03}(n)=1$ if and only if every factor in $\mathcal X$ starts and ends with $a$.
\item We have $n_{03}=\m_{03}(n)$ and $\jm_{03}(n)=1$ if and only if every factor in $\mathcal X$ is preceded and followed by $b$.
\end{itemize}
\end{lemma}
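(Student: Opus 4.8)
The plan is to mirror the proof of Lemma~\ref{lem:ext} from the period-doubling case, replacing the clustering behaviour of $0$'s in $\mathbf{x}$ by the \emph{forced-neighbour} behaviour of $0$'s and $3$'s in $\mathbf{y}$. First I would record the parity bookkeeping: since $n$ and $n_{03}$ are odd, $n_{12}=n-n_{03}$ is even, so Lemma~\ref{lem:balancedTM} gives $n_1=n_2$, while $n_{03}$ odd forces $n_0\neq n_3$ and hence $n_a=n_b+1$ with $a$ the majority letter. I would also extract from Lemma~\ref{lem:2facTM} the crucial adjacency constraints: a $0$ is always preceded by $2$ and followed by $1$, and a $3$ is always preceded by $1$ and followed by $2$; in particular each of $0,3$ has a uniquely determined left and right neighbour, which is exactly what lets us ``move'' a boundary letter without changing the rest of a factor. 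Finally, I would note that $|w|_0+|w|_3$ changes by at most $1$ when the reading window slides by one position, so every value between $\m_{03}(n)$ and $\M_{03}(n)$ is realised. A useful consequence of the balance remark inside Lemma~\ref{lem:balancedTM} is that, since $n_a=n_b+1$, a factor of $\mathcal X$ can never begin or end with the minority letter $b$.

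Next I would cut the work in half using Lemma~\ref{lem:reversalTM}. The morphism $\tau'$ interchanges $0$ and $3$ while fixing $1,2$, so $w\mapsto\tau'(w)^{\mathrm R}$ preserves $n_{03}$ (hence $\M_{03}$ and $\m_{03}$) and sends an abelian class with majority letter $3$ to one with majority letter $0$, turning ``starts and ends with $3$'' into ``starts and ends with $0$'' and ``preceded and followed by $0$'' into ``preceded and followed by $3$''. Thus it suffices to treat the case $a=0$ (so $b=3$) for each bullet.

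For the first bullet with $a=0$ I would argue each implication by contradiction, in the style of Lemma~\ref{lem:ext}. For $(\Leftarrow)$, assume every $u\in\mathcal X$ starts and ends with $0$. If $n_{03}\neq\M_{03}(n)$, continuity produces a factor $v$ of length $n$ with $|v|_0+|v|_3=n_{03}+1$; this count is even so $|v|_0=|v|_3$, while $|v|_1=|v|_2\pm1$. A single forced-letter surgery on $v$ (deleting an extremal $\{0,3\}$-letter and restoring the length by the adjacent, forced letter on the opposite end) then lands inside the unique abelian class $\mathcal X$ but produces a word whose first or last letter is not $0$, a contradiction. If instead $\JM_{03}(n)=0$, then $\M_{03}(n-1)=\M_{03}(n)=n_{03}$, and a length-$(n-1)$ factor with $n_{03}$ letters of $\{0,3\}$ can be extended by one forced $\{1,2\}$-letter into an element of $\mathcal X$ not beginning with $0$, again a contradiction. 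For $(\Rightarrow)$, assume $n_{03}=\M_{03}(n)$ and $\JM_{03}(n)=1$, so $\M_{03}(n-1)=n_{03}-1$; if some $u\in\mathcal X$ failed to start and end with $0$, then by the parity remark its offending endpoint lies in $\{1,2\}$, and deleting it yields a length-$(n-1)$ factor still containing $n_{03}$ letters of $\{0,3\}$, contradicting $\M_{03}(n-1)=n_{03}-1$. The second bullet is handled identically, with $\M_{03},\JM_{03}$ replaced by $\m_{03},\jm_{03}$, the roles of ``one more'' and ``one fewer'' $\{0,3\}$-letter exchanged, and the minority letter $b=3$ (preceded and followed by its forced neighbours $1$ and $2$) playing the role of $a$.

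The main obstacle will be the case analysis hidden in the phrase ``a single forced-letter surgery.'' Because $n$ is odd, a factor of length $n$ sits either at an even index, where both endpoints are \emph{first} letters of $\nu$-images and hence lie in $\{1,2\}$, or at an odd index, where both endpoints are \emph{second} letters of $\nu$-images; pinning down exactly which letter occupies each end of the witness factor $v$ (and of a failing $u$) via this index parity and the corresponding de-substitution, and checking that the surgery really produces a factor of $\mathbf y$ that stays inside $\mathcal X$ while altering the correct endpoint, is the delicate part, exactly analogous to the final de-substitution paragraph of Lemma~\ref{lem:ext}. The balance constraints of Lemma~\ref{lem:balancedTM} (which force the majority $\{1,2\}$-letter to the ends of certain factors) and the uniform recurrence of $\mathbf y$, guaranteeing the two-sided extensions used in the ``preceded and followed'' bullet, are the tools that make each of these surgeries legitimate.
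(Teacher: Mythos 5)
Your proposal has the same skeleton as the paper's proof: reduce to the majority letter $a=0$ via $w\mapsto\tau'(w)^{\mathrm{R}}$, prove the ``value'' halves ($n_{03}=\M_{03}(n)$, resp.\ $n_{03}=\m_{03}(n)$) by a sliding-window argument, prove the ``jump'' halves by adding or deleting one letter whose membership in $\{1,2\}$ comes from the index parity of the $\nu$-block structure, and obtain the converses by deleting or extending the offending endpoint (your converse for the first bullet, using the remark that no element of $\mathcal X$ can begin or end with the minority letter, is if anything cleaner than the paper's ``$x\neq 0$'' phrasing). But the mechanism you propose for the key forward step, the ``forced-letter surgery,'' has a genuine gap. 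The surgery requires the continuity witness $v$ (length $n$, with $n_{03}+1$ letters in $\{0,3\}$) to have an extremal letter in $\{0,3\}$, and it requires the extension at the opposite end to be forced into $\{1,2\}$; by Lemma~\ref{lem:2facTM} that forcing only exists when the opposite endpoint is itself in $\{0,3\}$ (a $0$ is always followed by $1$ and a $3$ by $2$, but a $1$ may be followed by $3$ and a $2$ by $0$). Neither condition is guaranteed: by your own index-parity observation, any witness occurring at an even index has both endpoints in $\{1,2\}$ (concretely, $20132$ is a factor of $\mathbf y$ of odd length, with an even number of letters from $\{0,3\}$, beginning and ending with $2$), and for such a $v$ the surgery cannot even begin.

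The repair is not to operate on an arbitrary witness but to work at the transition point of the sliding window, which is exactly what the paper does: between an occurrence of an element of $\mathcal X$ and an occurrence of a factor with $n_{03}+1$ letters of $\{0,3\}$ there are consecutive windows $yu$ and $uz$ with counts $n_{03}$ and $n_{03}+1$; comparing the counts forces $z\in\{0,3\}$ and, crucially, $y\in\{1,2\}$, so $yu$ itself (or $\tau'(yu)^{\mathrm{R}}$) is an element of $\mathcal X$ not starting (or not ending) with $0$. No forcedness is needed; the letter $y$ is simply read off from that occurrence. Two smaller points to repair as well: the surgery output does not land in ``the unique abelian class'' --- for odd $n_{03}$ there are two abelian classes with the same pair $(n,n_{03})$, majority-$0$ and majority-$3$, so you must pass through $\tau'(\cdot)^{\mathrm{R}}$ in the majority-$3$ case, as the paper does; and in the converse of the second bullet, ``handled identically'' hides the step where extensions of $u\in\mathcal X$ by the majority letter $0$ must be excluded, which does not follow from $\jm_{03}(n)=1$ but from the alternation of $0$ and $3$ (Lemma~\ref{lem:balancedTM}) combined with $|u|_0>|u|_3$.
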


\begin{proof}
Assume that $a=0$ and $b=3$ (the other case is symmetric). We first prove the statement for the maximum. Assume that all the factors in $\mathcal X$ start and end with $0$. If $n_{03}<\M_{03}(n)$, by continuity of the number of $0$'s and $3$'s and since $\mathbf{y}$ is uniformly recurrent, there exists a factor $yuz$ such that the factor $yu$ (resp.\ $uz$) is of length $n$ with $n_{03}$ (resp.\ $n_{03}+1$) zeros and threes. We necessarily have $z\in \{0,3\}$ and $u$ is not finishing with a letter in $\{0,3\}$. Since $yu$ has $n_{03}$ zeros and threes, $yu$ or $\tau'(yu)^{\mathrm R}$ is an element of $\mathcal X$ that is either not finishing or not starting with $0$, a contradiction. Hence we have $n_{03}=\M_{03}(n)$. Assume now that $\M_{03}(n-1)=n_{03}$. There exists a factor $u$ of even length $n-1$ with $n_{03}$ zeros. Without loss of generality, we can assume that $u$ has more $0$'s than $3$'s (otherwise one can consider $\tau'(u)^{\mathrm R}$ by Lemma~\ref{lem:reversalTM}). Since $u$ has even length, either $u$ occurs at an even index in $\mathbf{y}$ and is always followed by $1$ or $2$, or $u$ occurs at an odd index in  $\mathbf{y}$ and is always preceded by $1$ or $2$. In other words, there is a factor of the form $yu$ or $uy$ with $y\in \{1,2\}$. Then $yu$ or $uy$ is an element of $\mathcal X$ with the first or last letter different from $0$, a contradiction.

For the  other direction, assume that $n_{03}=\M_{03}(n)$ and $\JM_{03}(n)=1$. Let $u$ be a factor in $\mathcal X$. If $u=xu'$ or $u=u'x$ with $x\neq 0$, then $u'$ has length $n-1$ and $n_{03}$ zeros and threes. Thus $\JM_{03}(n)=0$, a contradiction.

The second statement is proved in the same way. Assume that all the factors in $\mathcal X$ are preceded and followed by $3$. If $n_{03}>\m_{03}(n)$, by continuity of the number of $0$'s and $3$'s and since $\mathbf{y}$ is uniformly recurrent, there exists a factor $yuz$ such that the factor $yu$ (resp.\ $uz$) is of length $n$ with $n_{03}$ (resp.\ $n_{03}-1$) zeros and threes. We necessarily have $z\in \{1,2\}$. Then as before $yu$ or $\tau'(yu)^{\mathrm R}$ is and element of $\mathcal X$ that is either not always followed  or not always preceded by $3$, a contradiction. Hence we have $n_{03}=\m_{03}(n)$. Assume now that $\m_{03}(n+1)=n_{03}$. There exists a factor $u$ of even length $n+1$ with $n_{03}$ zeros. Without loss of generality, we can assume that $u$ has more $0$'s than $3$'s (otherwise one can consider $\tau'(u)^{\mathrm R}$ by Lemma~\ref{lem:reversalTM}). Since $u$ has even length, either $u$ occurs at an even index and starts with $1$ or $2$ or $u$ occurs at an odd index and ends with $1$ or $2$. In other words, $u=yu'$ or $u=u'y$ with $y\in \{1,2\}$ and $u'$ is an element of $\mathcal X$ preceded or followed by a letter different from $3$, a contradiction.

For the  other direction, assume that $n_{03}=\m_{03}(n)$ and $\jm_{03}(n)=1$. Let $u$ be a factor in $\mathcal X$. If $u'=ux$ or $u'=xu$ is a factor with $x\in\{1,2\}$, then $u'$ has length $n+1$ and $n_{03}$ zeros and threes. So $\jm_{03}(n)=0$, which is a contradiction. Observe also that it is impossible to have $0u$ or $u0$ as factors of $\mathbf{y}$ since $|u|_0>|u|_3$ by assumption and the letters $0$ and $3$ alternate in $\mathbf{y}$ by Lemma~\ref{lem:balancedTM}. The conclusion is immediate.
\end{proof}

\begin{lemma}\label{lem:nodd}
If $n$ is odd and $n_{12}$ is even, then $\mathcal X$ leads to only one $2$-abelian equivalence class of $\mathbf{t}$ if and only if $n_{03}=\m_{03}(n)$ and $\jm_{03}(n)=1$, or $n_{03}=\M_{03}(n)$ and $\JM_{03}(n)=1$.
Otherwise, $\mathcal X$ splits into two classes.
\end{lemma}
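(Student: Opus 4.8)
The plan is to recast the dichotomy ``$\mathcal X$ leads to one class versus splits'' entirely in terms of first letters, and then to imitate the argument of Lemma~\ref{lem:nevenn0even}. By the proposition opening this section, $\mathcal X$ produces a single $2$-abelian class of $\mathbf t$ precisely when all its factors share the same first-letter type ($\{0,1\}$ or $\{2,3\}$), and splits otherwise. Since $n$ is odd and $n_{12}$ is even, $n_{03}=n-n_{12}$ is odd, so by Lemma~\ref{lem:balancedTM} the subsequence of $\{0,3\}$-letters of every $u\in\mathcal X$ begins and ends with the majority letter $a$; in particular no factor of $\mathcal X$ starts or ends with the minority letter $b$. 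Applying the factor-preserving involution $w\mapsto\tau'(\tau(w))$ of Lemma~\ref{lem:reversalTM}, which swaps $0\leftrightarrow3$ and $1\leftrightarrow2$ and hence flips every first-letter type uniformly while fixing $n_{03}$, $\M_{03}$, $\m_{03}$, $\JM_{03}$ and $\jm_{03}$, I may assume $a=0$ and $b=3$. Throughout I use the adjacencies from Lemma~\ref{lem:2facTM}: a $0$ is always followed by $1$ and preceded by $2$, while a $3$ is always followed by $2$ and preceded by $1$. I will also exploit a second involution: since $n_{12}$ even forces $|u|_1=|u|_2$ (Lemma~\ref{lem:balancedTM}), the map $\rho(u)=\tau(u)^{\mathrm R}$ of Lemma~\ref{lem:reversalTM} preserves the Parikh vector, so $\rho$ maps $\mathcal X$ to itself, sending the first letter of $u$ to $\tau$ of its last letter.

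For the direction ($\Leftarrow$), suppose first $n_{03}=\M_{03}(n)$ and $\JM_{03}(n)=1$. Then Lemma~\ref{lem:ext03} gives that every $u\in\mathcal X$ starts (and ends) with $a=0$, so all factors are of type $\{0,1\}$ and $\mathcal X$ yields a single class. If instead $n_{03}=\m_{03}(n)$ and $\jm_{03}(n)=1$, then Lemma~\ref{lem:ext03} gives that every $u\in\mathcal X$ is preceded and followed by $b=3$; since a $3$ is always followed by $2$, every such factor starts with $2$, so all factors are of type $\{2,3\}$ and again $\mathcal X$ yields a single class.

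For the converse ($\Rightarrow$), assume $\mathcal X$ leads to one class, so (recalling no factor starts with $3$) either every factor starts in $\{0,1\}$ or every factor starts with $2$. Consider the first alternative. The involution $\rho$ forbids any factor ending in $1$, for otherwise $\rho$ would yield a factor of $\mathcal X$ starting with $\tau(1)=2$, of the opposite type; hence (together with ``no factor ends in $3$'') every factor ends in $0$ or $2$. Because $n$ is odd no factor can be de-substituted (images of $\nu$ have even length), so each is $x^{-1}\nu(v)$ or $\nu(v)x^{-1}$; with the adjacencies this forces a $0$-starting factor to read $01\cdots$ and a $1$-starting factor to read $12\cdots$. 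The structural heart of the proof is then to track the $0/3$-count of the two windows flanking an occurrence: a $0$-starting factor (preceded by $2$, followed by $1$) ending in $0$ is a \emph{strict local maximum} of the count, its neighbors carrying $n_{03}-1$, whereas the ``$1$-starting'' and ``ending-in-$2$'' configurations behave as local minima. If a $0$-starting and a ``raising'' factor both occurred, then, $\mathbf y$ being uniformly recurrent (as used in the proof of Lemma~\ref{lem:ext03}), I could slide a window from one occurrence to the other meeting no intermediate member of $\mathcal X$, forcing the count to pass from $n_{03}-1$ to $n_{03}+1$ without attaining $n_{03}$, contradicting its one-step continuity. Thus all factors start with $0$; then $\rho$ forces all of them to end with $0$ as well, and Lemma~\ref{lem:ext03} gives exactly $n_{03}=\M_{03}(n)$ and $\JM_{03}(n)=1$. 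The second alternative, every factor starting with $2$, is handled symmetrically: $\rho$ pins the common last letter, and the same continuity argument lands on $n_{03}=\m_{03}(n)$ and $\jm_{03}(n)=1$.

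The main obstacle is this converse direction, and within it two linked points. First, the $\nu$-de-substitution and rotation bookkeeping that fixes the $0/3$-count of the flanking windows and thereby the correct local monotonicity; getting the direction of these jumps right is where the parities ($n$ odd, $n_{12}$ even) and the $0/3$-alternation from Lemma~\ref{lem:balancedTM} must be used with care. Second, one must exclude the spurious recurrent configurations in which every factor starts with $1$ (type $\{0,1\}$) or every factor is ``$2\cdots$ preceded by $1$'' (type $\{2,3\}$): these would give a single class yet satisfy neither Lemma~\ref{lem:ext03} condition, and ruling them out is precisely the content of the uniform-recurrence-plus-continuity argument. Once the flanking-count claim is established, this final contradiction is routine and identical in spirit to Lemma~\ref{lem:nevenn0even}.
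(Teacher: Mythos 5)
Your setup is sound and your ``if'' direction is correct: the reduction to first letters, the involution $\rho(u)=\tau(u)^{\mathrm{R}}$ (legitimate because $n_{12}$ even forces $|u|_1=|u|_2$), the normalization $a=0$ via $\tau'\circ\tau$, and the pairing of the two cases of Lemma~\ref{lem:ext03} with the two first-letter types (you even pair them correctly, whereas the paper's own wording swaps them). The genuine gap is in the converse, whose ``structural heart'' rests on two false claims. First, a $1$-starting factor $v\in\mathcal X$ is \emph{not} a local minimum of the $0/3$-count. Its first $\{0,3\}$-letter is the majority letter $0$, so $v$ reads $12\cdots$ and is preceded by $2$ (a preceding $0$ would put two $0$'s consecutively in the $0/3$-subsequence, contradicting Lemma~\ref{lem:balancedTM}), and by your own reduction $v$ ends in $0$ or $2$. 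If $v$ ends in $2$, the left-hand neighbouring window $2v2^{-1}$ has the \emph{same} Parikh vector, hence lies in $\mathcal X$ and starts with $2$; if $v$ ends in $0$, the right-hand neighbour $1^{-1}v1$ lies in $\mathcal X$ and starts with $2$. So one single window shift already produces a member of $\mathcal X$ of the opposite type --- this immediate, purely local contradiction is the paper's actual argument --- and the left-hand neighbour never carries $n_{03}+1$ letters from $\{0,3\}$, so the ``raising'' endpoint of your sliding range does not exist. Second, the continuity step itself is invalid here: since $n_{03}$ is odd, a window of length $n$ whose $0/3$-count equals $n_{03}$ belongs to $\mathcal X$ only if its majority letter among $\{0,3\}$ is $0$; it may just as well lie in the $3$-majority class. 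Hence ``no intermediate member of $\mathcal X$'' does not force the count to avoid $n_{03}$, and no contradiction follows. This is exactly where the Thue--Morse case differs from Lemma~\ref{lem:nevenn0even}, which you are imitating: there the pair ($n$ even, $n_0$ even) determines the abelian class; here $(n,n_{03})$ does not.

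Two further consequences. Your coexistence argument, even if repaired, needs a $0$-starting factor, so it cannot exclude the configuration where \emph{every} factor of $\mathcal X$ starts with $1$; you acknowledge this case but merely assert that uniform recurrence disposes of it. It is disposed of by the direct shift above (any $1$-starting factor yields a $2$-starting member of $\mathcal X$, whether or not a $0$-starting factor exists), and no recurrence is needed anywhere in this lemma --- uniform recurrence is used only inside Lemma~\ref{lem:ext03}. Likewise, ``Case B is handled symmetrically'' hides the real content: to invoke Lemma~\ref{lem:ext03} you must show every factor of $\mathcal X$ is preceded \emph{and} followed by $3$, i.e.\ rule out the extensions $1u$ and $u2$ of $u=2u'1$; this again is done by one-step shifts and Lemma~\ref{lem:reversalTM} ($1u$ gives $12u'\in\mathcal X$ and $u2$ gives $\tau(u'12)^{\mathrm{R}}\in\mathcal X$, both starting with $1$), not by a sliding-window count. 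In short: replace the uniform-recurrence/continuity machinery by single window shifts combined with the involutions of Lemma~\ref{lem:reversalTM}; as written, the converse direction does not go through.
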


\begin{proof}
If $n$ is odd and $n_{12}$ is even, then $n_{03}$ is even. Assume that $n_0>n_3$ (the other case is symmetric).
If $n_{03}=\m_{03}(n)$ and $\jm_{03}(n)=1$ then, by Lemma~\ref{lem:ext03}, all the factors in $\mathcal X$ start with $0$, and so $\mathcal X$ leads to only one class. 
If $n_{03}=\M_{03}(n)$ and $\JM_{03}(n)=1$, then all the factors in $\mathcal X$ are preceded and followed by $3$. In particular, they all start with $2$ and again $\mathcal X$ leads to only one class. 

For the other direction, suppose that $\mathcal X$ leads to only one class. All the factors in $\mathcal X$ must start either with a letter in $\{0,1\}$ or with a letter in $\{2,3\}$. Assume first that all the elements of $\mathcal X$ start with $0$ or $1$. Let $u$ be a factor in $\mathcal{X}$. If the first letter of $u$ is $1$, it must start with $120$ since $u$ has more $0$'s than $3$'s. Thus $u$ is always preceded by $2$. It cannot end with $1$ (since $n_1=n_2$). So it must end with $0$ or $2$. If $u=120u'2$, then $2120u'$ is an element of $\mathcal X$ starting with $2$, which is a contradiction. If $u=120u'0$ then $u1$ is a factor of $\mathbf y$. So $20u'01$ is an element of $\mathcal X$ starting with $2$, a contradiction.
Hence $u$ cannot start with $1$ and thus starts with $0$. Observe that, if $u$ does not end with $0$, then $\tau(u)^\mathrm{R}$ is still an element of $\mathcal X$ by Lemma~\ref{lem:reversalTM} and $\tau(u)^\mathrm{R}$ does not start with $0$, a contradiction. Hence all the factors in $\mathcal X$ start and end with $0$. By Lemma~\ref{lem:ext03}, we have $n_{03}=\M_{03}(n)$ and $\JM_{03}(n)=1$.

Assume now that all the elements of $\mathcal X$ start with $2$ or $3$. Since $n_0>n_3$, they all start with $2$. Moreover, as $n_1=n_2$, they must end with $0$ or $1$. If $u\in\mathcal{X}$ ends with $0$, then $\tau'(u)^{\mathrm R}\in\mathcal{X}$ starts with $3$ by Lemma~\ref{lem:reversalTM}, a contradiction. So all factors in $\mathcal{X}$ end with $1$. Let $u=2u'1$ be an element of $\mathcal X$. By Lemma~\ref{lem:2facTM}, the only possible extensions of $u$ as a factor of length $n+1$ of $\mathbf{y}$ are $1u$, $3u$, $u2$ and $u3$. If $1u$ is a factor of $\mathbf{y}$, then $12u'\in\mathcal X$ starts with $1$, which is a contradiction. If $u2$ is factor of $\mathbf{y}$, then $\tau(u'12)^\mathrm{R}\in\mathcal X$ starts with $1$, a contradiction. Hence all the factors in $\mathcal X$ are preceded and followed by $3$ in $\mathbf{y}$. By Lemma~\ref{lem:ext03}, this means that $n_{03}=\m_{03}(n)$ and $\jm_{03}(n)=1$.
\end{proof}

We are now ready to prove Theorem~\ref{thm:2abtoabTM}.

\begin{proof}[Proof of Theorem~\ref{thm:2abtoabTM}]

The difference between $\mathcal{P}^{(2)}_{\mathbf{t}}(n+1)$ and $\mathcal{P}^{(1)}_{\mathbf{y}}(n)$ is the number of abelian equivalence classes of factors of length $n$ of $\mathbf{y}$ that split into two $2$-abelian equivalence classes of factors of length $n+1$ of $\mathbf{t}$.

For even $n$, by Lemmas~\ref{lem:n12odd} and \ref{lem:neven}, it happens when $n_{12}$ is even. The number of even values of $n_{12}\in \{\m_{12}(n),\dots,\M_{12}(n)\}$ is 
\[
	\begin{cases}
		\frac{1}{2}\Delta_{12}(n)+1		&\text{if }\m_{12}(n) \text{ and } \Delta_{12}(n) \text{ are even}\\
		\frac{1}{2}\Delta_{12}(n)		&\text{if }\m_{12}(n)+1 \text{ and }  \Delta_{12}(n) \text{ are even} \\
		\frac{1}{2}\Delta_{12}(n)+\frac{1}{2}	&\text{if }\Delta_{12}(n) \text{ is odd,}
	\end{cases}
\]
which leads to the result.

For odd $n$, by Lemmas~\ref{lem:n12odd} and \ref{lem:nodd}, it happens when $n_{12}$ is even, except if $n_{03}=\m_{03}(n)$ and $\jm_{03}(n)=1$ or $n_{03}=\M_{03}(n)$ and $\JM_{03}(n)=1$. The number of such cases is
\[
	\begin{cases}
\frac{\Delta_{12}(n)}{2}+1-\JM_{03}(n)-\jm_{03}(n) &\text{if }\m_{12}(n) \text{ and } \Delta_{12}(n) \text{ are even}\\
\frac{\Delta_{12}(n)+1}{2}-\JM_{03}(n) &\text{if }\m_{12}(n) \text{ and } \Delta_{12}(n)+1 \text{ are even}\\
\frac{\Delta_{12}(n)+1}{2}-\jm_{03}(n) &\text{if }\m_{12}(n) \text{ and } \Delta_{12}(n) \text{ are odd}\\
\frac{\Delta_{12}(n)}{2} &\text{if }\m_{12}(n)+1 \text{ and } \Delta_{12}(n) \text{ are even.}
	\end{cases}
\]

Indeed, consider for example the case that $\m_{12}(n)$ and $\Delta_{12}(n)$ are even. First, there are $\frac{\Delta_{12}(n)}{2}+1$ even values of $n_{12}$. Second, since $\m_{12}(n)$ is even and $n$ is odd, we have $\M_{03}(n)=n-\m_{12}(n)$ odd. Since $\Delta_{12}(n)$ is even, $\M_{12}(n)$ is also even and $\m_{03}(n)$ is odd. 

If $n$ is such that $\jm_{03}(n)=1$ (resp.\ $\JM_{03}(n)=1$) then the case $n_{03}=\m_{03}(n)$ and  $\jm_{03}(n)=1$ (resp.\ $n_{03}=\M_{03}(n)$ and $\JM_{03}(n)=1$) indeed happens. So we have to remove $1$, i.e., $\jm_{03}(n)$ or $\JM_{03}(n)$ for each case.

As another example, consider the case that $\m_{12}(n)$ and $\Delta_{12}(n)$ are odd. Then $\M_{03}(n)$ is even and $\m_{03}(n)$ is odd. There are $\frac{\Delta_{12}(n)+1}{2}$ even values of $n_{12}$. We cannot have $n_{03}=\M_{03}(n)$ (for parity reasons) and thus we never have $n_{03}=\M_{03}(n)$ and $\JM_{03}(n)=1$. But the case $n_{03}=\m_{03}(n)$ happens and thus we have to remove one case when $\jm_{03}(n)=1$.

Finally, observe that to each pair $(n,n_{12})$, with $n$ odd and $n_{12}$ even, correspond two abelian equivalence classes of $\mathbf{y}$ (see the proof of Proposition~\ref{prop:deltatoab_TM}). Each of these classes splits into two $2$-abelian equivalence classes. Hence multiplying by $2$ the number of pairs $(n,n_{12})$, with $n$ odd and $n_{12}$ even, gives the result claimed for $n$ odd.
\end{proof}

\begin{corollary}
The sequence $\mathcal{P}_{\mathbf{t}}^{(2)}(n)_{n\ge0}$ is $2$-regular.
\end{corollary}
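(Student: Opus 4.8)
The plan is to mirror the corollary at the end of Section~\ref{sec:2ab_comp_PD} and assemble the claim from Theorem~\ref{thm:2abtoabTM} via the composition result Lemma~\ref{lem:compo}. Theorem~\ref{thm:2abtoabTM} writes $\mathcal{P}^{(2)}_{\mathbf{t}}(n+1)$ as $\mathcal{P}^{(1)}_{\mathbf{y}}(n)$ plus a function defined piecewise according to the parities of $n$, $\m_{12}(n)$ and $\Delta_{12}(n)$, where each piece is a $\Z$-linear combination of $\mathcal{P}^{(1)}_{\mathbf{y}}(n)$, $\Delta_{12}(n)$, $\JM_{03}(n)$, $\jm_{03}(n)$ and the constant sequence $1$. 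So it suffices to check that all the functions appearing are $2$-regular and that all the predicates selecting the cases are $2$-automatic; since for each $n$ exactly one case applies, we are precisely in the hypothesis of Lemma~\ref{lem:compo}. For the functions, $\mathcal{P}^{(1)}_{\mathbf{y}}(n)_{n\ge0}$ is $2$-regular by Proposition~\ref{prop:deltatoab_TM} together with Corollary~\ref{cor:2regDm_TM}, and $\Delta_{12}(n)_{n\ge0}$ is $2$-regular by Corollary~\ref{cor:2regDm_TM}. For the predicates, $(n\bmod 2)$ is trivially $2$-automatic, while $(\m_{12}(n)\bmod 2)_{n\ge0}$ and $(\Delta_{12}(n)\bmod 2)_{n\ge0}$ are $2$-automatic by Corollary~\ref{cor:2regDm_TM}; since the class of $2$-automatic sequences is closed under Boolean combinations, each case, being a conjunction of such parity conditions, is described by a $2$-automatic predicate.

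The one genuinely new step, and the main point of the proof, is to establish that $\JM_{03}(n)_{n\ge0}$ and $\jm_{03}(n)_{n\ge0}$ are $2$-regular (equivalently $2$-automatic, since they take values in $\{0,1\}$). First I record the elementary identities $\M_{03}(n)=n-\m_{12}(n)$ and $\m_{03}(n)=n-\M_{12}(n)=n-\m_{12}(n)-\Delta_{12}(n)$, which hold because $|u|_0+|u|_3=|u|-(|u|_1+|u|_2)$ for every factor $u$. Since $\m_{12}$ and $\M_{12}$ each change by at most $1$ between consecutive lengths, $\M_{03}$ and $\m_{03}$ are non-decreasing with unit jumps, so that $\JM_{03}(n)=\M_{03}(n)-\M_{03}(n-1)=1-\m_{12}(n)+\m_{12}(n-1)$ and $\jm_{03}(n)=\m_{03}(n+1)-\m_{03}(n)=1-\M_{12}(n+1)+\M_{12}(n)$. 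Reducing these modulo $2$ gives
\begin{align*}
\JM_{03}(n)&\equiv 1+\m_{12}(n)+\m_{12}(n-1)\pmod 2,\\
\jm_{03}(n)&\equiv 1+\m_{12}(n+1)+\Delta_{12}(n+1)+\m_{12}(n)+\Delta_{12}(n)\pmod 2.
\end{align*}
Because $\JM_{03}(n),\jm_{03}(n)\in\{0,1\}$, these congruences determine the two sequences exactly. Now $(\m_{12}(n)\bmod 2)_{n\ge0}$ and $(\Delta_{12}(n)\bmod 2)_{n\ge0}$ are $2$-automatic by Corollary~\ref{cor:2regDm_TM}, and their forward and backward unit shifts remain $2$-regular (hence $2$-automatic, being finite-valued) by Lemma~\ref{lem:shift}; a $\Z$-linear combination of $2$-regular sequences reduced modulo $2$ is $2$-automatic by Lemma~\ref{lem:regmod}. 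Therefore $\JM_{03}(n)_{n\ge0}$ and $\jm_{03}(n)_{n\ge0}$ are $2$-automatic.

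Putting these together, Lemma~\ref{lem:compo} shows that $\mathcal{P}^{(2)}_{\mathbf{t}}(n+1)_{n\ge0}$ is $2$-regular, and then $\mathcal{P}^{(2)}_{\mathbf{t}}(n)_{n\ge0}$ is $2$-regular by Lemma~\ref{lem:shift}. The only real obstacle is the bookkeeping of parities in the previous step: I must verify that the modulo-$2$ relations for $\JM_{03}$ and $\jm_{03}$ are valid for all $n$, including the small initial values where the jump functions are fixed by their special conventions, which is a finite check combined with the sliding-window continuity argument already used repeatedly in the paper.
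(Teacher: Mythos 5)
Your proposal is correct and follows essentially the same route as the paper's own proof: express $\mathcal{P}^{(2)}_{\mathbf{t}}(n+1)$ via Theorem~\ref{thm:2abtoabTM} and Lemma~\ref{lem:compo}, reduce the $2$-regularity of $\JM_{03}$ and $\jm_{03}$ to the identities $\JM_{03}(n+1)=\m_{12}(n)-\m_{12}(n+1)+1$ and $\jm_{03}(n)=\M_{12}(n)-\M_{12}(n+1)+1$ (which you state in the equivalent unshifted form via $\M_{03}(n)=n-\m_{12}(n)$, $\m_{03}(n)=n-\M_{12}(n)$), exploit that these $\{0,1\}$-valued sequences are determined by their values modulo $2$ together with Corollary~\ref{cor:2regDm_TM}, Lemma~\ref{lem:regmod} and Lemma~\ref{lem:shift}, and finish with one more application of Lemma~\ref{lem:shift}. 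The only cosmetic difference is that you shift $\m_{12}$ backward where the paper shifts $\JM_{03}$ forward, which is immaterial.
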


\begin{proof}
We can make use of Lemma~\ref{lem:compo}. Thanks to Theorem~\ref{thm:2abtoabTM}, $\mathcal{P}^{(2)}_{\mathbf{t}}(n+1)$ can be expressed as a combination of $\mathcal{P}^{(1)}_{\mathbf{y}}(n)$, $\Delta_{12}(n)$, $\JM_{03}(n)$, $\jm_{03}(n)$ using the predicates $(n\bmod{2})$, $(\Delta_{12}(n)\bmod{2})$ and $(\m_{12}(n)\bmod{2})$.

The sequences $\mathcal{P}^{(1)}_{\mathbf{y}}(n)_{n \geq 0}$ and $\Delta_{12}(n)_{n\ge 0}$ are $2$-regular from Section~\ref{sec:block_TM}. Note that we have $\JM_{03}(n+1)=\m_{12}(n)-\m_{12}(n+1)+1$ and
\begin{align*} 
	\jm_{03}(n)&=\M_{12}(n)-\M_{12}(n+1)+1\\
	&=\m_{12}(n)-\m_{12}(n+1)+\Delta_{12}(n)-\Delta_{12}(n+1)+1.
\end{align*}
As $\JM_{03}(n+1)$ and $\jm_{03}(n)$ can only take the values $0$ and $1$, these relations can also be expressed using $(\m_{12}(n)\bmod{2})_{n\ge 0}$ and $(\Delta_{12}(n)\bmod{2})_{n\ge 0}$. Since these two latter sequences are $2$-regular, the sequences $(\m_{12}(n+1)\bmod{2})_{n\ge 0}$ and $(\Delta_{12}(n+1)\bmod{2})_{n\ge 0}$ are $2$-regular by Lemma~\ref{lem:shift} and so are $\JM_{03}(n+1)_{n\ge 0}$ and  $\jm_{03}(n)_{n\ge 0}$ by Lemma~\ref{lem:compo}. Thus, $\JM_{03}(n)_{n\ge 0}$ is $2$-regular by Lemma~\ref{lem:shift}.

Since all the functions (resp.\ all the predicates) occurring in the statement of Theorem~\ref{thm:2abtoabTM} are $2$-regular (resp.\ $2$-automatic), the composition given in Lemma~\ref{lem:compo} implies that the sequence $\mathcal{P}^{(2)}_{\mathbf{t}}(n+1)_{n\ge 0}$ is $2$-regular. Hence, by Lemma~\ref{lem:shift}, $\mathcal{P}^{(2)}_{\mathbf{t}}(n)_{n\ge 0}$ is $2$-regular.
\end{proof}

\section{Conclusions}\label{Conclusions}
The two examples treated in this paper, namely the $2$-abelian complexity of the period-doubling word and the Thue--Morse word, suggest that a general framework to study the $\ell$-abelian complexity of $k$-automatic sequences may exist. As an example, we consider the $3$-block coding of the period-doubling word,
$$\mathbf{z}=\blo(\mathbf{p},3)=240125252401240124\cdots.$$
The abelian complexity $\mathcal{P}^{(1)}_\mathbf{z}(n)_{n\ge 0}=(1,5,5,8,6,10,19,11,\ldots)$ seems to satisfy, for $\ell\ge 4$, the following relations (which are quite similar to what we have discussed so far)
$$
\mathcal{P}^{(1)}_\mathbf{z}(2^\ell + r) =
	\begin{cases}
		\mathcal{P}^{(1)}_\mathbf{z}(r) + 5	& \text{if $r \leq 2^{\ell-1}$ and $r$ even} \\
		\mathcal{P}^{(1)}_\mathbf{z}(r) + 7	& \text{if $r \leq 2^{\ell-1}$ and $r$ odd} \\
		\mathcal{P}^{(1)}_\mathbf{z}(2^{\ell+1}-r)	& \text{if $r > 2^{\ell-1}$}.
	\end{cases}
$$
Then, the next step would be to relate $\mathcal{P}^{(3)}_\mathbf{p}$ with $\mathcal{P}^{(1)}_\mathbf{z}$ (and try to extend the developments from Section~\ref{sec:2ab_comp_PD}).

\section*{Acknowledgments}
We thank Jeffrey Shallit for some motivating discussions we had about this problem at some early stage of development.

\end{document}